\documentclass[letterpaper,11pt]{article}
\usepackage[margin=1in]{geometry}
\usepackage{amsmath}
\usepackage{amsthm}
\usepackage{amssymb}
\usepackage{bbm,url}
\usepackage[colorlinks=true,urlcolor=black,linkcolor=black,citecolor=black]{hyperref}
\usepackage{graphicx}
\usepackage[ruled,linesnumbered]{algorithm2e}
\usepackage{color}
\usepackage{subcaption}

\newtheorem{theorem}{Theorem}
\newtheorem*{theorem*}{Theorem}

\newtheorem{lemma}[theorem]{Lemma}
\newtheorem{claim}{Claim}
\newtheorem{remark}[theorem]{Remark}
\newtheorem{observation}[theorem]{Observation}
\newtheorem{example}[theorem]{Example}

\theoremstyle{definition}
\newtheorem{definition}{Definition}

\theoremstyle{condition}
\newtheorem{condition}{Condition}

\newcommand{\bR}{\mathbb{R}}

\newcommand{\cR}{\mathcal{R}}

\newcommand{\cP}{\mathcal{P}}
\newcommand{\CP}{\mathcal{P}}
\newcommand{\cI}{\mathcal{I}}
\newcommand{\cE}{\mathcal{E}}

\newcommand{\classP}{{\sf P}}

\newcommand{\classPPAD}{{\sf PPAD}}

\newcommand{\x}{\mbox{\boldmath $x$}}
\newcommand{\f}{\mbox{\boldmath $f$}}
\renewcommand{\c}{\mbox{\boldmath $c$}}
\renewcommand{\v}{\mbox{\boldmath $v$}}
\newcommand{\s}{\mbox{\boldmath $s$}}
\newcommand{\uu}{\mbox{\boldmath $u$}}
\newcommand{\qq}{\mbox{\boldmath $q$}}
\newcommand{\q}{\mbox{\boldmath $q$}}
\newcommand{\0}{\mbox{\boldmath $0$}}
\newcommand{\vv}{\mbox{\boldmath $v$}}
\renewcommand{\r}{\mbox{\boldmath $r$}}
\newcommand{\p}{\mbox{\boldmath $p$}}
\newcommand{\y}{\mbox{\boldmath $y$}}

\newcommand{\w}{\mbox{\boldmath $w$}}
\newcommand{\ff}{\mbox{\boldmath $f$}}

\newcommand{\argmax}{{\operatorname{\mathrm{arg\,max}}}}
\newcommand{\argmin}{{\operatorname{\mathrm{arg\,min}}}}
\newcommand{\zeros}{\mathbf{0}}
\newcommand{\zz}{\mathbf{z}}

\newcommand{\ps}{p^*}

\newcommand{\bAPR}{{\overline{N_{PR}}}}

\newcommand{\MPB}{{\mathit{MPB}}}
\newcommand{\MBB}{{\mathit{MBB}}}
\newcommand{\balpha}{\boldsymbol{\alpha}}
\newcommand{\bbeta}{\boldsymbol{\beta}}

\title{Competitive Allocation of a Mixed Manna}
\author{Bhaskar Ray Chaudhury\thanks{MPI for Informatics, Saarland Informatics Campus, Graduate School of Computer Science, Saarbr\"ucken, Germany}\\ \texttt{\small braycha@mpi-inf.mpg.de} \and Jugal Garg\thanks{University of Illinois at Urbana-Champaign. Supported by NSF Grant CCF-1942321 (CAREER)}\\ \texttt{\small jugal@illinois.edu}  \and Peter  McGlaughlin \thanks{University of Illinois at Urbana-Champaign. Supported by NSF Grant CCF-1942321 (CAREER)}\\ \texttt{\small mcglghl2@illinois.edu} \and Ruta Mehta\thanks{University of Illinois at Urbana-Champaign. Supported by NSF Grant CCF-1750436 (CAREER)}\\ \texttt{\small rutameht@illinois.edu}}
\date{}

\begin{document}
	
\maketitle
\thispagestyle{empty}
\pagestyle{empty}

\begin{abstract}
We study the fair division problem of allocating a mixed manna under additively separable piecewise linear concave (SPLC) utilities. A mixed manna contains goods that everyone likes and bads that everyone dislikes, as well as items that some like and others dislike. The seminal work of Bogomolnaia et al.~\cite{BogomolnaiaMSY17} argue why allocating a mixed manna is genuinely more complicated than a good or a bad manna, and why competitive equilibrium is the best mechanism. They also provide the existence of equilibrium and establish its peculiar properties (e.g., non-convex and disconnected set of equilibria even under linear utilities), but leave the problem of computing an equilibrium open. This problem remained unresolved even for only bad manna under linear utilities.

Our main result is a simplex-like algorithm based on Lemke's scheme for computing a competitive allocation of a mixed manna under SPLC utilities, a strict generalization of linear. Experimental results on randomly generated instances suggest that our algorithm will be fast in practice. The problem is known to be $\classPPAD$-hard for the case of \emph{good} manna~\cite{ChenDDT09}, and we also show a similar result for the case of \emph{bad} manna. Given these $\classPPAD$-hardness results, designing such an algorithm is the only non-brute-force (non-enumerative) option known, e.g., the classic Lemke-Howson algorithm (1964) for computing a Nash equilibrium in a 2-player game is still one of the most widely used algorithms in practice.

Our algorithm also yields several new structural properties as simple corollaries. We obtain a (constructive) proof of existence for a far more general setting, membership of the problem in $\classPPAD$, rational-valued solution, and odd number of solutions property. The last property also settles the conjecture of~\cite{BogomolnaiaMSY17} in the affirmative.

Furthermore, we show that if either the number of agents or the number of items is a constant, then the number of pivots in our algorithm is strongly polynomial when the mixed manna contains all bads, providing additional evidence to the practicality of our approach.
\end{abstract}
\newpage
\tableofcontents

\newpage
\pagestyle{plain}
\setcounter{page}{1}

\section{Introduction}\label{sec:intro}
Fair division is the problem of allocating a set of items among a set of agents in a \emph{fair} and \emph{efficient} way. This age-old problem, mentioned even in the Bible, arises naturally in a wide range of real-life settings such as division of family inheritance~\cite{PrattZ90}, partnership dissolutions, divorce settlements~\cite{BramsT96}, spectrum allocation~\cite{EtkinPT05}, airport traffic management~\cite{Vossen02}, office space between co-workers, seats in courses~\cite{SonmezU10,BudishC10}, computing resources in peer-to-peer platforms~\cite{GhodsiZHKSS11} and sharing of earth observation satellites~\cite{Bataille99}. The formal study of this problem dates back to the seminal work of Steinhaus~\cite{steinhaus1948problem} where he introduced the cake-cutting problem for more than two agents. Since then it has been an active research area in many disciplines. 

The vast majority of work in both Economics and Computer Science focuses on the case of \emph{disposable} goods, i.e., items that agents enjoy, or at least can throw away at no cost. However, many situations contain \emph{mixed manna} where some items are positive goods (e.g., cake), while others are undesirable bads (e.g., house chores and job shifts). Potentially, agents might disagree on whether a specific item is a good or a bad. Examples include: dividing tasks among various team members, deciding teaching assignments between faculty, managing pollution among firms, or splitting assets and liabilities when dissolving a partnership. 

Clearly, bads are \emph{nondisposable} and must be allocated. At first glance, it seems that the tools and techniques developed for the case of all goods might apply, but the mixed manna case turns out to be significantly more complex. The seminal work of Bogomolnaia et al.~\cite{BogomolnaiaMSY17} initiated the study of mixed manna, where they argue why allocating a mixed manna is genuinely more complicated than a good or a bad manna, and why an allocation based on \emph{competitive equilibrium with equal incomes} (CEEI) is the best mechanism.\footnote{E.g., competitive allocation not only achieves the standard notions of fairness called \emph{envy-freeness} and \emph{proportionality}, but it is also (Pareto) efficient and core stable.} They show the existence of equilibrium, and investigate some of their peculiar properties. Namely, they establish that even the simplest case of linear utility functions generally admits multiple equilibria, and the set of equilibria is non-convex and disconnected.\footnote{A similar result is shown for only bad manna~\cite{BogomolnaiaMSY19}. We also refer to an excellent survey article by Moulin~\cite{Moulin19}.} In sharp contrast, in the same setting with all goods, an equilibrium is captured by a convex program. 
Designing fast algorithms for mixed manna, even for linear utilities, is an important open question -- the abstract of Bogomolnaia et al.~\cite{BogomolnaiaMSY17} mentions,\footnote{Spliddit~\cite{spliddit} is a user friendly online platform for computing fair allocation in a variety of problems, which have drawn tens of thousands of visitors in the last five years~\cite{GoldmanP14}. Spliddit uses linear utilities.}

{\small
\begin{quote}
\emph{$\dots$ the implementation of competitive fairness under linear preferences in interactive platforms like SPLIDDIT will be more difficult when the manna contains bads that overwhelm the goods.}
\end{quote}
}

Recently, ~\cite{BranzeiS19,GargM20} make progress on this problem by designing polynomial-time algorithms for computing competitive allocation under linear utilities when either the number of agents or the number of items is a constant. These algorithms are based on clever \emph{enumeration-based exhaustive search}, which may not be fast in practice in the general case. 

\paragraph{Our Contributions} 
In this paper, we design a simplex-like algorithm for computing a competitive allocation of a mixed manna when agents' utility functions have a fairly general form: \emph{separable piecewise linear concave} (SPLC), a strict generalization of linear; see Section~\ref{sec:splc} for a formal definition. In economics, it is customary to assume that utility functions of goods are concave since they capture the important condition of decreasing marginal utilities. Likewise, this assumption is also natural for bads to capture increasing marginal disutility, e.g., considering the chore of reducing pollution from a plant where driving emissions toward zero likely comes at a rising cost. The SPLC functions are also important for the fair division problems to capture natural situations when there are limitations on the maximum amount of an item that can be assigned to an agent due to rationing and other restrictions.

Experimental results on randomly generated instances suggest that our algorithm will be fast in practice, answering the question raised by Bogomolnaia et al.~\cite{BogomolnaiaMSY17}. Our algorithm follows a systematic path rather than a brute force enumeration of every \emph{configuration}; see Section~\ref{sec:strongly}. We also show that the problem is $\classPPAD$-complete even when all items are bads. As a result, a polynomial time algorithm is not possible unless $\classPPAD=\classP$. We note that SPLC utilities are extensively studied in the case of good manna; see e.g.,~\cite{ChenDDT09,ChenT09,VaziraniY11,GargMSV15}. To the best of our knowledge, they have not been studied before for a bad (or mixed) manna. We also note that,~\cite{BogomolnaiaMSY17,BogomolnaiaMSY19} mention about reducing the bads under linear utilities into goods under SPLC utilities, however this may not always work; see Appendix~\ref{sec:bads_to_goods}.

Our approach is based on Lemke's complementary pivoting on a polyhedron~\cite{Lemke65}, which is similar in spirit to simplex algorithm for linear programming~\cite{Dantzig63} and classical Lemke-Howson algorithm for computing a Nash equilibrium of a 2-player game~\cite{LemkeH64}. A common phenomenon in these algorithms is that they perform well in practice even though their worst case behavior is exponential; the latter is exhibited via intricately doctored up instances that are designed to make the algorithm perform poorly, e.g., see~\cite{KleeM72} and ~\cite{SavaniS06} for simplex and Lemke-Howson, respectively. Given the $\classPPAD$-completeness of our problem, such a pivoting-based algorithm is the only non-brute-force (non-enumerative) option known. 

The most striking feature of this approach is that it not only gives a fast algorithm but also provides several new structural results as simple corollaries. First, it yields the first (constructive) proof of the existence of a competitive allocation of a mixed manna under SPLC utilities. Second, it shows that a rational-valued equilibrium exists if all input parameters are rational.  Third, together with the result of Todd~\cite{Todd76}, it gives a proof of membership of this problem in $\classPPAD$. Fourth, this shows that the number of equilibria is odd in a nondegenerate instance. We note that none of these results were known even for linear utilities. The last property also settles the conjecture of~\cite{BogomolnaiaMSY17} in affirmative, which shows the odd property for 2 agents (or 2 items) under linear utilities and conjectures the same for any number of agents and items. 

Furthermore, we show that if either the number of agents or the number items is a constant, then the number of pivots in our algorithm is polynomial when the mixed manna contains only bads. All our results also extend to a more general setting of \emph{exchange}; see Section~\ref{sec:prelim} for definition. To the best of our knowledge, the exchange setting was not studied before despite its natural applications, e.g., exchange of tasks among agents in which a group of university students teaching subjects or sports to each other, or some landlords providing shelter to apartment seekers in their houses in exchange of help in household chores~\cite{mitwohnen}. 

\paragraph{Techniques}
Our approach requires two steps. First, we need to derive a linear complementarity program (LCP) formulation for the problem whose solutions capture competitive equilibria. Second, we must show that the algorithm always terminates at a competitive equilibrium -- this is usually shown by proving no \emph{secondary-rays} (special kind of unbounded edges) in the LCP polyhedron; see Section \ref{sec:prel-lcp} for details. 

This approach has been extensively utilized for computing equilibria in markets (with only goods) and in games; see, e.g.,~\cite{Eaves76,GargMSV15,GargV14,GargMV18,KollerMS94,Sorensen12,HansenL18}. Each of them first obtains an LCP formulation that \emph{exactly} captures equilibria, and then shows that there are no secondary-rays. Despite significant efforts, no such LCP was found for competitive allocation of mixed manna. The LCP we design has ``non-equilibrium'' solutions, and furthermore, it has secondary-rays. 

We first note that both the above steps must work \emph{simultaneously}. In fact, it is not difficult to come up with an LCP formulation for only bads by extending the LCP for only goods~\cite{Eaves76,GargMSV15}. However, it does not yield an algorithm. Hence, we first come up with a different LCP for only bads. The case of mixed manna turns out to be even more challenging as simply merging the two LCPs does not work. This is due to the single utility maximization over all items for each agent, and it is apriori not clear how much an agent wants to spend on only goods (or bads). Using new ideas, we derive an LCP formulation that captures competitive allocation of a mixed manna, but it also captures some non-equilibrium solutions that we deal with in the second step.  

The second step presents the most significant challenge. The major issue with the Lemke's scheme is that, in general, it is not guaranteed to find a solution. This happens when the path followed by the algorithm leads to a secondary ray. 
 
As mentioned before, the standard way to show convergence of a complementary pivot algorithm to a solution is by proving that there are no secondary rays in the LCP polyhedron. However, our LCP formulation has secondary rays. Therefore, we must show that the algorithm never reaches a secondary ray to guarantee its termination to a competitive equilibrium. In addition, we must show that the final output of the algorithm is an equilibrium, rather than a non-equilibrium solution to the LCP. This makes the analysis of our algorithm more challenging than the previous works. 

For the hardness result, we show that finding a competitive allocation of a bad manna under SPLC utilities is $\classPPAD$-hard, even a $\tfrac{1}{poly(n)}$-approximation, where $n$ is the number of agents (Theorem \ref{thm:ceei-hardness}). A similar result is known for the case of good manna~\cite{ChenDDT09,chen2009spending,ChenPY13}. We obtain a reduction from the problem of computing a Nash equilibrium in two-player games, a known $\classPPAD$-complete problem~\cite{ChenDT09}. At a high-level, our approach follows the same approach as in the case of good manna. However, one major issue is {\em how to prevent an agent from consuming some bad}. This is easy in case of goods --  we can just set the corresponding utility function to zero. For bads this amounts to setting the disutility value to infinity, which may lead to the non-existence of equilibrium~\cite{ChaudhuryGMM20}. Therefore, we stick to finite disutility values and circumvent the issue through a careful choice of parameters and analysis. Our construction implies that the PPAD-hardness holds even if the PLC function for every pair of agent and bad has at most three segments with constant slopes.

\paragraph{Further Related Work}
The fair division literature is too vast to survey here, so we refer to the excellent books~\cite{BramsT96,RobertsonW98,Moulin03} and restrict attention to previous work that appears most relevant. 

Most of the work in fair division is focused on allocating a good manna with a few exceptions of bad manna~\cite{Su99,AzrieliS14, BramsT96, RobertsonW98}. The seminal paper of Bogomolnaia et al.~\cite{BogomolnaiaMSY17} is the first to study the case of mixed manna. While linear is the most studied utility function to model agents' preferences~\cite{BogomolnaiaMSY17}, SPLC is its natural extension to capture important generalizations. For these models, competitive allocation of a good manna is very well-understood. 
Two most ideal economic models to study competitive allocation are of Fisher and Exchange. In Fisher setting, the celebrated Eisenberg-Gale convex program captures equilibrium when utility functions are homothetic, concave and monotone, which includes linear~\cite{EisenbergG59,Eisenberg61}. The program maximizes the product of the agents' utilities (i.e., the Nash welfare) on all feasible utility profiles, and implies existence, convexity, uniqueness (of utility profile), and polynomial time computation; there are faster algorithms for some special cases~\cite{DevanurPSV08,Orlin10,Vegh14,Vegh17}. For exchange, polynomial time algorithms are known for subclasses of homothetic functions including linear~\cite{Jain07,Ye07,DuanM15,DuanGM16,GargV19}. Although the SPLC case is known to be $\classPPAD$-complete even in Fisher setting~\cite{ChenT09,ChenPY13}, the complementary pivot algorithm~\cite{GargMSV15} works well in practice and the only non-brute-force option known. 

The fair allocation of \emph{indivisible} items is also an intensely studied problem for the case when all items are goods with a few recent exceptions~\cite{AzizRSW17,AzizCL19,HuangL19,AzizCL19a,AzizCIW19,SandomirskiyH19}. Since the standard notions of fairness such as envy-freeness are not applicable, alternate notions have been defined for this case; see~\cite{LiptonMMS04,Budish11,CaragiannisKMPSW16,BarmanM17,GhodsiHSSY17,GargKK20} for a subset of notable work and references therein. The Nash welfare continues to serve as a major focal point in this case as well, for which approximation algorithms have been obtained under several classes of utility functions including linear and SPLC~\cite{ColeG15,ColeDGJMVY17,AnariGSS17,AnariMGV18,BarmanKV18,GargHM18,ChaudhuryCGGHM18}.

In our other related work~\cite{ChaudhuryGMM20}, we show that a slight variant of the problem of computing a competitive allocation of a bad manna under linear utilities is already $\classPPAD$-hard. This, together with the non-convex and disconnected set of solutions, suggests that our algorithm in this paper is likely to be the best one can hope for this problem even under linear utilities. 

\paragraph{Organization of the paper} We introduce notation and preliminaries in the following Section~\ref{sec:prelim}. In Section~\ref{sec:linear} we give a high level overview of our LCP formulation and algorithm in the special case of linear utilities to highlight the main issues and challenges that arise in computing competitive allocation of a mixed manna. In Section~\ref{sec:splc}, we formally define SPLC utilities and extend our LCP formulation to this more general problem. Our algorithm and its analysis appear in Section~\ref{sec:algo}. A precise description of all the results are presented in Section~\ref{sec:results}. In Section~\ref{sec:strongly}, we show a strongly polynomial bound of the algorithm for all bads when the number of agents (or items) is a constant. We show $\classPPAD$-hardness of the {\em bads only} problem in Section~\ref{sec:hardness}. 
Section~\ref{sec:experiments} summarizes our numerical experiments on randomly generated instances. Appendix~\ref{sec:bads_to_goods} presents a counterexample showing that bads cannot be reduced into goods, Appendix~\ref{app:sr} illustrates that the Lemke's scheme fails if we try a naive adaption of the LCP of~\cite{Eaves76,GargMSV15}, which is specialized to all goods, and all the omitted proofs appear in Appendix~\ref{sec:proofs}. Finally, Appendix~\ref{sec:convergence_all_bads} shows the convergence of the algorithm for only bad manna. 

\section{Preliminaries}\label{sec:prelim}
Let $M$ be the set of $m$ divisible items that needs to be divided among the set $N$ of $n$ agents. An item can be a good or a bad for an agent as discussed earlier. Each agent $i$ has a utility function $u_i: \bR^m_+ \rightarrow \bR$ over bundles of items. Let $\x_i = (x_{ij})_{j\in M}$ denote agent $i$'s assigned bundle containing $x_{ij}$ amount of item $j$. The standard notions of fairness and efficiency are \emph{envy-freeness} and \emph{Pareto optimality}, defined as follows:

\begin{itemize}
\item \emph{Envy-freeness:} An allocation $X = (\x_1, \dots, \x_n)$ is said to have no envy, if each agent weakly prefers her allocation over any other agents' allocation, i.e., $u_i(\x_i) \ge u_i(\x_j), \forall i, j\in N$. Envy-freeness also implies another standard fairness notion called \emph{proportionality}, where every agent receives at least a $1/n$ share of all items, i.e., $u_i(\x_i) \ge \frac{1}{n}u_i(M), \forall i\in N$. 

When agents have different weights (unequal rights/responsibilities), say $\eta_i$ is the weight of agent $i$, then we say that an allocation $X$ has no envy if $\frac{u_i(\x_i)}{\eta_i} \ge \frac{u_i(\x_j)}{\eta_j}, \forall i, j\in N$. 
\item \emph{Pareto optimality:} An allocation $X'=(\x'_1, \dots, \x'_n)$ Pareto dominates another allocation $X=(\x_1, \dots, \x_n)$ if $u_i(\x'_i)\geq u_i(\x_i),\forall i$ and $u_k(\x'_k)>u_k(\x_k)$ for some $k$. An allocation $X$ is Pareto optimal if no allocation $X'$ dominates $X$. 
\end{itemize}

Competitive allocations are well-known to be not only envy-free and Pareto optimal, but also core stable\footnote{No coalition of agents, by standing alone, can allocate better shares to each agent in the coalition.}. Two most ideal economic models to study competitive allocations are of Fisher and Exchange.\footnote{These are two fundamental economic models, introduced by Walras~\cite{Walras74} and Fisher~\cite{BrainardS00} in the late nineteenth century, respectively.} An exchange model is like a barter system, where each agent comes with an initial endowment of items and exchanges them with others to maximize her utility function. Fisher is a special case of exchange model where each agent has a fixed proportion of each item. Competitive Equilibrium with Equal Incomes (CEEI)~\cite{Varian74} is a special case of Fisher where each agent has the same endowment. 

\subsection{Competitive Equilibrium} 
Let $\w_i=(W_{ij})_{j\in M}$ denote the agent $i$'s initial endowment containing $W_{ij}\ge 0$ amount of item $j$. In Fisher, $W_{ij} = \eta_i, \forall i\in N, j\in M$, where $\eta_i$ is the budget (entitlement/weight) of agent $i$. In CEEI, $\eta_{i} = 1, \forall i\in N$. Given prices of items, each agent demands a utility maximizing (optimal) bundle by spending her budget (earned by selling the initial endowment). At (competitive) equilibrium, prices $\p=(p_j)_{j\in M}$ and allocation $(\x_i)_{i\in N}$, satisfy two conditions: 

\begin{enumerate}
\item \emph{Optimal bundle}, $\x_i$ maximizes agent $i$'s utility at $\p$, i.e., $\x_i \in \{\arg\max u_i(\y) \text{ s.t. } \sum_j y_{j}p_j = \sum_j W_{ij}p_j;\ y_j \ge 0, \forall j\}$, and 
\item \emph{Demand meets supply (market clearing)}, demand of each item equals its supply, i.e., $\sum_{i} x_{ij} = \sum_{i} W_{ij}, \forall j\in M$. 
\end{enumerate}

Observe that equilibrium prices are scale invariant, i.e., if $\p$ is an equilibrium price vector, then so is $\alpha \p, \forall \alpha >0$. Further, the prices of items, which are bads for all agents, will be negative at equilibrium. We can assume without loss of generality that each agent brings some fraction of some item, and that there is a unit amount of each item, i.e., $\sum_{i\in N} W_{ij} = 1, \forall j\in M$. 
\subsection{Linear Complementarity Problem and Lemke's Scheme}\label{sec:prel-lcp}
Linear Complementary Problem (LCP) is a generalization of Linear Programming (LP) complementary slackness conditions: Given an $n \times n$ matrix $A$ and an $n$-dimensional vector $\qq$, the problem is to find $\y$ such that 
\begin{equation}\label{0lcp}
\forall i \in [n]:\ \ \ (A\y)_i \le q_i; \ \ \ y_i \ge 0;\ \ \  y_i(A\y-\qq)_i =0 \enspace . 
\end{equation}
Clearly, the problem is only interesting when $q_j <0$ for some $j \in [n]$, otherwise $\mathbf{y} = \mathbf{0}$ offers a trivial solution. 
Let $\cP$ denote the $n$-dimensional polyhedron defined by the first two constraints of \eqref{0lcp}. We assume that $\cP$ is nondegenerate. That is, exactly $n-d$ constraints hold with equality on any $d$ dimensional face of $\cP$. Under this assumption, each solution to \eqref{0lcp} corresponds to a vertex of $\cP$ since exactly $n$ equalities must be satisfied. 

The LCPs are general enough to capture (strongly) NP-hard problems~\cite{CottlePS92} and therefore may not have a solution. Lemke's scheme first \emph{augments} the LCP by adding a scalar variable $z$, to create easily accessible solutions, and considers the formulation:
\begin{equation}\label{alcp}
\begin{array}{c}
\forall i\in [n]: \ \ \   (A\y)_i - z \le  q_i; \ \ \ y_i \ge 0; \ \ \  y_i(A\y - \qq)_i = 0\\ 
z \ge 0 
\end{array}\enspace .
\end{equation}

Observe that a solution $(\y, z)$ with $z=0$ of~\eqref{alcp} gives a solution $\y$ of~\eqref{0lcp} and vice versa. Let $\cP'$ be the polyhedron defined by the first two linear constraints for each $i\in [n]$, and $z\ge 0$ constraint. The dimension of $\cP'$ is $n+1$. Assuming that $\cP'$ is nondegenerate, solutions to \eqref{alcp} must still satisfy $n$ constraints. Therefore, the set of solutions $S$ is a subset of the 1-skeleton of $\cP'$, i.e., solutions consist of edges (1-dimensional faces) and vertices (0-dimensional faces) of $\cP'$. Further, any solution to \eqref{0lcp} must be a vertex of $\cP'$ with $z=0$.

Solutions $S$ to the augmented LCP have some important structural properties. We say that label $i$ is present at $(\y,z)\in \cP'$ if $y_i =0$ \emph{or} $(A\y)_i - z = q_i$. Every solution in $S$ is \emph{fully labeled} since label $i$ is present for all $i\in [n]$. A solution $s\in S$ contains \emph{double label} $i$ if $y_i =0$ \emph{and} $(A\y)_i - z =q_i$ for $i\in [n]$. Further, there are two edges of $S$ incident to $s$ since there are only two ways to relax the double label while keeping all the other labels. Obviously, any solution $s$ to \eqref{alcp}, which satisfies $z=0$, contains no double labels. Relaxing $z=0$ yields the unique edge incident to $s$ at this vertex.

From the above observations, it follows that $S$ consists of paths and cycles. We note that some of the edges in $S$ are unbounded. An unbounded edge of $S$ incident to vertex $(\y^*,z^*)$ with $z^* >0$ is called a {\em ray}. Formally, a ray $\cR$ has the form
\[ \cR = \{(\y^*, z^*) + \alpha (\mathbf{y}', z')\ |\ \alpha\ge 0\} \enspace ,\]
where $(\mathbf{y}',z') \neq \mathbf{0}$ solves \eqref{alcp} with $\qq = \mathbf{0}$ (the direction vector). Among all rays, one is special. Observe that $\y=0, z \ge |\min_{i} q_i|$ gives a solution to \eqref{alcp}, which forms an unbounded edge of $S$, known as {\em primary-ray}.
All other rays are called \emph{secondary-rays}. Starting from the primary-ray, Lemke's scheme follows a path on 1-skeleton of $\cP'$ with a guarantee that it never repeats a vertex. Therefore, either it reaches a vertex with $z=0$ that is a solution of the original LCP \eqref{0lcp}, or it ends up on a {\em secondary-ray}. In the latter case, the algorithm fails to find a solution, and in fact problem may not have a solution. Observe that we can replace $\mathbf{1} z$ with $\mathbf{c} z$ where $c_i = 0$ when $b_i > 0 $, and $c_i > 0$ when $b_i < 0$, without changing the role of $z$. 

In what follows, for simplicity, we use the shorthand notation of \[(A\y)_i \le q_i \ \ \perp \ y_i\] to represent $\{(A\y)_i - z \le  q_i; \  \ y_i \ge 0; \ \  y_i(A\y - \qq)_i = 0\}$ while defining LCPs as in \eqref{0lcp}.

\section{Warm up: Linear Utilities}\label{sec:linear}
In this section, we provide a high-level technical overview of our algorithm to convey the main ideas and challenges. For simplicity, we will assume linear utilities. We deal with the more involved case of SPLC utilities in Section~\ref{sec:splc}. Since linear is a special subcase of SPLC, all the formal proofs presented in later sections for SPLC simply apply to the linear case, so we do not present them separately here. 

Linear utility function is defined as $u_i(\x_i) := \sum_j U_{ij}x_{ij}$, where $U_{ij}$ is the utility of agent $i$ for a unit amount of item $j$. Clearly, $U_{ij}\ge 0$ if item $j$ is a good for $i$ and $U_{ij}<0$ if it is a bad. For bads, we also use $D_{ij} := |U_{ij}| > 0$ to denote the \emph{disutility} of agent $i$ for a unit amount of bad $j$. If $U_{ij}=0$, then we set $x_{ij}:= 0$ and do not introduce corresponding variable in the formulation. 

We show in Section~\ref{sec:splc} that it is without loss of generality to assume that all agents agree on whether an item $j$ is good or bad. Therefore, $M$ can be partitioned into a set $M^+$ of goods and a set $M^-$ of bads. We also show in Section~\ref{sec:splc} that competitive equilibrium  prices of bads are negative, and those of goods are positive. We may also assume without loss of generality that the total supply of every item is $1$.\footnote{This is like redefining the unit of items by appropriately scaling utility values.}  As discussed in Section~\ref{sec:prelim}, we derive our results for the most general exchange setting, which is a strict generalization of Fisher and CEEI. 

Competitive equilibrium in the exchange setting does not always exist. We show that it is guaranteed to exist under the \emph{strong connectivity} assumption (defined in Section~\ref{sec:splc}). This assumption implies the existence of equilibrium for \emph{all instances} of Fisher (and hence CEEI) setting under linear utilities. We will show that our algorithm converges to a competitive equilibrium under strong connectivity, thereby also implying a constructive proof of existence.\footnote{We note that the strong connectivity assumption is vacuous in case of only bad manna. For this case, we provide a separate convergence proof in Appendix~\ref{sec:convergence_all_bads} without any assumptions.}

Like the Simplex algorithm for linear programming (LP), our algorithm is based on Lemke's complementary pivoting scheme, which follows a path on a polyhedron and therefore easy to implement and fast in practice (see Section~\ref{sec:experiments} for experimental results). The complementary pivoting is a powerful tool to design non-enumerative algorithms for (PPAD-)hard problems; a prominent example is most widely used Lemke-Howson algorithm~\cite{LemkeH64} for computing a Nash equilibrium in a two-player game. Such an approach follows two steps: 
\begin{enumerate}
\item Design a Linear Complementarity Problem (LCP) formulation that {\em exactly} captures the solutions.
\item Show that a complementary pivoting scheme converges to a solution. It essentially boils down to showing that the algorithm will not reach an infinite edge on the LCP polyhedron -- a {\em secondary-ray} (see Section \ref{sec:prel-lcp} for the definitions).  
\end{enumerate}

The main challenge here is to make both the steps work \emph{simultaneously}. In fact, it is not difficult to come up with an LCP formulation for only bads (i.e., $M^+ = \emptyset$) by extending the LCP for only goods~\cite{Eaves76,GargMSV15}. However, it does not yield an algorithm. Hence, we first come up with a different LCP for only bads. The case of mixed manna turns out to be even more challenging as simply merging the two LCPs does not work. This is due to a single utility maximization over all items, and it is apriori not clear how much an agent wants to spend on only goods (or bads) as shown in Example~\ref{example1}. 

The standard way, used in all related works under all goods case~\cite{Eaves76,GargMSV15,GargV14,GargMV18}, to show convergence of Lemke's scheme to a solution is by proving that there are no secondary-rays in the LCP polyhedron as discussed in Section~\ref{sec:prel-lcp}. Despite significant efforts, no such LCP was found for competitive allocation of mixed manna. We then switched our attention to showing convergence to a solution even though there are secondary-rays. 

\subsection{LCP Formulation}
\subsubsection{Only Bads}
In this section, we derive an LCP formulation for only bads, i.e., $M^+=\emptyset$. Recall from Section~\ref{sec:prelim} that at competitive equilibrium, every agent receives their optimal bundle and demand meets supply. Thus, the LCP need to capture both these conditions. Since LCP allows only non-negative variables, we use $p_j \ge 0$ even for bad $j$ and interpret it as the {\em payment to agents per unit of bad} done. At prices $\p$ and allocation $\x$, the money {\em earned} by agent $i$ on bad $j$ is $x_{ij}p_j$, which is a quadratic term. To ensure linearity of equations, we use $f_{ij}$ to denote the money {\em earned} by $i$ on $j$. Then, the following linear equations capture the demand meets supply condition:
\begin{equation}\label{eq.chores-mc}
\forall j \in M,\ \ \ \sum_{i \in N} f_{ij} = p_j, \ \ \ \ \mbox{ and } \ \ \ \ \forall i \in N, \ \ \ \sum_{j \in M} f_{ij} =\sum_{j \in M} W_{ij} p_j
\end{equation}

At prices $\p$, agent $i$'s optimal bundle $\x_i \in \{\argmin \sum_{j} D_{ij} x_{ij}\ \text{ s.t. } \ \sum_{j} x_{ij}p_j = \sum_j W_{ij}p_j;\ x_{ij} \ge 0, \forall j\}$. That is, agent $i$ wants to minimize her total disutility (pain), and in the linear case, her optimal bundle consists of only those bads that minimizes the pain per unit of money. Let $\MPB_i(\p)$ denote the bads with minimum-pain-per-buck for agent $i$ at prices $\p$, i.e., $\MPB_i(\p)=\argmin_j D_{ij}/p_j$. Then, $f_{ik}>0$ only if $k \in \MPB_i(\p)$. Let us now introduce a variable $r_i$ to capture the inverse of MPB for agent $i$. Then, the following captures the optimal bundle condition: 
\begin{equation}\label{eq.chores-ob}
\forall i\in N, \forall j \in M, \ \ \ p_j - D_{ij}r_i \le 0, \ \ \  f_{ij}\ge 0,\ \ \ \ f_{ij}(D_{ij}r_i - p_j) = 0
\end{equation}

Using \eqref{eq.chores-mc} and \eqref{eq.chores-ob}, we obtain the following LCP (very similar to the one for the goods only case~\cite{Eaves76}): Variables are, $p_j$ representing price of bad $j$, $f_{ij}$ representing earning of agent $i$ from bad $j$, and $r_i$ representing inverse of minimum pain-per-buck (MPB) of agent $i$. 
\begin{subequations}\label{eq.chores-lcp}
\begin{eqnarray}
\forall i \in N:  & \sum_{j\in M} W_{ij} p_j - \sum_{j \in M} f_{ij}\le 0  & \ \ \perp  \ \ r_i \\[10pt]
\forall j \in M:  & \sum_{i \in N} f_{ij} - p_j\le 0  & \ \ \perp  \ \ p_j\\[10pt]
\forall i \in N, \forall j \in M: &  p_j - D_{ij}r_i \le 0  & \ \ \perp  \ \ f_{ij}
\end{eqnarray} 
\end{subequations}

It is easy to show that all competitive equilibria are solutions of LCP~\eqref{eq.chores-lcp}. One issue, common to all related LCPs~\cite{Eaves76,GargMSV15,GargV14,GargMV18}, is that the LCP~\eqref{eq.chores-lcp} has more solutions, e.g., setting all variables to $0$ is a (trivial) solution. The fix used by all the previous works is: Since equilibrium prices are non-zero and scale-invariant, it is without loss of generality to assume $p_j \ge 1, \forall j\in M$ at equilibrium, and therefore consider $(1+p_j)$ as the price of item $j$ and modify the LCP accordingly. However, this fix fails miserably for the LCP~\eqref{eq.chores-lcp}. In particular, it indeed does give an LCP that exactly captures all competitive equilibria, but it does not yield to an algorithm. If we apply Lemke's scheme on such an LCP, it encounters an infinite edge ({\em secondary-ray}) in a couple of steps and hence fails to find a solution to the LCP.\footnote{If we replace $p_j$ with $(1+p_j)$ in LCP~\eqref{eq.chores-lcp}, and then augment it by adding scalar variable $-z$ in inequalities with negative right hand side to apply Lemke's scheme, we get:
\[
\begin{array}{cccl}
\forall i \in N, & \sum_{j\in M} W_{ij} p_j - \sum_{j \in M} f_{ij}-z \le -\sum_{j\in M} W_{ij} & \perp & r_i\\
\forall j \in M, & \sum_{i \in N} f_{ij} - p_j\le 1 & \perp & p_j\\
\forall i \in N,\ \forall j \in M, &  p_j - D_{ij}r_i -z \le -1 & \perp & f_{ij}
\end{array}
\]
Suppose agent $k\in N$ has the highest total endowment $\sum_j W_{kj}$ that is more than $1$, then for $\p,\f,\r=0$ and $\forall z\in [\sum_j W_{kj}, \infty)$ are solutions of the above LCP, forming the primary-ray. The vertex at the end of this primary-ray has $z=\sum_j W_{kj}$ where the first inequality above becomes tight for agent $k$. Lemke's scheme does complementary pivot by increasing the corresponding variable $r_k$. Note that, while fixing the remaining variables to the current value, any $r_k\ge 0$ is a solution. Therefore, the algorithm will increase $r_k$ infinitely without finding the next vertex. This is another unbounded edge of the LCP, a secondary-ray. Thus, Lemke's scheme gets stuck in the first pivoting step itself and fails to find a solution; see Appendix \ref{app:sr} for further discussion.}

We need a fix for LCP \eqref{eq.chores-lcp} so that the Lemke's scheme works, and besides, the resulting LCP is extendable to allow goods as well. Extension to mixed manna case has to capture \emph{negative} prices (and money allocation) for bads while combining optimal bundle conditions of bads with those of goods. We next discuss the general mixed manna setting to show an approach that handles all these issues.

\subsubsection{Mixed Manna}

Since bads incur disutility, no agent wants to consume (do) them unless there is a valid reason. At given prices, an agent $i$ is willing to do a bad only if either her income from endowment, i.e., $(\sum_{j\in M^+} W_{ij} p_j - \sum_{j \in M^-} W_{ij}p_j)$, is negative, implying she needs to earn, or she wants to buy a good from money earned because the utility from the good outweighs the disutility of doing the bad. The latter condition can be formally stated as: Recall minimum-pain-per-buck bads as $\MPB_i(\p)=\argmin_{j \in M^-} D_{ij}/(-p_j)$, and similarly define maximum-bang-per-buck (MBB) goods as $\MBB_i(\p)= \argmax_{j \in M^+} U_{ij}/p_j$. Naturally, agent $i$ consume goods only from $\MBB_i(\p)$ and bads only from $\MPB_i(\p)$, if at all. And, if $\MBB_i(\p)\ge \MPB_i(\p)$, then agent $i$ may want to consume bad $b\in \MPB_i(\p)$ so that from the earned money she can buy a good $g\in \MBB_i(\p)$. At equilibrium, the inequality has to hold with equality, otherwise $i$ will demand infinite amounts of both $g$ and $b$. By capturing both the max and min ratios for goods and bads respectively in $\frac{1}{r_i}$, the optimal bundle condition can be stated as:
\[
\begin{array}{lll}
\forall i \in N, \forall j \in M^+:&\ \ \  \frac{U_{ij}}{p_j} \le \frac{1}{r_i}& \ \ \text{and}\ \ \  f_{ij}>0 \Rightarrow \frac{U_{ij}}{p_j} = \frac{1}{r_i} \\[10pt]
\forall i \in N, \forall j \in M^-:&\ \ \ \frac{D_{ij}}{p_j} \ge \frac{1}{r_i}&\ \ \text{and} \ \ \  f_{ij}>0 \Rightarrow \frac{D_{ij}}{p_j} = \frac{1}{r_i}
\end{array}\enspace .
\]

Here, $f_{ij}$ for bad $j$ should be thought of as earning of agent $i$. Combining the above with an appropriate extension of~\eqref{eq.chores-mc}, we get the following LCP,
\[
\begin{array}{rccl}
\forall i\in N: & \displaystyle\sum_{j \in M^+} f_{ij} - \displaystyle\sum_{j \in M^-} f_{ij} \le \displaystyle\sum_{j \in M^+} W_{ij} p_j - \displaystyle\sum_{j \in M^-} W_{ij} p_j &  \perp & r_i\\[20pt]
\forall j\in M^+: & p_j \le \sum_{i \in N} f_{ij}&   \perp & p_j\\[10pt]
\forall j\in M^-: & \sum_{i \in N} f_{ij} \le p_j  & \perp&  p_j\\[10pt]
\forall i \in N,  \forall j \in M^+:& U_{ij} r_i - p_j \le 0;& \perp &  f_{ij}\\[10pt]
\forall i \in N,  \forall j \in M^-:& p_j - D_{ij} r_i \le 0;& \perp &  f_{ij}
\end{array}
\]

Again, all competitive equilibria are solutions of the above LCP, but it has more solutions such as all-zeros. Let us now replace $p_j$ with $(P-p_j)$ and $r_i$ with $(R-r_i)$, where $P,R>0$ are large constants such that $R > P/U_{min}$ for $U_{min}=\min_{(i,j): U_{ij}\neq 0} |U_{ij}|$. Think of $P$ as an upper bound on prices and $R$ an upper bound on $r_i$'s (up to scaling) at any equilibrium. With this, we get the following LCP:
\begin{subequations}\label{eq:mixed-lcp}
\begin{eqnarray}
\hspace{-10pt}\forall i\in N:\hspace{1cm} & \hspace{-1cm}\displaystyle\sum_{j \in M^+} f_{ij} - \displaystyle\sum_{j \in M^-} f_{ij} \le \displaystyle\sum_{j \in M^+} W_{ij} (P-p_j) - \displaystyle\sum_{j \in M^-} W_{ij} (P-p_j) &  \perp \ \   r_i \label{eq:mca}\\[10pt]
\forall j\in M^+: & (P-p_j) \le \sum_{i \in N} f_{ij}&   \perp \ \   p_j \label{eq:mcg}\\[10pt]
\forall j\in M^-: & \sum_{i \in N} f_{ij} \le (P-p_j)  & \perp\ \   p_j \label{eq:mcb}\\[10pt]
\forall i \in N,  \forall j \in M^+:& U_{ij} (R-r_i) - (P-p_j) \le 0& \perp \ \   f_{ij}\label{eq:obg}\\[10pt]
\forall i \in N,  \forall j \in M^-:& (P-p_j) - D_{ij} (R-r_i) \le 0& \perp \ \   f_{ij}\label{eq:obb}
\end{eqnarray}
\end{subequations}

In the next section, we will discuss how solutions of LCP~\eqref{eq:mixed-lcp} with $\p < P$ and $\r < R$ maps to competitive equilibrium.\footnote{With $\p < P$, we mean $p_j < P, \forall j$, and so on.}  However, there are still two crucial issues that the algorithm needs to handle: $(i)$ ``dummy solutions'' where $\p \not< P$ or $\r \not< R$, and $(ii)$ the augmented LCP to apply Lemke's scheme has secondary-rays (easy to construct). We show in Section \ref{sec:lin-algo} that the algorithm, starting from the {\em primary-ray}, will never reach either. 

Every equation in LCP~\eqref{eq:mixed-lcp} represents three constraints of the LCP, namely the linear inequality constraint, non-negativity of the corresponding variable, and complementarity condition which requires either the inequality to be tight or the variable to be zero. To avoid ambiguity, now on we will use equation number to refer to the linear constraint, and equation number with a prime to refer to the complementarity constraint. For example, \eqref{eq:mcg} refers to $(P-p_j)\le \sum_{i\in N} f_{ij}$ and (\ref{eq:mcg}') refers to $p_j((P-p_j)- \sum_{i\in N}f_{ij})=0$.

\subsubsection{Correctness}\label{sec:lin-lcp-correct}
\noindent{\bf Proof (sketch).} Although not obvious, it is not too difficult to show that a competitive equilibrium gives a solution of LCP~\eqref{eq:mixed-lcp}. Given equilibrium prices $\p^*$ and corresponding money allocation $\f^*$, construct a solution of the LCP as follows: Assume $p^*_j < P, \forall j$ due to scale invariance, and set
\[
\begin{array}{rl}
\forall j \in M^+, \forall i\in N: & \  p_j = P - p^*_j \ \ \text{ and } \  \ f_{ij} = f^*_{ij}\\[10pt]
\forall j \in M^-, \forall i\in N: &\ p_j = P- (-p^*_j) \ \ \text{ and } \  \  f_{ij} = -f^*_{ij}\\[10pt]
\forall i \in N: & \ r_i = R - r^*_i,\ \  \mbox{ where } \frac{1}{r^*_i} =  \left\{\begin{array}{ll}\max_{j \in M^+} \frac{U_{ij}}{p^*_j} & \mbox{ if } \exists j \in M^+,\ f^*_{ij}>0\\  \min_{j \in M^-} \frac{D_{ij}}{(-p^*_j)} & \mbox{ otherwise } \end{array}\right.  
\end{array}\enspace .
\]
The more difficult part is to map LCP solutions to competitive equilibrium (CE). First observe that if $(\p,\f,\r)$ is a solution of the LCP, then $\p \le P$ and $\r \le R$: Since all variables are non-negative, \eqref{eq:mcb} ensures that $p_j\le P$ for all bads $j$, and this together with \eqref{eq:obb} ensures that $r_i\le R$ for all $i$. Then, \eqref{eq:obg} ensures that even for all goods $j, p_j \le P$. 

Unfortunately, the LCP does have ``dummy solutions'', ones that do not give CE. For example, setting $p_j=P, \forall j \in M$, $r_i=R,\ \forall i\in N$, and $\f=\zeros$ is a solution of the LCP that gives no information about equilibrium. There may be more such dummy solutions that we are unable to discard, however in the next section we will argue that the algorithm has to find a ``desired solution'' before it encounters any such dummy solution (this is in addition to avoiding the secondary-rays). 

Now we argue that if $\p < P$ and $\r < R$ at $(\p,\f,\r)$ then it can be mapped to a competitive equilibrium. For every good $j \in M^+$, set $p^*_j = P - p_j$ and $f^*_{ij}=f_{ij},\ \forall i \in N$, and for every bad $j\in M^-$, set $p^*_j = -(P-p_j)$ and $f^*_{ij} = -f_{ij}$. For every agent $i$, set $r_i^* = R-r_i$. Clearly, $|\p^*|>0$ and $\r^*>0$. Since the inequalities of \eqref{eq:mca}, \eqref{eq:mcg}, \eqref{eq:mcb} are satisfied simultaneously, we have
\[
\sum_{j \in M} p^*_j \le \sum_{j\in M, i \in N} f^*_{ij} \le \sum_{j \in M} p^*_j \enspace . 
\]

Thereby, all of them hold with equality implying that every agent spends/earns exactly her budget, and every item is allocated completely. To show that $\f^*$ indeed allocates optimal bundle to every agent at prices $\p^*$, first we note that \eqref{eq:obg} and \eqref{eq:obb} ensures that for every agent $U_{ig}/p^*_g \le 1/{r^*_i} \le D_{ib}/{(-p^*_b)}$ for any good-bad pair $(g,b)$. Therefore, agents do not have to demand infinite amount of any item. Next, the corresponding (\ref{eq:obg}') and (\ref{eq:obb}') ensure that $i$ is allocated goods only from $\MBB_i(\p^*)$ and bads only from $\MPB_i(\p^*)$. This together with the fact that she exactly spends her net earning implies $(\p^*,\f^*)$ forms a competitive equilibrium.

\begin{theorem}[Informal]\label{thm:lcp}
For mixed manna under linear utilities, solutions of LCP~\eqref{eq:mixed-lcp} with $\p < P$ and $\r < R$ are in one-to-one correspondence with competitive equilibria.\footnote{
In Section~\ref{sec:splc}, we extend LCP~\eqref{eq:mixed-lcp} to capture equilibria under more general SPLC utilities. A number of new issues arise, e.g., the characterization of optimal bundle turns out to be much more complex. We show that it can be captured through linear and complementary conditions that still uses a single variable $r_i$ to tie them together.} 
\end{theorem}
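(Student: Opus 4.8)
The plan is to prove the bijection by writing the forward and backward maps explicitly and checking they are mutually inverse. The forward map sends a competitive equilibrium $(\p^*,\x^*)$ to an LCP solution by the substitution sketched above: after rescaling $\p^*$ (scale invariance) so that $|p^*_j|<P$ for every item — which rescales $r^*_i$ below by the same factor — set $p_j=P-p^*_j,\ f_{ij}=f^*_{ij}$ for $j\in M^+$; set $p_j=P-(-p^*_j),\ f_{ij}=-f^*_{ij}$ for $j\in M^-$, where $f^*_{ij}:=x^*_{ij}p^*_j$ is the net money of $i$ on $j$; and set $r_i=R-r^*_i$, where $1/r^*_i$ is the common $\MBB$ value over goods if $i$ consumes some good, and the common $\MPB$ value over bads otherwise. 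The backward map is the inverse substitution, recovering the allocation as $x^*_{ij}=f^*_{ij}/p^*_j$. The bulk of the work is checking each direction yields a valid object.

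For the forward direction I would verify, in order: (i) $r^*_i<R$ — this is exactly where the hypothesis $R>P/U_{min}$ is used, since $r^*_i$ is realized by (goods) or bounded by a max of (bads) ratios $|p^*_j|/|U_{ij}|$, each strictly below $P/U_{min}$; (ii) the ``market'' inequalities \eqref{eq:mca}, \eqref{eq:mcg}, \eqref{eq:mcb} hold, in fact with equality, because at equilibrium every agent spends exactly her budget and every item clears exactly, and the sign flips on bad prices and bad earnings make the two sides line up — so the complementarities (\ref{eq:mca}'), (\ref{eq:mcg}'), (\ref{eq:mcb}') are automatic; (iii) the optimal-bundle inequalities \eqref{eq:obg}, \eqref{eq:obb} hold, because optimality of $\x^*_i$ at $\p^*$ makes $1/r^*_i$ an upper bound on $U_{ij}/p^*_j$ over goods and a lower bound on $D_{ij}/(-p^*_j)$ over bads (the only delicate sub-case being an agent who consumes nothing, covered by the $\MPB$ choice of $r^*_i$), while (\ref{eq:obg}'), (\ref{eq:obb}') hold since any consumed good (bad) attains the $\MBB$ ($\MPB$) value. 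Nonnegativity of all LCP variables is immediate from $|p^*_j|<P$, $r^*_i<R$, and the signs of $f^*_{ij}$.

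For the backward direction, take $(\p,\f,\r)$ solving \eqref{eq:mixed-lcp} with $\p<P$ and $\r<R$ and apply the inverse substitution; then $p^*_j>0$ on goods, $p^*_j<0$ on bads, $r^*_i>0$, and $x^*_{ij}=f^*_{ij}/p^*_j\ge 0$ in both cases. The key step is a sandwich: summing \eqref{eq:mca} over agents gives $\sum_{i,j}f^*_{ij}\le\sum_j p^*_j$ (using $\sum_i W_{ij}=1$), while summing \eqref{eq:mcg} and \eqref{eq:mcb} over items gives $\sum_{i,j}f^*_{ij}\ge\sum_j p^*_j$; hence all these inequalities are tight, so every agent spends/earns exactly her budget $B_i=\sum_j W_{ij}p^*_j$ and every item clears ($\sum_i x^*_{ij}=1$). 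It remains to show each $\x^*_i$ maximizes $u_i$ over agent $i$'s budget set. From \eqref{eq:obg}, \eqref{eq:obb} one gets $U_{ig}/p^*_g\le 1/r^*_i\le D_{ib}/(-p^*_b)$ for every good $g$ and bad $b$, which rules out unbounded demand; moreover $U_{ij}\le (1/r^*_i)\,p^*_j$ for every item (reading $U_{ij}=-D_{ij}$ on bads), with equality whenever $f_{ij}>0$, i.e.\ whenever $x^*_{ij}>0$, by (\ref{eq:obg}'), (\ref{eq:obb}'). Then for any feasible $\y\ge\zeros$ with $\sum_j p^*_j y_j=B_i$ we get $u_i(\y)=\sum_j U_{ij}y_j\le(1/r^*_i)\sum_j p^*_j y_j=(1/r^*_i)B_i$, with equality at $\y=\x^*_i$; since $1/r^*_i>0$ this proves optimality. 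Combined with exact clearing, $(\p^*,\x^*)$ is a competitive equilibrium.

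Finally, the two maps are mutually inverse: prices, earnings, and the allocation round-trip by direct algebra (e.g.\ $P-(P-p^*_j)=p^*_j$, $-(-f^*_{ij})=f^*_{ij}$), and $r^*_i$ round-trips because the backward analysis shows $1/r^*_i$ is forced to equal exactly the $\MBB$ value when a good is consumed and the $\MPB$ value otherwise, matching the forward definition. I expect the main obstacle to be the optimal-bundle argument in the backward direction: one must certify global optimality of a bundle over a budget set that, unlike the all-goods case, may have zero or negative value and mixes positively- and negatively-priced items; the bound $u_i(\y)\le(1/r^*_i)B_i$ is what carries this, but keeping all the signs straight (negative bad prices, earnings $f^*_{ij}<0$ on bads, a single scalar $r_i$ tying goods to bads) is the delicate part, as is the degenerate sub-case of an agent whose endowment has zero value.
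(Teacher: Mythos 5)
Your proof takes essentially the same route as the paper's sketch in Section~\ref{sec:lin-lcp-correct}: the forward/backward change of variables via $P-p^*_j$, $R-r^*_i$; the sandwich obtained by summing \eqref{eq:mca} over agents and \eqref{eq:mcg}, \eqref{eq:mcb} over items to force all market-clearing constraints tight; and the chain $U_{ig}/p^*_g \le 1/r^*_i \le D_{ib}/(-p^*_b)$ together with the complementarity conditions (\ref{eq:obg}'), (\ref{eq:obb}') to certify optimality. Your explicit primal bound $u_i(\y)\le (1/r^*_i)\,B_i$ with equality at $\x^*_i$, and your explicit check that $r^*_i<R$ via $R>P/U_{min}$, usefully flesh out steps the paper leaves implicit. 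One small caveat on the ``mutually inverse'' claim: when an agent has zero budget and a zero allocation, the LCP does not pin down $r_i$, so strictly speaking the correspondence need not be one-to-one at such points; the paper's informal sketch has the same loose end and only resolves it under the nondegeneracy assumption in Theorem~\ref{thm:one_to_one_2}.
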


In summary, solutions of LCP~\eqref{eq:mixed-lcp} satisfy $\p \le P$ and $\r\le R$, and those with $\p < P$ and $\r < R$ exactly captures competitive equilibria. Many issues still remain: 1) LCP has ``dummy'' non-equilibrium solutions, 2) augmented LCP has secondary-rays, and 3) LCP polyhedron has multiple inherent degeneracies. We note that none of these issues arise in previous algorithms~\cite{Eaves76,GargMSV15,GargV14,GargMV18}. 

\subsection{Algorithm}\label{sec:lin-algo}
To apply Lemke's scheme, we need to add $(-z)$ term to all the inequalities with possibly negative right hand side (rhs). Observe that these are inequalities of \eqref{eq:mca},\eqref{eq:mcg},\eqref{eq:obg}, 
where for \eqref{eq:obg} rhs is $P - U_{ij} R \le P - U_{min} R < 0$. 

\paragraph{Handling Degeneracies} Every inequality of LCP is paired up with a variable, and for every pair, either the variable is zero or the inequality is tight at a solution. If both are true for some pair then it is called a double label (details in Section \ref{sec:prel-lcp}). Lemke's scheme follows a path of vertices and edges in the solution space, and at every vertex pivots by either making the variable non-zero or relaxing the tight inequality corresponding to the double label (complementary pivot). Therefore to avoid ambiguities, it is important to ensure a unique double label at every vertex that the algorithm encounters. This follows if the LCP polyhedron is \emph{nondegenerate}. In general, there are standard ways to handle degeneracy by symbolic or numerical perturbation of the input parameters. However, in our case they are not sufficient. 

To avoid degeneracies and to facilitate the final convergence proof, we need to add carefully chosen coefficients to the $(-z)$ terms.\footnote{This issue does not arise in the known LCPs for goods only case~\cite{Eaves76,GargMSV15,GargV14,GargMV18}.} For every good $j\in M^+$, define $\delta_j=(1+\epsilon_j)$ where $\epsilon_j>0$ is a uniform random value from $(0,1/m)$. We show in Section~\ref{sec:alcp} that if the input parameters of the mixed manna, namely $U_{ij}$'s and $W_{ij}$'s do not have any polynomial relation, then $\epsilon_j$'s can be carefully chosen so that $\CP$ is indeed nondegenerate. The augmented LCP is:

\begin{subequations}\label{eq:mixed-alcp}
\begin{eqnarray}
\hspace{-10pt}\forall i\in N: \hspace{1.2cm} & \hspace{-1.2cm}\displaystyle\sum_{j \in M^+} f_{ij} - \displaystyle\sum_{j \in M^-} f_{ij} - z\le \displaystyle\sum_{j \in M^+} W_{ij} (P-p_j) - \displaystyle\sum_{j \in M^-} W_{ij} (P-p_j) &  \perp \ \   r_i \label{eq:amca}\\[5pt]
\forall j\in M^+: & (P-p_j) \le \sum_{i \in N} f_{ij} + \delta_j z &   \perp \ \   p_j \label{eq:amcg}\\[5pt]
\forall j\in M^-: & \sum_{i \in N} f_{ij} \le (P-p_j)  & \perp\ \   p_j \label{eq:amcb}\\[5pt]
\forall i \in N,  \forall j \in M^+:& U_{ij} (R-r_i) - (P-p_j) - z \le 0& \perp \ \   f_{ij} \label{eq:aobg}\\[5pt]
\forall i \in N,  \forall j \in M^-:& (P-p_j) - D_{ij} (R-r_i) \le 0& \perp \ \   f_{ij} \label{eq:aobb}\\
& z\ge 0 & 
\end{eqnarray}
\end{subequations}

By construction and Theorem \ref{thm:lcp} we get:

\begin{lemma}\label{lem:1} 
Every solution of LCP~\eqref{eq:mixed-alcp} with $z=0$, $\p < P$, and $\r <R$ gives a competitive equilibrium.
\end{lemma}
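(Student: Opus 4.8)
The plan is to derive Lemma~\ref{lem:1} as an essentially immediate consequence of Theorem~\ref{thm:lcp} together with the construction of the augmented LCP~\eqref{eq:mixed-alcp}. First I would observe that when $z=0$, every equation of the augmented LCP~\eqref{eq:mixed-alcp} collapses to the corresponding equation of the original LCP~\eqref{eq:mixed-lcp}: the term $-z$ in \eqref{eq:amca} and \eqref{eq:aobg} vanishes, and the term $\delta_j z$ in \eqref{eq:amcg} vanishes, so \eqref{eq:amca}, \eqref{eq:amcg}, \eqref{eq:amcb}, \eqref{eq:aobg}, \eqref{eq:aobb} become exactly \eqref{eq:mca}, \eqref{eq:mcg}, \eqref{eq:mcb}, \eqref{eq:obg}, \eqref{eq:obb}. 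Crucially, the complementarity conditions are unchanged (they are attached to the same variable--inequality pairs, and $z$ does not appear in any product), and the non-negativity constraints on $p_j, f_{ij}, r_i$ are the same. Hence a solution $(\p,\f,\r,z)$ of \eqref{eq:mixed-alcp} with $z=0$ yields, after dropping $z$, a solution $(\p,\f,\r)$ of \eqref{eq:mixed-lcp}.

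Next I would invoke Theorem~\ref{thm:lcp}: since the recovered point $(\p,\f,\r)$ additionally satisfies $\p<P$ and $\r<R$ by hypothesis, it lies in the subset of LCP~\eqref{eq:mixed-lcp} solutions that are in one-to-one correspondence with competitive equilibria, so it maps to a competitive equilibrium via the explicit dictionary in Section~\ref{sec:lin-lcp-correct} ($p^*_j = P-p_j$ for goods, $p^*_j = -(P-p_j)$ for bads, $f^*_{ij}=\pm f_{ij}$, $r^*_i = R-r_i$). This is exactly the conclusion of the lemma. One small point worth making explicit is that the coefficients $\delta_j = 1+\epsilon_j$ and the perturbation play no role here: they only affect the geometry of the polyhedron $\cP'$ away from the $z=0$ face (non-degeneracy, the direction of the primary ray, and the convergence argument), and at $z=0$ they are multiplied by zero, so the equilibrium correspondence is untouched.

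The proof therefore has no genuine obstacle — it is a ``by construction'' verification — and the only thing to be careful about is bookkeeping: checking that each of the five equation families, their complementarity partners, and the sign conditions really do reduce correctly at $z=0$, and that nothing in the passage from \eqref{eq:mixed-lcp} to \eqref{eq:mixed-alcp} other than the additive $z$-terms was altered. I would state it in one short paragraph: ``Setting $z=0$ in~\eqref{eq:mixed-alcp} recovers~\eqref{eq:mixed-lcp}; apply Theorem~\ref{thm:lcp}.'' The real work — showing that Lemke's scheme actually reaches such a $z=0$ vertex with $\p<P$, $\r<R$ rather than a dummy solution or a secondary ray — is deferred to the later analysis and is not part of this lemma.
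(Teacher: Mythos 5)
Your proof is correct and matches the paper exactly: the paper derives Lemma~\ref{lem:1} with the one-liner ``By construction and Theorem~\ref{thm:lcp},'' i.e., precisely your observation that setting $z=0$ in LCP~\eqref{eq:mixed-alcp} recovers LCP~\eqref{eq:mixed-lcp} (with complementarity pairs and sign constraints unchanged), so the correspondence of Theorem~\ref{thm:lcp} applies to any such solution with $\p<P$, $\r<R$. Your additional remark that the $\delta_j$ perturbations vanish at $z=0$ and thus play no role here is a correct and useful clarification, not a deviation.
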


Henceforth, we will use $\y$ to represent vector $(\p,\ff,\r)$. Let $\CP$ denote the polytope defined by the linear inequalities of LCP~\eqref{eq:mixed-alcp} including $\y, z\ge0$. 
\medskip

As discussed in Section \ref{sec:prel-lcp}, the {\em primary-ray} of the LCP is defined as the unique unbounded edge where $\y=0$ and $z\ge 0$. The algorithm (Lemke's scheme) starts on this primary-ray of \eqref{eq:mixed-alcp} where $z$ varies from $\infty$ to a positive value so that one of the inequalities becomes tight giving the double label at the vertex reached. Since $R$ is big relative to $P$, it ensures that the double label corresponds to \eqref{eq:aobg}. Then on, the algorithm pivots at the double label by relaxing one constraint, and traveling along the corresponding edge of $\CP$ to the next vertex solution -- see Algorithm \ref{algo:mixed} in Section \ref{sec:algo}. Crucially, the complementary pivot ensure that it never revisits any vertex or edge \cite{Lemke65}. Therefore, the algorithm terminates when it either encounters a vertex $(\y,z)$ with $z=0$, or a {\em secondary-ray}, an unbounded edge other than the primary-ray. We next show that in the former case we get a competitive equilibrium and the latter case never happens.

To ensure competitive equilibrium in the former case, we need to show that $\p < P$ and $\r < R$ as well (Lemma \ref{lem:1}). Furthermore, \eqref{eq:mixed-alcp} has an obvious secondary-ray, namely $\p=P$, $\r=R$, $\ff=0$ and $z$ positive, and may have many more where a subset of $p_j$'s and $r_i$'s are set to $P$ and $R$ respectively. The algorithm needs to avoid all of these rays. We handle both of these issues by showing that the algorithm can never encounter a point where $p_j=P$ for any $j\in M$, and $r_i=R$ for any $i\in N$. The proof is involved and requires a number of steps that we briefly discuss next. A detailed proof for the general SPLC utilities is given in Section \ref{sec:algo_convergence}.
\medskip

We prove four main claims: 
\begin{itemize}
\item[$(a)$] $\p \le P$, $\r \le R$, and if $r_i=R$ for an $i\in N$ then $p_j=P, \forall j\in M^-$. 
\item[$(b)$] If $p_g=P$ for some good $g\in M^+$ then $p_j=P$ for all $j\in M$. 
\item[$(c)$] If $p_b=P$ for some chore $b\in M^-$ then $p_j=P$ for all $j\in M^-$. 
\item[$(d)$] It can not be that $p_b=P$ for all $b\in M^-$ unless $p_g=P$ for some good $g\in M^+$ and $z=0$. 
\end{itemize}

Using these, let us first argue that the algorithm can never reach a point $\uu=(\y,z)$ (a vertex or on an edge) where $p_j=P$ for some item $j$ or $r_i=R$ for some agent $i$. 

The four claims above imply that if $p_j=P$ for some $j$ or $r_i=R$ for some $i$ then all $p_j$'s take value $P$ simultaneously. Consider the first point $\vv$ before $\uu$ where $0 < \p < P$. Observe that such a $\vv$ exists because on the primary-ray, where the algorithm starts, $\p=\zeros$. At $\vv$, claim $(a)$ above imply $\r <R$, and the {\em demand meets supply} conditions of \eqref{eq:amca}, \eqref{eq:amcg} and \eqref{eq:amcb} together with unit supply for every item gives,
\[
\sum_{j \in M^+} (P-p_j) - \sum_{j \in M^-} (P-p_j) + nz = \sum_{i, j \in M^+} f_{ij} - \sum_{i, j \in M^-} f_{ij} = \sum_{j \in M^+} (P-p_j) - \sum_{j \in M^-} (P-p_j) - z\sum_{j \in M^+} \delta_j 
\]

By canceling the price terms we are left with $z(n+\sum_{j\in M^+} \delta_j)=0$, implying $z=0$ at $\vv$. Thus, by Lemma~\ref{lem:1}, $\vv$ itself gives an equilibrium and therefore the algorithm stops at $\vv$ (or before) and never reaches $\uu$. Full details are in Lemma~\ref{lem:no_z_0_solution_2}. 

Coming back to the four claims, $(a)$ is relatively easy to show; see Lemma~\ref{lem:bound_on_p_and_r}. We next explain the idea behind $(b)$. Let $p_g=P$ for some good $g\in M^+$, then (\ref{eq:amcg}') forces the inequality to be tight for $g$, implying $\delta_j z+\sum_{i\in N} f_{ig}=0\Rightarrow z=0$. With $z=0$, \eqref{eq:amca}, \eqref{eq:amcg}, \eqref{eq:amcb} together forces all of them to hold with equality (by similar arguments as discussed in Section \ref{sec:lin-lcp-correct}). Furthermore, for all agents $a\in N$ who likes $g$, i.e., $U_{ag}>0$, \eqref{eq:aobg} implies $r_a=R$ since $r_a\le R$ by $(a)$. Replacing $r_a=R$ in \eqref{eq:aobb} for agent $a$ and all the chores, we get that $p_j=P$ for all $j\in M^-$. This in turn ensures $f_{aj}=0$, for all $j\in M^-$ due to~\eqref{eq:amcb}. 

Now consider the optimal bundle conditions~\eqref{eq:aobg} for agent $a$. Since $z=0$, these essentially are $(R-r_a) \le \frac{(P-p_j)}{U_{aj}}$ and $(R-r_a) = \frac{(P-p_j)}{U_{aj}}$ if $f_{aj}>0$. Given that $(R-r_a)=0$, the latter implies if $f_{aj}>0$ then $p_j=P$, but (\ref{eq:amcg}') forces $\sum_{i\in N} f_{ij}=(P-p_j)=0$ implying $f_{aj}$ to be zero. In summary, agent $a$ neither spends on goods nor earns from chores, and therefore her net income $(\sum_{j \in M^+} W_{aj} (P-p_j) - \sum_{j\in M^-} W_{aj} (P-p_j))$ has to be zero to satisfy (\ref{eq:amca}'). Since we have shown $p_j=P$ for all the chores the second term is already zero, and hence we have that for all goods $j$ if $W_{aj}>0$ then $p_j=P$. 

Under the strong connectivity assumption, there are agents who like the goods that $a$ brings, namely $j\in M^+$ such that $W_{aj}>0$. Since for all these goods we already proved $p_j=P$, applying the argument again, we can show that for all goods that these agents bring we have $p_j=P$. Applying this argument repeatedly using the strong connectivity of the {\em economy graph} assumption (Assumption 3 in Section~\ref{sec:splc}) we can propagate $p_j=P$ to all the goods $j\in M^+$. For the formal proof with all the details, see Lemma \ref{lem:no_p_0_goods} that argues for the more general SPLC utilities. For claims $(c)$ and $(d)$ too, see Lemmas \ref{lem:no_sec_ray_first_step} and \ref{lem:no_sec_ray_first_step_2} respectively.

\paragraph{No other secondary-ray} The above argument takes care of secondary-rays where $p_j=P$ for some item $j$ or $r_i=R$ for some agent $i$. Next, we show that there are no other secondary-rays, or in other words no secondary-rays where $\p < P$ and $\r < R$. Recall that a {\em secondary-ray} is an unbounded edge of polytope $\CP$ and the entire edge is a solution of LCP \eqref{eq:mixed-alcp} with $z>0$. Let $(\y^*,z^*)$ be the vertex where the ray starts, and $(\y',z')$ be its direction vector then the ray can be formally defined as $\cR=\{(\y^*,z^*) + \gamma (\y',z')\ |\ \gamma>0\}$. We will argue that the only possibility for $\cR$ is that it is the {\em primary-ray} where the algorithm started. And since the algorithm never revisits any point, this is a contradiction.

Observe that the non-negativity of variables in $\CP$ ensures $(\y',z')\ge 0$, and for any variable if its coordinate in the direction vector is positive then it increases to infinity on $\cR$. However, we have $\p<P$ and $\r<R$ on $\cR$, hence $\p',\r'=\zeros$. Then, it can be shown that \eqref{eq:amcb} and \eqref{eq:amca} together will not let any of the $f_{ij}$'s increase infinitely implying $\ff'=0$. In summary, $\y'=0$, and since the direction vector $(\y',z')$ can not be all zeros, $z'>0$.

At any point on $\cR$, we have $\y=\y^*+\gamma \y' =\y^*$ and $z>0$. In fact, $z$ goes to infinity on $\cR$ while $\y$ is fixed to $\y^*$. Therefore, inequalities with $(-z)$ terms, namely \eqref{eq:amca}, \eqref{eq:amcg}, \eqref{eq:aobg}, are all strict on $\cR$, and therefore their paired-up variables have to be zero. This gives that $\r^*=0$, $p^*_j=0,\ \forall j\in M^+$, and $f^*_{ij}=0,\ \forall i\in N, \forall j\in M^+$. Then, for $f^*_{ij}$'s for chores, observe that \eqref{eq:aobb} is also strict since $\forall i\in N, \forall j \in M^-,\ (P-p_j) \le P < D_{ij} R$ and therefore $f^*_{ij}=0$. This makes all the \eqref{eq:amcb} strict since $(P-p_j)>0$ for all items $j$, and in turn to satisfy (\ref{eq:amcb}') $p^*_j=0$, for all chores $j\in M^-$. In summary, we get $\y^*=0$ at the vertex of the ray. 

Using $\y'=0$ and $\y^*=0$ we have that on the entire ray $\cR$, $\y=\y^*+\gamma\y'=0$. Further, $z'>0, z^*\ge0$ implies $z=z^*+\gamma z\ge0$. However, by definition a ray with $\y=0$ and $z\ge 0$ is the {\em primary-ray} where the algorithm starts and can never revisits. This contradicts that the algorithm terminates on a {\em secondary-ray} with $\p < P$ and $\r < R$. We refer to Theorem \ref{thm:no_secondary_rays} for the formal proof under more general SPLC utilities.

Our construction also implies odd number of equilibria for nondegenerate instances. Starting from the primary-ray, the algorithm is guaranteed to reach an equilibrium. All other equilibria are paired up because the above arguments also imply that if we start the algorithm from any equilibrium point by relaxing $z=0$, it will end up on either the primary-ray or another equilibrium (see Theorem \ref{thm:odd_number}).

Putting everything together, we get (discussed in Section \ref{sec:results} for SPLC utilities): 
\begin{theorem}[Informal]
\label{thm:main-algo}
For mixed-manna under linear utilities, there is a complementarity pivot algorithm that finds a competitive equilibrium. Thereby, implying that the problem is in PPAD and the number of equilibria (up to scaling) for nondegenerate instances is odd. 
\end{theorem}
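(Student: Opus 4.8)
The plan is to assemble the pieces already set up in the overview: the LCP formulation of Theorem~\ref{thm:lcp}, the augmented LCP~\eqref{eq:mixed-alcp}, and a convergence analysis of Lemke's scheme run on it. First I would invoke Theorem~\ref{thm:lcp} and Lemma~\ref{lem:1}: a solution of~\eqref{eq:mixed-alcp} with $z=0$, $\p<P$, and $\r<R$ yields a competitive equilibrium, and conversely every equilibrium (after rescaling so that $p^*_j<P$) lifts to such a solution. Next I would fix the perturbation: choose each $\delta_j=1+\epsilon_j$ with $\epsilon_j$ drawn uniformly from $(0,1/m)$ so that, whenever the input parameters $U_{ij}$ and $W_{ij}$ carry no polynomial relation, the polytope $\CP$ is nondegenerate; this is what makes Lemke's scheme well-defined, so that starting from the primary ray $\y=\zeros$, $z\ge 0$ it follows a unique path of edges and vertices of $\CP$, never revisiting a point, and terminates either at a vertex with $z=0$ or on a secondary ray. (Since linear is a special case of SPLC, this plan is exactly the linear instance of the analysis carried out in full generality in Section~\ref{sec:algo}.)

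The crux is to show the path terminates at an equilibrium and not on a secondary ray. I would establish the four claims $(a)$--$(d)$: $(a)$ $\p\le P$, $\r\le R$, and $r_i=R$ forces $p_j=P$ for all bads (from~\eqref{eq:amcb} and~\eqref{eq:aobb}); $(b)$ $p_g=P$ for a good forces $p_j=P$ for every item --- here one first gets $z=0$ from~(\ref{eq:amcg}'), then the market-clearing inequalities become equalities, and then $p_j=P$ propagates along the economy graph using strong connectivity; $(c)$ $p_b=P$ for one bad forces $p_j=P$ for all bads; $(d)$ one cannot have $p_b=P$ for all bads unless some good also has $p_g=P$ and $z=0$. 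Together these say that any point on the path with some $p_j=P$ or some $r_i=R$ has all $p_j=P$ and $z=0$. Now take the last point $\vv$ along the path with $\zeros<\p<P$ (it exists since the path starts at $\p=\zeros$): there $\r<R$ by $(a)$, and summing the clearing inequalities~\eqref{eq:amca},~\eqref{eq:amcg},~\eqref{eq:amcb} with unit supplies and cancelling the price terms leaves $z\,(n+\sum_{j\in M^+}\delta_j)=0$, so $z=0$ at $\vv$; by Lemma~\ref{lem:1} the algorithm has already reached an equilibrium at $\vv$, hence it never reaches a point with $p_j=P$ or $r_i=R$, and in particular never reaches any secondary ray on which such a coordinate blows up. For a remaining candidate secondary ray, with $\p<P$ and $\r<R$ along it, non-negativity of the direction vector together with~\eqref{eq:amcb} and~\eqref{eq:amca} forces $\p'=\r'=\ff'=\zeros$, hence $z'>0$; and since $z\to\infty$ with $\y$ fixed makes~\eqref{eq:amca},~\eqref{eq:amcg},~\eqref{eq:aobg},~\eqref{eq:aobb},~\eqref{eq:amcb} all strict at the base vertex, it forces $\y^*=\zeros$, so the ray coincides with the primary ray --- contradicting non-revisiting. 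Therefore the path ends at a $z=0$ vertex with $\p<P$, $\r<R$, i.e.\ at a competitive equilibrium.

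Finally I would read off the two consequences. The path-following (complementary pivot) structure together with Todd's framework~\cite{Todd76} places the problem in $\classPPAD$. For oddness, observe that in a nondegenerate instance each equilibrium is an endpoint of a unique Lemke edge obtained by relaxing its $z=0$ label, and the same convergence argument shows that edge continues to a path ending at either the primary ray or another equilibrium; thus all equilibria except the one the algorithm outputs are matched in pairs, that one is matched with the primary ray, and so the number of equilibria (up to scaling) is odd. The main obstacle is claim $(b)$ --- propagating $p_j=P$ across all goods via strong connectivity of the economy graph --- and, tightly coupled with it, choosing the $\delta_j$ perturbation so that $\CP$ is genuinely nondegenerate while these structural identities survive; both are delicate already for linear utilities and become the real work for the SPLC extension.
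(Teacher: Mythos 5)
Your proposal is correct and takes essentially the same route as the paper: the same LCP~\eqref{eq:mixed-alcp} with the $\delta_j$-perturbation for nondegeneracy, the same four claims $(a)$--$(d)$, the same cancellation argument giving $z\,(n+\sum_{j\in M^+}\delta_j)=0$ at the last ``interior'' point $\vv$, the same direction-vector argument ruling out secondary rays with $\p<P$, $\r<R$, and the same appeal to Todd for $\classPPAD$ membership and to Lemke path-pairing for oddness. This mirrors the proof the paper carries out formally for SPLC in Sections~\ref{sec:algo} and~\ref{sec:results} (Lemmas~\ref{lem:bound_on_p_and_r}--\ref{lem:no_z_0_solution_2}, Theorems~\ref{thm:no_secondary_rays}, \ref{thm:mixed_converges}, \ref{thm:odd_number}), of which the linear case is the specialization.
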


\section{Separable Piecewise Linear Concave Utilities}\label{sec:splc}
Recall from Section \ref{sec:prelim} that the set of agents and items are respectively denoted by $N$ and $M$. Agent $i\in N$ has $W_{ij}$ amount of item $j\in M$, and it is without loss of generality to assume that total amount of every item is $1$, i.e., $\sum_i W_{ij} = 1, \forall j$. In this section, we consider additively separable piecewise linear concave (SPLC) utility functions. That is, agent $i$'s utility function is additively separable over the items $u_i(\x_i) = \sum_{j\in M} u_{ij}(x_{ij})$, where for each agent $i$ and each item $j$, the function $u_{ij}:\bR_+ \rightarrow \bR$ is monotone piecewise linear and concave. The function is either non-negative and increasing representing a {\em good}, or it is non-positive and decreasing representing a \emph{chore/bad}. We call each linear piece of $u_{ij}$ a segment. Let $|u_{ij}|$ be the number of segments of $u_{ij}$, and let the triple $(i,j,k)$ denote the $k$-th segment. The slope of a segment gives the utility received per each additional unit of the item. Let $(i,j,k)$ be a segment with domain $[a,b] \subseteq \bR_+$ and slope $c$. Define $U_{ijk} = c$, and $L_{ijk} = b-a$. Note that the length of last segment is infinite. However, since there is unit amount of each item, we can assume without loss of generality that the length of the last segment is 1 plus some small constant. Note that linear is a special case of SPLC where each $u_{ij}$ has exactly one segment with infinite length. 

Our assumptions on the function $u_{ij}$ implies the following. If agent $i$ receives positive utility from item $j$, then $U_{ijk} > U_{ijk'} \geq 0$ for all $k < k'$, capturing the standard economic assumption of decreasing marginal returns on goods. Otherwise, $0 \ge U_{ijk} > U_{ijk'}$ for all $k <k'$ which models scenarios where the cost of completing a chore increases with the percentage required to be performed, e.g., cutting emissions from a plant. In the latter case, we use the notation $D_{ijk} = |U_{ijk}|$ for agent $i$'s disutility on the $k$-th segment of $u_{ij}$. Figure \ref{fig:splc} provides an illustration of SPLC utility functions.

\begin{figure}
	\centering
	\begin{subfigure}{.5\textwidth}
		\centering
		\includegraphics[scale=.58]{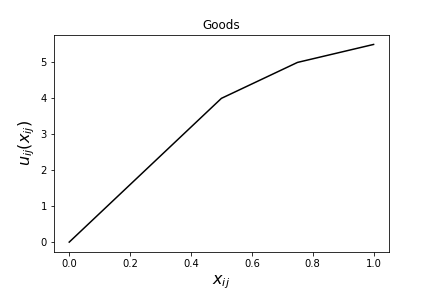}
	\end{subfigure}%
	\begin{subfigure}{.5\textwidth}
		\centering
		\includegraphics[scale=.58]{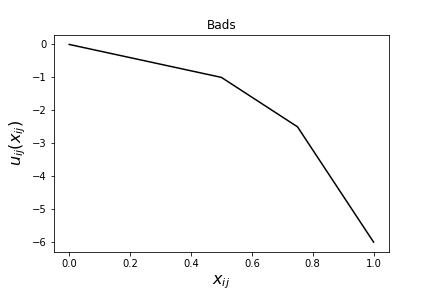}
	\end{subfigure}
	\caption{An example of SPLC utility functions for a good and a bad.}
	\label{fig:splc}
\end{figure}

\paragraph{Identifying Goods and Bads} \label{sec:goods_and_bads}
We begin with an important observation. Examining the first segment of each agent's utility function reveals the sign of the item prices at equilibrium. If there exists an agent $i\in N$ such that $ U_{ij1} > 0$, then $p_j \ge 0 $. This follows since agent $i$ would demand infinite amount of item $j$ if $p_j<0$, and then demand will not meet supply. Therefore, in \emph{any} equilibrium, if there exists an agent $i$ such that $U_{ij1} > 0$, then $p_j \geq 0$. Similarly, if $U_{ij1} \leq 0, \ \forall i\in N$, then $p_j \leq 0$, as at any positive price the demand of $j$ is zero. In view of the above, we refer to items with non-negative price as goods, and items with non-positive price as bads.
Here, a negative price for a bad implies an agent can {\em earn} by doing (consuming) the chore.

We can further refine the above observations to identify situations where there exists an equilibrium where an item's price is 0. For any good $j$, i.e., $p_j \geq 0$, we define the desire for $j$ as
\begin{equation}\nonumber
	\text{desire}_j = \sum_{i\in N}\sum_{k: U_{ijk}>0} L_{ijk}.
\end{equation}
In words, $\text{desire}_j$ is the maximum possible demand for good $j$ at any price $p_j > 0$. Suppose that $\text{desire}_j < 1$, then observe that there exists an equilibrium where $p_j = 0$, since there is a unit amount of each item. Thus, for any good $j$ with $\text{desire}_j < 1$ we may set $p_j = 0$, allocate the segments which provide positive utility for agents, i.e., $U_{ijk} > 0$, and assign any remaining fraction of the good to any zero utility segments.

Similarly, for any bad $j$ , i.e., $p_j \leq 0$, we define the indifference to $j$ as
\begin{equation}\nonumber
	\text{indifference}_j = \sum_{i\in N} \sum_{k:U_{ij1} = 0} L_{ij1}.
\end{equation}
The indifference to $j$ is the maximum amount of $j$ that can be assigned without causing any agent to lose utility. If $\text{indifference}_j \geq 1$, then \emph{all} equilibria set $p_j = 0$, and the item can be allocated among the agents along segments with $U_{ijk} = 0$. 

Henceforth, we assume that desire for every good is more than $1$ and indifference to every bad is less than $1$. Note that spending on bads `costs' a negative amount of money, since the price is negative for any bad. The natural economic interpretation is as follows. Suppose agent $i$ accepts some portion of bad $j$ she dislikes. As the price of $j$ is negative, this decreases her overall spending. Equivalently, she increases her budget by accepting responsibility for handling some universally disliked chore in order to spend more on goods she enjoys. Thus, the negative spending on bads can be viewed as receiving payment on some chore $j$ to increase the agent's budget. 

\paragraph{Characterizing Optimal Bundles.} \label{sec:optimaL_bundles}
At any prices, for each $u_{ij}$ function, clearly segment $k\ge 1$ is more attractive to agent $i$ than any later segment $k'>k$ due to the concavity of $u_{ij}$. Therefore, even if agent $i$ is allowed to buy ``segments'' of $u_{ij}$, she will buy them in increasing order. Formally, given a vector of prices $\p$, an optimal bundle of items for agent $i$, {\em i.e.,} the bundle that maximizes her utility subject to the {\em budget constraint}, solves the following linear program (LP).
\begin{align*}
\max \ &\sum_{j,k} U_{ijk} x_{ijk} \ \ \  \text{s.t.} \ \sum_{j,k}x_{ijk} p_j \leq \sum_{j} W_{ij} p_j;\ \ \  0\leq x_{ijk} \leq L_{ijk}, \ \forall (i,j,k), 
\end{align*}
where $x_{ijk}$ is the fraction of item $j$ allocated to agent $i$ on the $k$'th segment of $u_{ij}$. However, we require a more explicit characterization for later analysis. 

For any good, $p_j > 0$, define the bang per buck $(bpb)$ of agent $i$ on segment $(j,k)$ as
\begin{equation*}
	bpb_{ijk} = \frac{U_{ijk}}{p_j}.
\end{equation*} 
Note that, $bpb_{ijk}$ is the utility gained per unit spending on the $k$th segment of good $j$. Similarly, for any bad, $p_j < 0$, define the pain per buck $(ppb)$ of agent $i$ on segment $(j,k)$ as 
\begin{equation*}
	ppb_{ijk} = \frac{U_{ijk}}{p_j}.
\end{equation*}
Note that, for a bad $j$ since $p_j<0$ and $U_{ijk}\le 0$, we have $ppb_{ijk} \ge 0$ and it is the disutility per unit earning on the $k$th segment of bad $j$.

Intuitively, optimal bundles for any agent consist of segments with maximum $bpb$ for goods, which yield highest utility per unit spending, and minimum $ppb$ for bads, which minimizes disutility per unit spending. This can be easily verified through KKT conditions on the above LP. These segments may be computed as follows. Sort agent $i$'s segments for goods in decreasing order of $bpb_{ijk}$, and increasing order of $ppb_{ijk}$ for bads. Define the equivalence classes $G_1,\dots,G_l$ for goods with equal $bpb_{ijk}$, and $B_1, \dots, B_{l'}$ with equal $ppb_{ijk}$ for bads. Given the prices $\p$, each segment in $G_d$ adds an equal amount of utility per unit spending, while each segment in $B_{d'}$ adds an equal amount of disutility per unit earning. Obviously, agent $i$ demands $G_d$'s and $B_{d'}$'s in the increasing order to maximize her utility subject to the budget constraint. By abuse of notation, we will use $bpb(G_d)$ ($ppb(B_{d'})$) to denote the $bpb$ (resp. $ppb$) of the segments in equivalence class $G_d$ (resp. $B_{d'}$). 

Since an agent's utility decreases by consuming chores, she would consume one only if she needs the money earned to either satisfy her budget constraint (pay for the chores she owns), or use it to buy goods that (over) compensate for the disutility. Therefore, for agent $i$ if $\sum_{j\in M} W_{ij} p_j \ge 0$ then she consumes a segment from $B_{d'}$ only if there exists a $G_d$ such that $bpb(G_d) \ge ppb(B_{d'})$. If the latter inequality is strict, then agent $i$ would choose to accept as much of bads as possible from $B_{d'}$ to buy goods from $G_d$.

Suppose agent $i$ stops buying goods and bads at equivalence classes $G_d$ and $B_{d'}$ respectively -- $G_d$ (resp. $B_{d'}$) is the first partition that is not fully consumed. We note that, if $\sum_{j\in M} W_{ij} p_j <0$ then agent $i$ may consume only chores to earn the desired money. For all $k < d$ and $k' < d'$ we call the segments of equivalence classes $G_k$ and $B_{k'}$ {\em forced}. All the segments of equivalence classes $G_d$ and $B_{d'}$ are called {\em flexible}. And the for all $k>d$ and $k'>d'$ we call the segments of $G_k$ and $B_{k'}$ {\em undesirable}. For all agents, $ppb \geq bpb$ in their flexible partition.

\paragraph{Assumptions} \label{sec:Assumptions}
Even in the special case of all goods, equilibria in exchange setting need not exist~\cite{DevanurGV16}. We need to assume certain sufficiency conditions to allow an equilibrium to exist. We note that our conditions follows the previous works of~\cite{GargMSV15,ChenDDT09,ChenT09} that consider only goods, and is one of the weakest sufficiency conditions to guarantee an equilibrium exists in the case of all goods. First, we include our basic assumptions. 
\begin{condition}\label{cond1}
Each agent brings a positive amount of some good and positive amount of some bad.
\end{condition}

\begin{definition}
For any good $j\in M$, we say that agent $i$ is \emph{non-satiated} for $j$, if $U_{ijk} > 0$ where $k$ is the last segment of good $j$. 
\end{definition}

\begin{definition}
Define the economy graph as a directed graph $G$ with vertices $N$, with directed edges from $i$ to $j$ if agent $i$ is non-satiated for some good $l$ that agent $j$ brings. We call the instance \emph{strongly connected} if the economy graph $G$ is strongly connected. 
\end{definition}

\begin{condition}\label{cond3}
Economy graph of the input instance is strongly connected.
\end{condition}

Note that, Condition~\ref{cond3} is needed to ensure the existence of equilibrium even for the goods only case~\cite{GargMSV15,ChenDDT09}. We refer to Conditions \ref{cond1} and \ref{cond3} together as \emph{strong connectivity}. We will show that our algorithm in Section~\ref{sec:algo} converges to a competitive equilibrium under strong connectivity, hence we get a constructive proof of the existence. Observe that this implies the existence of equilibrium in all instances of the Fisher (and hence CEEI) setting under linear utilities and under non-satiated SPLC utilities. 

\subsection{LCP Formulation for All Bads}\label{sec:alL_bads}
In this section, we derive a linear complementary program to capture competitive equilibria for the case when mixed manna contains only bads, i.e., $U_{ijk}\le 0, \forall (i, j, k)$. We build on the approaches of Eaves~\cite{Eaves76} and Garg et al.~\cite{GargMSV15} for only goods. Our task consists of two steps. First, we need to design constraints to ensure that market clears (i.e., all bads are fully allocated, and each agent earns exactly the required budget). Second, we need to ensure agents earn their budget on optimal bundles of bads. We note that most of the proofs are deferred to Appendix \ref{sec:proofs}. 

The first problem, market clearing, is straightforward and does not even require complementarity. Note that the LCP formulation requires non-negative variables. However, prices and spending on bads are negative. Therefore, we create  non-negative variables $p_j$ for all $j\in M$, and $f_{ijk}$ for all segments $(i,j,k)$. For every $U_{ij1}\le 0$ for a good $j$, then we set $f_{ijk}:=0$ at the beginning itself, and we do not introduce the corresponding variables in our formulation. We will use $(-p_j)$ as the price of bad $j\in M$, and $(-f_{ijk})$ as the amount agent $i$ earns on the segment $(j,k)$. We also let $D_{ijk} = |U_{ijk}|$ denote $i$'s disutility on segment $(i,j,k)$. Also, for each agent $i$, we create a variable $r_i$. Eventually, $1/r_i$ will be the pain per buck of agent $i$'s flexible partition. 

Let $\perp$ denote a complementarity constraint between the inequality and the variable (e.g., $\sum_{j} W_{ij} p_j \leq \sum_{j,k} f_{ijk}\ \perp\ r_i$ is a shorthand for $\sum_{j} W_{ij} p_j \leq \sum_{j,k} f_{ijk};\ r_i \ge 0;\ r_i (\sum_{j} W_{ij} p_j - f_{ijk}) =0$). We ensure market clearing with the following constraints, where each variable is paired with a constraint by complementarity conditions to yield a standard LCP formulation. 
\begin{subequations}\label{lcp:1}
\begin{eqnarray}
	\forall i \in N: & \ \sum_{j} W_{ij} p_j \leq \sum_{j,k} f_{ijk} & \perp  \ \ r_i  \label{eq:budget_1}\\
	\forall j\in M: & \ \sum_{j,k} f_{ijk} \leq p_j & \perp \ \ p_j \enspace .\label{eq:spending_1}
\end{eqnarray}
\end{subequations}

We refer to each constraint by the equation number, and the corresponding complementarity condition by the equation number prime. 
Next, we design constraints to ensure agents purchase optimal bundles of bads. Recall the characterization of optimal bundles from Section~\ref{sec:optimaL_bundles}. Let $(i,j,k)$ be a segment of agent $i$'s flexible partition. We want the variable $r_i$ to satisfy
\begin{equation} \label{eq:lambda_ppb}
	ppb_{ijk} = \frac{1}{r_i} = \frac{D_{ijk}}{p_j} >0\enspace .
\end{equation}
For any forced segment $(i,j',k')$, we have $ppb_{ij'k'} < ppb_{ijk}$. We compensate for this by adding another variable $s_{ij'k'}\geq 0$ for each segment $(i,j',k')$ of $i$'s utility function. We want $s_{ijk} > 0$ for any forced segment, and $s_{ijk} =0$ otherwise. The new variables can be interpreted as supplementary prices for each segment of $i$'s utility function. This leads to the following constraints and complementarity conditions
\begin{align}
	\hspace{4cm} \forall (i,j,k): &  \ \ \ p_j -s_{ijk}\leq D_{ijk} r_i & \perp \ \ f_{ijk} \ \ \hspace{4cm}\label{eq:ppb_1}\tag{8c} \\
	\hspace{4cm} \forall (i,j,k): & \ \ \ \ \ f_{ijk} \leq L_{ijk} p_j  & \perp \ \ s_{ijk} \enspace . \hspace{4cm}\label{eq:segment_1}\tag{8d}
\end{align}
Note that complementarity condition (\ref{eq:segment_1}') ensures that forced segments are fully purchased. 
The next lemma shows that LCP~\eqref{lcp:1} captures all the competitive equilibrium (see Appendix \ref{sec:lem:clearing_1} for proof). 

\begin{lemma} \label{lem:market_solution_1}
	Any competitive equilibrium gives a solution to LCP~\eqref{lcp:1}.
\end{lemma}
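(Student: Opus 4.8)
The plan is to read off a value for every LCP variable directly from a given competitive equilibrium and then check the four (in)equalities of~\eqref{lcp:1} together with their complementarity conditions. Fix a competitive equilibrium with price vector $\p^*$ and allocation $(x^*_{ijk})$. Since indifference to every bad is less than $1$ under our standing assumptions, no bad $j$ can be entirely absorbed on zero-disutility first segments, so some agent must accept a positive-disutility portion of $j$; an agent does this only when the price strictly rewards earning, hence $p^*_j<0$ for every $j\in M$. Put $p_j:=-p^*_j>0$ and $f_{ijk}:=-x^*_{ijk}p^*_j=x^*_{ijk}p_j\ge 0$, so $f_{ijk}$ is the money agent $i$ earns on segment $(j,k)$; using the optimal-bundle characterization of Section~\ref{sec:optimaL_bundles}, let $B_{d'}$ be agent $i$'s flexible class and set $r_i:=1/ppb(B_{d'})>0$. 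Finally set $s_{ijk}:=p_j-D_{ijk}r_i$ on every forced segment of agent $i$ and $s_{ijk}:=0$ on all flexible and undesirable segments; since a forced segment satisfies $ppb_{ijk}<1/r_i$, i.e.\ $D_{ijk}r_i<p_j$, this gives $s_{ijk}>0$ there, so all variables are nonnegative.

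I would then verify the market-clearing block. The equilibrium budget constraint $\sum_{j,k}x^*_{ijk}p^*_j=\sum_j W_{ij}p^*_j$ holds with equality (an agent minimizing disutility never earns strictly more than her budget requires), and multiplying by $-1$ turns it into $\sum_{j,k}f_{ijk}=\sum_j W_{ij}p_j$, so~\eqref{eq:budget_1} holds with equality and (\ref{eq:budget_1}') is trivial. Every bad is fully allocated at equilibrium (bads are nondisposable and supply meets demand), so $\sum_{i,k}x^*_{ijk}=1$ by the unit-supply normalization, whence $\sum_{i,k}f_{ijk}=p_j\sum_{i,k}x^*_{ijk}=p_j$; again~\eqref{eq:spending_1} holds with equality and (\ref{eq:spending_1}') is trivial.

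Next I would check the optimal-bundle block segment by segment, using the forced/flexible/undesirable trichotomy. On a forced segment $(i,j,k)$: $s_{ijk}=p_j-D_{ijk}r_i$ by construction, so~\eqref{eq:ppb_1} is tight and (\ref{eq:ppb_1}') holds, and such a segment is consumed to capacity at equilibrium, $x^*_{ijk}=L_{ijk}$, so $f_{ijk}=L_{ijk}p_j$, making~\eqref{eq:segment_1} tight and (\ref{eq:segment_1}') hold. On a flexible segment: $ppb_{ijk}=1/r_i$ gives $D_{ijk}r_i=p_j$, and with $s_{ijk}=0$ this makes~\eqref{eq:ppb_1} tight so (\ref{eq:ppb_1}') holds, while $x^*_{ijk}\le L_{ijk}$ gives $f_{ijk}\le L_{ijk}p_j$ and $s_{ijk}=0$ makes (\ref{eq:segment_1}') vacuous. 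On an undesirable segment: $ppb_{ijk}>1/r_i$ gives $D_{ijk}r_i>p_j$, so~\eqref{eq:ppb_1} is strict, but $x^*_{ijk}=0$ gives $f_{ijk}=0$ so (\ref{eq:ppb_1}') holds; and~\eqref{eq:segment_1} reads $0\le L_{ijk}p_j$ with $s_{ijk}=0$, so (\ref{eq:segment_1}') holds. This covers all constraints of~\eqref{lcp:1}.

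The part needing the most care---and the main obstacle---is that $r_i$ must be a well-defined, strictly positive, finite number, i.e.\ agent $i$'s flexible class must have $ppb(B_{d'})>0$. This rests on two facts about any competitive equilibrium of an instance satisfying our standing assumptions: agent $i$ owns a positive amount of some bad, so her endowment income $\sum_j W_{ij}p^*_j$ is strictly negative and she must earn, hence consumes a nonempty increasing prefix of her $ppb$-sorted segments; and she does not stop inside a zero-$ppb$ (zero-disutility) class, which I would rule out exactly as in the goods-only treatments of Eaves~\cite{Eaves76} and Garg et al.~\cite{GargMSV15}, invoking if necessary the generic choice of parameters used elsewhere in the paper. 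The remaining ingredients---tightness of the budget constraint, full allocation of every bad, and the claims that forced segments are consumed fully while undesirable ones are not consumed at all---are standard consequences of the optimal-bundle LP and its KKT conditions recalled in Section~\ref{sec:optimaL_bundles}, which I would cite rather than re-derive.
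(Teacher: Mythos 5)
Your proof is correct and follows essentially the same route as the paper's: identical variable assignments ($p_j=-p^*_j$, $f_{ijk}=x^*_{ijk}p_j$, $r_i$ the inverse of the flexible-class pain-per-buck, $s_{ijk}=p_j-D_{ijk}r_i$ on forced segments and $0$ elsewhere), the same equality arguments for \eqref{eq:budget_1} and \eqref{eq:spending_1}, and the same forced/flexible/undesirable case split to verify \eqref{eq:ppb_1}, \eqref{eq:segment_1} and their complementarities. The only notable presentational difference is that you inline the market-clearing bookkeeping that the paper factors out into Lemma~\ref{lem:clearing_1}, and you devote a paragraph to checking that $r_i$ is a finite positive number; the paper disposes of this in one line by asserting $0<D_{ij1}\le D_{ij2}\le\cdots$, i.e.\ by taking all bad segments to have strictly positive disutility (zero-disutility first segments being removed in the preprocessing described in Section~\ref{sec:splc}). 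Your extra care there is sound but is not a different argument, just a more explicit treatment of the same point.
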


LCP~\eqref{lcp:1} suffers from a serious problem. The vector $\q$ of in \eqref{0lcp} representation contains all zeros, meaning that it admits the trivial solution $\p = \f =\r = \s = \0$. We address this issue by a change of variables. For any equilibrium price vector $\p^*$, there exists a largest price (in magnitude) $P = \max_j |p_j^*|$. Since equilibrium prices are scale invariant, we can assume that $P$ is a positive constant. Changing variables to define prices relative to $P$ makes $-(P - p_j)$ the price of bad $j\in M$. Observe that bounding the maximum price (in absolute value) also bounds each agent's $ppb$ in her flexible partition, i.e., $1/ppb \leq P /D_{min}, \ \forall i\in N$, where $D_{min}= \min_{i,j,k: D_{ijk}>0} D_{ijk}$. Let a constant $R > (P /D_{min})$, and replace $r_i$ with $(R - r_i)$. That is, we want $1/(R-r_i) = ppb_i$ for $i$'s flexible partition. Substituting the new variables (i.e., $p_j$ with $(P-p_j)$ and $r_i$ with $(R-r_i)$) into LCP~\eqref{lcp:1} yields 
\begin{subequations}\label{lcp:2}
\begin{eqnarray}
	\forall i \in N: & \ -\sum_{j} W_{ij} p_j - \sum_{j,k} f_{ijk} \leq -P \sum_{j} W_{ij} & \perp \ \ r_i \label{eq:budget_2} \\
	\forall j\in M: & \ \sum_{i,k} f_{ijk} + p_j \leq P & \perp  \ \ p_j \label{eq:spending_2}\\
	\forall (i,j,k): & D_{ijk} r_i - p_j -s_{ijk}\leq D_{ijk}R - P & \perp \ \ f_{ijk} \label{eq:ppb_2} \\
	\forall (i,j,k): & f_{ijk} + L_{ijk}p_j \leq L_{ijk} P & \perp \ \ s_{ijk} \label{eq:segment_2} \enspace .
\end{eqnarray} 
\end{subequations}

LCP~\eqref{lcp:2} still allows one non-competitive equilibrium. Observe that setting $p_j = P, \ \forall j\in M$, $r_i=R,\ \forall i\in N$, and all other variables $(\f,\s)=\0$, solves LCP~\eqref{lcp:2}, but this solution is not a competitive equilibrium. Rather, this degenerate `equilibrium' proposes to make the price of each bad 0, since the price of bad $j$ is $-(P-p_j)$. In turn, this makes each agent's budget equal to 0, and prevents them from earning on anything. Ultimately, this leaves all bads unallocated and the market doesn't truly clear. We call this the degenerate solution. We show in Section \ref{sec:algo} that the algorithm never reaches this solution. Assuming $p_j < P, \ \forall j\in M$ and $r_i < R, \ \forall i\in N$, it is straightforward to verify that Lemma \ref{lem:market_solution_1} still holds. 

\begin{theorem} \label{thm:equiL_soL_correspondence}
	The solutions to LCP~\eqref{lcp:2} with $p_j < P, \ \forall j\in M$ and $r_i < R, \ \forall i\in N$ exactly captures all competitive equilibria (up to scaling).
\end{theorem}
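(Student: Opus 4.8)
The plan is to prove the two inclusions separately. The forward inclusion (every competitive equilibrium yields such an LCP solution) is essentially bookkeeping on top of Lemma~\ref{lem:market_solution_1}, while the reverse inclusion carries the real content.

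For the forward direction, Lemma~\ref{lem:market_solution_1} already produces a solution $(\hat p,\hat f,\hat r,\hat s)$ of LCP~\eqref{lcp:1} from a given equilibrium — with $\hat p_j=|p^*_j|$, $\hat r_i=1/ppb_i$ on agent $i$'s flexible partition, and the $\hat s_{ijk}$ the supplementary prices. Since equilibrium prices are scale invariant I would rescale so that $\max_j\hat p_j$ does not exceed the fixed constant $P$, and then substitute $p_j:=P-\hat p_j$, $r_i:=R-\hat r_i$ to land in LCP~\eqref{lcp:2}. The only points to check are the two strict inequalities. That $p_j<P$ is equivalent to $\hat p_j>0$, i.e.\ that no bad has price $0$ at equilibrium; this is exactly what the standing assumption $\text{indifference}_j<1$ gives, since at price $0$ the total demand for bad $j$ equals $\text{indifference}_j<1$, strictly below its unit supply, so the market cannot clear. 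That $r_i<R$ is equivalent to $\hat r_i>0$, which holds because each agent owns a positive amount of some item, hence has positive budget and a well-defined flexible partition, and there $0<\hat r_i=\hat p_j/D_{ijk}\le P/D_{min}<R$.

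For the reverse direction, I would start from a solution of LCP~\eqref{lcp:2} with $p_j<P$ and $r_i<R$, undo the substitution by $\hat p_j:=P-p_j>0$, $\hat r_i:=R-r_i>0$ (keeping $\hat f_{ijk}:=f_{ijk}$, $\hat s_{ijk}:=s_{ijk}$) to recover a solution of LCP~\eqref{lcp:1}, and propose the allocation with bad prices $p^*_j:=-\hat p_j<0$ and amounts $x_{ijk}:=\hat f_{ijk}/\hat p_j$; note $0\le x_{ijk}\le L_{ijk}$ follows at once from \eqref{eq:segment_1}. Then two things remain. First, \emph{market clearing}: summing \eqref{eq:budget_1} over $i$ and \eqref{eq:spending_1} over $j$ and using $\sum_i W_{ij}=1$ gives $\sum_j\hat p_j\le\sum_{i,j,k}\hat f_{ijk}\le\sum_j\hat p_j$, so both families are tight; tightness of \eqref{eq:spending_1} with $\hat p_j>0$ yields $\sum_{i,k}x_{ijk}=1$ (each bad fully allocated), and tightness of \eqref{eq:budget_1} yields $\sum_{j,k}x_{ijk}p^*_j=\sum_j W_{ij}p^*_j$ (each agent earns exactly her budget). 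Second, \emph{optimal bundles}: from the complementarity conditions I would read off the forced/flexible/undesirable structure of Section~\ref{sec:optimaL_bundles} — \eqref{eq:ppb_1} gives $\hat p_j\le D_{ijk}\hat r_i+s_{ijk}$ always, (\ref{eq:ppb_1}') upgrades it to equality whenever $x_{ijk}>0$, and (\ref{eq:segment_1}') forces $x_{ijk}=L_{ijk}$ whenever $s_{ijk}>0$ and $s_{ijk}=0$ whenever $x_{ijk}<L_{ijk}$ — so a partially consumed segment has $ppb_{ijk}=D_{ijk}/\hat p_j=1/\hat r_i$, a fully consumed one has $ppb_{ijk}\le 1/\hat r_i$, and an unconsumed one has $ppb_{ijk}\ge 1/\hat r_i$. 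Since the budget $\sum_j W_{ij}\hat p_j$ is positive, agent $i$'s optimal-bundle LP is to earn at least her budget at minimum disutility, and this structure is precisely its KKT optimality certificate, with dual variable $1/\hat r_i$ on the budget constraint and the $s_{ijk}$ (up to the scalar $1/\hat r_i$) acting as multipliers for the segment-capacity constraints $x_{ijk}\le L_{ijk}$. Hence $x$ is optimal for every agent, and with market clearing this makes $(p^*,x)$ a competitive equilibrium.

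The step I expect to be the main obstacle is the optimal-bundle argument: one has to translate the three complementarity conditions precisely into the ``forced segments bought fully at $ppb<1/\hat r_i$, flexible segments at $ppb=1/\hat r_i$, undesirable segments untouched'' picture, bookkeep the auxiliary variables $s_{ijk}$, and then genuinely invoke convexity/KKT of the optimal-bundle LP to conclude utility maximization — not merely that the bundle exhausts the budget. The other ingredients (strict negativity of bad prices from $\text{indifference}_j<1$, the bound $1/ppb_i\le P/D_{min}<R$, and the two-sided market-clearing inequality) are routine.
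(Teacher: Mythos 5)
Your proposal is correct and follows essentially the same route as the paper: the forward direction is Lemma~\ref{lem:market_solution_1} plus the change of variables, and the reverse direction recovers market clearing by summing \eqref{eq:budget_1} and \eqref{eq:spending_1} (Lemma~\ref{lem:clearing_1}) and then reads the forced/flexible/undesirable partition out of the complementarity conditions and invokes the KKT characterization of Section~\ref{sec:optimaL_bundles}, exactly as the paper's Lemma~\ref{lem:optimaL_bundles} does. You are slightly more explicit than the paper about why the two strict inequalities hold in the forward direction (using $\text{indifference}_j<1$ for $\hat p_j>0$ and $1/ppb_i\le P/D_{\min}<R$ for $\hat r_i>0$), but this is a bookkeeping refinement rather than a different argument.
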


The proof of the above theorem is in Appendix~\ref{sec:lem:equiv_sol}.

\subsection{LCP Formulation for Mixed Manna}
We now extend the LCP formulation to the general mixed manna case. Given the known LCP formulation for SPLC utilities for all goods due to Garg et al.~\cite{GargMSV15}, and LCP~\eqref{lcp:2} for all bads, a natural question is: Can we simply combine an LCP for goods and an LCP for bads to obtain an LCP for mixed manna? Note that this treats the mixed manna case as two separate subproblems: one for goods and one for bads. Such a formulation requires separate budget constraints for goods and bads, i.e., each agent's spending on goods (bads) is at least as much as her earnings on goods (bads), similar to constraint \eqref{eq:budget_1}. However, a simple example illustrates that, in general, this is not possible. \medskip

\begin{example}\label{example1}
Consider an instance with two agents $A$ and $B$, and two items $1$ and $2$. Agents' utilities are as follows: $u_A(x_A) = x_{A1} - 2 x_{A2}$, and $u_B(x_B) = x_{B1} - 3 x_{B2}$. Assume each agent brings an equal amount of each item, i.e., $W_{A1} = W_{A2} = W_{B1}= W_{B2} =0.5$. 

There are a few important things to note. Both agents like item 1, so it is a good and $p_1 >0$, and since both agents dislike item 2, it is a bad and $p_2 < 0$. Clearly, both agents must purchase some of bad 2 at equilibrium. A portion of item 1 can not be purchased by both agents since optimal bundles require $bpb = ppb$. Thus, if both agents purchase some of item 1, then $u_{A1}/p_1 = u_{A2}/p_2$, or $p_2 = -2p_1$, but we also have the requirement $p_2 = -3p_1$, a contradiction. Therefore, only one agent purchases good 1. One can verify that the prices $p_1 = 2$ and $p_2 = -4$, along with the allocation $x_{A1} = 1$, $x_{A2} = 3/4$, $x_{B1} = 0$, and $x_{B2} = 1/4$ are an equilibrium where each agents' initial budget is set to $-1$. Note that agent 1's total spending on the good is 2, and agent 2's spending on the good is 0. However, the total value of good in each agent's initial bundle is 1. Thus, neither agent's spending on the good equals the value of good in her initial endowment. 
\end{example}

\subsubsection{Basic Formulation}
Similar to Section \ref{sec:alL_bads}, we start by designing an LCP whose solutions capture competitive equilibria. This requires that the market clears, and that agents purchase optimal bundles of goods and bads. Although specialized to the case of all bads, the derivation of LCP~\eqref{lcp:2} in Section \ref{sec:alL_bads} provides the basic framework needed to handle the general mixed manna setting. As discussed in Section \ref{sec:goods_and_bads}, we can identify which items are goods and which are bads by examining the sign of the utility for the first segment of each agent. Note that when dealing with mixed manna, we assume the strong connectivity conditions (see Conditions \ref{cond1} and \ref{cond3}).

For clarity, we first write all complementarity conditions with minimal change of variables. Let $M^-$ and $M^+$ denote the set of bads and goods respectively. For all $j\in M^-$, prices, spending, and utilities are negative. We introduce non-negative variables $p_j$ and $f_{ijk}$ for all $j\in M$. We interpret $p_j$ as the price of good $j\in M^+$, and $(-p_j)$ as the price of bad $j\in M^-$. Similarly, $f_{ijk}$ gives agent $i$'s spending on the segment $(i,j,k)$ for good $j$, while $-f_{ijk}$ is $i$'s spending on the segment $(i,j,k)$ of bad $j\in M^-$.
We ensure market clearing with the following complementarity conditions
\begin{subequations}\label{lcp:3}
\begin{eqnarray}
\forall i \in N: & \displaystyle\sum_{k,j\in M^+} f_{ijk} - \sum_{k,j\in M^-} f_{ijk} \leq \sum_{j\in M^+} W_{ij} p_j - \sum_{j\in M^-} W_{ij} p_j  & \perp \ \ r_i\label{eq:budget_3} \\
\forall j\in M^- : & \displaystyle\sum_{i,k} f_{ijk} \leq p_j & \perp \ \ p_j\label{eq:spending_3_bad}\\
\forall j\in M^+ : & p_j \leq \displaystyle\sum_{i,k} f_{ijk} & \perp \ \ p_j\enspace . \label{eq:spending_3_good}
\end{eqnarray}
\end{subequations}
Note that we treat the spending constraints for bads \eqref{eq:spending_3_bad} and goods \eqref{eq:spending_3_good} differently. Further, if all items are bads, then we recover \eqref{eq:budget_1} and \eqref{eq:spending_1}. 

\begin{lemma} \label{lem:clearing_2}
	If $\p^*$ is an equilibrium price vector, then $\exists \ \f$ such that $(\p,\f)$ satisfies \eqref{eq:budget_3}, \eqref{eq:spending_3_bad}, and \eqref{eq:spending_3_good}, where $\p = |\p^*|$. Further, if $\p$ and $\f$ satisfy \eqref{eq:budget_3},
	\eqref{eq:spending_3_bad}, \eqref{eq:spending_3_good} and $\p > 0$, then the market clears.
\end{lemma}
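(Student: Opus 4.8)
The plan is to prove the two directions of Lemma~\ref{lem:clearing_2} separately, since they are essentially converses and use the same bookkeeping with opposite signs.

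\textbf{Forward direction (equilibrium $\Rightarrow$ LCP feasibility).} Let $\p^*$ be an equilibrium price vector with allocation $\x^*$, and set $\p = |\p^*|$, so $p_j = p_j^*$ for $j \in M^+$ and $p_j = -p_j^*$ for $j \in M^-$. First I would define $f_{ijk}$: for a good $j \in M^+$, set $f_{ijk} = x^*_{ijk}\, p_j$, the money agent $i$ spends on segment $(i,j,k)$; for a bad $j \in M^-$, set $f_{ijk} = x^*_{ijk}\, p_j = -x^*_{ijk}\, p^*_j$, so that $-f_{ijk}$ is the (negative) amount $i$ ``spends'' on that segment, i.e.\ the money she earns. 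These are nonnegative by construction. Market clearing of the equilibrium says $\sum_{i,k} x^*_{ijk} = 1$ for every item $j$ (total supply is $1$), which gives $\sum_{i,k} f_{ijk} = p_j$ for every $j \in M$; in particular \eqref{eq:spending_3_bad} and \eqref{eq:spending_3_good} hold with equality. The budget (optimal-bundle) condition of the equilibrium says each agent spends exactly her income: $\sum_{j,k} x^*_{ijk} p^*_j = \sum_j W_{ij} p^*_j$. Splitting this sum over $M^+$ and $M^-$ and rewriting in terms of the nonnegative $p_j$ and $f_{ijk}$ yields $\sum_{k, j\in M^+} f_{ijk} - \sum_{k, j \in M^-} f_{ijk} = \sum_{j\in M^+} W_{ij} p_j - \sum_{j \in M^-} W_{ij} p_j$, so \eqref{eq:budget_3} also holds (with equality). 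Thus $(\p,\f)$ satisfies all three constraints.

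\textbf{Reverse direction ($\p > 0$, constraints $\Rightarrow$ market clears).} Suppose $\p > 0$ and $(\p,\f)$ satisfies \eqref{eq:budget_3}, \eqref{eq:spending_3_bad}, \eqref{eq:spending_3_good}. The key trick, as in Section~\ref{sec:lin-lcp-correct}, is to sum the three families of inequalities over all $i$ and $j$. Summing \eqref{eq:budget_3} over $i \in N$ gives $\sum_{i}\big(\sum_{k,j\in M^+} f_{ijk} - \sum_{k,j\in M^-} f_{ijk}\big) \le \sum_{j\in M^+}\big(\sum_i W_{ij}\big) p_j - \sum_{j\in M^-}\big(\sum_i W_{ij}\big) p_j = \sum_{j\in M^+} p_j - \sum_{j\in M^-} p_j$, using $\sum_i W_{ij}=1$. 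On the other hand, summing \eqref{eq:spending_3_good} over $j \in M^+$ and \eqref{eq:spending_3_bad} over $j \in M^-$ gives $\sum_{j\in M^+} p_j \le \sum_{i,k,j\in M^+} f_{ijk}$ and $\sum_{i,k,j\in M^-} f_{ijk} \le \sum_{j\in M^-} p_j$, hence $\sum_{j\in M^+} p_j - \sum_{j\in M^-} p_j \le \sum_{i}\big(\sum_{k,j\in M^+} f_{ijk} - \sum_{k,j\in M^-} f_{ijk}\big)$. Chaining these two inequalities shows the left and right quantities are equal, forcing every intermediate inequality to hold with equality. In particular \eqref{eq:spending_3_good} is tight for every good, \eqref{eq:spending_3_bad} is tight for every bad, and \eqref{eq:budget_3} is tight for every agent. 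The tightness of the spending constraints, translated back via $p^*_j = \pm p_j$ and the intended meaning of $f_{ijk}$ (dividing by $p_j \neq 0$, which is where $\p > 0$ is used to make the division legitimate and recover allocations $x_{ijk} = f_{ijk}/p_j$), says exactly that the total allocated amount of each item equals its supply $1$; i.e.\ the market clears. One should also note the tightness of \eqref{eq:budget_3} gives the budget-balance half of the equilibrium conditions, though the lemma only claims market clearing.

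\textbf{Main obstacle.} The routine part is the sign bookkeeping; the one genuinely delicate point is that in the reverse direction we only get market clearing, not a full equilibrium — the optimal-bundle structure is not yet enforced (that is what constraints \eqref{eq:ppb_2}/\eqref{eq:segment_2}-analogues and the complementarity conditions are for, handled later), so I must be careful to claim only what Lemma~\ref{lem:clearing_2} states. Also I should double-check the edge case where $p_j$ could be zero for some item: the hypothesis $\p > 0$ rules this out in the reverse direction, and in the forward direction we have already reduced (in Section~\ref{sec:goods_and_bads}) to instances where every good has desire $>1$ and every bad has indifference $<1$, which guarantees strictly nonzero equilibrium prices, so dividing/multiplying by $p_j$ is safe throughout.
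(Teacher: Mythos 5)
Your proof is correct and follows the paper's argument essentially verbatim: in the forward direction you set $f_{ijk} = x^*_{ijk}\,p_j$ (where $x^*_{ijk}$ is the segment-wise decomposition of the equilibrium allocation, which the paper makes explicit via \eqref{eq:x_from_allocation}) and use the equilibrium's market-clearing and budget-exhaustion conditions to get all three constraints with equality; in the reverse direction you sum \eqref{eq:budget_3} over agents and \eqref{eq:spending_3_bad}, \eqref{eq:spending_3_good} over bads and goods respectively, use $\sum_i W_{ij}=1$ to sandwich everything into equality, and then recover $x_{ijk}=f_{ijk}/p_j$ using $\p>0$. The only minor point is that the forward direction never divides by $p_j$ (only multiplies), so your caution about zero equilibrium prices there is unnecessary, though harmless.
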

Proof of the above lemma is in Appendix \ref{sec:lem:clearing_2}. The next step is to make sure agents purchase optimal bundles. Let $(i,j,k)$ be a segment of agent $i$'s flexible partition for item $j$. We want the variable $r_i$ to satisfy, where $D_{ijk}=|U_{ijk}|$ for bad $j$,
\begin{equation}\label{eq:lambda_mbb}
\frac{1}{r_i} = \frac{U_{ijk}}{p_j}\ \ \mbox{if $j\in M^+$, and }\ \frac{1}{r_i} = \frac{D_{ijk}}{p_j}\ \  \mbox{if $j\in M^-$.}
\end{equation}
Recall that forced segments of goods and bads correspond to slightly different conditions. For any forced segment $(i,j,k')$ of bad $j$, we have $ppb_{ijk'} < ppb_{ijk}$. For any forced segment $(i,j,k')$ of good $j$, we have $bpb_{ijk'} > bpb_{ijk}$. Again, we compensate for this by introducing a variable $s_{ijk} \geq 0$ into each segment $(i,j,k)$ of $i$'s utility function, leading to the following complementarity conditions
\begin{align}
\hspace{3cm}\forall j\in M^-, \ \forall (i,j,k): & \ \  p_j -s_{ijk}\leq D_{ijk} r_i & \perp \ \ f_{ijk}\ \ \hspace{3cm}\label{eq:mbb_3_bad} \tag{11d}\\
\hspace{3cm}\forall j\in M^+, \ \forall (i,j,k): & \ \ U_{ijk} r_i \leq p_j +s_{ijk} & \perp \ \ f_{ijk}\ \ \hspace{3cm}\label{eq:mbb_3_good} \tag{11e}\\
\hspace{3cm}\forall (i,j,k): & \ \ f_{ijk} \leq L_{ijk} p_j & \perp \ \  s_{ijk}\enspace . \hspace{3cm}\label{eq:segment_3}\tag{11f}
\end{align}
Observe that, if all items are bads, then we recover \eqref{eq:ppb_1} and \eqref{eq:segment_1}. 

\begin{lemma} \label{lem:market_solution_2}
	Any competitive equilibrium of mixed manna gives a solution to LCP~\eqref{lcp:3}.
\end{lemma}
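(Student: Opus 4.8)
The plan is to verify that given a competitive equilibrium $(\p^*, \x^*)$ of the mixed manna, one can construct values for all the LCP variables $(\p, \f, \r, \s)$ that satisfy every constraint of LCP~\eqref{lcp:3} together with its complementarity conditions. The natural starting point is to reuse the construction already validated piecewise: set $p_j = |p^*_j|$ for all items, and let $f_{ijk}$ record the (absolute) money spent/earned by agent $i$ on the $k$-th segment of item $j$ under the equilibrium allocation $\x^*$ (with the sign convention that for $j \in M^-$ it is $-f_{ijk}$ that equals $i$'s spending). For each agent $i$, set $r_i$ so that $1/r_i$ equals the $bpb$ (resp.\ $ppb$) of $i$'s flexible partition, which is well-defined by the optimal-bundle characterization in Section~\ref{sec:optimaL_bundles}; and set the supplementary prices $s_{ijk}$ to compensate exactly for the gap between the flexible ratio and the (strictly better) ratio on forced segments, and $s_{ijk} = 0$ on flexible and undesirable segments.

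The key steps, in order, are: (i) invoke Lemma~\ref{lem:clearing_2} to get that the market-clearing constraints \eqref{eq:budget_3}, \eqref{eq:spending_3_bad}, \eqref{eq:spending_3_good} are satisfied with equality by this $(\p,\f)$ — indeed the lemma directly provides the existence of such $\f$, and since the equilibrium clears the market these hold tightly, so the complementarity conditions (paired with $r_i$ and $p_j$) are trivially met; (ii) verify \eqref{eq:segment_3}: on forced segments the equilibrium allocates the whole segment so $x^*_{ijk} = L_{ijk}$, giving $f_{ijk} = L_{ijk} p_j$ (equality), while on partially- or un-consumed segments $f_{ijk} \le L_{ijk} p_j$ holds and we set $s_{ijk} = 0$, so (\ref{eq:segment_3}') holds; (iii) verify the optimal-bundle inequalities \eqref{eq:mbb_3_bad} and \eqref{eq:mbb_3_good}: for $j\in M^+$, forced segments have $bpb_{ijk} > 1/r_i$, i.e. $U_{ijk} r_i > p_j$, and the positive $s_{ijk}$ absorbs exactly this surplus making the inequality tight; flexible segments give $U_{ijk}r_i = p_j$ with $s_{ijk}=0$; undesirable segments have $U_{ijk}r_i \le p_j$, are not consumed ($f_{ijk}=0$), and $s_{ijk}=0$ — so in all three cases both the inequality and the complementarity (\ref{eq:mbb_3_good}') hold (and symmetrically for bads using $D_{ijk}$ and the reversed ordering of $ppb$). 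Finally one checks nonnegativity of every constructed variable, which is immediate from the sign conventions (prices' magnitudes, nonnegative spending, $r_i > 0$ since ratios are finite and positive, $s_{ijk} \ge 0$ by the forced-segment gap being in the right direction).

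The main obstacle — and the only part requiring genuine care rather than bookkeeping — is step (iii), specifically handling the three-way split into forced/flexible/undesirable segments and checking that the single shared variable $r_i$ simultaneously certifies optimality across all of agent $i$'s goods \emph{and} bads. One must use the fact, established at the end of Section~\ref{sec:optimaL_bundles}, that $ppb \ge bpb$ in the flexible partition, together with the agent's budget being spent exactly (which follows from market clearing and the equilibrium budget constraint), to argue that the flexible ratio is consistent between the goods side and the bads side and that an agent with $\sum_j W_{ij} p^*_j < 0$ who consumes only chores still admits a valid $r_i$. A minor subtlety is the degenerate case where an agent consumes nothing flexible on one side; then $r_i$ is pinned down by the other side, and one must confirm the inequalities on the inactive side still go the right way. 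Apart from this, the proof is a direct translation of the equilibrium conditions into the LCP variables, mirroring the all-bads argument of Lemma~\ref{lem:market_solution_1} and the informal linear-case sketch in Section~\ref{sec:lin-lcp-correct}, so I would relegate the routine verifications to an appendix and keep only the segment-classification argument in the main text.
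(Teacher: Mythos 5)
Your proposal matches the paper's proof almost exactly: both set $p_j = |p^*_j|$ and $f_{ijk}$ from the equilibrium allocation, invoke Lemma~\ref{lem:clearing_2} to get the three market-clearing conditions with equality, set $r_i$ from the $bpb$/$ppb$ of a flexible segment (the paper resolves your ``which side pins down $r_i$'' subtlety by prescribing goods-first, falling back to bads only if the agent buys no goods), and set $s_{ijk}$ to absorb the forced-segment gap while leaving flexible and undesirable segments at zero. Your flagged concern about consistency between the two flexible ratios is handled implicitly by the paper (if an agent consumes flexibly on both sides then $bpb = ppb$ is forced, and otherwise the inactive side's inequalities hold because $ppb \ge bpb$ on flexible partitions), so the proposal is correct and essentially identical in approach.
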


The proof of the above lemma is in Appendix \ref{sec:lem:mkt_sol_2}. Similar to the case of all bads, LCP~\eqref{lcp:3} admits solutions that are not competitive equilibria, e.g., the trivial solution $\p = \f = \r = \s = \0$. We use the same change of variables as before. We fix a maximum price (in absolute value) $P$, and define the relative prices: $(P - p_j)$ for all goods $j\in M^+$, and $-(P-p_j)$ for all bads $j\in M^-$. This bounds each agent's $ppb$ or $bpb$ in her flexible partition, i.e., $1/bpb, 1/ppb \leq P /U_{min}$ where $U_{min}=\min_{i,j,k: U_{ijk}\neq 0} |U_{ijk}|$. Using this we define a constant $R> P /U_{min}$ and replace $r_i$ with $(R - r_i)$. That is, we want $\frac{1}{(R-r_i)}$ to represent $ppb$ and $bpb$ of agent $i$'s flexible partitions. Substituting $p_j$ with $(P-p_j)$ for each $j\in M$ and $r_i$ with $(R-r_i)$ for each agent $i\in N$ into LCP~\eqref{lcp:3} yields

\begin{subequations}\label{lcp:4}
{\small
\begin{eqnarray}
\hspace{-0.5cm}\forall i \in N:  \displaystyle\sum_{j\in M^+} W_{ij} p_j- \sum_{j\in M^-} W_{ij} p_j + \sum_{k,j\in M^+} f_{ijk} - \displaystyle\sum_{k,j\in M^-} f_{ijk} \leq P ( \sum_{j\in M^+} W_{ij}-\sum_{j\in M^-} W_{ij}) & \perp \ \ r_i\label{eq:budget_4} 
\end{eqnarray}
}
\begin{eqnarray}
\forall j\in M^-: & \displaystyle\sum_{i,k} f_{ijk} + p_j  \leq P & \perp \ \ p_j\label{eq:spending_4_bad}\\
\forall j\in M^+: & -\displaystyle\sum_{i,k} f_{ijk} - p_j  \leq -P & \perp \ \  p_j\label{eq:spending_4_good}\\
\forall j \in M^-, \ \forall i,k: & D_{ijk} r_i - p_j -s_{ijk}\leq D_{ijk}R - P & \perp \ \ f_{ijk}\label{eq:mbb_4_bad} \\
\forall j \in M^+, \ \forall i,k: & -U_{ijk} r_i + p_j -s_{ijk}\leq P-U_{ijk}R & \perp \ \ f_{ijk}\label{eq:mbb_4_good} \\
\forall (i,j,k): & f_{ijk} + L_{ijk}p_j \leq L_{ijk} P & \perp \ \ s_{ijk}\label{eq:segment_4} \enspace .
\end{eqnarray} 
\end{subequations}
In the case of all bads, LCP~\eqref{lcp:4} is equivalent to LCP~\eqref{lcp:2} from Section \ref{sec:alL_bads}.

Similar to LCP~\eqref{lcp:2}, LCP~\eqref{lcp:4} still allows (at least) one non competitive equilibrium. By setting $p_j = P, \ \forall j\in M$, $r_i = R, \ \forall i \in N$, and all other variables $(\f,\s)=\0$ we get a solution to LCP~\eqref{lcp:4}. However, this solution does not correspond to a competitive equilibrium, but rather, a degenerate solution where all prices are zero and no items are allocated. We show in Section \ref{sec:algo_convergence} that the algorithm never reaches this degenerate solution. Assuming $p_j < P, \ \forall j\in M$, and $r_i < R, \ \forall i\in N$, it is straightforward to verify that Lemmas \ref{lem:clearing_2} and \ref{lem:market_solution_2} still hold -- see Appendix \ref{sec:optimaL_bundles_2} for the formal proof.

\begin{lemma} \label{lem:optimaL_bundles_2}
	In any solution to LCP~\eqref{lcp:4} with $p_j < P, \ \forall j\in M$, and $r_i < R, \ \forall i \in N$, all agents receive an optimal bundle and market clears w.r.t. prices $\p^*$ where $p^*_j=(P-p_j)$ for all goods $j\in M^+$ and $p^*_j=-(P-p_j)$ for all chores $j\in M^-$. 
\end{lemma}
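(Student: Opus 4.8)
The plan is to reduce the statement to Lemmas~\ref{lem:clearing_2} and~\ref{lem:market_solution_2} by carefully tracking the change of variables and verifying that, under the hypotheses $p_j<P$ and $r_i<R$, none of the boundary phenomena that make the degenerate solution possible can occur. Concretely, set $p^*_j=(P-p_j)>0$ for $j\in M^+$ and $p^*_j=-(P-p_j)<0$ for $j\in M^-$, and $r^*_i=(R-r_i)>0$ for each agent. Observe first that this is exactly the inverse of the substitution used to pass from LCP~\eqref{lcp:3} to LCP~\eqref{lcp:4}, so a solution of \eqref{lcp:4} becomes a solution of \eqref{lcp:3} in the variables $(\p^*,\f,\r^*,\s)$; the strict inequalities guarantee $\p^*$ has all coordinates nonzero with the correct signs, which is what Lemma~\ref{lem:clearing_2} needs to conclude market clearing.

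The key steps, in order, would be: (i) translate \eqref{eq:budget_4}, \eqref{eq:spending_4_bad}, \eqref{eq:spending_4_good} back into \eqref{eq:budget_3}, \eqref{eq:spending_3_bad}, \eqref{eq:spending_3_good} and invoke the ``$\p>0$'' half of Lemma~\ref{lem:clearing_2} to get that every item is fully allocated and every agent's net spending equals the value of her endowment at $\p^*$; (ii) translate the optimal-bundle complementarity conditions \eqref{eq:mbb_4_bad}, \eqref{eq:mbb_4_good}, \eqref{eq:segment_4} back into \eqref{eq:mbb_3_bad}, \eqref{eq:mbb_3_good}, \eqref{eq:segment_3}, noting $D_{ijk}r_i-p_j-s_{ijk}\le D_{ijk}R-P \iff p^*_j - s_{ijk} \le D_{ijk} r^*_i$ and similarly for goods, and likewise $f_{ijk}+L_{ijk}p_j\le L_{ijk}P \iff f_{ijk}\le L_{ijk}p^*_j$; (iii) run the argument of Lemma~\ref{lem:market_solution_2} in reverse: the conditions \eqref{eq:mbb_3_bad}/\eqref{eq:mbb_3_good} together with $\p^*>0$ (in absolute value) and $r^*_i>0$ force $1/r^*_i$ to be simultaneously an upper bound on $bpb_{ijk}$ over goods and a lower bound on $ppb_{ijk}$ over bads, with equality on any segment with $f_{ijk}>0$, so each agent consumes only from maximum-$bpb$ / minimum-$ppb$ segments; the $s_{ijk}$ variables and (\ref{eq:segment_4}') enforce that forced segments are exhausted before a flexible one is touched; (iv) conclude that the allocation $x_{ijk}=f_{ijk}/p^*_j$ is an optimal solution to each agent's utility-maximization LP (via the KKT characterization recalled in Section~\ref{sec:optimaL_bundles}), which together with (i) gives a competitive equilibrium at $\p^*$.

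The main obstacle I anticipate is step (iii): showing that the single variable $r^*_i$ correctly ties together the good side and the bad side of each agent's optimal bundle, i.e.\ that the $bpb$/$ppb$ inequalities derived from \eqref{eq:mbb_4_bad}--\eqref{eq:mbb_4_good} really do pin down the \emph{flexible partition} structure (forced / flexible / undesirable) rather than merely some feasible-but-suboptimal allocation. One must rule out the possibility that an agent with negative endowment value ``over-earns'' on bads, or that an agent buys a good whose $bpb$ is strictly below $1/r^*_i$; the former is handled because the budget constraint \eqref{eq:budget_3} holds with equality (from step (i)), and the latter because (\ref{eq:mbb_3_good}') with $s_{ijk}=0$ forces $U_{ijk}r^*_i=p^*_j$ whenever $f_{ijk}>0$ while a forced good segment has $s_{ijk}>0$ absorbing the slack. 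Since this is exactly the content already proved (in the forward direction) for Lemma~\ref{lem:market_solution_2}, and the only new ingredient is that $p_j<P,\,r_i<R$ makes $\p^*,\r^*$ strictly of the right sign so that divisions by $p^*_j$ and $r^*_i$ are legitimate, the argument should go through essentially verbatim with the inequalities reversed; I would state this explicitly and refer to the proof of Lemma~\ref{lem:market_solution_2} for the routine verifications, as the text promises in Appendix~\ref{sec:optimaL_bundles_2}.
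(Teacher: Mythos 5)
Your proposal follows the same route as the paper's proof: undo the change of variables to recover a solution of LCP~\eqref{lcp:3}, invoke Lemma~\ref{lem:clearing_2} for market clearing, and then verify from the complementarity conditions on \eqref{eq:mbb_3_bad}/\eqref{eq:mbb_3_good}/\eqref{eq:segment_3} that the $s_{ijk}$'s induce the forced/flexible/undesirable partition structure with $1/(R-r_i)$ pinning the flexible bpb/ppb value, which the paper carries out explicitly by introducing $\nu_i$ (lowest bpb among consumed good segments) and $\sigma_i$ (highest ppb among consumed bad segments) and showing $\nu_i\le R-r_i\le\sigma_i$. One small imprecision worth noting: $1/(R-r_i)$ is an upper bound on $bpb$ (resp.\ lower bound on $ppb$) only over segments with $s_{ijk}=0$, while forced segments have $bpb>1/(R-r_i)$ (resp.\ $ppb<1/(R-r_i)$) with $s_{ijk}$ absorbing the slack and (\ref{eq:segment_4}') forcing full allocation — you acknowledge this mechanism but your stated bound quantifies over all good/bad segments, which is not literally true.
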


The next theorem follows using Lemmas~\ref{lem:market_solution_2} and~\ref{lem:optimaL_bundles_2} together with the way LCP~\eqref{lcp:4} is constructed from LCP~\eqref{lcp:3}. Its proof is in Appendix~\ref{athm:equiL_soL_correspondence_2}.

\begin{theorem}\label{thm:equiL_soL_correspondence_2}
	The solutions to LCP~\eqref{lcp:4} with $p_j < P, \ \forall j\in M$, and $r_i < R, \ \forall i\in N$, exactly captures competitive equilibrium of mixed manna (up to scaling).
\end{theorem}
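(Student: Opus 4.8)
The plan is to prove the two set inclusions separately, using the affine change of variables $p_j\mapsto P-p_j$, $r_i\mapsto R-r_i$ as the bridge between LCP~\eqref{lcp:3} and LCP~\eqref{lcp:4}, and leaning on Lemmas~\ref{lem:market_solution_2} and~\ref{lem:optimaL_bundles_2} for the two directions.

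\emph{Equilibria are captured.} Start from a competitive equilibrium with price vector $\p^*$. By scale invariance we may normalize so that $\max_{j\in M}|p^*_j|=P$, and by the $\mathrm{desire}$/$\mathrm{indifference}$ preprocessing (Section~\ref{sec:goods_and_bads}) every $|p^*_j|>0$. Lemma~\ref{lem:market_solution_2} produces a solution of LCP~\eqref{lcp:3} in which $p_j=|p^*_j|$, $1/r_i$ equals the $bpb$ (resp.\ $ppb$) of agent $i$'s flexible partition, and the $f_{ijk},s_{ijk}$ encode spendings and the supplementary prices of forced segments. Now define the LCP~\eqref{lcp:4} variables by $p_j:=P-|p^*_j|$, $r_i:=R-(1/bpb$ or $1/ppb$ of the flexible partition$)$, keeping $\f,\s$ unchanged. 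Then $0\le p_j<P$ since $0<|p^*_j|\le P$, and $0<r_i<R$ since $1/bpb,1/ppb\le P/U_{min}<R$. The linear inequalities of LCP~\eqref{lcp:4} are literally those of LCP~\eqref{lcp:3} after substitution, hence satisfied. For complementarity: the market-clearing constraints \eqref{eq:budget_4}, \eqref{eq:spending_4_bad}, \eqref{eq:spending_4_good} are all \emph{tight} at a competitive equilibrium (demand meets supply, each agent spends exactly her budget), so complementarity with $r_i$ resp.\ $p_j$ holds irrespective of the pairing; for \eqref{eq:mbb_4_bad}/\eqref{eq:mbb_4_good} the pairing variable $f_{ijk}$ vanishes on undesirable segments, while on forced and flexible segments the inequality is tight by the choice of $r_i$ and $s_{ijk}$; for \eqref{eq:segment_4}, $s_{ijk}=0$ on non-forced segments and forced segments are fully purchased, making that inequality tight. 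Thus the tuple solves LCP~\eqref{lcp:4} and satisfies the strict bounds.

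\emph{Solutions give equilibria.} Conversely, let $(\p,\f,\r,\s)$ solve LCP~\eqref{lcp:4} with $p_j<P$ for all $j\in M$ and $r_i<R$ for all $i\in N$. Set $p^*_j=P-p_j>0$ for $j\in M^+$ and $p^*_j=-(P-p_j)<0$ for $j\in M^-$, so $\p^*$ is a legitimate price vector with the correct signs. Lemma~\ref{lem:optimaL_bundles_2} asserts exactly that at such a solution every agent receives a utility-maximizing bundle subject to her budget at prices $\p^*$ and every item is fully allocated; these are precisely the two defining conditions of a competitive equilibrium. Combining the two inclusions, and observing that the normalization $\max_j|p^*_j|=P$ selects one representative from each scaling ray of equilibrium prices, yields the stated correspondence up to scaling.

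\emph{Main obstacle.} The delicate point is the forward direction: after the affine substitution, each inequality of LCP~\eqref{lcp:3} becomes paired in LCP~\eqref{lcp:4} with a \emph{different} nonnegative variable, so one must re-verify every complementarity condition at an equilibrium. This goes through only because the market-clearing constraints are tight at equilibrium (so they are complementary to \emph{any} nonnegative variable), and because the forced/flexible/undesirable classification of segments together with the supplementary prices $s_{ijk}$ makes each remaining pair complementary. Pinning down the constants---$R>P/U_{min}$, prices normalized to maximum magnitude $P$, and zero prices excluded by the $\mathrm{desire}$/$\mathrm{indifference}$ step---is what upgrades the bounds from $p_j\le P$, $r_i\le R$ to the strict inequalities required in the statement.
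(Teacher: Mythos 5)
Your proof is correct and follows the same overall route as the paper's: the backward inclusion is delegated to Lemma~\ref{lem:optimaL_bundles_2}, and the forward inclusion rests on Lemma~\ref{lem:market_solution_2} together with the affine substitution $p_j\mapsto P-p_j$, $r_i\mapsto R-r_i$. Where you go beyond the paper's terse one-paragraph argument is in making explicit the point it dismisses as ``straightforward to verify'': that the complementarity pairings survive the change of variables. You correctly observe that for \eqref{eq:budget_4}, \eqref{eq:spending_4_bad}, \eqref{eq:spending_4_good} the paired variable is \emph{negated} by the substitution (from $r_i^{\text{old}}$ to $R-r_i^{\text{old}}$, and likewise for $p_j$), so complementarity would fail for a generic LCP~\eqref{lcp:3} solution --- but it holds here because those three constraint families are \emph{tight} at any competitive equilibrium (exactly what the proof of Lemma~\ref{lem:market_solution_2} via Lemma~\ref{lem:clearing_2} establishes), and a tight constraint is complementary to any nonnegative partner. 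For \eqref{eq:mbb_4_bad}, \eqref{eq:mbb_4_good}, \eqref{eq:segment_4} the pairing variables $f_{ijk},s_{ijk}$ are unchanged by the substitution so complementarity transfers directly, which is slightly stronger than the case-by-case re-verification you give, but both are fine. Your use of the desire/indifference preprocessing to rule out zero equilibrium prices, and of $R>P/U_{\min}$ to get $0<r_i<R$, matches the paper's assumptions. In short: same approach, with the change-of-variables detail spelled out rather than asserted.
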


\subsubsection{Augmented LCP and Nondegeneracy}\label{sec:alcp}
Observe that LCP~\eqref{lcp:4} has the same form as \eqref{0lcp} in Section \ref{sec:prel-lcp}. We now give the augmented LCP for this problem. By choice of $P$ and $R>P/\min_{i,j,k: U_{ijk}\neq 0}|U_{ijk}|$ we have that for all bads $j\in M^-$, $D_{ijk}R-P>0, \ \forall i,k$. This makes the right hand side of \eqref{eq:mbb_4_bad} positive for all bads. Standard LCP techniques~\cite{Eaves76,GargMSV15,GargMV18} add the variable $z$ only in constraints with negative right hand side. We make two changes. First, we include the variable $z$ to any constraints with negative right hand side, and \emph{all} budget constraints \eqref{eq:budget_4}. Second, when adding $z$ into spending constraints for goods \eqref{eq:spending_4_good}, we use a coefficient $\delta_j = 1+ \epsilon_j$, where $\epsilon_j$  is a uniform random number from $(0,1/m)$. This change is necessary to ensure that the polyhedron corresponding to the augmented LCP remains nondegenerate, as discussed shortly. Adding $z$ to any constraints with negative right hand side, and \emph{all} budget constraints \eqref{eq:budget_4} yields
\begin{subequations}\label{lcp:5}
{\small 
\begin{eqnarray}
\forall i \in N: \  \sum_{j\in M^+} W_{ij} p_j-\sum_{j\in M^-} W_{ij} p_j + \sum_{k,j\in M^+} f_{ijk} - \sum_{k,j\in M^-} f_{ijk} - z \leq P ( \sum_{j\in M^+} W_{ij}-\sum_{j\in M^-} W_{ij})\ \perp\ r_i\label{eq:budget_4a}
\end{eqnarray}
}
\begin{eqnarray}
\forall j\in M^-: & \displaystyle\sum_{i,k} f_{ijk} + p_j  \leq P & \perp \ \ p_j\label{eq:spending_4a_bad}\\
\forall j\in M^+: & -\displaystyle\sum_{i,k} f_{ijk} - p_j -\delta_j z \leq -P & \perp \ \  p_j\label{eq:spending_4a_good}\\
\forall j \in M^-, \ \forall i,k: & D_{ijk} r_i - p_j -s_{ijk}\leq D_{ijk}R - P & \perp \ \  f_{ijk}\label{eq:mbb_4a_bad} \\
\forall j \in M^+, \ \forall i,k: & -U_{ijk} r_i + p_j -s_{ijk} -z \leq P-U_{ijk}R & \perp \ \  f_{ijk}\label{eq:mbb_4a_good} \\
\forall (i,j,k): & f_{ijk} + L_{ijk}p_j \leq L_{ijk} P & \perp \ \  s_{ijk}\enspace . \label{eq:segment_4a}
\end{eqnarray} 
\end{subequations}

Let $\cP$ be the polyhedron corresponding to LCP~\eqref{lcp:5}. Lemke's algorithm requires nondegeneracy of the polyhedron $\cP$, i.e., if $\cP$ is defined on $k$ variables, then at any $d$ dimensional face of $\cP$ exactly $(k-d)$ inequalities hold with equality. 
In that case, the solutions of LCP~\eqref{lcp:5} are paths and cycles on the 1-skeleton of $\cP$. However, there is an inherent degeneracy present when $z=0$, i.e., solutions of LCP~\eqref{lcp:4}. 
\medskip

\noindent{\em Inherent Degeneracy in LCP~\eqref{lcp:4}.} Summing \eqref{eq:budget_3} over all $i \in N$, and \eqref{eq:spending_3_bad} over all $j\in M^-$ and \eqref{eq:spending_3_good} over all $j \in M^+$ yields two identical equations; see the proof of Lemma \ref{lem:clearing_2} for details. That is, there is an inherent degeneracy in $\cP$. 
\medskip

Clearly, the inherent degeneracy of LCP~\eqref{lcp:4} is still present in $\cP$ when $z=0$. We need to show that no other degeneracies exist which crucially relies on $\delta_j = 1+\epsilon_j$ for all goods $j\in M^+$.\footnote{Suppose $\delta_j = 1, \ \forall j\in M^+$, and that $p_j = 0, \ \forall j\in M^+$, and $f_{ijk} = 0$ for all segments $(i,j,k)$ of each good. If $z=P$, then \eqref{eq:spending_4a_good} also holds with equality $\forall j \in  M^+$. Therefore, we have double labels for each good $j\in M^+$.}

If there is a degenerate vertex $\vv\in \cP$ with $z>0$, $p_j < P, \ \forall j\in M$, and $r_i < R, \ \forall i\in N$, then using the extra tight inequalities at $\vv$ we can derive a polynomial relation between the input parameters $\mathbf{U}$, $\mathbf{W}$ and $\mathbf{L}$, and $\epsilon_j$'s. Therefore, we get the following theorem (see Appendix \ref{sec:thm:non_deg_2} for the formal proof).

\begin{theorem} \label{thm:non_degenerate_2}
	If the instance parameters $\mathbf{U}$, $\mathbf{W}$, and $\mathbf{L}$ have no polynomial relation among them, then every vertex of $\mathcal{P}$ with $z > 0$, $p_j < P, \ \forall j\in M$, and $r_i < R, \ \forall i\in N$, is nondegenerate.
\end{theorem}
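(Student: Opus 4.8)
The plan is a Cramer's-rule genericity argument, whose only non-routine ingredient is separating the \emph{inherent} degeneracy of LCP~\eqref{lcp:4} (which lives on the face $z=0$ and is harmless here) from any other, ``accidental'' degeneracy. Suppose toward a contradiction that $\vv$ is a degenerate vertex of $\cP$ with $z>0$, $p_j<P$ for all $j\in M$, and $r_i<R$ for all $i\in N$. Let $k$ be the number of variables of LCP~\eqref{lcp:5} (the $p_j$'s, the $f_{ijk}$'s, the $s_{ijk}$'s, the $r_i$'s, and $z$), so $\cP\subseteq\bR^k$ and, by definition, nondegeneracy at a vertex means exactly $k$ of the defining inequalities (counting the sign constraints) are tight there; degeneracy of $\vv$ thus means at least $k+1$ inequalities of \eqref{lcp:5} are tight at $\vv$. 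Since $\vv$ is a vertex, its tight constraints have rank $k$, so we may pick $k$ linearly independent ones; they form a system $M'\x=\mathbf{b}'$ with $M'$ an invertible $k\times k$ matrix whose entries lie in $\{0,\pm1,\pm\delta_j,W_{ij},U_{ijk},D_{ijk},L_{ijk}\}$ and $\mathbf{b}'$ a vector each entry of which is a polynomial in $\mathbf{U},\mathbf{W},\mathbf{L}$ and the fixed constants $P,R$. By Cramer's rule $\vv=(M')^{-1}\mathbf{b}'=\tfrac{1}{\det M'}\,\mathrm{adj}(M')\,\mathbf{b}'$, so every coordinate of $\vv$ is a ratio of polynomials in $(\mathbf{U},\mathbf{W},\mathbf{L},\boldsymbol{\epsilon})$.

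Now take one further tight constraint $\mathbf{a}^{\!\top}\x\le\beta$ at $\vv$ (it is automatically linearly dependent on the chosen $k$). Tightness, $\mathbf{a}^{\!\top}\vv=\beta$, multiplied through by $\det M'$, becomes the polynomial identity
\[
\mathbf{a}^{\!\top}\,\mathrm{adj}(M')\,\mathbf{b}' \;-\; \beta\,\det(M') \;=\; 0 ,
\]
that is, $Q(\mathbf{U},\mathbf{W},\mathbf{L},\boldsymbol{\epsilon})=0$ for a polynomial $Q$ determined solely by the combinatorial choice of the $k+1$ tight constraints and the partition $M=M^+\cup M^-$. It remains to show that, under the hypothesis that $\mathbf{U},\mathbf{W},\mathbf{L}$ satisfy no polynomial relation, each such $Q$ is a nonzero polynomial in $\boldsymbol{\epsilon}$, for then a random (indeed, a generic) choice of the $\epsilon_j$'s from $(0,1/m)$ lies outside its zero set; as there are only finitely many choices of $k+1$ tight constraints, finitely many such $Q$, their zero sets have measure zero and a suitable $\boldsymbol{\epsilon}$ avoids all of them, contradicting the existence of $\vv$ and proving the theorem.

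To see $Q$ does not vanish as a polynomial in $\boldsymbol{\epsilon}$: suppose it did at our instance. Writing $Q$ as a polynomial in $\boldsymbol{\epsilon}$ (the $\epsilon_j$ enter only through the coefficients $\delta_j=1+\epsilon_j$ of the good-spending rows~\eqref{eq:spending_4a_good}), each of its coefficients is a polynomial in $\mathbf{U},\mathbf{W},\mathbf{L}$; by the no-relation hypothesis, ``$Q(\text{instance},\cdot)\equiv0$ in $\boldsymbol\epsilon$'' forces all those coefficient-polynomials to be identically zero, i.e.\ $Q\equiv0$ as a polynomial in all of its arguments, i.e.\ the $k+1$ chosen tight constraints are linearly dependent \emph{for all parameter values} --- a structural dependence. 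One then shows the only structural (parameter-independent) linear dependence among the constraint functionals of \eqref{lcp:5} is the inherent one: summing all budget rows~\eqref{eq:budget_4a}, all bad-spending rows~\eqref{eq:spending_4a_bad}, and all good-spending rows~\eqref{eq:spending_4a_good} yields exactly $(n+\sum_{j\in M^+}\delta_j)$ times the constraint $-z\le0$. But at $\vv$ we have $z>0$, so $-z\le0$ is not tight and hence not among the tight constraints; likewise $p_j<P$ and $r_i<R$ keep $\vv$ off the $z=0$ face and away from the degenerate solution $\p=P,\ \r=R$. So no structural dependence can involve only tight-at-$\vv$ constraints, a contradiction. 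Hence $Q$ is a nonzero polynomial in $\boldsymbol{\epsilon}$, as needed.

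The main obstacle is the claim just invoked: that the inherent degeneracy is the \emph{only} parameter-independent linear dependence among (sub-collections of) the constraints of \eqref{lcp:5}. Establishing this requires a careful case analysis over the five constraint families together with the sign constraints, tracking at $\vv$ which of $f_{ijk},s_{ijk},p_j,r_i,z$ vanish and which inequalities are tight, and ruling out that some smaller ``balanced'' sub-pattern of budget, spending, and segment constraints closes up into an identity; it is precisely here that the perturbations $\delta_j=1+\epsilon_j$ on the good-spending rows are essential, since with $\delta_j=1$ one does get extra identities (cf.\ the footnote preceding the theorem), whereas generic $\epsilon_j$'s break them.
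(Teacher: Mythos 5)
Your proof takes a genuinely different route from the paper's. The paper works constructively: it builds a bipartite graph $G$ on the flexible segments, shows $G$ must be acyclic (else a polynomial relation among the $D_{ijk}$'s), designates a representative item in each component, and then, walking leaf-to-root, explicitly writes every nonzero variable as a linear function of $z$ with coefficients that are rational monomials in the data; plugging these into the two double labels produces the explicit polynomial relation. You instead package everything into a single Cramer's-rule determinant $Q$, and reduce the theorem to the claim that the only parameter-independent affine dependence among any $k{+}1$ of the constraint rows of LCP~\eqref{lcp:5} is the inherent one (all budget rows plus all spending rows collapsing onto a multiple of the $z$-row). That reduction is clean and, if the claim were established, would be a tidy replacement for the paper's case-by-case construction.

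The gap is precisely that claim, and it is not a small one. You assert that ``the only structural (parameter-independent) linear dependence among the constraint functionals of~\eqref{lcp:5} is the inherent one,'' and in your final paragraph you acknowledge that proving it ``requires a careful case analysis over the five constraint families \dots ruling out that some smaller `balanced' sub-pattern \dots closes up into an identity.'' But that case analysis is the entire content of the theorem: it is exactly what the paper's bipartite-graph/acyclicity argument does (just in constructive rather than determinantal language), so outsourcing it leaves the proof essentially incomplete. There is also a subtler issue in the reduction itself: ``$Q\equiv 0$ as a polynomial'' says the $(k{+}1)\times(k{+}1)$ augmented matrix is singular for all parameter values, which only guarantees a dependence with coefficients that may themselves be rational functions of the data, not a constant-coefficient (``structural'') dependence. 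Your toy dichotomy would then need to be refined (e.g., by clearing denominators and iterating the genericity argument), and it is not obvious this terminates in the inherent dependence. Until both points are addressed, the argument does not yet establish the theorem.
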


By the similar argument, together with Theorem \ref{thm:equiL_soL_correspondence_2} and the fact that solutions of LCP~\eqref{lcp:5} with $z=0$ are solutions of LCP~\eqref{lcp:4}, we get the following (see Appendix \ref{sec:thm:one_to_one_2} for a formal proof). 

\begin{theorem} \label{thm:one_to_one_2}
If the instance parameters $\mathbf{U}$, $\mathbf{W}$, and $\mathbf{L}$ have no polynomial relation among them, then solutions of LCP~\eqref{lcp:5} with $z=0$, $p_j < P, \ \forall j\in M$, and $r_i < R, \ \forall i\in N$, are in one to one correspondence with competitive equilibria.
\end{theorem}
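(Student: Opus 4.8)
The plan is to obtain Theorem~\ref{thm:one_to_one_2} as an essentially immediate corollary of Theorem~\ref{thm:equiL_soL_correspondence_2}, exploiting the structural relationship between LCP~\eqref{lcp:5} and LCP~\eqref{lcp:4}, and then to invoke the no-polynomial-relation hypothesis exactly as in Theorem~\ref{thm:non_degenerate_2} to make the correspondence exact (one-to-one, rather than merely surjective up to scaling).

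The first step is the elementary observation that the solutions of LCP~\eqref{lcp:5} with $z=0$ are exactly the solutions of LCP~\eqref{lcp:4}. Substituting $z=0$ annihilates every occurrence of $z$ in LCP~\eqref{lcp:5} — the $-z$ in the budget constraints \eqref{eq:budget_4a}, the $-\delta_j z$ in the good-spending constraints \eqref{eq:spending_4a_good}, and the $-z$ in the good-optimality constraints \eqref{eq:mbb_4a_good} — so each linear inequality reduces to its counterpart in LCP~\eqref{lcp:4}, each complementarity pair is unaffected, and $z\ge 0$ becomes vacuous; in particular the random coefficients $\delta_j$ drop out entirely. Hence $(\y,0)$ solves LCP~\eqref{lcp:5} if and only if $\y$ solves LCP~\eqref{lcp:4}, and this identification preserves the side conditions $p_j<P$ for all $j\in M$ and $r_i<R$ for all $i\in N$.

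Combining this with Theorem~\ref{thm:equiL_soL_correspondence_2} yields the correspondence. In the forward direction, a solution $\y=(\p,\ff,\r,\s)$ of LCP~\eqref{lcp:4} with $\p<P$ and $\r<R$ gives, by the construction behind Theorem~\ref{thm:equiL_soL_correspondence_2} (via Lemma~\ref{lem:optimaL_bundles_2}), the prices $p^*_j=P-p_j$ for $j\in M^+$ and $p^*_j=-(P-p_j)$ for $j\in M^-$ together with the allocation recovered from $\ff$; here the strictness $p_j<P$ is precisely what makes $p^*_j\neq 0$, so $x_{ijk}=f_{ijk}/|p^*_j|$ is well defined and the market clears. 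Conversely, scaling any competitive equilibrium price vector by a sufficiently small positive factor makes all $|p^*_j|$ and all induced flexible-partition $bpb$/$ppb$ values small enough that the corresponding $p_j=P-|p^*_j|$ and $r_i$ satisfy the strict inequalities; the forward part of Theorem~\ref{thm:equiL_soL_correspondence_2} then produces a matching solution of LCP~\eqref{lcp:4}, hence of LCP~\eqref{lcp:5} with $z=0$. Composing with the $z=0$ identification delivers the stated map in both directions.

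What remains is to argue the correspondence is one-to-one, and this is exactly where the hypothesis that $\mathbf{U},\mathbf{W},\mathbf{L}$ admit no polynomial relation is used, paralleling the proof of Theorem~\ref{thm:non_degenerate_2}. The only genuine ambiguity above is the overall scaling of the equilibrium price vector: the inherent degeneracy of LCP~\eqref{lcp:4} recorded in Section~\ref{sec:alcp} implies that all solutions arising from a fixed competitive equilibrium fill out a one-dimensional family (an edge of $\cP$ inside $\{z=0\}$) obtained by varying that scaling, so after quotienting both sides by scaling the two sets are in bijection. To rule out any further collapse or spurious identification — two solutions not related by scaling mapping to a common equilibrium, or a would-be solution sitting on the boundary $p_j=P$ or $r_i=R$ yet still mapping to an equilibrium — I would, just as in Theorem~\ref{thm:non_degenerate_2}, read off the extra tight inequalities such a coincidence forces (via the complementarity conditions of LCP~\eqref{lcp:5} together with Lemmas~\ref{lem:clearing_2} and~\ref{lem:optimaL_bundles_2}) and show they entail a nontrivial polynomial identity among the entries of $\mathbf{U},\mathbf{W},\mathbf{L}$ (and the generically chosen $\epsilon_j$), contradicting the hypothesis. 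I expect this last step to be the main obstacle: it amounts to reusing, essentially verbatim, the case analysis and algebraic bookkeeping from the proof of Theorem~\ref{thm:non_degenerate_2}, now carried out on the locus $z=0$ rather than $z>0$.
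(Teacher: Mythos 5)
Your opening reduction is correct and matches the paper: at $z=0$ the augmented LCP~\eqref{lcp:5} collapses to LCP~\eqref{lcp:4} (and the $\delta_j$ drop out), so the surjective correspondence already follows from Theorem~\ref{thm:equiL_soL_correspondence_2}. You also correctly identify that the new content of the theorem is injectivity, and that it must come from the no-polynomial-relation hypothesis. But you leave exactly that step unproved, and your sketch of it does not actually locate where the contradiction comes from.

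The paper's mechanism (proved first for all bads as Theorem~\ref{thm:one_to_one} and then said to extend verbatim to the mixed case) is concrete: normalize the equilibrium so that the maximum price magnitude equals $P$, i.e., $p_j=0$ for some item $j$. Then $(\p,\ff)$ is determined by the equilibrium $(\x^*,\p^*)$, and one argues that for this fixed $(\p,\ff)$ the accompanying $(\r,\s)$ is unique. The key observation is that if $(\r,\s)$ were not unique, then for some agent her flexible partition $Q_i$ must be \emph{fully allocated} with $s_{ijk}=0$ on it (only then is $r_i$ unconstrained), so both $f_{ijk}=L_{ijk}(P-p_j)$ and $s_{ijk}=0$ hold simultaneously — a double label. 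Counting: market clearing gives $n+m$ tight constraints, optimal-bundle complementarity gives $2C$, the normalization $p_j=0$ gives one more, and the double label gives another, totaling $n+m+2C+2$ tight constraints among $n+m+2C+1$ variables. This exceeds what nondegeneracy (one inherent degeneracy plus no further ones, the latter secured by the no-polynomial-relation assumption) allows at a vertex, a contradiction. Your proposal gestures at ``read off the extra tight inequalities such a coincidence forces and show they entail a nontrivial polynomial identity,'' but never identifies the fully-allocated-flexible-partition mechanism that is the heart of the argument, and you yourself flag it as ``the main obstacle.'' Two smaller inaccuracies: since you already observed that the $\delta_j=1+\epsilon_j$ coefficients vanish on $\{z=0\}$, the $\epsilon_j$ cannot play a role in this particular nondegeneracy derivation (contrary to your last paragraph); and ``quotienting by scaling'' is less precise than the paper's device of fixing the normalization by $p_j=0$ for one item — the one-parameter scaling family you describe is a useful picture but is not how the paper disposes of the ambiguity.
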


\begin{remark}
A degenerate instance can be solved by perturbing the input parameters using the technique in~\cite{DuanGM16} that preserves the solution structure.
\end{remark}

\section{Algorithm}\label{sec:algo}
In Section~\ref{sec:alcp}, we designed the augmented LCP~\eqref{lcp:5} that permits use of Lemke's scheme, and we showed a one to one correspondence between competitive equilibria and solutions to LCP~\eqref{lcp:5} with $z=0$, and $p_j < P, \ \forall j\in M$ and $r_i < R, \ \forall i\in N$ so long as the polyhedron $\mathcal{P}$ defined by LCP~\eqref{lcp:5} is nondegenerate.  We now add the slack variable $v_a$ to the $a$-th constraint to of LCP~\eqref{lcp:5} to obtain
\begin{equation*}
A\y+\v-\c z = \q, \ \y \geq 0, \ \v \geq 0,\ z \geq 0, \ \text{and} \ \y \cdot \v = 0,
\end{equation*}
where $\c$ is the vector of coefficients for $z$ in LCP~\eqref{lcp:5}. Note that the condition $\v \geq 0$ follows from $\q-A\y \geq 0$. We call the new formulation LCP~(\ref{lcp:5}'). Of course when $v_k = (\q-A\y+\c z)_k =0$, the $k$-th constraint holds. Therefore, at any fully labeled vertex solution $S$ of the polyhedron defined by LCP~(\ref{lcp:5}'); see Section \ref{sec:prel-lcp}, either $v_k= 0$ or $y_k= 0$. At a double label, $v_k = y_k = 0$. Using the notation of Section \ref{sec:prel-lcp}, we let $\y = (\p,\f,\r,\s)$ be a vertex solution to LCP~(\ref{lcp:5}').

Recall from Section \ref{sec:prel-lcp} that Lemke's algorithm explores a certain path of the 1-skeleton of $\cP$, traveling from vertex solution to vertex solution along the edges of $\cP$. Note that we chose $R$ such that $R > P/\min_{i,j,k} |U_{ijk}|$, which ensures the right hand side of \eqref{eq:mbb_4a_bad} is positive, i.e., $D_{ijk}R - P >0$, for all segments $(i,j,k)$, $\forall i,k$, $\forall j\in M^-$, and that the right hand side of \eqref{eq:mbb_4a_good} is negative, i.e., $P-U_{ijk}R < 0$, for all segments $(i,j,k)$, $\forall i,k$, $\forall j\in M^+$. Further, for sufficiently large $R$, we have $\min_{i} P(\sum_{j\in M^+}W_{ij}-\sum_{j\in M^-} W_{ij}) > P- \max_{j\in M^+,i,k}U_{ijk}R$. Then, we get the primary ray (initial solution) $S_0$ by setting
\begin{equation*}\label{eq:S0_mixed}
	S_0 = \{ \y_0 =\0,\ z = \max_{j\in M^+,i,k} U_{ijk}R - P, \ \v_0 = \q + \c z   \}\enspace . 
\end{equation*}
Clearly, this initialization gives the unique double label $y_{(ijk)^*} = v_{(ijk)^*}=0$, for $(i,j,k)^* = \arg\max_{(j\in M^+,i,k)} U_{ijk}R - P$. 

Algorithm \ref{algo:mixed} gives a formal description of the Lemke's algorithm applied to LCP~(\ref{lcp:5}'). Assuming the input parameters $\mathbf{U}$, $\mathbf{W}$, and $\mathbf{L}$ have no polynomial relationship among them, Theorem \ref{thm:non_degenerate_2} guarantees that any vertex with $z>0$ is nondegenerate. Therefore, a unique double label, say $k$, such that $y_k = v_k = 0$, always exists. Algorithm \ref{algo:mixed} pivots at the double label by relaxing one constraint, and traveling along the corresponding edge of $\cP$ to the next vertex solution. 

\begin{theorem} \label{thm:mixed_converges}
	If the input parameters $\mathbf{U}$, $\mathbf{W}$, and $\mathbf{L}$ have no polynomial relationship among them, then Algorithm \ref{algo:mixed} terminates at a competitive equilibrium in finite time.
\end{theorem}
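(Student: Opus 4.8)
The plan is to run the standard Lemke convergence recipe, but with the two nonstandard complications this LCP introduces: the polyhedron $\cP$ of LCP~(\ref{lcp:5}') contains ``dummy'' solutions (where some $p_j=P$ or some $r_i=R$, including the degenerate all-prices-zero solution) and it also admits genuine secondary rays, so convergence to an equilibrium requires ruling out \emph{both}. First I would record that, by Theorem~\ref{thm:non_degenerate_2}, every vertex of $\cP$ with $z>0$, $p_j<P$ for all $j$, and $r_i<R$ for all $i$ is nondegenerate, so at each such vertex there is a unique double label and the complementary pivot is well defined. By the no-revisit property of Lemke's scheme~\cite{Lemke65}, Algorithm~\ref{algo:mixed} then traces a simple path on the $1$-skeleton of $\cP$ starting from the primary ray $S_0$; since $\cP$ has finitely many vertices, this path is finite, and by the structure of augmented LCPs (Section~\ref{sec:prel-lcp}) it can only end at a vertex with $z=0$ or on a secondary ray. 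It therefore suffices to show (i) the path never ends on a secondary ray, and (ii) whenever the path is at a $z=0$ vertex, that vertex satisfies $p_j<P$ for all $j\in M$ and $r_i<R$ for all $i\in N$, so that Theorem~\ref{thm:one_to_one_2} identifies it with a competitive equilibrium.

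For (ii) I would first establish the SPLC analogues of the four claims from the linear warm-up, as standalone lemmas: $\p\le P$ and $\r\le R$ hold at every point of $\cP$, and $r_i=R$ forces $p_j=P$ on every bad (Lemma~\ref{lem:bound_on_p_and_r}); if $p_g=P$ for some good $g$, then (\ref{eq:spending_4a_good}') forces $z=0$, which tightens all the market-clearing constraints, and then, using non-satiation and strong connectivity of the economy graph (Condition~\ref{cond3}) to propagate the equality from each agent to the goods she brings, one gets $p_j=P$ for \emph{every} item (Lemma~\ref{lem:no_p_0_goods}); $p_b=P$ for one bad propagates to all bads (Lemma~\ref{lem:no_sec_ray_first_step}); and it is impossible to have $p_b=P$ on all bads together with $z>0$ (Lemma~\ref{lem:no_sec_ray_first_step_2}). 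Taken together, the first time the path reaches a point $\uu$ at which some coordinate of $\p$ equals $P$ or some $r_i$ equals $R$, in fact \emph{all} the $p_j$ equal $P$ at $\uu$. I would then take the last point $\vv$ strictly preceding $\uu$ with $0<\p<P$ (this exists since $\p=\0$ on $S_0$ and $p_j=P$ for all $j$ at $\uu$), apply Lemma~\ref{lem:bound_on_p_and_r} to get $\r<R$ at $\vv$, and sum \eqref{eq:budget_4a} over $i\in N$ against \eqref{eq:spending_4a_bad} over $j\in M^-$ and \eqref{eq:spending_4a_good} over $j\in M^+$; after the price terms cancel this leaves $z\,\bigl(n+\sum_{j\in M^+}\delta_j\bigr)=0$, hence $z=0$ at $\vv$. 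By Theorem~\ref{thm:one_to_one_2}, $\vv$ is already a competitive equilibrium, so the algorithm halts at $\vv$ (or earlier) and never reaches $\uu$; this is Lemma~\ref{lem:no_z_0_solution_2}. Consequently any $z=0$ vertex the path visits has $\p<P$, $\r<R$.

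For (i) I would argue exactly as in the linear case (Theorem~\ref{thm:no_secondary_rays}): let $\cR=\{(\y^*,z^*)+\gamma(\y',z'):\gamma\ge 0\}$ be a secondary ray, necessarily with $\p<P$, $\r<R$ on all of $\cR$ by (ii). Nonnegativity gives $(\y',z')\ge\0$; the bounds $\p<P$, $\r<R$ force $\p'=\0$ and $\r'=\0$; then \eqref{eq:segment_4a} prevents any $f_{ijk}$ from growing, and the complementarity conditions (\ref{eq:segment_4a}') together with (\ref{eq:mbb_4a_bad}') and (\ref{eq:mbb_4a_good}') pin down each positive $s_{ijk}$ in terms of the bounded quantities $r_i,p_j$, so $\f'=\0$ and $\s'=\0$; hence $z'>0$. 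But then $z\to\infty$ on $\cR$ with $\y$ fixed, so \eqref{eq:budget_4a}, \eqref{eq:spending_4a_good}, \eqref{eq:mbb_4a_good} become strict and, since $D_{ijk}R-P>0$, so does \eqref{eq:mbb_4a_bad}; complementarity then forces $\r^*=\0$, $p_j^*=0$ on every good and hence (via \eqref{eq:spending_4a_bad}) on every bad, and $\f^*=\0$, $\s^*=\0$. Thus $\y^*=\0$ and $\cR$ is the primary ray $S_0$, contradicting the fact that Lemke's scheme never revisits. Hence the path ends at a $z=0$ vertex with $\p<P$, $\r<R$, which by Theorem~\ref{thm:one_to_one_2} is a competitive equilibrium; combined with the finiteness noted above, this proves the theorem.

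The step I expect to be the crux is the propagation argument inside Lemma~\ref{lem:no_p_0_goods} --- turning ``$p_g=P$ for one good'' into ``$p_j=P$ for every item''. This is where strong connectivity (Condition~\ref{cond3}) is genuinely used, and where the SPLC segment structure complicates things relative to the linear warm-up: one must argue that every agent non-satiated for a good priced at $P$ has $r_i=R$, hence earns nothing on bads (so $f_{ijk}=0$ on her bad segments by \eqref{eq:spending_4a_bad}) and buys nothing on goods (so her net endowment value is zero), forcing the goods she brings to also be priced at $P$, and then chase this implication around the economy graph --- all while keeping the forced/flexible/undesirable segment classification and the supplementary-price variables $s_{ijk}$ consistent with (\ref{eq:mbb_4a_bad}'), (\ref{eq:mbb_4a_good}'), (\ref{eq:segment_4a}'). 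Making the $z=0$ consequence and the graph walk work simultaneously in the presence of these extra variables is the delicate part; once those lemmas are in hand, the remainder is bookkeeping.
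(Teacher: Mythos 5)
Your proposal is correct and matches the paper's proof almost exactly: both rely on nondegeneracy (Theorem~\ref{thm:non_degenerate_2}), then Lemmas~\ref{lem:bound_on_p_and_r}, \ref{lem:no_p_0_goods}, \ref{lem:no_sec_ray_first_step}, \ref{lem:no_sec_ray_first_step_2}, and \ref{lem:no_z_0_solution_2} to show the path never reaches a vertex with $p_j=P$ or $r_i=R$, then Theorem~\ref{thm:no_secondary_rays} to exclude secondary rays, and finally Theorem~\ref{thm:one_to_one_2} to identify the terminal $z=0$ vertex with an equilibrium. Your write-up is actually slightly more explicit than the paper's terse proof (e.g., you invoke Lemma~\ref{lem:no_z_0_solution_2} by name, which the paper's proof implicitly relies on but omits from its citation list), but the decomposition and logic are identical.
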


\begin{algorithm}[tb!] 
	\KwData{A, $\q$}
	\KwResult{A competitive equilibrium }
	$S \gets S_0$\;
	\While{$z>0$}{
		Let $k$ be the double label in solution $S$, i.e., $y_k = v_k = 0$. \;
		\eIf{$v_k$ just became 0}{
			Pivot by relaxing $y_k = 0$.\;
		}{
			Pivot by relaxing $v_k = 0$.\;
		}
		Let $S'$ be the next vertex solution to LCP~(\ref{lcp:5}') reached, $S\gets S'$\;
	}
	\caption{Algorithm for Competitive Equilibrium of Goods and Bads.}
	\label{algo:mixed}
\end{algorithm} 

\subsection{Convergence of Lemke's Algorithm}\label{sec:algo_convergence}
We now show that Algorithm \ref{algo:mixed} always finds an equilibrium. We note that, unlike earlier works that consider only good manna~\cite{Eaves76,GargMSV15}, our LCP formulation allows for secondary rays and one non-equilibrium solution. This makes the proof that Lemke's algorithm finds a competitive equilibrium significantly more complex.

For ease of presentation, we perform the subsequent analysis using LCP~\eqref{lcp:5}, i.e., without any slack variables. Let $\cP$ be the corresponding polyhedron. To verify Algorithm \ref{algo:mixed} terminates at a competitive equilibrium we need to examine two potential problems. First, we need to show that the algorithm never finds a secondary ray. Second, we need to show that, starting from the primary ray, Algorithm \ref{algo:mixed} never reaches the degenerate solution where $p_j = P, \ \forall j\in M$, and $r_i = R, \ \forall i \in N$. 

First, we consider secondary rays. Recall that a ray $\mathcal{R}$ is a unbounded edge of $\cP$ incident to the vertex $(\y^*,z^*)$ with $z^*>0$ 
\begin{equation*}
\mathcal{R} = \{ [\y^*,z^*] + \alpha [\y',z'] \ | \ \forall \alpha \geq 0 \}.
\end{equation*}
Clearly, all points on $\mathcal{R}$ solve LCP~\eqref{lcp:5}. Algorithm~\ref{algo:mixed} begins at the primary ray $S_0$, and all others are called secondary rays. The major issue is that if Algorithm \ref{algo:mixed} finds a secondary ray, then it fails to terminate. Observe that setting $p_j = P$ for some $j\in M^-$ leads to secondary rays. Suppose we set $p_j = P$ for some subset of bads $B\subseteq M^-$, and $p_j =0$ otherwise, and make all other variables $(\f,\r,\s) =\0$. Then, we may select sufficiently large $z^*$ to satisfy all constraints of the form \eqref{eq:budget_4a}, \eqref{eq:spending_4a_good}, and \eqref{eq:mbb_4a_good}. Let $\y^* = (\p,\f,\r,\s)$ be this vertex solution, and consider the ray $\mathcal{R} = [\y^*,z^*] +\alpha [\0,1]$ incident to $(\y^*,z^*)$. It is easily verified that $\mathcal{R}$ solves LCP~\eqref{lcp:5} for all $\alpha > 0$, and, therefore, is a secondary ray. We want to show the path traced by Algorithm \ref{algo:mixed} never reaches these problematic vertices. 

We begin with a simple observation. Notice that setting $p_j = P$ for any good requires that $z =0$, by \eqref{eq:spending_4a_good} and (\ref{eq:spending_4a_good}'). Therefore, Algorithm \ref{algo:mixed} stops at a vertex where any $p_j = P$, for any $j\in M^+$. We follow this result with a few useful facts.

\begin{claim} \label{claim:no_s_finaL_segment}
	Let $S$ be any solution to LCP~\eqref{lcp:5} with $p_j < P$, $\forall j \in M$. Pick any agent $i \in N$, and any item (good or bad) $j\in M$, and let $k = |u_{ij}|$ be $i$'s final segment for item $j$. If $j\in M^-$, then $s_{ijk} = 0$. If $j\in M^+$ and $p_j > 0$, then $s_{ijk} = 0$.
\end{claim}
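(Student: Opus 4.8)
The plan is to consider the final segment $(i,j,k)$ with $k=|u_{ij}|$ and show that the only way to satisfy all the LCP constraints is with $s_{ijk}=0$. The key tool is the complementarity condition (\ref{eq:segment_4a}$'$), which says $s_{ijk}\bigl(f_{ijk} + L_{ijk}p_j - L_{ijk}P\bigr)=0$, together with the constraint \eqref{eq:segment_4a} itself, i.e. $f_{ijk} \le L_{ijk}(P - p_j)$. Recall also from Section~\ref{sec:splc} that the length of the last segment $L_{ijk}$ was set to $1$ plus a small constant, so $L_{ijk} > 1$, while each item has unit supply.

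First I would argue by contradiction: suppose $s_{ijk} > 0$. Then (\ref{eq:segment_4a}$'$) forces the segment constraint \eqref{eq:segment_4a} to be tight, so $f_{ijk} = L_{ijk}(P - p_j)$. Since $p_j < P$ by hypothesis, and $L_{ijk} > 1$, this gives $f_{ijk} = L_{ijk}(P-p_j) > (P - p_j)$. Now I would translate this back to the true allocation: recall that $p_j^* = P - p_j$ is the (positive) price of good $j$ or, for a bad, $-(P-p_j)$ is its price, and $f_{ijk}$ (up to sign) is agent $i$'s spending on segment $(i,j,k)$; hence the amount of item $j$ that agent $i$ consumes on this single segment is $f_{ijk}/(P-p_j) = L_{ijk} > 1$. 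But the total supply of item $j$ is $1$, so a single agent cannot consume more than $1$ unit of it on a single segment — more precisely, summing \eqref{eq:spending_4a_bad} (resp. using \eqref{eq:spending_4a_good}) over all agents and segments and using the market-clearing / spending constraints shows the total consumption of $j$ is at most $1$. This contradicts $f_{ijk}/(P-p_j) = L_{ijk} > 1$.

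For the bad case $j\in M^-$ the argument is essentially identical, using \eqref{eq:spending_4a_bad}: the total earning $\sum_{i,k} f_{ijk} \le P - p_j$, so $f_{ijk} \le P - p_j < L_{ijk}(P-p_j)$ whenever $p_j < P$, again contradicting tightness of \eqref{eq:segment_4a}. For the good case $j\in M^+$ with the extra hypothesis $p_j > 0$: here I would use \eqref{eq:spending_4a_good}, which with (\ref{eq:spending_4a_good}$'$) — since $p_j \ne P$ forces... actually more carefully, $p_j>0$ together with (\ref{eq:spending_4a_good}$'$) makes \eqref{eq:spending_4a_good} tight, giving $\sum_{i,k} f_{ijk} = P - p_j + \delta_j z \ge P - p_j$; but I want an \emph{upper} bound on $f_{ijk}$, so instead I would observe that total true consumption of good $j$ equals $\sum_{i,k} f_{ijk}/p_j^*$ and this is at most $1$ by market clearing, whence $f_{ijk} \le p_j^* = P - p_j < L_{ijk}(P-p_j)$, contradicting tightness of \eqref{eq:segment_4a}.

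The main obstacle I anticipate is being careful about exactly which market-clearing bound to invoke in each of the three cases and making sure it is available without already assuming $p_j < P$ for \emph{all} items or $z=0$ — only the stated hypotheses ($p_j<P$ for all $j$, and $p_j>0$ for the good case) may be used. In particular one must double-check that \eqref{eq:segment_4a} genuinely reads $f_{ijk} \le L_{ijk}(P - p_j)$ after the change of variables, and that the claim that a single agent cannot consume more than the unit supply of item $j$ on one segment follows cleanly from the spending constraints \eqref{eq:spending_4a_bad}/\eqref{eq:spending_4a_good} without circularity. Once that bookkeeping is pinned down, the contradiction with $L_{ijk}>1$ is immediate and the claim follows.
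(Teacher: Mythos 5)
Your bad case is exactly the paper's argument and is correct: tightness of \eqref{eq:segment_4a} gives $f_{ijk}=L_{ijk}(P-p_j)$, and \eqref{eq:spending_4a_bad} bounds $f_{ijk}\le\sum_{i',k'}f_{i'jk'}\le P-p_j$, contradicting $L_{ijk}>1$ when $p_j<P$.

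Your good case, however, contains a sign slip that sent you off track. Constraint \eqref{eq:spending_4a_good} reads $-\sum_{i,k}f_{ijk}-p_j-\delta_j z\le -P$, i.e.\ $\sum_{i,k}f_{ijk}+p_j+\delta_j z\ge P$. When $p_j>0$, complementarity (\ref{eq:spending_4a_good}') forces equality, giving $\sum_{i,k}f_{ijk}=P-p_j-\delta_j z$, not $P-p_j+\delta_j z$ as you wrote. With the correct sign this is already an \emph{upper} bound: $f_{ijk}\le\sum_{i',k'}f_{i'jk'}=P-p_j-\delta_j z\le P-p_j<L_{ijk}(P-p_j)$, and you are done, in perfect analogy with the bad case (this is precisely what the paper does). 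Because of the sign error you believed you had only a lower bound and then reached for a ``market clearing'' argument (``total true consumption of good $j$ is at most $1$''). That patch is not sound here: market clearing is not a constraint of LCP~\eqref{lcp:5}, and the claim must hold for solutions with $z>0$, where market clearing does not hold exactly. Fix the sign in \eqref{eq:spending_4a_good} and the whole proof collapses to the paper's clean two-line contradiction in each case, with no appeal to market clearing needed.
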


\begin{proof}
	Recall that the length of the final segment $(i,j,k)$ is infinite, however, we set $L_{ijk} = 1 + \epsilon$, for some small $\epsilon > 0$ since there is a unit amount of each item. We consider two cases: $j$ is a bad or a good. First suppose $j \in M^-$, and for contradiction assume $s_{ijk} > 0$. By complementarity condition (\ref{eq:segment_4a}'), \eqref{eq:segment_4a} holds with equality. Then, $f_{ijk} = L_{ijk} (P-p_j) > 0$, since $p_j < P$ at $S$. Consider the constraint \eqref{eq:spending_4a_bad}. From the above observations, we see that
	\begin{equation*}
	L_{ijk}(P-p_j)  = f_{ijk} \leq \sum_{i',k'} f_{i'jk'} \leq P-p_j,
	\end{equation*}
	a contradiction, since $L_{ijk} >1$, and $p_j < P$.
	
	Now suppose $j\in M^+$ and $p_j > 0$. For contradiction assume $s_{ijk} >0$. Again (\ref{eq:segment_4a}') requires that \eqref{eq:segment_4a} holds with equality so that $f_{ijk} = L_{ijk}(P-p_j) > 0$. Since $p_j > 0$, then (\ref{eq:spending_4a_good}') requires that 
	\begin{equation*}
	P-p_j = \sum_{i',k'} f_{i'jk'} + \delta_jz \geq f_{ijk} + \delta_j z =  L_{ijk}(P-p_j) + \delta_jz,
	\end{equation*}
	a contradiction since $L_{ijk} > 1$, $\delta_j,z>0$.
\end{proof}

\begin{lemma}\label{lem:bound_on_p_and_r}
	At any solution to LCP~\eqref{lcp:5}, $p_j \leq P, \ \forall j\in M$ and $r_i\le R,\ \forall i\in N$. 
	Further, if $r_i = R$ for some $i \in N$, then $p_j = P, \ \forall j\in M^-$. 
\end{lemma}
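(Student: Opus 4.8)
The statement bundles three claims — (i)~$p_j\le P$ for every item $j\in M$, (ii)~$r_i\le R$ for every agent $i$, and (iii)~$r_i=R$ for some $i$ forces $p_j=P$ for all $j\in M^-$ — and the plan is to obtain all three from a short chain of complementarity deductions anchored at the last segment of each bad. Claim~(i) is immediate: for a bad $j\in M^-$, non-negativity of the $f_{ijk}$ together with \eqref{eq:spending_4a_bad} ($\sum_{i,k}f_{ijk}+p_j\le P$) gives $p_j\le P$; for a good $j\in M^+$ there is an agent with $U_{ij1}>0$ (since $\text{desire}_j>1$), and her instance $f_{ij1}+L_{ij1}p_j\le L_{ij1}P$ of \eqref{eq:segment_4a}, with $f_{ij1}\ge 0$ and $L_{ij1}>0$, again yields $p_j\le P$; in fact \eqref{eq:segment_4a} covers every item uniformly.

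For (ii) and (iii) the key observation — the content of Claim~\ref{claim:no_s_finaL_segment} restricted to one bad — is: for any agent $i$, any bad $j\in M^-$, and its last segment $k^*=|u_{ij}|$, either $s_{ijk^*}=0$ or $p_j=P$. Indeed, if $s_{ijk^*}>0$ then (\ref{eq:segment_4a}') forces \eqref{eq:segment_4a} tight, so $f_{ijk^*}=L_{ijk^*}(P-p_j)$, while \eqref{eq:spending_4a_bad} bounds $f_{ijk^*}\le\sum_{i',k'}f_{i'jk'}\le P-p_j$; since $L_{ijk^*}=1+\epsilon>1$ this is consistent only with $P-p_j\le 0$, i.e.\ $p_j=P$ by (i). Feeding this into \eqref{eq:mbb_4a_bad} for $(i,j,k^*)$, namely $D_{ijk^*}r_i-p_j-s_{ijk^*}\le D_{ijk^*}R-P$: if agent $i$ genuinely dislikes some bad $j$ with $p_j<P$, then $s_{ijk^*}=0$ and $D_{ijk^*}=\max_k D_{ijk}>0$ (the disutilities of a bad strictly increase along its segments), so $D_{ijk^*}(r_i-R)\le p_j-P<0$ and $r_i<R$. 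Part~(iii) then follows: if $r_i=R$, this rules out $p_j<P$ for every bad $i$ dislikes, while for any bad $j$ to which $i$ is indifferent we have $D_{ijk}=0$ for all $k$, so \eqref{eq:mbb_4a_bad} for $(i,j,k^*)$ reads $-p_j-s_{ijk^*}\le -P$, which with ``$s_{ijk^*}=0$ or $p_j=P$'' again gives $p_j=P$; hence $p_j=P$ for all $j\in M^-$.

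What remains for (ii) is the residual case where every bad $i$ dislikes has price $P$ — equivalently, by the above, every bad has price $P$. Here I would pass to the goods side: by \eqref{eq:spending_4a_bad} every bad spending vanishes, the bad-endowment terms cancel in \eqref{eq:budget_4a}, and since \eqref{eq:budget_4a} is tight by (\ref{eq:budget_4a}') whenever $r_i>0$, it reduces to $\sum_{j\in M^+,k}f_{ijk}=z+\sum_{j\in M^+}W_{ij}(P-p_j)\ge 0$. If $i$ spends a positive amount on some good segment $(i,j,k)$ with $U_{ijk}>0$, then (\ref{eq:mbb_4a_good}') makes \eqref{eq:mbb_4a_good} tight, $U_{ijk}(R-r_i)=(P-p_j)+s_{ijk}+z\ge 0$, whence $r_i\le R$. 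Otherwise $i$ spends on no good, the tight budget equation forces $z=0$ and $\sum_{j\in M^+}W_{ij}(P-p_j)=0$, so the good $i$ brings (Condition~\ref{cond1}) has price $P$, and I would close via the goods analogue of the last-segment argument — the $j\in M^+$, $p_j>0$ case of Claim~\ref{claim:no_s_finaL_segment} — or, when that good too is priced $P$, by noting that at a vertex a tight instance of \eqref{eq:mbb_4a_good} with $s_{ijk}=0$ pins $r_i=R$. I expect this goods-side bookkeeping, together with the innocuous preliminary reduction that every agent may be assumed both to dislike some bad and to desire some good, to be the only delicate point; the rest is the two-line complementarity chain above, which is why the lemma should be ``relatively easy''.
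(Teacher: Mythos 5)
Your treatment of (i) is correct and somewhat cleaner than the paper's: reading $p_j\le P$ off \eqref{eq:segment_4a} (via any agent with a segment of item $j$) is more direct than the paper's route through (\ref{eq:spending_4a_bad}') and (\ref{eq:spending_4a_good}'). For (iii) your argument coincides with the paper's (contrapositive, Claim~\ref{claim:no_s_finaL_segment} on the last segment, $D_{ijk^*}>0$), and your remark that the ``indifference'' subcase is vacuous is right — a bad to which some agent is totally indifferent would force $\text{indifference}_j\ge L_{ijk^*}>1$, contradicting the preprocessing, so the last segment of every bad has strictly positive disutility. Structurally you invert the paper's order, establishing (iii)'s contrapositive first and drawing the nontrivial half of (ii) from it; the paper instead disposes of (ii) by a one-liner asserting that \eqref{eq:mbb_4a_bad} and \eqref{eq:mbb_4a_good} are strict whenever $r_i>R$. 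That assertion is not immediate — the slack $s_{ijk}$ can in principle absorb the excess, so the strictness ultimately rests on the same last-segment/$L_{ijk^*}>1$ argument you deploy — which is a point in favor of your ordering.

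The step that does not close is the residual case of (ii), $p_j=P$ for every $j\in M^-$. Your attempt to invoke the $j\in M^+$, $p_j>0$ case of Claim~\ref{claim:no_s_finaL_segment} for a good $i$ brings whose transformed price is $P$ fails: the claim's hypothesis, and the line of its proof that writes $f_{ijk}=L_{ijk}(P-p_j)>0$, both require $p_j<P$. Your fallback (``a tight instance of \eqref{eq:mbb_4a_good} with $s_{ijk}=0$ pins $r_i=R$'') assumes tightness and $s_{ijk}=0$ without establishing either. The dichotomy ``$i$ spends on some good segment with $U_{ijk}>0$'' versus ``$i$ spends on no good'' also omits the case of spending only on zero-slope good segments; that case can be excluded — positive $f_{ijk}$ with $U_{ijk}=0$ forces $p_j=P$ and $z=0$, and then (\ref{eq:spending_4a_good}') forces $f_{ijk}=0$ — but the exclusion needs to be stated. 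So the residual case of (ii) is genuinely incomplete in your write-up. You are right to flag it as the only delicate point; be aware, though, that the paper's own one-line argument for (ii) glosses over exactly this corner, so what you have found is a real subtlety in the lemma rather than a gap you introduced.
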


\begin{proof}
	First suppose $p_j > 0$, for some $j\in M^-$. Complementarity condition (\ref{eq:spending_4a_bad}') requires that \eqref{eq:spending_4a_bad} holds with equality. Thus, $\sum_{i,k} f_{ijk} + p_j = P$. Then $p_j \leq P$, since $P$, and $f_{ijk}$'s are non-negative. The case of $j\in M^+$ follows similarly, using complementarity condition (\ref{eq:spending_4a_good}'). 

	Next if $r_i>R$ for some $i\in N$, then all her $f_{ijk}$'s have to be zero since both \eqref{eq:mbb_4a_bad} and \eqref{eq:mbb_4a_good} are strict. Then, using the fact that $p_j\le P$ for all $j\in M$, \eqref{eq:budget_4a} is also strict. This violates the corresponding complementarity condition (\ref{eq:budget_4a}') since $r_i>R>0$, a contradiction. 
	
	For the second claim, we show contrapositive. Suppose that $p_j < P$ for any $j\in M^-$, and pick any agent $i\in N$. Recall that $0 < D_{ij1} < \dots < D_{ijk}$, where $k$ is the final segment of $i$'s utility function for $j$. By Claim \ref{claim:no_s_finaL_segment}, $s_{ijk} = 0$, so that constraint \eqref{eq:mbb_4a_bad} for the segment $(i,j,k)$ becomes: $D_{ijk}r_i -p_j \leq D_{ijk}R - P$. Since $p_j < P$, it follows that $r_i < R$. 
\end{proof}
 
\begin{lemma} \label{lem:no_p_0_goods}
	Starting from the primary ray, if Algorithm \ref{algo:mixed} reaches a vertex where $p_j = P$ for some good $j\in M^+$, then $p_{j'} < P$ for all other items $j'\in M$, $r_i = R$, for all $i\in N$, and $z=0$.
\end{lemma}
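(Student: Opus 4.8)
The plan is to read the lemma as a \emph{reductio}. The three asserted consequences cannot literally coexist: $r_i=R$ for all $i$ forces, via Lemma~\ref{lem:bound_on_p_and_r} together with $M^-\neq\emptyset$ (Condition~\ref{cond1}), $p_{j'}=P$ for every bad $j'$, which contradicts $p_{j'}<P$ for all other items. So what one really establishes is that Algorithm~\ref{algo:mixed} never reaches a vertex with $p_g=P$ for a good $g\in M^+$; along the way the $z=0$ and $r_i=R$ clauses come out verbatim, and the contradiction is then read off. Concretely, suppose the path reaches a vertex $(\y,z)$ of $\cP$ with $p_g=P$ for some $g\in M^+$. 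Since $P>0$, complementarity (\ref{eq:spending_4a_good}') forces \eqref{eq:spending_4a_good} tight for $g$, i.e.\ $\sum_{i,k}f_{igk}+\delta_g z=0$; as all $f_{igk}\ge0$ and $\delta_g>0$ this already gives $z=0$ and $f_{igk}=0$ for every $i$ and every segment $k$ of $g$.

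Next, with $z=0$ one shows every market-clearing constraint is tight: summing \eqref{eq:budget_4a} over $i\in N$, \eqref{eq:spending_4a_bad} over $j\in M^-$ and \eqref{eq:spending_4a_good} over $j\in M^+$ and using $\sum_i W_{ij}=1$, the budget sum and the combined spending sums yield the same identity, which (each summand having a definite sign) forces \eqref{eq:budget_4a}, \eqref{eq:spending_4a_bad}, \eqref{eq:spending_4a_good} to hold with equality; in particular $\sum_{i,k}f_{ijk}=P-p_j$ for every item $j$, and every agent's budget identity holds exactly. Now I would bootstrap one $r_a=R$: since $g$ is a good, $\mathrm{desire}_g>1$, so some agent $a$ is non-satiated for $g$; let $k=|u_{ag}|$ be $a$'s final segment for $g$. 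Arguing as in Claim~\ref{claim:no_s_finaL_segment} --- but adapted, since here $p_g=P$, the obstruction being that a fully-consumed final segment of length $L_{agk}>1$ cannot be absorbed by the unit supply of $g$ that has been shown to clear --- one gets $s_{agk}=0$. Then \eqref{eq:mbb_4a_good} for $(a,g,k)$ with $z=0$, $p_g=P$, $s_{agk}=0$ reads $-U_{agk}r_a\le -U_{agk}R$, and $U_{agk}>0$ gives $r_a\ge R$, hence $r_a=R$ by Lemma~\ref{lem:bound_on_p_and_r}.

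The last step propagates $r_i=R$ to all of $N$. From $r_a=R$, Lemma~\ref{lem:bound_on_p_and_r} makes $p_j=P$ for every bad, so $\sum_{i,k}f_{ijk}=P-p_j=0$ and all $f_{ijk}=0$ on bads; plugging this into $a$'s tight budget identity, $a$'s total goods spending equals $\sum_{j\in M^+}W_{aj}(P-p_j)$, while $r_a=R$ makes her flexible bang-per-buck $1/(R-r_a)$ infinite, so $a$ can spend only on goods with $p_j=P$ and hence spends $0$ on goods, forcing $W_{aj}>0\Rightarrow p_j=P$ for every good. Repeating the bootstrap of the previous paragraph on each good that $a$ brings and iterating along the economy graph (strongly connected, Condition~\ref{cond3}) pins $r_i=R$ for all $i\in N$ and $p_j=P$ for all goods; with the bads already at $P$, every item has $p_{j'}=P$. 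This delivers $z=0$ and $r_i=R$ for all $i$ (the surviving parts of the statement) and simultaneously $p_{j'}=P$ for every bad, contradicting ``$p_{j'}<P$ for all other items'' --- so no such vertex lies on the path, which is the content of the lemma. (To make ``off the path'' fully rigorous one invokes the companion observation: the vertex of the path immediately before any all-$p_j=P$ point has $0<\p<P$, where the summed market-clearing identity including the $\delta_j z$ terms forces $z=0$, so the algorithm would already have halted at that equilibrium; cf.\ the linear sketch and Lemma~\ref{lem:no_z_0_solution_2}.)

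The main obstacle is the bootstrap $r_a=R$: Claim~\ref{claim:no_s_finaL_segment} is stated only under $p_j<P$ for all items, which fails here, so one needs a fresh market-level argument that the final (or some positive-utility) segment of $g$ carries no supplementary price; and the propagation requires that every good encountered along the economy graph has a non-satiated agent, so one must verify this is guaranteed by the preprocessing and Conditions~\ref{cond1}--\ref{cond3}. The remaining pieces --- the $z=0$ derivation and the tightness of the market-clearing constraints --- are routine manipulations of the complementarity conditions.
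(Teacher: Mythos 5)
Your reading of the lemma statement as encoding a contradiction is reasonable --- ``$p_{j'}<P$ for all other items'' together with ``$r_i=R$ for all $i$'' is indeed impossible via Lemma~\ref{lem:bound_on_p_and_r} and $M^-\neq\emptyset$, and the companion sketch in Section~\ref{sec:lin-algo} (claim $(b)$) makes clear the intended conclusion is ``$p_{j'}=P$ for all $j'\in M$''. Your derivation of $z=0$ and $f_{igk}=0$ from $(\ref{eq:spending_4a_good}')$ with $p_g=P$ is correct and matches the paper, as does the propagation through the economy graph under Conditions~\ref{cond1}--\ref{cond3}. But there is a genuine gap at the bootstrap step $r_a=R$.

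With $p_g=P$, $z=0$, and $f_{agk}=0$, constraint \eqref{eq:mbb_4a_good} for the final segment reads $U_{agk}(R-r_a)\le s_{agk}$, so you can conclude $r_a=R$ only after showing $s_{agk}=0$. You acknowledge Claim~\ref{claim:no_s_finaL_segment} is stated under $p_j<P$ and propose an adaptation ``the obstruction being that a fully-consumed final segment of length $L_{agk}>1$ cannot be absorbed by the unit supply of $g$.'' That adaptation fails precisely when $p_g=P$: if $s_{agk}>0$, complementarity $(\ref{eq:segment_4a}')$ forces $f_{agk}=L_{agk}(P-p_g)=0$, and the tight clearing constraint $\sum_{i',k'}f_{i'gk'}+\delta_g z=P-p_g$ also has both sides equal to $0$. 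There is no ``$L_{agk}>1$'' contradiction because every quantity involved has collapsed to zero; the inequality $L_{agk}(P-p_g)\le P-p_g$ holds trivially. Examining \eqref{eq:mbb_4a_good} and \eqref{eq:segment_4a} at $T$ alone, any $s_{agk}\ge U_{agk}(R-r_a)\ge 0$ is feasible, so $s_{agk}=0$ is simply not derivable from the complementarity structure at that vertex. (Note also that nondegeneracy, Theorem~\ref{thm:non_degenerate_2}, is only guaranteed for $z>0$, so you cannot appeal to it at $T$.)

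The paper sidesteps this entirely by arguing along the \emph{edge} $E$ from the predecessor vertex $S$ to $T$: on the interior of $E$ one still has $0<p_g<P$, so Claim~\ref{claim:no_s_finaL_segment} applies verbatim to give $s_{agk}=0$ \emph{along} $E$, and then \eqref{eq:mbb_4a_good} reads $U_{agk}(R-r_a)\le (P-p_g)+z$, whose right-hand side tends to $0$ along $E$, forcing $r_a\to R$ by continuity. The same limiting argument gives $p_{j'}\to P$ for every bad $j'$ (Claim 3 in the paper's proof) from $s_{aj'k'}=0$ on $E$. Your bootstrap needs to be recast in this edge-limit form; the vertex-only version does not close, and this is not a cosmetic issue --- it is the load-bearing step that makes the rest of the propagation (Claims 4, 5, and the strong-connectivity contradiction, which you have right) go through.
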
 

\begin{proof}
	For contradiction, let $T$ be the vertex solution to LCP~\eqref{lcp:5} where $p_j = P$ for some good $j$ for the first time, and assume that $p_{j'} <  P$ for some item $j' \in M$. Let $S$ be the vertex the that precedes $T$ starting from the primary ray, and $E$ be the edge between $S$ and $T$. Note that such a $S$ exists since we start from the primary ray where $\p = \r = \0$. Let $M_1$ be the set of goods for which $p_j \rightarrow P$ on $E$, and $N_1 = \{i\in N: \ \exists j\in M_1 \text{ s.t. } \ U_{ijk} > 0, \ k = |u_{ij}| \}$ be the set of agents that are non-satiated for some good in $M_1$. 
	
	\begin{claim}\label{claim:r_to_R}
		At $T$, $r_i = R,\ \forall i\in N_1$.
	\end{claim}
	
	\begin{proof}
		Let $j\in M_1$, and let $i\in N_1$ be an agent that is not satiated for good $j$. Let $(i,j,k)$ be the $i$'s final segment for good $j$. Note that $p_j > 0$ on $E$ so that $p_j$ can increase to $P$. By Claim 1, $s_{ijk} = 0$. Consider the constraint \eqref{eq:mbb_4a_good} for this segment the edge $E$
		\begin{equation} \label{eq:lambda_to_zero}
		U_{ijk} (R - r_i) - (P-p_j) - z \leq 0.
		\end{equation}
		Along $E$, both $z\rightarrow 0$, and $p_j \rightarrow P$.
		Therefore, \eqref{eq:lambda_to_zero} implies that $r_i \rightarrow R$, since $U_{ijk} > 0$.
	\end{proof}
	
	\begin{claim}
		If $p_j \rightarrow P$ for some good $j \in M^+$, then $p_{j'} \rightarrow P, \ \forall j' \in M^-$.
	\end{claim}
	
	\begin{proof}
		Since item $j$ is a good, at least one agent, say $i$, is non-satiated for $j$. Therefore, by Claim \ref{claim:r_to_R}, $r_i \rightarrow R$, on $E$. Consider any $j' \in M^-$. Let $k'=|u_{ij'}|$ be $i's$ final segment of $j'$. By Claim \ref{claim:no_s_finaL_segment}, $s_{ij'k'} =0$, on $E$. Then, constraint \eqref{eq:mbb_4a_bad} requires that 
		\begin{equation*}
		 (P-p_{j'}) \leq D_{ij'k'} (R-r_i)\enspace ,
		\end{equation*}
		which implies that $p_{j'} \rightarrow P$, since $r_i \rightarrow R$.
	\end{proof}
	
	\begin{claim}
		The agents of $N_1$ purchase no items at $T$, i.e., $f_{ijk} = 0, \ \forall j,k$, $\forall i \in N_1$.
	\end{claim}
	
	\begin{proof}
		At $T$, $ p_j = P, \ \forall j \in M^-$, by Claim 3. Therefore, (\ref{eq:spending_4a_bad}') requires that $\sum_{i,k}f_{ijk} +p_j = P,\ \forall j\in M^-$ at $T$. It follows that no agents purchase any bad at $T$, i.e., $f_{ijk}=0 \ \forall i,k, \ \forall j \in M^-$. A similar argument shows that no agents purchase any goods $j\in M_1$ at $T$.
		
		Let $i \in N_1$, and $j$ be any good such that $p_j < P$ at $T$. For contradiction, suppose $i$ purchases $j$ at $T$, i.e., at least $f_{ij1} > 0$. Then,  (\ref{eq:mbb_4a_good}') requires that $(P-p_j) + s_{ij1} = 0$, since $r_i = R$, and $ z=0$ at $T$. Thus, we obtain a contradiction since $s_{ij1}\geq 0$, and $p_j < P$. Therefore, the agents of $N_1$ purchase no items (bads or goods) at $T$.
	\end{proof}
	
	\begin{claim}
		Agents of $N_1$ are not endowed with any fraction of any good with a positive price, i.e., $\forall i\in N_1$, $W_{ij} = 0$ for all $j\in M_0= M^+ \setminus M_1$. Therefore, the budget of each agent $i\in N_1$ is equal to 0 at $T$.
	\end{claim}
	
	\begin{proof}
		At $T$ the following conditions hold for all $i \in N_1$. First, $r_i = R$, by Claim 2. Then, (\ref{eq:budget_4a}') requires that \eqref{eq:budget_4a} holds with equality. Next, Claim 3 shows that $p_j = P, \ \forall j\in M^-$, and Claim 4 states that $f_{ijk} = 0, \forall j,k$. Recalling, that $z=0$ at $T$, then \eqref{eq:budget_4a} simplifies to
		\begin{equation*}
		\sum_{j\in M_0} W_{ij} (P-p_j) = 0.
		\end{equation*}
		Clearly, $W_{ij} = 0, \ \forall j \in M_0$, for any $i\in N_1$ since $p_j < P, \ \forall j \in M_0$. It follows that agents of $N_1$ are only endowed with items in $M^- \cup M_1$. All of these items have price $|P-p_j| = 0$ at $T$. Thus, the budget of agents in $N_1$ equals 0 at $T$.
	\end{proof}
	
	We now prove the lemma. Suppose $p_j < P$ for some item at $T$. Claim 3 shows that $p_j = P, \ \forall j \in M^-$. Therefore, $j\in M_0$. Define $N_0 = N \setminus N_1$. Observe that $|N_0| > 0$, otherwise $|M_0| =0$, by Claim 5. It follows from Claim 2 that any agent $i \in N_0$ is satiated for all $j\in M_1$, i.e., the final segment $(i,j,k)$ has $U_{ijk} = 0$. Further, the agents of $N_1$ start with only goods of $M_1$, by Claim 5. Therefore, in the economic graph described in Section \ref{sec:Assumptions}, there are no edges from the agents of $N_1$ to any agents of $N_0$. That is, the economic graph is not strongly connected, a contradiction. Therefore, $|M_0| = 0$, and $p_j = P$ at $T$. $M_1 = M$, and so $N_1 = N$. By Claim \ref{claim:r_to_R}, $r_i = R,\ \forall i \in N$, at $T$.
\end{proof}

Next, we show that, starting from the primary ray, Algorithm \ref{algo:mixed} never reaches secondary rays where $p_j = P, \ \forall j\in S\subset M^-$, while $p_j < P, \ \forall j\in M^- \setminus S$. For this, we first prove the following lemma.

\begin{lemma} \label{lem:no_sec_ray_first_step}
	Starting from the primary ray, if Algorithm \ref{algo:mixed} reaches a vertex where $p_j =P$ for some bad $j\in M^-$, then $p_{j'} = P$, for all $j'\in M^-$.
\end{lemma}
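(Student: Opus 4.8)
The plan is to deduce that reaching such a vertex forces $r_{i}=R$ for some agent $i$, after which the second assertion of Lemma~\ref{lem:bound_on_p_and_r} delivers $p_{j'}=P$ for every bad $j'\in M^-$ for free. Following the template of the proof of Lemma~\ref{lem:no_p_0_goods}, I would take $T$ to be the \emph{first} vertex on the path traced by Algorithm~\ref{algo:mixed} at which $p_j=P$ for some bad $j$, fix such a bad $b$, let $S$ be the vertex preceding $T$ on the path, and let $E$ be the edge of $\cP$ joining $S$ to $T$ (such an $S$ exists since the algorithm starts on the primary ray, where $\p=\0$). By minimality of $T$, every bad has price strictly below $P$ at $S$; in particular $p_b<P$ at $S$, so $p_b$ is a non-constant affine function along $E$ that is strictly positive on the relative interior of $E$.

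First I would read off the bad-spending constraint along $E$: since $p_b>0$ on the relative interior of $E$, complementarity~(\ref{eq:spending_4a_bad}') forces~\eqref{eq:spending_4a_bad} for $b$ to hold with equality there, hence on all of $E$ by continuity, i.e.\ $\sum_{i,k}f_{ibk}=P-p_b$ along $E$. At $T$ this sum is $0$, so every segment earning $f_{ibk}$ of bad $b$ vanishes at $T$, while on the relative interior of $E$ it is strictly positive. Since each $f_{ibk}$ is a nonnegative affine function of the edge parameter that vanishes at the endpoint $T$, it is either identically $0$ on $E$ or strictly positive on $E\setminus\{T\}$; as the strictly positive total earning on $b$ must come from somewhere, there is a single segment $(i^*,b,k^*)$ with $f_{i^*bk^*}>0$ everywhere on the relative interior of $E$. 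Then complementarity~(\ref{eq:mbb_4a_bad}') for $(i^*,b,k^*)$ forces~\eqref{eq:mbb_4a_bad} to hold with equality wherever $f_{i^*bk^*}>0$, hence on all of $E$ by continuity; rearranged, this reads $D_{i^*bk^*}(R-r_{i^*})+s_{i^*bk^*}=P-p_b$. Evaluating at $T$, where $p_b=P$, gives $D_{i^*bk^*}(R-r_{i^*})+s_{i^*bk^*}=0$. Both summands are nonnegative --- $s_{i^*bk^*}\ge 0$ by construction, $D_{i^*bk^*}>0$, and $R-r_{i^*}\ge 0$ by Lemma~\ref{lem:bound_on_p_and_r} --- so $r_{i^*}=R$ at $T$, and the second part of Lemma~\ref{lem:bound_on_p_and_r} yields $p_{j'}=P$ for all $j'\in M^-$, as claimed. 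No separate case analysis on whether some good also attains price $P$ at $T$ is needed (that situation is in any event already covered by Lemma~\ref{lem:no_p_0_goods}).

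The step I expect to require the most care is the claim that a \emph{single} segment variable $f_{i^*bk^*}$, rather than merely the aggregate earning on $b$, remains positive throughout the relative interior of $E$: this is exactly what lets us propagate tightness of~\eqref{eq:mbb_4a_bad} to the endpoint $T$. It rests on the elementary facts that along a one-dimensional face of $\cP$ every variable is affine in the edge parameter and that all variables are nonnegative, together with nondegeneracy (Theorem~\ref{thm:non_degenerate_2}), which guarantees that $E$ is a genuine edge of the solution path rather than a degenerate artifact. The remainder is routine manipulation of the complementarity conditions of LCP~\eqref{lcp:5}.
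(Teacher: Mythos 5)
Your proof is correct, and the central engine is the same as the paper's: isolate a segment $(i^*,b,k^*)$ of the bad $b$ (whose price tends to $P$) whose earning $f_{i^*bk^*}$ stays positive along the edge $E$, and use complementarity of~\eqref{eq:mbb_4a_bad} for that segment to pin down the relation $D_{i^*bk^*}(R-r_{i^*})+s_{i^*bk^*}=P-p_b$ all along $E$. Where you diverge is in the finishing move. The paper's proof is by contradiction: it assumes some other bad $j'$ has $p_{j'}<P$ at $T$, invokes Claim~\ref{claim:no_s_finaL_segment} to get $s_{ij'k'}=0$ on $i$'s last segment of $j'$, and threads the chain $\frac{D_{ijk}}{P-p_j}\le\frac{1}{R-r_i}\le\frac{D_{ij'k'}}{P-p_{j'}}$ along $E$, observing that the leftmost term blows up while the rightmost stays bounded. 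You instead evaluate the tight constraint at $T$ directly, deduce $r_{i^*}=R$, and then invoke the second assertion of Lemma~\ref{lem:bound_on_p_and_r} as a black box (its own proof, via Claim~\ref{claim:no_s_finaL_segment}, is essentially the paper's inline chain). Your version is a bit cleaner and more modular — no proof by contradiction, and no need to pick a second bad $j'$ — while the paper's version keeps the inequality manipulation self-contained. Both are valid; there is no circularity issue since Lemma~\ref{lem:bound_on_p_and_r} depends only on Claim~\ref{claim:no_s_finaL_segment} and the LCP structure. One small point of style: the fact you flag as needing care — that a \emph{single} $f_{ibk}$ stays positive on $E\setminus\{T\}$ — the paper handles more informally by noting that along an edge of $\cP$ the active labels are fixed; your explicit affine-nonnegative-vanishes-at-$T$ dichotomy makes the same point more carefully.
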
 

\begin{proof}
	For the sake of contradiction, suppose $T$ is the solution to LCP~\eqref{lcp:5} where $p_j = P$ for some bad $j\in M^-$ for the first time. Now consider the vertex $S = (\p,\f,\r,\s,z)$ which precedes $T$. That is, Algorithm \ref{algo:mixed} pivots at the vertex $S$ and travels along the edge $E$ to $T$.
	
	At $S$, $0\leq p_j < P, \forall j \in M^-$, since $T$ is the first time $p_j = P$ for some $j\in M^-$. In addition, complementarity condition (\ref{eq:spending_4a_bad}') requires that constraint \eqref{eq:spending_4a_bad} holds with equality for bad $j$ along the entire edge $E$ so that $p_j$ may increase to $P$. Then, the conditions $\sum_{i,k} f_{ijk} + p_j = P$, and $p_j < P$, imply that at least one agent, say $i$, spends on some segment $(i,j,k)$ along $E$. Recall that we select $R$ large enough that the right hand side of \eqref{eq:mbb_4a_bad} is positive for all segments $(i,j,k)$. Observe that this implies $r_i >0$, otherwise \eqref{eq:mbb_4a_bad} holds with strict inequality which forces $f_{ijk}=0, \ \forall j,k$ by complementarity condition (\ref{eq:mbb_4a_bad}'). Thus, the segment $(i,j,k)$ is either forced or flexible for $i$, $r_i > 0$, and \eqref{eq:mbb_4a_bad} holds with equality for segment $(i,j,k)$ along $E$.
	
	Let $j'$ be a bad such that $p_{j'} < P$ at $T$. By Claim 1, $i$'s final segment $k' = |u_{ij'}|$ has $s_{ij'k'} = 0$. Since \eqref{eq:mbb_4a_bad} holds along $E$ for the segment $(i,j,k)$, then $D_{ijk}(R-r_i) = P-p_j -s_{ijk}<  P-p_j$. Also, since $s_{ij'k'} = 0$, then $d_{ij'k'} (R-r_i) \geq P-p_{j'}$ holds along $E$. Equivalently, we have
	\begin{equation*}
	0< \frac{D_{ijk}}{(P-p_j)} \leq \frac{D_{ijk}}{(P-p_j)-s_{ijk}} = \frac{1}{R-r_{i}} \leq \frac{D_{ij'k'}}{(P-p_{j'})}.
	\end{equation*}
	Thus, we obtain a contradiction since $p_j \rightarrow P$, but $p_{j'} < P$.
\end{proof}

Lemma~\ref{lem:no_sec_ray_first_step} implies that we can not have $p_j = P$ for some subset of bads, while $p_j < P$ for all remaining bads. We still need to rule out the case where $p_j = P, \ \forall j\in M^-$. The argument follows similar reasoning to that of Lemma \ref{lem:no_sec_ray_first_step}. Setting $p_j = P$, i.e., the price of all bads equals 0, requires that at least one agent purchases some bad as $p_j\rightarrow P$, sending $ppb \uparrow \infty$. The more complicated portion of the proof lies in showing that this agent also must purchase some goods. However, if $p_j < P, \ \forall j\in M^+$, then $bpb$ remains bounded. This gives a contradiction since $bpb \geq ppb$ whenever an agent purchases both bads and goods.

\begin{lemma} \label{lem:no_sec_ray_first_step_2}
	Starting from the primary ray, if Algorithm \ref{algo:mixed} reaches a vertex where $p_j =P, \ \forall j \in M^-$, then $p_{j'} = P, \ \forall j'\in M^+$ and $z=0$. 
\end{lemma}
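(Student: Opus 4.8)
The plan is to argue by contradiction, taking $T$ to be the \emph{first} vertex on the algorithm's path at which $p_j = P$ for all bads $j\in M^-$; by Lemma~\ref{lem:no_sec_ray_first_step} this is equivalently the first vertex at which $p_j=P$ for \emph{some} bad. Let $S$ be the vertex preceding $T$ and $E$ the solution edge joining them (on the primary ray $\p=\0$, so $T\neq S_0$ and $S$ exists). By minimality of $T$ together with Lemma~\ref{lem:no_sec_ray_first_step}, no bad has price $P$ at $S$, and by the contrapositive of Lemma~\ref{lem:bound_on_p_and_r}, $r_i<R$ at $S$ for every $i$. Hence, along the open edge $E$ near $T$, every bad price lies in $(0,P)$ and tends to $P$, and every $r_i<R$. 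Since $p_j>0$ there, complementarity~(\ref{eq:spending_4a_bad}') forces $\sum_{i,k} f_{ijk} = P-p_j > 0$ for each bad $j$, so there is an agent $i$ and a segment $(i,j,k)$ of some bad $j$ with $f_{ijk}>0$ on $E$ near $T$. Then~(\ref{eq:mbb_4a_bad}') makes~\eqref{eq:mbb_4a_bad} tight for $(i,j,k)$, i.e. $D_{ijk}(R-r_i)=(P-p_j)-s_{ijk}\le P-p_j$, so $\tfrac{1}{R-r_i}\ge \tfrac{D_{ijk}}{P-p_j}=ppb_{ijk}$; since $p_j\to P$ this also shows $r_i\to R$ on $E$.

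Now suppose, for contradiction, that $p_g<P$ at $T$ for \emph{every} good $g\in M^+$. I would next show agent $i$ must also spend on some good near $T$. Because $r_i\to R>0$, $r_i>0$ along $E$ near $T$, so~(\ref{eq:budget_4a}') makes~\eqref{eq:budget_4a} tight for $i$; rearranging it gives
\[
\sum_{k,\,j'\in M^+} f_{ij'k} \;=\; \sum_{j'\in M^+} W_{ij'}(P-p_{j'}) \;-\; \sum_{j'\in M^-} W_{ij'}(P-p_{j'}) \;+\; z \;+\; \sum_{k,\,j'\in M^-} f_{ij'k}.
\]
By Condition~\ref{cond1}, $W_{ig_0}>0$ for some good $g_0$, and by the contradiction hypothesis $p_{g_0}<P$ at $T$, so (using continuity of the affine coordinates on $E$) the first term on the right is bounded below by a positive constant in a neighbourhood of $T$ on $E$; meanwhile $\sum_{j'\in M^-} W_{ij'}(P-p_{j'})\to 0$ since all bad prices tend to $P$, and $z\ge 0$. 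Hence $\sum_{k,\,j'\in M^+} f_{ij'k}>0$ close enough to $T$, so $f_{igk'}>0$ for some good $g$ and segment $(i,g,k')$. Since $g$ is a good, $U_{igk'}\ge 0$, and $U_{igk'}=0$ is impossible because then (\ref{eq:mbb_4a_good}') would force $p_g=P$, contradicting $p_g<P$; thus $U_{igk'}>0$. Then~(\ref{eq:mbb_4a_good}') makes~\eqref{eq:mbb_4a_good} tight for $(i,g,k')$, i.e. $U_{igk'}(R-r_i)=(P-p_g)+s_{igk'}+z\ge P-p_g$, giving $\tfrac{1}{R-r_i}\le \tfrac{U_{igk'}}{P-p_g}=bpb_{igk'}$.

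Chaining the two bounds yields $\tfrac{U_{igk'}}{P-p_g}\ge \tfrac{1}{R-r_i}\ge \tfrac{D_{ijk}}{P-p_j}$ along $E$ near $T$; as $p_j\to P$ the right-hand side tends to $+\infty$ while $U_{igk'}$ is a fixed positive constant, forcing $P-p_g\to 0$, i.e. $p_g\to P$. This contradicts $p_g<P$ at $T$ ($p_g$ is affine, hence continuous, on $E$). Therefore $p_g=P$ for some good $g$ at $T$, and since the algorithm has reached $T$, Lemma~\ref{lem:no_p_0_goods} applies and gives $p_{j'}=P$ for every item $j'\in M$ (in particular every good), $r_i=R$ for all $i\in N$, and $z=0$, which is the assertion.

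I expect the main obstacle to be precisely the middle step: showing that the bad-consuming agent $i$ is \emph{also} forced to buy a good near $T$. The $bpb\ge ppb$ contradiction is short once we have this, but extracting a positive good-spending from $i$'s tight budget constraint needs the two observations that (i) $i$'s endowment of some good has transformed value bounded away from zero near $T$ — this is where Condition~\ref{cond1} and the contradiction hypothesis (all goods below price $P$) are used — and (ii) $i$'s bad-endowment cost vanishes as the bad prices collapse to $P$. Care is required that all of these are limiting statements along the solution edge $E$, not statements at the vertex $T$ itself, where all bad spending is already $0$ and which may well be degenerate (so the nondegeneracy guarantees for $z>0$ vertices cannot be invoked at $T$).
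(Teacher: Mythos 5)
Your proof is correct and follows essentially the same approach as the paper's: pick the agent $i$ who buys a bad on the edge $E$ approaching $T$, show $r_i\to R$ from the tight $ppb$ constraint, use the tight budget constraint together with Condition~\ref{cond1} to force $i$ to also buy a good near $T$, and derive the $bpb\ge ppb$ contradiction as $p_j\to P$. Your version is marginally tidier than the paper's (you observe directly that bad spending $\sum_{k,j'\in M^-}f_{ij'k}\ge 0$ instead of sandwiching it with an explicit $\epsilon$-argument at a chosen intermediate point $T'$, and you explicitly invoke Lemma~\ref{lem:no_p_0_goods} at the end rather than only in a parenthetical remark at the start), but there is no substantive difference.
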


\begin{proof}
	For contradiction, let $T$ be a solution to LCP~\eqref{lcp:5} where $p_j = P, \ \forall j\in M^-$ for the first time, but $p_j < P, \ \forall j\in M^+$. Note that Lemma \ref{lem:no_p_0_goods} shows that $p_j < P, \ \forall j\in M^+$, otherwise all prices are set to zero, i.e., $p_j = P, \ \forall j\in M$. Let $S$ be the vertex which precedes $T$.
	
	At $S$, $p_j > 0, \forall j\in M^-$ so that $p_j$ may increase to $P$. Then, the conditions $\sum_{i,k} f_{ijk} + p_j = P$, and $p_j < P, \ \forall j \in M^-$, imply that at least one agent, say $i$, spends in her first segment $(i,j,1)$ for some bad $j$. Note that $r_i>0$, otherwise \eqref{eq:mbb_4a_bad} holds with strict inequality, and so (\ref{eq:mbb_4a_bad}') requires $f_{ijk} =0$ for all bads. Thus, the segment $(i,j,1)$ is either forced or flexible for $i$, $r_i > 0$, and \eqref{eq:mbb_4a_bad} holds with equality for segment $(i,j,1)$ along edge. We want to show that these conditions imply that the agent also purchases some good.
	
	Observe that on the edge $E$ from $S$ to $T$, every agent's budget eventually becomes strictly positive, since $p_j \rightarrow P, \ \forall j\in M^-$. Fix $\epsilon > 0$, and pick a point $T'$ on $E$ so that $2|M^-| \max_{j\in M^-} (P-p_j) <  \epsilon$. At $T'$, it follows that 
	\begin{equation*}
	\sum_{k,j\in M^-} |f_{ajk} - W_{aj}(P-p_j)| \leq 2|M^-| \max_{j\in M^-} (P-p_j) \leq \epsilon, \ \forall a\in N,
	\end{equation*}
	since $W_{aj}\leq 1$, and $f_{ajk} \leq P-p_j$, by \eqref{eq:spending_4a_bad}. Recall that $r_i > 0$, so that (\ref{eq:budget_4a}') requires that $\sum_{j' \in M^+}W_{ij'}(P-p_{j'}) +z- \sum_{j\in M^-} W_{ij}(P-p_j) +  \sum_{k,j\in M^-} f_{ijk} = \sum_{k,j' \in M^+} f_{ij'k},$ or	
	\begin{equation*}
	\sum_{j'\in M^+}W_{ij'}p_{j'} + z -\epsilon \leq \sum_{k,j'\in M^+} f_{ij'k} \leq \sum_{j'\in M^+}W_{ij'}p_{j'} + z + \epsilon,
	\end{equation*}
	at $T'$. Therefore, we must have $f_{ij'k'} >0 $ at least for some segment $(i,j',k')$ of some good $j'$, since $ \sum_{j\in M^+}W_{ij}(P-p_{j}) >0$, $z\geq 0$, and $\epsilon >0$ was arbitrary. For this segment, complementarity condition (\ref{eq:mbb_4a_good}') requires that $U_{ij'k}(R- r_i) = (P-p_{j'}) + z + s_{ij'k}$. Note that $U_{ij'k} > 0$ since $(P-p_j) > 0$ on $E$, and $z,s_{ijk} \geq 0$. For the bad $j$, (\ref{eq:mbb_4a_bad}') requires $D_{ijk} (R-r_i) = (P-p_j)-s_{ijk}$, since $f_{ijk} > 0$. Further, these conditions hold along the edge from $T'$ to $T$ where $p_j \rightarrow P, \ \forall j\in M^-$. But then
	\begin{equation*}
	\frac{U_{ij'k'}}{P-p_{j'}+z+s_{ij'k'}} = \frac{1}{R-r_i} = \frac{D_{ijk}}{(P-p_j)-s_{ijk}} \geq \frac{D_{ijk}}{P-p_j}, 
	\end{equation*}
	so that $U_{ij'k'} /(P-p_{j'} + z+ s_{ij'k'}) \rightarrow \infty$, since $p_j \rightarrow P$. Then, me must have $p_{j'} \rightarrow P$, $s_{ij'k'}\rightarrow 0$, and $z \rightarrow 0$, along the edge from $T'$ to $T$, since $s_{ij'k'},z\geq 0$ and $p_{j'} \leq P, \ \forall j\in M$ by Lemma \ref{lem:bound_on_p_and_r}. A contradiction since $p_{j'} < P, \ \forall j\in M^+$.
\end{proof}

Lemma \ref{lem:no_sec_ray_first_step_2} rules out the possibly of secondary rays where $p_j = P$, $\forall j\in S \subseteq M^-$. We still need to show that Algorithm \ref{algo:mixed} never reaches the degenerate solution.

\begin{lemma} \label{lem:no_z_0_solution_2}
	Starting from the primary ray, Algorithm \ref{algo:mixed} never reaches the solution $p_j = P, \ \forall j\in M$, and $r_i = R, \ \forall i \in N$, with all other variables, including $z$, equal to zero.
\end{lemma}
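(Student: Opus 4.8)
The plan is to run the argument from the linear-case overview in Section~\ref{sec:lin-algo}: if the algorithm ever reached the degenerate solution $\uu := (\p = P\1,\ \f = \0,\ \r = R\1,\ \s = \0,\ z = 0)$, then the vertex it visited immediately before $\uu$ would already satisfy $z=0$, so the while-loop would have terminated earlier. Assume for contradiction that Algorithm~\ref{algo:mixed} reaches $\uu$. Since $z=0$ at $\uu$, it is the terminal vertex of the path; let $\vv$ be the preceding vertex and $E=[\vv,\uu]$ the edge of $\cP$ traversed between them. Such a $\vv$ exists because $\uu\neq S_0$ (on the primary ray $z=\max_{j\in M^+,i,k}U_{ijk}R-P>0$, whereas $z=0$ at $\uu$), and $z>0$ at $\vv$: this is immediate if $\vv=S_0$, and otherwise the loop condition $z>0$ held when the algorithm was at $\vv$.

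The core step is to show that along the whole edge $E$ the spending constraints~\eqref{eq:spending_4a_bad} and~\eqref{eq:spending_4a_good} and the budget constraints~\eqref{eq:budget_4a} all hold with equality. Every point of $E$ solves LCP~\eqref{lcp:5}, so the complementarity conditions hold everywhere on $E$; hence for each item $j$, on the relative interior of $E$ either the variable $p_j$ is identically $0$ or the constraint complementary to $p_j$ (namely~\eqref{eq:spending_4a_bad} for a bad and~\eqref{eq:spending_4a_good} for a good) is tight identically, for otherwise complementarity would fail at some point of $E$. But $p_j=P>0$ at $\uu$ and $p_j$ is affine on $E$, so $p_j\not\equiv 0$ on $E$, and therefore the complementary spending constraint for $j$ is tight at every point of $E$, in particular at $\vv$. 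Applying the same reasoning to each $r_i$, using $r_i=R>0$ at $\uu$, forces~\eqref{eq:budget_4a} to be tight on $E$ for every $i\in N$.

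With those equalities in hand at $\vv$, I would finish with the same summation as in the linear overview: summing the tight~\eqref{eq:spending_4a_bad} over $j\in M^-$ and the tight~\eqref{eq:spending_4a_good} over $j\in M^+$ expresses $\sum_{i,k}f_{ijk}$ for bads and for goods in terms of the $(P-p_j)$'s and $z$, and summing the tight~\eqref{eq:budget_4a} over $i\in N$ with unit supply $\sum_iW_{ij}=1$ for every $j$ yields, after substituting those expressions and cancelling all $p_j$ and $f_{ijk}$ terms,
\[
z\Big(n+\sum_{j\in M^+}\delta_j\Big)=0 .
\]
Since each $\delta_j=1+\epsilon_j>0$, this forces $z=0$ at $\vv$, contradicting $z>0$ at $\vv$; hence the algorithm never reaches $\uu$.

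I expect the main obstacle to be the first step: $\uu$ is a highly degenerate vertex (many more tight constraints than the dimension), so the market-clearing and budget identities cannot simply be read off at $\uu$ and the computation must instead be carried out along the incoming edge $E$, using only that $p_j$ and $r_i$ are bounded away from $0$ at $\uu$ so that their complementary constraints remain tight along $E$; the coefficients $\delta_j=1+\epsilon_j$ serve here, as elsewhere, to keep $n+\sum_{j\in M^+}\delta_j$ strictly positive. Finally, combining this with Lemmas~\ref{lem:bound_on_p_and_r}, \ref{lem:no_p_0_goods}, \ref{lem:no_sec_ray_first_step}, and~\ref{lem:no_sec_ray_first_step_2} --- which already force any reachable point having some $p_j=P$ or some $r_i=R$ to be $\uu$ itself --- yields the stronger statement that the path never meets $\{\p=P\}\cup\{\r=R\}$, which is what the convergence analysis ultimately needs.
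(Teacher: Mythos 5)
Your proof is correct and follows essentially the same approach as the paper's: both arrive at the identity $z\bigl(n+\sum_{j\in M^+}\delta_j\bigr)=0$ at the vertex preceding the degenerate solution by summing the tight spending constraints \eqref{eq:spending_4a_bad}--\eqref{eq:spending_4a_good} and budget constraints \eqref{eq:budget_4a}. Your derivation of tightness via the affine structure of the complementarity products along the edge $E$ (using only $p_j=P>0$ and $r_i=R>0$ at $\uu$) is a slightly cleaner route to the same facts that the paper obtains by invoking Lemmas~\ref{lem:no_p_0_goods} and~\ref{lem:no_sec_ray_first_step_2} to assert $0<p_j<P$ and $0<r_i<R$ at the preceding vertex.
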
 

\begin{proof}
	Let $T$ be the degenerate solution. By Lemma \ref{lem:no_sec_ray_first_step_2}, Algorithm \ref{algo:mixed} never reaches a vertex with $p_j = P, \ \forall j \in M^-$, while $p_j < P, \ \forall j\in M^+$. Therefore, the only possibility is that \emph{all} $p_j$ are set to $P$ simultaneously.
	
	Consider the vertex $S$ that precedes $T$. At $S$, $0 < p_j < P, \ \forall j\in M$, so that $p_j$'s can increase to $P$. Thus, (\ref{eq:spending_4a_bad}') and (\ref{eq:spending_4a_good}') require that \eqref{eq:spending_4a_bad} and \eqref{eq:spending_4a_good} hold with equality at $S$. Summing these equalities over all $j\in M$ shows that the total spending is
	\begin{equation*}
	\sum_{i,k,j\in M^+} f_{ijk} - \sum_{i,k,j\in M^-} f_{ijk} = \sum_{i,j\in M^+} (P-p_j) -\sum_{i,j\in M^-}(P-p_j) - z \sum_{j\in M^+} \delta_j.
	\end{equation*}
	
	Note that $z=0$ at $T$ so that Algorithm \ref{algo:mixed} stops there. This implies, $0 < r_i < R, \ \forall i \in N$, at $S$ so that the $r_i$ can increase to $R$, as required by Lemma \ref{lem:no_p_0_goods}. Therefore, (\ref{eq:budget_4a}') requires that \eqref{eq:budget_4a} holds with equality for all $i\in N$. Summing over all $i$ yields
	\begin{equation*}
	 \sum_{i,k, j' \in M^+} f_{ij'k} - \sum_{i,k,j\in M^-} f_{ijk} =  \sum_{i,j' \in M^+}W_{ij'}(P-p_{j'}) -\sum_{i,j\in M^-} W_{ij}(P-p_j)+z n.
	\end{equation*}
	Or, since there is a unit amount of each item, i.e., $\sum_i W_{ij} = 1$, we see that
	\begin{equation*}
	\sum_{i,k, j' \in M^+} f_{ij'k} - \sum_{i,k,j\in M^-} f_{ijk} =  \sum_{j' \in M^+}(P-p_{j'}) -\sum_{j\in M^-} (P-p_j)+z n.
	\end{equation*}
	Which implies that $z(n+\sum_{j\in M^+} \delta_j )=0$, at $S$. Thus, $z = 0$, since the $\delta_j >0, \ \forall j\in M^+$. This means that the algorithm stops at $S$, which is a competitive equilibrium by Theorem \ref{thm:one_to_one_2}.
\end{proof}

\begin{theorem} \label{thm:no_secondary_rays}
	Starting from the primary ray, the Algorithm \ref{algo:mixed} never reaches a secondary ray.
\end{theorem}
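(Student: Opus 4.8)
The plan is to argue by contradiction: suppose Algorithm~\ref{algo:mixed}, started on the primary ray $S_0$, reaches a secondary ray $\cR=\{[\y^*,z^*]+\alpha[\y',z']:\alpha\ge 0\}$ with $z^*>0$, $(\y',z')\ge\0$ and $(\y',z')\neq\0$ (non‑negativity of the coordinates of $\cP$ forces $(\y',z')\ge\0$). I would split into two cases according to whether $\cR$ meets the ``boundary'' set $\{\,p_j=P\text{ for some }j\in M\,\}\cup\{\,r_i=R\text{ for some }i\in N\,\}$. \emph{Case 1 (boundary)}: here I would simply chain the lemmas already proved. If $r_i=R$ for some $i$, Lemma~\ref{lem:bound_on_p_and_r} gives $p_j=P$ for all bads $j\in M^-$; if $p_j=P$ for some bad, Lemma~\ref{lem:no_sec_ray_first_step} gives the same; and if $p_j=P$ for some good, Lemma~\ref{lem:no_p_0_goods} gives $p_{j'}=P$ for all items, in particular all bads. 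So in every subcase the algorithm has reached a point with $p_j=P$ for all $j\in M^-$. Lemma~\ref{lem:no_sec_ray_first_step_2} then forces $p_{j'}=P$ for all $j'\in M^+$ and $z=0$; since by Condition~\ref{cond1} there is at least one good, re‑applying Lemma~\ref{lem:no_p_0_goods} yields $r_i=R$ for all $i\in N$ (and $z=0$). This is exactly the situation ruled out by Lemma~\ref{lem:no_z_0_solution_2} --- a contradiction. Hence any reachable secondary ray has $\p<P$ and $\r<R$ everywhere on it.

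\emph{Case 2 (interior: $\p<P$, $\r<R$ on all of $\cR$).} I would first pin down the direction vector. Any variable with a positive entry in $\y'$ tends to $\infty$ along $\cR$; since $p_j\le P$ (Lemma~\ref{lem:bound_on_p_and_r}) and $r_i\le R$ are bounded on $\cR$, we get $\p'=\0$ and $\r'=\0$. The segment constraints~\eqref{eq:segment_4a} give $f_{ijk}\le L_{ijk}P$ uniformly, so $\ff'=\0$ as well. For the supplementary prices $s_{ijk}$: if $s'_{ijk}>0$ then $s_{ijk}\to\infty$, so complementarity $(\ref{eq:segment_4a}')$ makes~\eqref{eq:segment_4a} tight, hence $f_{ijk}=L_{ijk}(P-p_j)>0$ (as $p_j<P$); then the constraint~\eqref{eq:mbb_4a_bad} or~\eqref{eq:mbb_4a_good} paired with $f_{ijk}$ must be tight, which expresses $s_{ijk}$ as a fixed affine function of $p_j$ and $r_i$ --- both constant on $\cR$ since $\p'=\r'=\0$ --- contradicting $s_{ijk}\to\infty$. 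Thus $\s'=\0$, so $\y'=\0$, and since $(\y',z')\neq\0$ we have $z'>0$.

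Next I would show the base vertex is the origin. On $\cR$ we now have $\y\equiv\y^*$ while $z\to\infty$, so every constraint containing a $-z$ or $-\delta_j z$ term --- the budget constraints~\eqref{eq:budget_4a}, the good‑spending constraints~\eqref{eq:spending_4a_good}, and the good constraints~\eqref{eq:mbb_4a_good} --- is eventually strict, and complementarity forces the paired variables to vanish: $r^*_i=0$ for all $i$, $p^*_j=0$ for all $j\in M^+$, and $f^*_{ijk}=0$ for every good‑segment. With $r^*_i=0$ and $p^*_j,s^*_{ijk}\ge 0$, the left side of each~\eqref{eq:mbb_4a_bad} is $\le 0<D_{ijk}R-P$, so these are strict too and $f^*_{ijk}=0$ for every bad‑segment; hence $\ff^*=\0$. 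Then~$(\ref{eq:segment_4a}')$ with $f^*_{ijk}=0$ and $p^*_j<P$ gives $s^*_{ijk}=0$, and~$(\ref{eq:spending_4a_bad}')$ with $\ff^*=\0$ gives $p^*_j\in\{0,P\}$ for each bad $j$; since $p^*_j<P$ on $\cR$ we conclude $p^*_j=0$ for all bads. Thus $\y^*=\0$, so $\cR=\{(\0,z):z\ge z^*\}$. But the only points of $\cP$ with $\y=\0$ are those of the primary ray $S_0$ (on $\y=\0$ the binding constraint is $z\ge\max_{j\in M^+,i,k}U_{ijk}R-P$), so $\cR\subseteq S_0$; this contradicts the fact that Lemke's scheme never revisits the ray it started from. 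Combining the two cases proves the theorem.

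\textbf{Main obstacle.} The bookkeeping of Case~2 is the crux. The two subtle points are (i) that the supplementary‑price variables $s_{ijk}$ cannot blow up along a ray --- this is invisible from the inequalities alone and needs the complementarity $(\ref{eq:segment_4a}')$ together with the tightness of the $bpb$/$ppb$ constraint paired with $f_{ijk}$; and (ii) that at the base vertex one in fact has $\p^*=\0$ rather than merely $\p^*\le P$, which hinges on first disposing of the boundary case to exclude $p^*_j=P$. Case~1 itself is comparatively routine once Lemmas~\ref{lem:bound_on_p_and_r}, \ref{lem:no_p_0_goods}, \ref{lem:no_sec_ray_first_step}, \ref{lem:no_sec_ray_first_step_2} and~\ref{lem:no_z_0_solution_2} are in hand.
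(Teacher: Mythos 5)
Your proposal is correct and takes essentially the same route as the paper: Case~2 is the paper's proof of this theorem, while Case~1 is the boundary argument that the paper instead places in Lemmas~\ref{lem:no_p_0_goods}--\ref{lem:no_z_0_solution_2} and invokes when assembling the proof of Theorem~\ref{thm:mixed_converges}. One technical nit: for a good segment, the tight constraint~\eqref{eq:mbb_4a_good} gives $s_{ijk}=-U_{ijk}r_i+p_j-z-(P-U_{ijk}R)$, which is affine in $z$ as well as in $p_j$ and $r_i$, so the phrase ``a fixed affine function of $p_j$ and $r_i$'' is literally right only for bad segments; the contradiction of course survives, since along $\cR$ such an $s_{ijk}$ is nonincreasing and therefore cannot tend to $+\infty$. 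It would also be marginally cleaner to record $\p'=\r'=\0$ (your boundedness observation, which holds in both cases) \emph{before} the case split, so that any boundary condition on $\cR$ already holds at the base vertex $(\y^*,z^*)$ with $z^*>0$, to which the ``first time $p_j=P$'' lemmas then apply directly.
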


\begin{proof}
	Here, we need to impose conditions on the choices of $P$ and $R$. After fixing any $P \in \bR_+$, select $R$ large enough to ensure that: the right hand side of \eqref{eq:mbb_4a_bad} is positive, i.e., $D_{ijk}R - P >0$, for all segments $(i,j,k)$, $\forall i,k$, $\forall j\in M^-$, and that the right hand side of \eqref{eq:mbb_4a_good} is negative, i.e., $P-U_{ijk}R < 0$, for all segments $(i,j,k)$, $\forall i,k$, $\forall j\in M^+$. Recall that the ray $\mathcal{R} = \{ [\y^*,z^*] + \alpha [\y',z'] \ | \ \forall \alpha \geq 0 \}$ begins at the vertex $(\y^*,z^*)$ and travels in the direction $(\y',z')$, where $\y = (\p,\f,\r,\s)$.
	
	First, we show that $\y' = \0$, starting with $\p' = \0$. Consider constraints \eqref{eq:spending_4a_bad}, \eqref{eq:spending_4a_good}, and complementarity conditions (\ref{eq:spending_4a_bad}') and (\ref{eq:spending_4a_good}'). For contradiction, suppose $p_j' >0$ for some $j \in M$. Then, $p_j' >0, \ \forall \alpha >0$, so \eqref{eq:spending_4a_bad} or \eqref{eq:spending_4a_good} must hold with equality. Since $P$ is fixed, $f_{ijk}'\geq 0,\ \forall i,j,k$, and $z>0$, then eventually \eqref{eq:spending_4a_bad} or \eqref{eq:spending_4a_good} is violated. Therefore, $\p' = \0$. Similarly, by \eqref{eq:spending_4a_bad}, $\f' = \0, \ \forall j \in M^-$. Note that if $\p' = \0$, then the price of each item is constant along $E$. Recall from Claim 1, that $s_{ijk} = 0$ for the final segment $k = |u_{ij}|$ of any bad. Therefore, $\r' = \0$, otherwise \eqref{eq:mbb_4a_bad} is eventually violated for the final segment $(i,j,k)$ of any bad $j\in M^-$ for any agent $i\in N$. Also, since $\f'=\0$ for all bads $j \in M^-$, the spending on bads is constant. If $f_{ijk}'>0$ for some good $j\in M^+$, then $z$ must increase to ensure inequality \eqref{eq:budget_4a} holds. Further, \eqref{eq:mbb_4a_good} must hold for this segment $(i,j,k)$ by complementarity condition (\ref{eq:mbb_4a_good}') since $f_{ijk}' >0$. However, since $r_i'$ and $p_j'$ are constant and $s_{ijk} \geq 0$, \eqref{eq:mbb_4a_good} can not hold with equality as $z$ increases. This shows that $\p'$, $\f'$, and $\r'$ are constant. Observe that these variables determine $\s$ by \eqref{eq:mbb_4a_bad} and \eqref{eq:mbb_4a_good}. Therefore, $\s'= \0$. It follows that $z' >0$, otherwise no variables change.
	
	Finally, we show $\y^* = \0$. Notice that along the ray $E$, the money earned and spent by each agent remains constant. However, $z$ increases. Thus, complementarity condition (\ref{eq:budget_4a}') implies that $\r^* =\0$. It follows that \eqref{eq:mbb_4a_bad} holds with strict inequality for all bads $j\in M^-$, forcing $\f^*=\0$ for bads $j\in M^-$, by (\ref{eq:mbb_4a_bad}'). Now, (\ref{eq:spending_4a_bad}') requires that $\p^* = \0$ for all bads $j\in M^-$, since $p_j <P, \ \forall j\in M$ and $\f^* =\0$ for all bads $j\in M^-$. Since $z$ increases while $r_i$ and $p_j$ remain fixed for all goods $j \in M^+$, complementarity conditions (\ref{eq:spending_4a_good}') and (\ref{eq:mbb_4a_good}') require that both $\p^*$ and $\f^*$ are equal to $\0$ for all goods $j\in M^+$. As a result, $\s^* =\0$, by (\ref{eq:segment_4a}') as \eqref{eq:segment_4a} holds with strict inequality $\forall j \in M$, since $p_j < P, \ \forall j \in M$. Therefore, $\y^* = \0$, and the ray is $\mathcal{R} = [\0,z^*] + \alpha [\0,1]$, i.e., the primary ray. 
\end{proof}

\noindent{ \textit{Proof} (of Theorem \ref{thm:mixed_converges}).} Theorem \ref{thm:non_degenerate_2} shows that every vertex solution to LCP~\eqref{lcp:5} with $\p < P$, $\r < R$, and $z>0$, is nondegenerate as long as there is no polynomial relation between $\mathbf{U}$, $\mathbf{W}$, and $\mathbf{L}$. Lemmas \ref{lem:no_p_0_goods}, \ref{lem:no_sec_ray_first_step}, and \ref{lem:no_sec_ray_first_step_2} shows that we never reach a vertex where $p_j = P$ for any $j \in M$, or $r_i = R$ for any $i\in N$. Therefore, there is always a unique double label for Algorithm \ref{algo:mixed} to pivot at. Theorem \ref{thm:no_secondary_rays} establishes that Algorithm \ref{algo:mixed} never reaches a secondary ray, so that eventually it reaches a solution with $z=0$, $\p < P$, and $\r < R$, which is an equilibrium by Theorem \ref{thm:one_to_one_2}.

\subsection{Results}\label{sec:results}
Theorem~\ref{thm:mixed_converges} directly yields the following results on existence, membership in PPAD, and rational-valued property. 

\begin{theorem}
If the fair division instance of a mixed manna under SPLC utilities satisfies strong connectivity, as defined in Section~\ref{sec:Assumptions}, then there exists a competitive allocation, and the Algorithm~\ref{algo:mixed} terminates with one. Furthermore, Algorithm~\ref{algo:mixed} finds a rational-valued solution if all input parameters are rational numbers. 
\end{theorem}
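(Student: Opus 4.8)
The statement is essentially a corollary of Theorem~\ref{thm:mixed_converges}, and the plan is to (a) lift that theorem's genericity hypothesis to arbitrary strongly connected instances, (b) read off existence and finite termination, and (c) extract rationality of the output from the pivot path itself.

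For (a) and (b) I would first note that strong connectivity (Conditions~\ref{cond1} and~\ref{cond3}) is an \emph{open} condition on the data, being defined by the strict requirements ``$W_{ij}>0$ for some good and some bad'' and ``$U_{ijk}>0$ on the last segment'' along the edges of the economy graph; hence it survives every sufficiently small perturbation of $\mathbf{U},\mathbf{W},\mathbf{L}$. If the given instance already has no polynomial relation among $\mathbf{U},\mathbf{W},\mathbf{L}$, Theorem~\ref{thm:mixed_converges} applies directly: Algorithm~\ref{algo:mixed} follows a non-revisiting complementary-pivot path (so it halts in finitely many steps), never reaches a point with some $p_j=P$ or $r_i=R$ by Lemmas~\ref{lem:no_p_0_goods}--\ref{lem:no_sec_ray_first_step_2}, and never reaches a secondary ray by Theorem~\ref{thm:no_secondary_rays}, so it stops at a solution of LCP~\eqref{lcp:5} with $z=0$, $\p<P$, $\r<R$, which by Theorem~\ref{thm:one_to_one_2} maps to a competitive equilibrium $(\p^*,\f^*)$ with $p^*_j=\pm(P-p_j)$ and allocation $x_{ijk}=f^*_{ijk}/p^*_j$; in particular a competitive allocation exists. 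For a non-generic (in particular degenerate) instance I would either perturb to a nearby generic instance still satisfying strong connectivity and invoke the structure-preserving recovery of~\cite{DuanGM16} (the remark in Section~\ref{sec:alcp}), or --- better for part (c) --- keep the numerical data fixed and break all ties by a symbolic/lexicographic perturbation of the right-hand side of LCP~\eqref{lcp:5}, which makes Lemke's scheme well defined on the rational polyhedron $\cP$ without altering any number.

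For (c), assume every $U_{ijk},W_{ij},L_{ijk}$ is rational. I would choose $P$ and $R$ rational with $R>P/\min_{i,j,k:U_{ijk}\ne0}|U_{ijk}|$ and the remaining threshold inequalities of Section~\ref{sec:algo} satisfied, and each $\delta_j=1+\epsilon_j$ rational in $(0,1/m)$; any residual degeneracy is resolved by the lexicographic rule above, which runs in exact rational arithmetic. Then the matrix and vectors defining LCP~\eqref{lcp:5} (equivalently LCP~(\ref{lcp:5}')) are rational, so $\cP$ is a rational polyhedron, Algorithm~\ref{algo:mixed} traverses its $1$-skeleton, and it terminates at a \emph{vertex} of $\cP$ with $z=0$ (the solution is fully labeled and additionally satisfies the tight constraint $z=0$). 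A vertex of a polyhedron with rational data is the unique solution of a square rational linear system, hence rational; therefore the terminal $(\p,\f,\r,\s)$ is rational, and so are the recovered prices $\p^*$ and the allocation $x_{ijk}=f^*_{ijk}/p^*_j$ (ratios of rationals, legitimate since $\p<P$ forces $p^*_j\ne0$). This gives a rational-valued competitive allocation.

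The main obstacle I expect is reconciling (a) with (c): Theorem~\ref{thm:mixed_converges}, Theorem~\ref{thm:non_degenerate_2}, and the convergence lemmas are stated under the ``no polynomial relation'' genericity, which a rational instance never satisfies, so I must verify that those arguments survive a symbolic/lexicographic perturbation. I would check that the proofs in Section~\ref{sec:algo_convergence} (Lemma~\ref{lem:bound_on_p_and_r}, Lemmas~\ref{lem:no_p_0_goods}, \ref{lem:no_sec_ray_first_step}, \ref{lem:no_sec_ray_first_step_2}, \ref{lem:no_z_0_solution_2}, and Theorem~\ref{thm:no_secondary_rays}) use only the sign pattern of the LCP constraints together with strong connectivity --- never the algebraic independence of $\mathbf{U},\mathbf{W},\mathbf{L}$ --- so they go through unchanged on the lexicographically perturbed polyhedron, while nondegeneracy (needed only to guarantee a \emph{unique} double label along the path) is supplied by the perturbation itself. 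Once that is in place, existence, finite termination, and rationality all follow at once.
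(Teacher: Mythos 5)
Your proposal is correct and, in fact, fills in a real gap the paper leaves implicit. The paper states this theorem without proof, asserting it ``directly'' follows from Theorem~\ref{thm:mixed_converges}, and its only nod to degeneracy is the remark at the end of Section~\ref{sec:alcp} pointing to input-perturbation in the style of~\cite{DuanGM16}. You correctly notice the tension: the convergence machinery (Theorems~\ref{thm:non_degenerate_2}, \ref{thm:one_to_one_2}, \ref{thm:mixed_converges}) is stated under the hypothesis that $\mathbf{U},\mathbf{W},\mathbf{L}$ satisfy no polynomial relation, which is never literally true of rational data, whereas the statement to be proved asserts a rational-valued output precisely for rational data. Your resolution --- keep the numerical data fixed, break ties by a symbolic/lexicographic perturbation of the right-hand side of LCP~\eqref{lcp:5}, observe that the convergence lemmas of Section~\ref{sec:algo_convergence} rest only on sign patterns plus strong connectivity (and the choice $R>P/U_{\min}$) and so are insensitive to such a perturbation, and then read off rationality from the terminal vertex being the unique solution of a square rational linear system --- is the standard and correct way to make both halves of the statement hold simultaneously. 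The observation that $z=0$ at the terminal vertex (so the $\delta_j z$ terms vanish there) further insulates the rationality conclusion from the choice of the $\epsilon_j$'s, though your suggestion to pick them rational is also fine. Compared with the paper, you prove a sharper version of the same claim: the paper implicitly relies on the reader to reconcile genericity with rationality, and you make that reconciliation explicit. Your checklist of lemmas to re-verify under the perturbation (Lemma~\ref{lem:bound_on_p_and_r}, Lemmas~\ref{lem:no_p_0_goods}--\ref{lem:no_z_0_solution_2}, Theorem~\ref{thm:no_secondary_rays}) is the right one, and a scan confirms those arguments never invoke algebraic independence.
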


We note that for the bads only case the above theorem does not apply because the {\em strong connectivity} assumption defined using goods is inapplicable. Therefore, for this case, we separately show the convergence of our algorithm in Appendix \ref{sec:convergence_all_bads}. This together with Theorem \ref{thm:equiL_soL_correspondence} shows that Algorithm \ref{algo:mixed} finds an equilibrium for bads with SPLC utility functions, and therefore the remaining theorems hold for this case as well. 

\begin{theorem}
If the fair division instance of a mixed manna under SPLC utilities satisfies strong connectivity, as defined in Section~\ref{sec:Assumptions}, then the problem of computing a competitive allocation is in $\classPPAD$.
\end{theorem}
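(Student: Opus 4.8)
The plan is to cast the path followed by Algorithm~\ref{algo:mixed} as an instance of the canonical $\classPPAD$-complete problem \textsc{End-of-Line}, using the reduction of Todd~\cite{Todd76} from complementary-pivot algorithms to parity arguments on directed graphs. The first step is routine bit-complexity bookkeeping: a vertex of the polyhedron $\cP$ of LCP~\eqref{lcp:5} is the unique point at which a fixed set of $k$ of its defining constraints hold with equality (where $k$ is the number of variables), so by Cramer's rule its coordinates are rationals whose numerators and denominators are determinants of submatrices of the constraint system and hence have bit-length polynomial in the input size. Moreover, given the set of tight constraints at a fully-labeled vertex, a single complementary pivot is a polynomial-time linear-algebra step, and Todd's orientation of the edge to be traversed is the sign of a determinant of the same matrices, also polynomial-time. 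Thus the set $S$ of fully-labeled points -- a disjoint union of paths and cycles on the $1$-skeleton of $\cP$, as in Section~\ref{sec:prel-lcp} -- carries a polynomial-time-computable successor/predecessor structure in which every node has in-degree and out-degree at most one.

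The second step specifies the \textsc{End-of-Line} instance: its nodes are the fully-labeled vertices of $\cP$, two being adjacent exactly when they are the endpoints of a common fully-labeled (bounded) edge of $S$; the finite endpoint $S_0$ of the primary ray (the vertex from which Algorithm~\ref{algo:mixed} starts) is a degree-one node with no predecessor and serves as the known source. The degree-one nodes other than $S_0$ are the endpoints of secondary rays and the vertices with $z=0$. What remains is to show that the reduction is \emph{valid}, i.e.\ that following the unique path out of $S_0$ necessarily ends at a competitive allocation; this is precisely what the earlier analysis delivers. Theorem~\ref{thm:no_secondary_rays} shows that this path never reaches a secondary ray, and Lemmas~\ref{lem:no_p_0_goods}, \ref{lem:no_sec_ray_first_step}, \ref{lem:no_sec_ray_first_step_2} and~\ref{lem:no_z_0_solution_2} show it never reaches the degenerate vertex with all $p_j = P$ and $r_i = R$. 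Hence the path terminates at a node with $z=0$, which by Theorem~\ref{thm:one_to_one_2} is a competitive equilibrium; and since such a node always exists under strong connectivity (Theorem~\ref{thm:mixed_converges}), the search problem is total on the class of strongly connected instances, and the map from an instance to the above \textsc{End-of-Line} instance is polynomial-time, which gives membership in $\classPPAD$.

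One wrinkle to dispatch is degeneracy: $\cP$ always carries the inherent degeneracy at $z=0$ noted after Theorem~\ref{thm:equiL_soL_correspondence_2}, and further degeneracies can occur when the input parameters satisfy a polynomial relation, in which case double labels need not be unique and the successor structure is ill-defined. This is handled by the symbolic/numeric perturbation of the Remark following Theorem~\ref{thm:one_to_one_2} (as in~\cite{DuanGM16}), which preserves the solution structure, enlarges the encoding only polynomially, and can be carried out in polynomial time; after perturbation Theorem~\ref{thm:non_degenerate_2} applies along the traced path. I expect this degeneracy handling, together with making Todd's orientation explicitly polynomial-time (so that in-degree and out-degree are genuinely $\le 1$ rather than merely giving an undirected graph of maximum degree two), to be the only part needing care -- the rest is the now-standard packaging of Lemke-type algorithms into $\classPPAD$.
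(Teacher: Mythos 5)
Your overall plan — package Lemke's path as an \textsc{End-of-Line} instance via Todd's orientability result and the standard Cramer's-rule bit-complexity argument — is exactly the route the paper takes (its proof is a one-line citation to Todd~\cite{Todd76} and to Theorem 6.2 of~\cite{GargMSV15}). However, as written your reduction has a genuine gap in how the \textsc{End-of-Line} instance is specified. You take the node set to be \emph{all} fully-labeled vertices of $\cP$, and then note that the degree-one nodes include endpoints of secondary rays. That is fatal for $\classPPAD$ membership: an \textsc{End-of-Line} oracle is only obligated to return \emph{some} unbalanced node other than the designated source, not necessarily the one at the end of the path from $S_0$. Showing, as you do, that the path from $S_0$ avoids secondary rays does not rule out the oracle handing back a secondary-ray endpoint lying on a different path or cycle. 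Moreover, at vertices with $p_j = P$ or $r_i = R$ the nondegeneracy guarantee of Theorem~\ref{thm:non_degenerate_2} no longer applies, so the successor/predecessor circuits you build from Todd's orientation may not even have the in/out-degree-at-most-one property there.

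The fix is to define the \textsc{End-of-Line} instance only on the ``good'' vertices, i.e.\ those with $p_j < P$ for all $j \in M$ and $r_i < R$ for all $i \in N$; this is a polynomial-time-checkable predicate, so the circuits can simply declare any out-of-region vertex a non-node. On this restricted set, Theorem~\ref{thm:non_degenerate_2} gives nondegeneracy (hence unique double labels and a well-defined directed graph of max in/out-degree one); Lemmas~\ref{lem:no_p_0_goods}, \ref{lem:no_sec_ray_first_step}, \ref{lem:no_sec_ray_first_step_2}, and \ref{lem:no_z_0_solution_2} show the good region is closed under a single complementary pivot; and the proof of Theorem~\ref{thm:no_secondary_rays} actually establishes the stronger fact that the \emph{only} ray whose vertex lies in the good region is the primary ray. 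Consequently every unbalanced good vertex other than $S_0$ has $z=0$ and, by Theorem~\ref{thm:one_to_one_2}, is a competitive equilibrium — this is precisely the bookkeeping the paper carries out inside the proof of Theorem~\ref{thm:odd_number}, and it is the missing piece of your reduction. Once you make that restriction, the rest of your argument (Cramer-rule bit lengths, polynomial-time pivots, Todd's sign-of-determinant orientation, and the degeneracy perturbation à la~\cite{DuanGM16}) goes through as you describe.
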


\begin{proof}
The proof of this theorem follows from the Todd's result~\cite{Todd76} on orientability of the path followed by a complementary pivot algorithm, and is exactly same as the proof of Theorem 6.2 in~\cite{GargMSV15}.
\end{proof}

\begin{theorem} \label{thm:odd_number}
	If the fair division instance of a mixed manna under SPLC utilities satisfies strong connectivity, as defined in Section~\ref{sec:Assumptions}, and the input parameters $\mathbf{U}$, $\mathbf{W}$, and $\mathbf{L}$ have no polynomial relationship between them, then there are an odd number of competitive equilibria.
\end{theorem}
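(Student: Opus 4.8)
The plan is to run the standard Lemke parity argument on the solution set of the augmented LCP~\eqref{lcp:5}, but localized to the region $\mathcal{P}^{\circ}=\{\,p_j<P\ \forall j\in M,\ r_i<R\ \forall i\in N\,\}$ where Section~\ref{sec:algo_convergence} gives us control. Write $S$ for the set of all solutions of LCP~\eqref{lcp:5}. By Theorem~\ref{thm:non_degenerate_2} every vertex of $\cP$ with $z>0$ lying in $\mathcal{P}^{\circ}$ is nondegenerate, and the only $z=0$ degeneracy is the inherent one common to all $z=0$ solutions (resolvable by the structure‑preserving perturbation of~\cite{DuanGM16} noted in the Remark). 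Hence, exactly as in~\cite{Eaves76,GargMSV15}, the part of $S$ inside $\mathcal{P}^{\circ}$ together with the primary ray is a disjoint union of simple paths and cycles on the $1$-skeleton of $\cP$, whose only possible path-endpoints are (i) vertices with $z=0$, which are competitive equilibria by Theorem~\ref{thm:one_to_one_2}, (ii) the primary ray, and (iii) secondary rays. Each competitive equilibrium, being a $z=0$ vertex in $\mathcal{P}^{\circ}$ with no double label, has a unique incident edge (obtained by relaxing $z=0$) and is therefore an endpoint of exactly one path; the primary ray is likewise an endpoint of exactly one path.

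The key claim is that no path of $S$ containing a competitive equilibrium or the primary ray ever leaves $\mathcal{P}^{\circ}$, and in particular none has a secondary ray as an endpoint. The point to verify is that the proofs of Lemmas~\ref{lem:no_p_0_goods}, \ref{lem:no_sec_ray_first_step}, \ref{lem:no_sec_ray_first_step_2}, \ref{lem:no_z_0_solution_2} and of Theorem~\ref{thm:no_secondary_rays} never use that the path in question emanates from the primary ray — they only use that the offending vertex (the first one with some $p_j=P$, or a ray) has a predecessor vertex on a path of $S$ at which $\p<P$. So suppose, for contradiction, that some such path reaches a vertex $T$ with $p_j=P$ for some $j$, and take the first such $T$, with predecessor $S^{\star}$ where $\p<P$. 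If $j\in M^+$, Lemma~\ref{lem:no_p_0_goods} forces $\p=P\cdot\1$, $\r=R\cdot\1$, $z=0$ at $T$; if $j\in M^-$, Lemmas~\ref{lem:no_sec_ray_first_step} and~\ref{lem:no_sec_ray_first_step_2} force the same. But then the argument of Lemma~\ref{lem:no_z_0_solution_2} gives $z=0$ already at $S^{\star}$, and by Lemma~\ref{lem:bound_on_p_and_r} also $\r<R$ there, so $S^{\star}$ is itself a competitive equilibrium (Theorem~\ref{thm:one_to_one_2}), hence a path-endpoint — contradicting that the path continues past $S^{\star}$ to $T$. Thus such a path stays in $\mathcal{P}^{\circ}$; and since the proof of Theorem~\ref{thm:no_secondary_rays} shows any ray along which $\p<P$ throughout is the primary ray, the only ray it can reach is the primary ray.

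Finally assemble the count. Each competitive equilibrium lies in a unique component of $S$, which is a path (not a cycle, since the equilibrium is an endpoint), and by the claim both endpoints of that path are competitive equilibria or the primary ray. By Theorem~\ref{thm:mixed_converges} the path incident to the primary ray terminates at a competitive equilibrium; every other such path has two competitive equilibria as endpoints. Since distinct paths have disjoint endpoint sets (each equilibrium has a single incident edge), the set of competitive equilibria — counted up to the scaling fixed by the normalization $P=\max_j|p^{\star}_j|$ — is partitioned into the single equilibrium matched with the primary ray together with a collection of pairs, so its cardinality is odd.

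The main obstacle is the middle step: one must be confident that the convergence lemmas of Section~\ref{sec:algo_convergence}, stated and proved for the path emanating from the primary ray, genuinely use only the local datum ``a vertex with a predecessor at which $\p<P$,'' so that they transfer verbatim to arbitrary subpaths of $S$; and one must dispose of the inherent $z=0$ degeneracy (and the highly degenerate pseudo-vertex $\p=P\cdot\1,\ \r=R\cdot\1$) carefully enough that $S$ really is a clean union of paths and cycles near $z=0$. Everything beyond that is the textbook Lemke parity argument.
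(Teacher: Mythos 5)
Your proof is correct and follows essentially the same route as the paper's: decompose the LCP solution set inside $\{\p<P,\ \r<R\}$ into Lemke paths and cycles, observe (via the convergence lemmas and the no-secondary-ray theorem) that any path touching an equilibrium or the primary ray stays inside that region, and pair off equilibria as path endpoints with the primary ray accounting for the unpaired one. The paper's proof of Theorem~\ref{thm:odd_number} is terser but makes the identical moves; your extra care in pointing out that the convergence lemmas use only the existence of a predecessor vertex with $\p<P$ (rather than the path's origin at the primary ray) is a useful explication of a step the paper leaves implicit, not a different argument.
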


\begin{proof}
	Since the parameters $\mathbf{U}$, $\mathbf{W}$, and $\mathbf{L}$ have no polynomial relationship between them, Theorem \ref{thm:one_to_one_2} shows that all solutions to LCP~\eqref{lcp:5} with $p_j < P$, $\forall j \in M$, $r_i < R$, $\forall i\in N$, and $z =0$ are competitive equilibria. Theorem \ref{thm:mixed_converges} establishes that Algorithm 1 always terminates at one of these solutions. We now argue that all other equilibria are paired up on paths of the polyhedron corresponding to LCP~\eqref{lcp:5}.
	
	 Theorem \ref{thm:non_degenerate_2} shows that every vertex solution of LCP~\eqref{lcp:5} with $p_j < P$, $\forall j\in M$, and $r_i < R$, $\forall i \in N$ is nondegenerate. Therefore, a unique double label exists. Lemmas \ref{lem:no_p_0_goods}, \ref{lem:no_sec_ray_first_step}, \ref{lem:no_sec_ray_first_step_2}, and \ref{lem:no_z_0_solution_2} show that starting from a solution with $p_j < P$, $\forall j\in M$, and $r_i < R$, $\forall i \in N$ and traveling along the edge incident to the double label, we always reach another solution where $p_j < P$, $\forall j\in M$, and $r_i < R$, $\forall i \in N$. Thus, a set of paths connect these solutions. Moreover, Theorem \ref{thm:no_secondary_rays} shows that these paths never reach a secondary ray. Therefore, if one starts from an equilibrium, then the subsequent path of solutions with $p_j < P$, $\forall j\in M$, and $r_i < R$, $\forall i \in N$ must eventually end at a vertex where $z=0$, i.e., another equilibrium. Then, all other equilibria, besides the one found starting from the primary ray, must be paired. Thus, there are an odd number of equilibria.
\end{proof}

\section{Strongly Polynomial Bound}\label{sec:strongly}
Devanur and Kannan~\cite{DevanurK08} offered a strongly polynomial time algorithm for exchange model for goods with SPLC utilities when either the number of goods or the number of agents is constant, which~\cite{GargK15} extended to more general Arrow-Debreu model with production. The approach uses a \emph{cell decomposition} technique and the fact that $n$ hyperplanes in $\bR^d$ form at most $O(n^d)$ nonempty regions, or cells. Garg et al.~\cite{GargMSV15} adapted this argument to bound the number of fully label vertices in their LCP formulation for exchange model for goods under SPLC utilities. We follow their analysis and obtain a strongly polynomial bound on runtime for the case of all bads as well. 

The idea is as follows. Suppose the number of bads, i.e., $m$, is a constant. We decompose $(\p,z)$ space, i.e., $\bR^{m+1}_+$, into cells by a set of polynomially many hyperplanes such that each cell corresponds to unique setting of forced, flexible, and undesirable partitions. Then, we show that each fully labeled vertex maps into a cell by projection. Further, at most two vertices map to any given cell. 
Consider the LCP~\eqref{lcp:5} from Section~\ref{sec:splc} with $M^- = M$ (i.e., $M^+=\emptyset$). That is,
\begin{subequations}\label{lcp:6}
\begin{eqnarray}
\forall i \in N: & -\sum_{j\in M} W_{ij} p_j - \sum_{k,j\in M} f_{ijk} - z \leq -P \sum_{j\in M} W_{ij} & \perp\ \ r_i\label{6eq:budget_4a}\\
\forall j\in M: & \displaystyle\sum_{i,k} f_{ijk} + p_j  \leq P & \perp \ \ p_j\label{6eq:spending_4a_bad}\\
\forall j \in M, \ \forall i,k: & D_{ijk} r_i - p_j -s_{ijk}\leq D_{ijk}R - P & \perp \ \  f_{ijk}\label{6eq:mbb_4a_bad} \\
\forall (i,j,k): & f_{ijk} + L_{ijk}p_j \leq L_{ijk} P & \perp \ \  s_{ijk}\enspace . \label{6eq:segment_4a}
\end{eqnarray} 
\end{subequations}

The main result of this section is the following theorem.

\begin{theorem} \label{thm:run_time_bound}
	If the fair division instance of a mixed manna under SPLC utilities that contains only bads has either constantly many agents or constantly many bads, then Algorithm~\ref{algo:mixed} runs in strongly polynomial time.
\end{theorem}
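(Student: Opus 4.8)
Theorem~\ref{thm:mixed_converges} together with the all‑bads convergence argument referenced in Section~\ref{sec:results} already guarantees that Algorithm~\ref{algo:mixed} follows a finite, vertex‑disjoint path on $\cP$ terminating at an equilibrium, and each pivot is a min‑ratio test plus a basis update and hence strongly polynomial. So the plan is to bound the number of vertices of LCP~\eqref{lcp:6} on that path by a polynomial in the input size, by adapting the cell‑decomposition technique of Devanur--Kannan~\cite{DevanurK08} in the form used by Garg et al.~\cite{GargMSV15} for the all‑goods case.

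First I would treat the case of constantly many bads, $m=O(1)$. I would project every fully labeled vertex $(\p,\f,\r,\s,z)$ of $\cP$ onto $(\p,z)\in\bR^{m+1}_{+}$, a space of constant dimension $d=m+1$, and build a hyperplane arrangement there. The first family, one hyperplane per agent $i$ and per pair of segments $(j,k)\neq(j',k')$ of $i$, is $D_{ijk}(P-p_{j'})=D_{ij'k'}(P-p_j)$; inside a cell of this sub‑arrangement the entire pain‑per‑buck order $ppb_{ijk}=D_{ijk}/(P-p_j)$ of each agent's segments is fixed. The second family, defined within each such cell, records for each agent $i$ and each candidate marginal segment $(i,j^*,k^*)$ whether the net earning requirement $\sum_j W_{ij}(P-p_j)-z$ lies below, at, or above the cumulative earnings $\sum_{(j,k):\,ppb_{ijk}\le ppb_{ij^*k^*}}L_{ijk}(P-p_j)$ of the segments cheaper than $(i,j^*,k^*)$ (the index set being well defined once the first sub‑arrangement's cell is fixed). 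The number $N$ of hyperplanes is polynomial in the input, so the arrangement has $O(N^{d})$ cells, which is polynomial since $d=O(1)$.

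Next I would argue that at most a constant number of fully labeled vertices project into each cell $C$. Inside $C$ the forced, flexible, and undesirable segments of every agent are constant, so for $(\p,z)$ in the interior of $C$ a preimage vertex is uniquely determined: $r_i$ is pinned by the tight constraint $D_{ijk}(R-r_i)=P-p_j$ on agent $i$'s (generically unique) flexible segment, or $r_i=0$ if $i$ earns nothing; each $s_{ijk}$ is then read off from~\eqref{6eq:mbb_4a_bad} on the forced segments and is $0$ elsewhere; $f_{ijk}=L_{ijk}(P-p_j)$ on forced segments, $0$ on undesirable ones, and on the flexible segment $f_{ijk}$ is fixed by the tight budget constraint~\eqref{6eq:budget_4a}. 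Vertices whose projection falls on the boundary of $C$ I would charge to an incident full‑dimensional cell, and a short argument should show each cell absorbs at most two vertices in total. Summing over cells then bounds the number of path vertices, hence the number of pivots, by a polynomial, giving the strongly polynomial running time.

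For constantly many agents, $n=O(1)$, I would run the dual argument: project onto $(\r,z)\in\bR^{n+1}_{+}$, so that inside a cell the quantities $\lambda_i=1/(R-r_i)$ fix, for every agent and segment, the comparison of $ppb_{ijk}$ with $\lambda_i$ and hence the partition structure, while the price of every bad that is flexible for some agent $i$ is pinned by $P-p_j=D_{ijk}(R-r_i)$; the remaining prices and the variables $\f,\s$ are then determined (up to the same $O(1)$ ambiguity) by the tight market‑clearing and budget constraints, so again polynomially many cells each receive $O(1)$ vertices. The hard part will be exactly here and in the second hyperplane family above: setting up the arrangement so that it simultaneously and non‑circularly encodes the $ppb$ order, the location of each agent's marginal segment, and (in the constant‑$n$ case) the agent whose marginal pain‑per‑buck determines each bad's price, and verifying that only polynomially many such combinatorial types occur at vertices — this is where the construction of~\cite{GargMSV15} must be followed carefully, and where the all‑bads change of variables (the shifts by $P$ and $R$) has to be checked not to introduce extra cells.
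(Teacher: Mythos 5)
Your constant-$m$ argument is essentially the paper's: same projection onto $(\p,z)\in\bR^{m+1}_+$, same two hyperplane families (pain-per-buck order comparisons among an agent's segments and cumulative-length-versus-budget comparisons to pin down her flexible partition), and the same ``at most two fully labeled vertices map into any cell, and if two then they are adjacent'' lemma (the paper's Lemma~\ref{lem:runtime_bads_1} makes your ``short argument'' precise: the constraints enforced by a cell cut out a line, whose intersection with $\cP$ is either a fully labeled vertex or a fully labeled edge whose two endpoints are the only vertices mapping there).

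For the constant-$n$ case you have flagged, but not closed, a genuine gap, and it is exactly where the paper does something different. Your plan encodes, inside each cell, the comparison of $ppb_{ijk}=D_{ijk}/(P-p_j)$ against $\lambda_i=1/(R-r_i)$, and then recovers $p_j$ from the flexible segments via $P-p_j=D_{ijk}(R-r_i)$. But the first step needs $p_j$, which is not a coordinate of the projection, and the second step needs the flexible segments, which are what the first step was supposed to determine — so as written the construction is circular, as you yourself note. The paper sidesteps this by re-keying the decomposition \emph{per bad} rather than \emph{per agent}: for each bad $j$ it sorts the segments $(i,j,k)$ over all agents $i$ by the quantity $D_{ijk}(R-r_i)$, which depends only on $\r$ (the hyperplanes are $D_{ijk}(R-r_i)=D_{i'jk'}(R-r_{i'})$ and, for undersold bads, $D_{ijk}(R-r_i)=P$). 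Since $(i,j,k)$ is allocated exactly when $D_{ijk}(R-r_i)\le P-p_j$, this quantity is the ``price at which agent $i$ accepts the segment,'' so bad $j$ is filled in increasing order of it; the flexible equivalence class of $j$ is then read off from cumulative segment lengths (or from the $D_{ijk}(R-r_i)=P$ hyperplanes when undersold), and it in turn determines $p_j$ via $P-p_j=D_{ijk}(R-r_i)$. That per-bad ordering is the missing idea; once it replaces the per-agent $ppb$ ordering, the rest of your sketch (recovering $\p,\f,\s$ from the tight constraints and the $\le 2$ vertices-per-cell count) goes through as in the constant-$m$ case. Also note the paper projects onto $\r\in\bR^{n}_+$, not $(\r,z)\in\bR^{n+1}_+$; including $z$ is harmless but unnecessary here.
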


\paragraph{Constantly Many Bads} We consider $\bR_+^{m+1}$ with coordinates $p_1,\dots,p_m,z$. For each tuple $(i,j,j',k,k')$ where $i\in N$, $j\neq j' \in M$, $k \leq |u_{ij}|$, and $k' \leq |u_{ij'}|$, create a hyperplane $D_{ijk}(P-p_{j'}) - D_{ij'k'}(P-p_{j}) = 0$. This divides $\bR_+^{m+1}$ in cells where each region has one of the signs $\leq$, $=$, or $\geq$. For any agent $i\in N$, the sign of each cell gives a partial order on the pain per buck of her segments. Thus, in any cell, we can sort the segments $(j,k)$ of agent $i$ in increasing order of pain per buck, and create equivalence classes $B_1^i,\dots,B_l^i$ with same pain per buck. Let $B_{<l}^i = B_1^i \cup \dots \cup B_{l-1}^i$, and define $B_{\leq l}^i$ and $B_{\geq l}^i$ similarly. 

Next, we show how to represent the flexible partition. We further subdivide each cell by adding the hyperplanes $\sum_{(j,k)\in B_{<l}^i} L_{ijk} (P-p_{j}) = \sum_j W_{ij} (P-p_{j}) - z$, for each agent $i\in N$, and each of her partitions $B_l^i$. For each subcell, let $B_{l_i}^i$ be the rightmost partition such that $\sum_{(j,k)\in B_{<l}^i} L_{ijk} (P-p_{j}) < \sum_j W_{ij} (P-p_{j}) - z$ for agent $i$. Then, $B_{l_i}^i$ is her flexible partition. Finally, we add the hyperplanes $p_{j} = 0, \ \forall j\in M$ and $z = 0$, so that we only consider the cells where $p_{j} \geq 0$, and $z\geq 0$. Since every vertex on the path followed by Algorithm~\ref{algo:mixed} satisfies $p_j < P$, we only consider the cells where $p_j < P$. Observe that any vertex $(\y,z)$ traced by our algorithm maps to a cell by projecting it onto $(\p,z)$ space. 

\begin{lemma} \label{lem:runtime_bads_1}
	Let $\cP$ be the polyhedron corresponding to LCP~\eqref{lcp:6}. Then, at most two fully labeled vertices of $\cP$ map onto any given cell. Further, if two vertex map to the same cell, then they are adjacent.
\end{lemma}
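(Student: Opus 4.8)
The plan is to adapt the cell-decomposition argument of Devanur--Kannan~\cite{DevanurK08}, in the form used by Garg et al.~\cite{GargMSV15} for an LCP polyhedron: fix a cell $C$ of the hyperplane arrangement, exploit the combinatorial data that $C$ carries to pin almost every coordinate of a fully labeled vertex to an affine function of $(\p,z)$, and then argue by a dimension count that the vertices of $\cP$ projecting into $C$ all lie on a single edge (or extreme ray) of $\cP$, whence there are at most two of them and, if two, they are its endpoints and so adjacent.

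First I would record what $C$ prescribes. By construction $C$ fixes, for every agent $i$, the split of her segments into forced $F_i$, flexible $X_i$, and undesirable $U_i$, and for a full-dimensional cell no defining hyperplane is satisfied with equality, so each $X_i$ is a singleton and the flexible pain-per-buck $D_{ijk}/(P-p_j)$ is a single affine-rational quantity in $\p$ on $C$. Then, for a fully labeled vertex $v=(\p,\f,\r,\s,z)$ whose projection lies in (the relative interior of) $C$, I would show every coordinate other than $(\p,z)$ and the flexible spendings $\{f_{ijk}:(j,k)\in X_i\}_i$ is an affine function of $(\p,z)$ determined by $C$: on a flexible segment, complementarity $(\ref{6eq:mbb_4a_bad}')$ together with $s_{ijk}=0$ (its length constraint $(\ref{6eq:segment_4a})$ is slack since $f_{ijk}<L_{ijk}(P-p_j)$) gives $D_{ijk}(R-r_i)=P-p_j$, pinning $r_i$; on a forced segment $(\ref{6eq:segment_4a})$ is tight, so $f_{ijk}=L_{ijk}(P-p_j)$, and then $(\ref{6eq:mbb_4a_bad}')$ gives $s_{ijk}=(P-p_j)-D_{ijk}(R-r_i)$; on an undesirable segment $f_{ijk}=s_{ijk}=0$. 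The degenerate possibilities ($r_i=0$, some $p_j=0$, $z=0$, or a flexible $f_{ijk}$ sitting at an endpoint value) all push the projection onto $\partial C$, so they are either excluded by the genericity of Theorem~\ref{thm:non_degenerate_2} or absorbed into the (polynomially many) lower-dimensional cells treated the same way.

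Next I would substitute these expressions into the only constraints of LCP~\eqref{lcp:6} that can still be tight, namely the bads' market-clearing constraints $(\ref{6eq:spending_4a_bad})$ and the agents' budget constraints $(\ref{6eq:budget_4a})$, leaving a linear system purely in the essential unknowns $(\p,z,\{f_{ijk}:(j,k)\in X_i\}_i)$. A dimension count then has to show that the solution locus of this system inside $\cP$ is at most one-dimensional: there are $m+1$ price/$z$ variables, $n$ flexible spending variables, at most $m+n$ equalities among $(\ref{6eq:spending_4a_bad})$ and $(\ref{6eq:budget_4a})$ with one linear dependence among them (summing all market-clearing equalities and all budget equalities agree only up to a $z$-term, as in Lemma~\ref{lem:no_z_0_solution_2}), and one further tight constraint forced at a vertex — the double label when $z>0$, or a nonnegativity facet — which brings the effective dimension down to $\le 1$. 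Hence $v$ lies on a common $1$-face $e$ of $\cP$ cut out by $C$; a $1$-face has at most two vertices, so at most two fully labeled vertices project into $C$, and two such are the endpoints of $e$, hence adjacent.

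The main obstacle is this dimension count: verifying that the data attached to $C$ kills all but exactly one effective degree of freedom, \emph{uniformly over all combinatorial types that actually occur} — in particular tracking how many budget constraints are forced tight when some $r_i=0$, how the single double label at a $z>0$ vertex interacts with the already-tight constraints, and ensuring no higher-dimensional sliver survives on which infinitely many, or more than two, vertices could sit. A secondary nuisance is handling projections that land on lower-dimensional cells ($p_j=0$, $z=0$, or an equivalence-class boundary) without a separate ad hoc argument; the intention is to fold them into the same accounting, using that there are only polynomially many such cells and the same affine-reduction applies.
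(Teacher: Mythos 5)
Your proposal takes essentially the same route as the paper: each cell of the arrangement fixes, for every complementarity pair of LCP~\eqref{lcp:6}, which of the two alternatives holds (variable zero vs.\ constraint tight), yielding one linear equality per pair; the locus cut out is (generically) a line; its intersection with $\cP$ is a $0$- or $1$-face, hence supports at most two vertices, and two vertices on a common edge are adjacent. The paper makes this count directly on all $m+n+2C$ complementarity pairs without first eliminating variables, which is somewhat cleaner but is the same idea.

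However, your dimension count contains an error that, as written, would actually defeat the lemma. You claim ``one linear dependence'' among the budget equalities \eqref{6eq:budget_4a} and the market-clearing equalities \eqref{6eq:spending_4a_bad}, arguing as in Lemma~\ref{lem:no_z_0_solution_2} that their sums ``agree only up to a $z$-term.'' But precisely because they differ by the term $-nz$, this is \emph{not} a dependence when $z>0$: summing all $n+m$ of those equalities yields $nz=0$, so in any cell whose interior has $z>0$, the arrangement hyperplane $\sum_j W_{ij}(P-p_j)=z$ (the $l=1$ case of the subdividing hyperplanes) forces at least one agent to lie on the $r_i=0$ side, making at least one budget constraint slack and restoring full rank. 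With that correction your own count already gives $m+1+n$ residual unknowns and $m+n$ independent equalities, i.e., a one-dimensional locus --- no extra constraint is needed. And the extra constraint you reach for cannot be used here anyway: invoking ``the double label when $z>0$'' imports a property of a \emph{specific} vertex, not of the cell, so it cannot be used to bound the number of vertices mapping into the cell --- which is exactly what the lemma asks. Once the spurious dependence is dropped, the remainder of your argument (that vertices projecting into the cell lie on a common face of $\cP$ of dimension $\le 1$, hence are at most two and, if two, adjacent) matches the paper's Lemma~\ref{lem:runtime_bads_1}.
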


\begin{proof}
	Each fully labeled vertex and each cell correspond to their own settings of forced, flexible, and undesirable partitions for each agent. Therefore, if a vertex maps to a given cell, then these two setting must match. If a vertex $S = (\p,\f,\r,\s,z)$ maps to a certain cell, then the following inequalities are satisfied
	\begin{itemize}
		\item If $p_j > 0$, then $\sum_{i,k} f_{ijk} = P-p_{j}$, else $p_{j} =0$ at $S$.
		\item If $\sum_j W_{ij}(P-p_j) -z \geq 0$, then $\sum_j W_{ij}(P-p_{j}) - \sum_{j,k} f_{ijk} -z = 0$, else $r_i = 0$ at $S$.
		\item If $D_{ij'k'}(P-p_j) - D_{ijk}(P-p_{j'})\ge 0$ for $(j',k')\in B_{l_i}^i$, then $-D_{ijk}(R-r_i) + (P-p_{j})-s_{ijk} = 0$, else $f_{ijk} = 0$ at $S$.
		\item If $D_{ij'k'}(P-p_j) - D_{ijk}(P-p_{j'})> 0$ for $(j',k')\in B_{l_i}^i$, then $f_{ijk} = L_{ijk} (P-p_{j})$, else $s_{ijk} = 0$.
	\end{itemize}
	
	In each of the complementarity conditions above, one inequality is enforced. Therefore, their intersection forms a line. If this line does not intersect $\cP$, then no vertex maps to this cell. If it does, then intersection is either a fully labeled vertex, or a fully labeled edge on which the solution is fully labeled along the entire edge. In the former case, only the vertex $S$ maps to the cell. In the latter, only the endpoints of the fully labeled edge map to the cell. Clearly, these vertices are adjacent. 
\end{proof}

Notice that the total number of hyperplanes we created is strongly polynomial. Therefore, this creates a strongly polynomial number of cells as well. 

\paragraph{Constantly Many Agents} 
In this case, we consider the space $\bR_+^{n}$ using to the coordinates $\r$. Then, we create a partitioning of the segments corresponding to the bads. Besides this change, the remaining analysis is similar. 

Every fully labeled vertex $S = (\p,\f,\r, \boldsymbol{s},z)$ maps to $\bR_+^n$ by taking the projection on $\r$. Given a fully labeled vertex, for each bad $j$ sort all of its segments $(i,j,k)$ by increasing order of $D_{ijk} (R-r_i)$ and partition them into equivalence classes $B^j_1, \dots, B^j_l$. Observe that, at this vertex, bad $j$ gets allocated in order of these partitions. If segment $(i,j,k) \in B^j_l$ is allocated, i.e., $f_{ijk} >0$, then all the segments in partitions before $B^j_l$ must also be allocated. We call the last allocated partition the flexible segment, all partitions before it forced partitions, and all partitions after it the undesirable partitions of bad $j$. Suppose that segment $(i,j,k)$ is in the flexible partition of bad $j$. Then, $D_{ijk}(R - r_i) = P-p_{j}$, otherwise all segments in this partition are either undesirable or all of them are forced for the corresponding agents. Therefore, the flexible partition defines the price of each bad.

Now we decompose the space $\bR_+^n$ into cells in a way that captures the segment configuration of each bad. For each tuple $(i,i',j,k,k')$ where $i\neq i' \in N$, $j\in M$, $k\leq |u_{ij}|$, and $k' \leq |u_{i'j}|$, we introduce the hyperplane $D_{ijk}(R-r_i) - D_{i'jk'} (R-r_{i'}) = 0$. In any cell, the signs of these hyperplanes gives a partial order of segments $(i,k)$ for each agent $i$ based on $D_{ijk} (R-r_i)$. Sort the segments of each bad $j$ in increasing order of $D_{ijk}(R - r_i)$, and partition them into equality classes $B^j_1, \dots, B^j_l$.

Next, we capture the flexible partition of each bad. If the bad is fully sold, then simply sum the lengths of the segments starting from the first until it becomes 1. An undersold bad requires more work. If a bad is undersold, then $p_{j} =0$. Thus, segments of its flexible partition satisfy $D_{ijk} (R-r_i) = P$. To capture this we add the hyperplanes $D_{ijk} (R-r_i) -P = 0$, for all $(i,j,k)$. Observe that flexible partition of a bad is either: the partition when it becomes fully sold, or where $D_{ijk} (R-r_i) = P$. This can easily be deduced from the signs of the hyperplanes. Finally, we add the hyperplanes $r_i = 0, \forall i$ and consider only those cells for which $0 \le r_i < R, \forall i$. 

From the above discussion it is clear that the fully labeled vertices which map to a given cell may be worked out similarly to Lemma \ref{lem:runtime_bads_1}. Further, we obtain one equality for each complementarity condition, since each cell captures complete segment configuration, status of bads, and agents of the instance. 

\begin{lemma} \label{lem:runtime_bads_2}
	Let $\cP$ be the polyhedron corresponding to LCP~\eqref{lcp:6}. Then, at most two fully labeled vertices of $\cP$ map onto any given cell. Further, if two vertex map to the same cell, then they are adjacent.
\end{lemma}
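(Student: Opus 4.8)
The plan is to reuse the argument of Lemma~\ref{lem:runtime_bads_1} almost verbatim, once the $\r$--space decomposition described above is in place; the only genuinely new content is checking that this decomposition pins down, cell by cell, which side of each complementarity pair of LCP~\eqref{lcp:6} is active at a fully labeled vertex that projects into the cell. So the first thing I would do is fix a cell $C$ of $\bR_+^n$ and record exactly what data it carries. The hyperplanes $D_{ijk}(R-r_i)-D_{i'jk'}(R-r_{i'})=0$ fix, for every bad $j$, the total order of its segments and hence the equivalence classes $B^j_1,\dots,B^j_l$; the hyperplanes $D_{ijk}(R-r_i)-P=0$ fix the sign of $D_{ijk}(R-r_i)-P$ for every segment, which together with the cumulative lengths of $B^j_1,B^j_2,\dots$ (a purely combinatorial quantity once the order is fixed) determines whether $j$ is fully sold or undersold in the sense of the text --- so that its flexible partition is the one where the cumulative length first exceeds the unit supply, or the one where $D_{ijk}(R-r_i)=P$ --- and therefore the forced/flexible/undesirable status of each of its segments and whether $p_j>0$ or $p_j=0$; finally the hyperplanes $r_i=0$ fix whether $r_i=0$ or the budget constraint \eqref{6eq:budget_4a} is tight.

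Next I would observe, exactly as in Lemma~\ref{lem:runtime_bads_1}, that this is precisely the combinatorial data a fully labeled vertex of $\cP$ carries: if $S=(\p,\f,\r,\s,z)$ is fully labeled and projects into $C$, then the forced/flexible/undesirable configuration read off from $S$ and from $C$ must coincide, because both are determined by the signs of the same hyperplanes evaluated at $\r$ (plus the combinatorics of the lengths). Hence from $C$ one reads off, for $S$, which alternative of every complementarity pair holds: (\ref{6eq:spending_4a_bad}') gives $\sum_{i,k}f_{ijk}=P-p_j$ or $p_j=0$; (\ref{6eq:budget_4a}') gives \eqref{6eq:budget_4a} tight or $r_i=0$; (\ref{6eq:mbb_4a_bad}') gives \eqref{6eq:mbb_4a_bad} tight when $(i,j,k)$ is forced or flexible and $f_{ijk}=0$ when it is undesirable; and (\ref{6eq:segment_4a}') gives $f_{ijk}=L_{ijk}(P-p_j)$ when $(i,j,k)$ is strictly forced and $s_{ijk}=0$ otherwise.

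With these equalities in hand the counting is the same linear algebra as in Lemma~\ref{lem:runtime_bads_1}. There is exactly one enforced linear equality per complementarity pair, i.e.\ one per variable of LCP~\eqref{lcp:6} other than $z$, so their common solution set is an affine line $\ell$ in the ambient space of LCP~\eqref{lcp:6}. If $\ell\cap\cP=\emptyset$, no fully labeled vertex maps to $C$. Otherwise $\ell\cap\cP$ is either a single vertex --- in which case $S$ is the unique fully labeled vertex mapping to $C$ --- or an edge of $\cP$ along which all the enforced equalities, and hence the fully labeled property, hold throughout; then only the two endpoints of that edge are vertices and they are adjacent. Either way at most two fully labeled vertices map to $C$, and if exactly two then they are adjacent, which is the claim. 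Combined with the fact that the decomposition uses only a strongly polynomial number of hyperplanes in $\bR^n$ with $n$ constant --- hence a strongly polynomial number of cells --- this is what yields the bound in Theorem~\ref{thm:run_time_bound}.

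The step I expect to be the main obstacle is the first one: verifying carefully that the $\r$--space decomposition really does capture the segment configuration of every bad, in particular of an \emph{undersold} bad, where the flexible partition is governed by $D_{ijk}(R-r_i)=P$ rather than by the cumulative-length threshold, and showing that ``fully sold versus undersold'' is itself a function of the cell rather than extra information carried by the vertex. Once that bookkeeping is settled, matching vertex configurations to cell configurations and running the ``line meets polyhedron'' dichotomy is routine and identical to the proof of Lemma~\ref{lem:runtime_bads_1}.
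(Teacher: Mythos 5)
Your proposal is correct and takes essentially the same approach the paper intends: the paper only gestures at the argument ("may be worked out similarly to Lemma~\ref{lem:runtime_bads_1}"), and you have filled in exactly the bookkeeping the paper elides, namely that the $\r$-space cell, together with the fixed segment lengths, pins down for each bad which partition is flexible (via the cumulative-length threshold or the $D_{ijk}(R-r_i)=P$ hyperplane), hence one enforced equality per complementarity pair, after which the line-meets-polyhedron dichotomy from Lemma~\ref{lem:runtime_bads_1} applies verbatim.
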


Clearly, our algorithm follows a systematic path rather than a brute force enumeration of every cell configuration like in~\cite{BranzeiS19,GargM20}. Theorem~\ref{thm:run_time_bound} follows from the above discussion since the number of hyperplanes is strongly polynomial in both cases. 

\begin{remark}
	It is not clear how to show a strongly polynomial bound for the case of mixed manna when the number of agents (or items) is a constant. This is due to the additional variable $z$ appearing in~\eqref{eq:mbb_4a_good} (constraint to force an optimal bundle). This makes the bpb condition unusable as a segment configuration at an arbitrary fully-labeled vertex.
\end{remark}

\section{$\classPPAD$-Hardness of all Bads with SPLC utilities} \label{sec:hardness}
In this section, we show that finding $1/poly(n)$-approximate equilibrium for bads under SPLC utility functions is $\classPPAD$-hard. Our proof relies on reducing the problem of finding an approximate Nash equilibrium, which is known to be $\classPPAD$-hard~\cite{DaskalakisGP09,ChenDT09}, to finding a competitive equilibrium. Our reduction is motivated from the construction of~\cite{ChenDDT09}, which shows a similar result for the goods case. The main challenge in extending the reduction to the bads case is that~\cite{ChenDDT09} crucially uses utility values of $0$  to prevent allocating certain items to agents. This translates to the disutility (cost) of $\infty$ in case of bads. However, this breaks our proof of existence of an equilibrium (since we require finite utility functions), and therefore can not be used. In fact, an equilibrium may not even exist if we allow $\infty$ disutility values~\cite{ChaudhuryGMM20}. Thus, in the bads case every agent can possibly be assigned any bad and utility functions must be designed in such a way that only certain desired allocations happen at equilibrium. 

We first show the hardness for the exchange model that we later extend to the Fisher (and CEEI) model. Let us start by defining approximate equilibrium of exchange setting.

\paragraph{Approximate Exchange Equilibrium} Recall that an exchange equilibrium $(\p^*,\x^*)$ satisfies the following two conditions:
\begin{enumerate}
	\item[C1.] Optimal bundle: For each $i \in N$, $\x^*_i \in \argmin\{ f_i(\x)\ \text{ s.t. }\ \x\ge 0;\ \x^*_i\cdot \p^* \ge \w_i \cdot \p^*\}$.
	\item[C2.] Supply-Demand: For each $j \in M$, $\sum_{i\in N} x^*_{ij} = \sum_{i\in N} W_{ij}$. 
\end{enumerate}
The equilibrium is said to be $\epsilon$-approximate, for $\epsilon>0$, if the optimal bundle condition holds as above and the demand meets supply approximately:
\begin{itemize}
	\item[C2'.] $\epsilon$-Supply-Demand: For each $j \in M$, $|\sum_{i\in N} x^*_{ij} - \sum_{i\in N} W_{ij}| \le \epsilon (\sum_{i\in N} W_{ij})$.
\end{itemize}

In our proof of hardness, we reduce finding approximate well-supported Nash equilibrium of a two-player game to finding approximate competitive equilibrium, defined as follows.

\paragraph{2-Nash} A two-player game, where each player has $n$ moves to chose from, can be represented by two $n\times n$ payoff matrices $(R,C)$. This is because in a play one of the player can be thought of choosing a row (row-player) and the other player choosing a column (column-player), and the corresponding entry in $R$ and $C$ are their payoffs respectively. The Nash equilibrium problem can be states as follows: Given payoff matrices $R,C \in [0,1]^{n \times n}$ and $\epsilon>0$, find mixed-strategy (probability distribution) $\balpha$ for the row-player and $\bbeta$ for the column-player such that, $\balpha,\bbeta \in \{\zz\ge 0| \sum_{s \in [n]} z_s=1\}$ and,
\begin{equation}\label{eq:ne}
\forall s,t\in [n],\ \ \ \ (R\bbeta)_s \le  (R\bbeta)_t -\epsilon \Rightarrow \alpha_s=0\ \ \mbox{ and }\ \ (\balpha^TC)_s \le  (\balpha^TC)_t -\epsilon \Rightarrow \beta_s=0.
\end{equation}

A remarkable series of results in 2006 settled the complexity of finding (approximate) Nash equilibrium in finite games \cite{DaskalakisGP09,ChenDT09}. In particular, \cite{ChenDT09} show that for any $c>0$, finding $1/n^c$-approximate well-supported Nash equilibrium in two player games is $\classPPAD$-hard. We provide a reduction from finding a $1/n$-approximate Nash equilibrium to finding a $1/poly(n)$-approximate equilibrium in the exchange setting for bads under SPLC cost functions to show Theorem \ref{thm:hardness}. And then reduce exchange to CEEI to show Theorem \ref{thm:ceei-hardness} below.

\begin{theorem}\label{thm:hardness}
	Finding a $\frac{1}{poly(n)}$-approximate equilibrium in a bads exchange setting with SPLC cost is $\classPPAD$-hard, where $m$ is the number of bads. This holds even if every utility function has at most two segments with $O(1)$ costs. 
	\end{theorem}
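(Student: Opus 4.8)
The plan is to adapt the $\classPPAD$-hardness reduction of~\cite{ChenDDT09} for Arrow--Debreu markets with additively separable PLC \emph{goods} to the bads exchange setting, turning the search for a $\tfrac1n$-approximate well-supported Nash equilibrium of a two-player game $(R,C)$ with $R,C\in[0,1]^{n\times n}$ (recall~\eqref{eq:ne}) into the search for a $\tfrac{1}{poly(n)}$-approximate equilibrium of a suitable bads instance. As a preprocessing step I first rescale the payoffs by the affine map $R_{st}\mapsto \tfrac12(1+R_{st})\in[\tfrac12,1]$ and symmetrically for $C$; this changes the well-supported gap by only a constant factor, so it suffices to $\tfrac{c}{n}$-approximately solve the rescaled game. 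The rescaling addresses exactly the obstacle flagged in the text: the goods construction uses utility value $0$ to forbid allocating an item to an agent, whose literal analogue for bads is disutility $\infty$, which destroys the existence guarantee; by pushing all payoff-derived slopes into a fixed interval bounded away from $0$ and from $\infty$, one can emulate that ``forbidding'' mechanism with finite, $O(1)$ costs.

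The market has $O(n)$ bads and $O(n)$ agents. Designated ``row bads'' $g_1,\dots,g_n$ and ``column bads'' $h_1,\dots,h_n$ carry the encoding: at an equilibrium their (negative) prices, normalized within the two groups, will read off $\balpha$ and $\bbeta$. The remaining bads and agents form price-regulating and comparison gadgets inherited from~\cite{ChenDDT09}, whose job is to confine the two price groups to scaled simplices and to implement the best-response test; only their \emph{cost functions} need to be re-engineered. Each agent's cost function on each bad is given two segments: a short first segment of length $\Theta(1/n)$ with a moderate $O(1)$ slope, followed by a long second segment (of length slightly exceeding $1$, since there is a unit of each item) with a strictly larger $O(1)$ slope. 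The payoff entries enter only through the first-segment slopes --- e.g.\ the slope of the gadget agent that reads row strategy $s$ against column bad $h_t$ is a decreasing affine function of $R_{st}$, all other slopes being fixed constants. This two-segment shape is the substitute for the $0$/positive dichotomy of the goods construction: the cheap first-segment capacity of each bad is calibrated to be exactly consumed by the agents meant to do it, so that any other agent tempted by that bad would have to buy its steep second segment, which at equilibrium prices is never part of a minimum-pain-per-buck (flexible) bundle.

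Correctness splits into two directions. \emph{Forward:} given a Nash equilibrium $(\balpha,\bbeta)$ of the rescaled game, set the row/column prices proportional to $\balpha,\bbeta$, the gadget prices to their forced values, and allocate each agent's first segments to realize the induced demand; a direct check --- the first-segment pain-per-buck $\tfrac{2-R_{st}}{|p_{h_t}|}$ is minimized over $t$ with $\beta_t>0$ precisely when $s$ is a best response, and dually for $C$ --- shows this is an \emph{exact} equilibrium, consistent with the fact that a bads-only instance always admits one. \emph{Reverse:} given a $\tfrac1{poly(n)}$-approximate equilibrium $(\p,\x)$, set $\alpha_s=|p_{g_s}|/\sum_{s'}|p_{g_{s'}}|$ and $\beta_t=|p_{h_t}|/\sum_{t'}|p_{h_{t'}}|$; the gadgets guarantee these are near-distributions whose total price mass is bounded away from $0$, and complementarity (\ref{eq:mbb_4a_bad}') at the first segments forces, whenever $\alpha_s>0$, that $s$ is an $O(\tfrac1n)$-approximate best response of the row player against $\bbeta$ (and dually), i.e.\ $(\balpha,\bbeta)$ is a $\tfrac1n$-approximate well-supported Nash equilibrium.

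The technical heart, and the step I expect to be the main obstacle, is making the reverse direction robust in the absence of $\infty$-costs: one must \emph{prove} --- not assume, as the goods version effectively does --- that at any $\tfrac1{poly(n)}$-approximate market equilibrium no agent's flexible partition drifts onto an ``unwanted'' segment (a steep second segment, a gadget bad, or a row bad for a row-reading agent). This requires a simultaneous choice of the two segment lengths, the two slope windows, the gadget parameters, and the market accuracy $\epsilon=1/poly(n)$ such that: (i) the cheap capacity of every bad is used up by exactly the intended agents up to $O(\epsilon)$ slack; (ii) the pain-per-buck gap between an agent's flexible class and every unwanted segment is $\Omega(1)$, hence not crossed by $O(\epsilon)$ perturbations; and (iii) the residual $O(\epsilon)$ market infeasibility inflates to at most $O(1/n)$ in the Nash best-response inequalities. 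Pushing this chain of estimates through --- while certifying that every cost function still has only two segments and $O(1)$ slopes --- is the bulk of the proof. Theorem~\ref{thm:ceei-hardness} then follows by the routine device of replacing the exchange endowments with equal CEEI budgets and adding one extra normalizing gadget, at the cost of one additional segment per function.
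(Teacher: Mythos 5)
Your proposal correctly identifies the high-level template (reduce from $\tfrac1n$-approximate well-supported Nash, encode $\balpha,\bbeta$ in the prices of two designated groups of bads, replace the ``disutility $\infty$'' mechanism with a finite but large constant slope), and that much matches the paper's strategy, where the expensive slope is $H=10$ and the moderate slopes are drawn from $\{1/3,1,2,3\}$. But the central encoding step you propose is wrong, or at best unsubstantiated, and it is a different mechanism from the paper's. You write that ``the payoff entries enter only through the first-segment slopes,'' with the gadget agent's pain-per-buck on column bad $h_t$ being $\tfrac{2-R_{st}}{|p_{h_t}|}$, and you claim this is ``minimized over $t$ with $\beta_t>0$ precisely when $s$ is a best response.'' That is not the well-supported Nash condition: being a best response is the \emph{weighted-sum} condition $s\in\arg\max_{s'}\sum_t R_{s't}\beta_t$, whereas a minimum-pain-per-buck bundle selects $\arg\min_t (2-R_{st})/|p_{h_t}|$, a single-ratio argmin that aggregates nothing over $t$. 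No choice of $O(1)$ slope windows and $\Theta(1/n)$ segment lengths turns an argmin of individual ratios into a comparison of inner products, so the reverse direction of your reduction would not yield a Nash equilibrium.

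The paper avoids this by putting the payoff entries into \emph{endowments and first-segment lengths}, not slopes, which remain fixed at $1/3$, $1$, $2$, $3$, or $H$. For the paired row agents $i=a_{s,t,R}$ and $i'=a_{t,s,R}$, the endowments of column bads are $\propto (R_{tk}-R_{sk})^{\pm}/n^6$ and the first-segment lengths mirror them; the agent is then \emph{forced} (because those first segments have slope $1/3$, far below everything else at the established price ratio $\le 2$) to buy exactly those segments, and the budget arithmetic leaves her $\frac{p^*_s}{n^4}+\frac{1}{n^6}\bigl[(R\vv)_s-(R\vv)_t\bigr]$ to spend on the row bad $s$ (display \eqref{eq:NR_money_left}). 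When $(R\vv)_s-(R\vv)_t<-1/n$, she cannot absorb her own endowment of bad $s$, which (via the price-regulating agents $N_{PR}$ and Observation~\ref{ob.2}) forces $p^*_s$ to its extreme value, i.e.\ $\alpha_s=0$. This budget-level aggregation over $t$ is the step your slope-based encoding is missing. Two smaller points: the paper uses $\Theta(n^2)$ agents (one price-regulating agent per ordered pair of bads, and $\Theta(n^2)$ row/column agents), not $O(n)$ as you state, though that only affects the polynomial, not the validity; and the affine rescaling of payoffs to $[\tfrac12,1]$ is unnecessary here because the positive/negative parts $(\cdot)^+$ and the pairing $(i,i')$ already handle the sign of the differences $R_{tk}-R_{sk}$.
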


\begin{theorem}\label{thm:ceei-hardness}
	Finding a $\frac{1}{poly(n)}$-approximate equilibrium in a bads Fisher setting with equal budgets (CEEI) under SPLC costs is $\classPPAD$-hard, where $m$ is the number of bads. This holds even if every utility function has at most three segments with $O(1)$ costs.
\end{theorem}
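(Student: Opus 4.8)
The plan is to treat Theorem~\ref{thm:hardness} as a black box and give a polynomial-time reduction converting a bads \emph{exchange} instance $I$ (with at most two cost segments per agent--bad pair, all slopes $O(1)$) into a bads \emph{CEEI} instance $I'$, such that any $\tfrac{1}{\mathrm{poly}(n)}$-approximate CEEI equilibrium of $I'$ can be turned, in polynomial time, into a $\tfrac{1}{\mathrm{poly}(n)}$-approximate equilibrium of $I$. Since CEEI is by definition the Fisher model with all entitlements $\eta_i$ equal, this yields Theorem~\ref{thm:ceei-hardness}; the transformation must cost only one extra segment per cost function (three instead of two), with all new slopes still $O(1)$.

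First I would \emph{normalize} $I$: using the explicit structure produced by the $2$-Nash reduction behind Theorem~\ref{thm:hardness} (or, failing that, by rescaling cost segments, which preserves ``$\le 2$ segments, $O(1)$ slopes''), arrange that all equilibrium prices of $I$ lie in a known range $[1,\mathrm{poly}(n)]$ in absolute value and that endowments take a simple normalized form. Then I would build $I'$ by \emph{adding auxiliary bads that encode the endowments}: keep the agents $N$ and the bads $M$, give every agent the equal entitlement $1/|N|$ over all bads of $I'$, and introduce for each agent $i$ a private bad $\bar b_i$ that is cheap for $i$ on a first segment of length exceeding the total supply and prohibitively expensive for $i$ afterward and for every $i'\neq i$ throughout. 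At the same time, append to each original cost function $u_{ij}$, $j\in M$, one ``cap'' segment of unit length with a large but $O(1)$ slope, so that no agent can be forced to absorb more than a bounded amount of any single original bad while behaviour in the regime of interest is unchanged; this is the step that raises the segment count from two to three.

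The heart of the proof is the \emph{equilibrium correspondence}. At any $\tfrac{1}{\mathrm{poly}(n)}$-approximate CEEI equilibrium of $I'$, the ``cheap-for-one, prohibitive-for-all-others'' gadget forces $\bar b_i$ to be consumed (up to a $\tfrac{1}{\mathrm{poly}(n)}$ fraction) only by agent $i$; market clearing then pins the auxiliary prices $\bar p_i$ to a prescribed linear function of the original prices, and a suitable choice of the segment lengths and cheap slopes makes agent $i$'s \emph{residual} earning requirement from $M$ equal, up to a common positive rescaling and an additive $O(\tfrac{1}{\mathrm{poly}(n)})$ term, to her exchange earning requirement $\w_i\cdot(-\p)$ in $I$. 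Restricting the allocation and prices of $I'$ to $M$ then gives an $O(\tfrac{1}{\mathrm{poly}(n)})$-approximate equilibrium of $I$. I would close with the bookkeeping that the $O(1)$ ``prohibitive''/``cheap'' gaps together with the polynomial price bounds from normalization propagate the CEEI slack to an exchange slack that is still $\tfrac{1}{\mathrm{poly}(n)}$ (rather than exponentially small), and that $I'$ has $\le 3$ segments per pair with $O(1)$ slopes.

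I expect the main obstacle to be controlling the auxiliary-bad prices, which are themselves equilibrium unknowns. One must show that for \emph{every} approximate equilibrium of $I'$ the gadget pins each $\bar p_i$ within $\tfrac{1}{\mathrm{poly}(n)}$ of its intended value, that each $\bar p_i$ stays negative and polynomially bounded, that no degenerate equilibrium arises in which some $\bar b_i$ is split among several agents or priced near zero (a phenomenon the LCP analysis of the earlier sections shows is a real danger), and that the accumulated error along this chain remains $\tfrac{1}{\mathrm{poly}(n)}$. As in Theorem~\ref{thm:hardness}, making all the $O(1)$ gap constants compatible with a $\tfrac{1}{\mathrm{poly}(n)}$ target, rather than letting the error blow up, will require a careful quantitative choice of the gadget parameters.
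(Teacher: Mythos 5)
Your proposal diverges from the paper in a way that matters. The paper does not introduce any auxiliary bads at all. Instead it exploits a much simpler observation: in the exchange instance built for Theorem~\ref{thm:hardness}, the largest per-agent endowment of any bad is $1/n$ (held by the price-regulating agents). The paper simply \emph{tops up} every endowment to $W_{ij}=1/n$ for every pair $(i,j)$ --- making the instance a CEEI instance by construction --- and, to each cost function $u_{ij}$, prepends a brand-new first segment with disutility $0$ and length $1/n - W_{ij}$, exactly the amount of extra endowment agent $i$ was given. Because that segment is free, it is always consumed in full by agent $i$ at any prices, the earnings and spendings from the top-up cancel exactly, and the residual budget and residual optimal bundle are identical to those of the original exchange instance. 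No new price variables appear, so there is nothing to pin down, and the segment count rises from two to three for the same reason you anticipate. The quantitative step is then purely bookkeeping (the $6n$-factor dilution of the approximation parameter), not a fixed-point argument.

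Your construction, by contrast, introduces a private auxiliary bad $\bar b_i$ per agent and relies on a ``cheap-for-$i$, prohibitive-for-others'' gadget to route it, together with cap segments on the original bads. This creates exactly the circularity you yourself flag as the main obstacle: the residual budget of agent $i$ in the CEEI instance depends on $\bar p_i$, which in turn depends on \emph{all} the other $\bar p_{i'}$ through the common CEEI entitlement $\frac1n\sum_{i'}\bar p_{i'}$, so ``the gadget pins $\bar p_i$ to a prescribed linear function of the original prices'' is really a coupled fixed-point system in all the $\bar p$'s simultaneously. You have not shown that this system has a solution respected by the gadget at every approximate equilibrium, that the $\bar p_i$'s stay negative and polynomially bounded, or that the resulting residual budget matches $\w_i\cdot\p$ up to a common rescaling rather than an agent-dependent one. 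With only $O(1)$ slopes and three segments, it is not clear the gadget has enough room to enforce all of this uniformly, and the error propagation you defer would have to be carried out through the coupled system. So as it stands the sketch has a real gap at precisely the step you identify as hard. The paper's zero-disutility-segment trick is the idea you are missing: it removes the need for auxiliary bads and hence removes the fixed-point problem entirely.
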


\subsection{Exchange Setting Construction}
Given an $n\times n$ game $(R,C)$ we construct an exchange setting $\mathcal M(R,C)$ of bads with SPLC utilities as follows:
\medskip

\noindent{\bf Bads.} The setting has $m=2n+2$ bads, $M={1,\dots,m}$. Intuitively prices of items $\{1,\dots, n\}$ will correspond to vector $\boldsymbol{\alpha}$, and that of items $\{n+1,\dots, 2n\}$ will correspond to vector $\boldsymbol{\beta}$ of the Nash equilibrium. 
\medskip

\noindent{\bf Agents, their endowments, and utility functions.}
Let $H$ be a large constant that we will set later, and let $(c)^+$ denote $\max\{c,0\}$. Next, we will describe four sets of agents, and $N$ is the union of these four. If the cost function is linear we only specify the slope and if it is PLC then we represent it by a list of (slope, length) pairs. Note that we work with disutility values $D_{ijk} = |U_{ijk}|$. Also note that we only specify non-zero endowments.
\begin{itemize}
\item {\em Price-regulating agents $N_{PR}$.} For every pair of items $(j, j') \in M\times M$ where $j\neq j'$, there is an agent $a_{j,j'}\in N_{PR}$. For notational simplicity let $i = a_{j,j'}$, then $i$'s endowment and utility functions are
\begin{itemize}
	\item Endowment: $W_{i,j} = 1/n$.
	\item Utility Functions: $u_{ij} = 1$, $u_{ij'} =2$, and $u_{ik} = H$ otherwise.
\end{itemize}
\item {\em Deficit agents $N_D$.} For every item $j \in [2n]$, there is an agent $a_j \in N_D$. Let $i = a_{j}$, then $i$'s endowment and utility functions are 
\begin{itemize}
	\item Endowment: $W_{i,(m-1)} = W_{i,(2n+1)} = 1/n^8$.
	\item Utility Functions: $u_{ij} = 1$, and $u_{ik} = H$ otherwise.
\end{itemize}
\item {\em Row-player's agents $N_R$.} For every pair $s,s'\in [n]$, there are two agents, $a_{s,s',R} \in N_R$ and $a_{s',s,R}\in N_R$. Let us denote these by $i$ and $i'$ respectively. Our goal is to capture the NE conditions for the row-player using the optimal bundles of these two agents, and therefore their endowments and utility functions are closely related. 

\begin{itemize}
	\item Endowment: For each $k\in [n]$, let $r_k=(R_{s'k} - R_{sk})$, and define $r=\sum_k r_k$.
	\begin{itemize}
		\item $W_{is}=W_{i's'}=1/n^4$.
		\item For each $k\in [n]$, $W_{i(n+k)} = \frac{(r_k)^+}{n^6}$, and $W_{i'(n+k)}= \frac{(-r_k)^+}{n^6}$.
		\item $W_{i(m-1)}=\frac{(-r)^+}{n^6}$ and $W_{i'(m-1)}=\frac{(r)^+}{n^6}$. 
	\end{itemize}
	\item Utility Functions:
	\begin{itemize}
		\item $u_{i,s}=u_{i',s'} = \{(1, 1/n^4), (H,\infty)\}$
		\item For each $k\in [n]$, $u_{i,(n+k)}=\{(1/3,(-r_k)^+/n^6), (H,\infty)\}$ and 
		
		$u_{i',(n+k)}=\{(1/3,(r_k)^+/n^6), (H,\infty)\}$.
		\item $u_{i,(m-1)}=\{(1/3,(r)^+/n^6), (H,\infty)\}$ and $u_{i',(m-1)}=\{(1/3,(-r)^+/n^6), (H,\infty)\}$.
		\item $u_{i,m} = u_{i',m} = 3$.
		\item $u_{i,k} = u_{i',k}= H$, otherwise.
	\end{itemize}
\end{itemize}

\item {\em Column-player's agents $N_C$.} For every pair $s,s'\in [n]$, there are two agents, $a_{s,s',C} \in N_C$ and $a_{s',s,C}\in N_C$. Let us denote these by $i$ and $i'$ respectively. Again, the goal is to capture the NE conditions for the column-player using the optimal bundles of these two agents, and therefore their endowments and utility functions are constructed in similar way as above.
\begin{itemize}
	\item Endowment: For each $k\in [n]$, let $c_k=(C_{ks'} - C_{ks})$, and set $c=\sum_k c_k$.
	\begin{itemize}
		\item $W_{i(n+s)}=W_{i'(n+s')}=1/n^4$.
		\item For each $k\in [n]$, $W_{i,k} = \frac{(c_k)^+}{n^6}$, $W_{i',k}= \frac{(-c_k)^+}{n^6}$.
		\item $W_{i(m-1)}=\frac{(-c)^+}{n^6}$ and $W_{i'(m-1)}=\frac{(c)^+}{n^6}$. 
	\end{itemize}
	\item Utility Functions:
	\begin{itemize}
		\item $u_{i,(n+s)}=u_{i',(n+s')} = \{(1, 1/n^4), (H,\infty)\}$.
		\item For each $k\in [n]$, $u_{i,k}=\{(1/3,(-c_k)^+/n^6), (H,\infty)\}$ and 
		
		$u_{i',k}=\{(1/3,(c_k)^+/n^6), (H,\infty)\}$.
		\item $u_{i,(m-1)}=\{(1/3,(c)^+/n^6), (H,\infty)\}$ and $u_{i',(m-1)}=\{(1/3,(-c)^+/n^6), (H,\infty)\}$.
		\item $u_{i,m} = u_{i',m} =3$.
		\item $u_{i,k} = u_{i',k} = H$, otherwise.
	\end{itemize}
\end{itemize}
\end{itemize}

Note that, each of the $c_k$'s and $r_k$'s above are in $[-1,1]$, and therefore for any bad $j\in M$, total endowment of it in the setting is $\sum_{i\in A} W_{ij} \le \frac{(m-1)}{n} + \frac{n}{n^4} + \frac{2n^2}{n^6} + \frac{2n}{n^8} \le 3$. Thus, the $\epsilon$-approximate supply-demand condition the above setting $\mathcal M(R,C)$. can be re-written as, 

\begin{equation}\label{eq.amc}
\forall j \in M, \ \ \ |\sum_{i\in N} x^*_{ij} - \sum_{i\in N} W_{ij}| \le 3\epsilon.
\end{equation}

\subsection{Correctness}
In this section, we will show the following theorem.

\begin{theorem}\label{thm.red}
Computing $\frac{1}{n}$-approximate well-supported Nash equilibrium in game $(R,C)$ reduces to $\frac{1}{m^8}$-approximate exchange setting equilibrium of $\mathcal M(R,C)$. 
\end{theorem}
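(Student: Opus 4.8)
The plan is to establish a two-directional correspondence between $\frac{1}{n}$-approximate well-supported Nash equilibria of $(R,C)$ and $\frac{1}{m^8}$-approximate exchange equilibria of $\mathcal{M}(R,C)$, then argue both directions constitute a polynomial-time reduction. Since the theorem as stated only asserts the (Karp-style) reduction, I would focus on the forward direction: given a $\frac{1}{m^8}$-approximate exchange equilibrium $(\p^*, \x^*)$ of $\mathcal{M}(R,C)$, extract mixed strategies $\balpha, \bbeta$ and show they form a $\frac{1}{n}$-approximate well-supported NE. First I would analyze the \emph{price-regulating agents} $N_{PR}$: for each ordered pair $(j,j')$, agent $a_{j,j'}$ has disutility $1$ on $j$, $2$ on $j'$, and $H$ elsewhere, with endowment only in $j$. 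A standard argument (as in~\cite{ChenDDT09}) shows that in any approximate equilibrium these agents force $\frac{1}{2} \le |p_j|/|p_{j'}| \le 2$ for all pairs (otherwise some $a_{j,j'}$ would flood the market with one bad, violating~\eqref{eq.amc}), so after scaling all prices lie in a bounded range, say $[1, 2^{m}]$ up to normalization. I would normalize so that $\sum_{j=1}^{n}|p_j^*| + \sum_{j=n+1}^{2n}|p_j^*|$ is fixed, and define $\alpha_s \propto |p_s^*|$, $\beta_t \propto |p_{n+t}^*|$.

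Next I would examine the \emph{row-player agents}. For each pair $s,s'$, the twin agents $i = a_{s,s',R}$ and $i' = a_{s',s,R}$ have utility functions and endowments designed so that their combined optimal-bundle and budget constraints encode the comparison $(R\bbeta)_s$ vs.\ $(R\bbeta)_{s'}$. The key computation: agent $i$'s endowment value is essentially $\frac{1}{n^4}|p_s^*| + \frac{1}{n^6}\sum_k (r_k)^+ |p^*_{n+k}| + (\text{term in }p^*_{m-1})$, and at equilibrium $i$ must earn this much from bads it is willing to do; the disutility-$1/3$ segments on items $n+k$ and the disutility-$3$ item $m$ act as a "release valve." Balancing $i$ against $i'$ and using the price bounds, the constraint that neither agent floods item $s$ (resp.\ $s'$) past its $\frac{1}{n^4}$ endowment translates, after the calculation, into: if $(R\bbeta)_s < (R\bbeta)_{s'} - \frac{1}{n}$ then $\alpha_s$ must be (near) zero. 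The column-player agents $N_C$ give the symmetric statement for $\bbeta$ using $(\balpha^T C)$. The \emph{deficit agents} $N_D$ and the auxiliary items $m-1, m$ are there to absorb slack so that the supply-demand equations~\eqref{eq.amc} are consistent; I would check they do not interfere with the strategy-encoding (their endowments are $O(1/n^8)$, hence below the approximation threshold).

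For the reverse direction (needed to make the reduction meaningful, i.e., that $\mathcal{M}(R,C)$ actually has an approximate equilibrium whenever $(R,C)$ has an approximate NE — and to invoke existence), I would take a $\frac{1}{n}$-approximate NE $(\balpha,\bbeta)$, set $p_s^* = -\alpha_s$, $p_{n+t}^* = -\beta_t$ (suitably scaled and with the price-regulating structure forcing the remaining two prices), and construct the allocation explicitly agent-class by agent-class, verifying C1 exactly and C2 up to $\frac{1}{m^8}$. The main obstacle I anticipate is the bookkeeping in the row/column agent analysis: because we cannot use $\infty$ disutilities, every agent \emph{can} in principle consume every bad, so I must show the $H$-cost segments are genuinely never used at (approximate) equilibrium — this requires choosing $H$ large enough relative to $n$ and the price bounds, and carefully propagating the $3\epsilon = \frac{3}{m^8}$ slack through the chain of inequalities without it blowing up to exceed $\frac{1}{n}$. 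Managing this constant-chasing (the choice of $H$, the $n^4, n^6, n^8$ scalings, and ensuring the final approximation is exactly $\frac{1}{n}$ on the Nash side) is where the real work lies; the conceptual skeleton is a direct adaptation of~\cite{ChenDDT09}.
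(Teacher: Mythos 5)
Your high-level architecture matches the paper's (price-regulating agents to bound price ratios, row/column twin agents to encode the best-response inequalities, deficit agents and items $m-1,m$ as slack absorbers), and you have correctly identified the central new difficulty versus~\cite{ChenDDT09}: no $\infty$ disutilities, so you must rule out the $H$-slope segments ever being used. But two concrete steps in your sketch are wrong, and they interact.

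First, the price-ratio bound you state is too weak, and the paper proves and needs something much tighter. You write that the price-regulating agents give $\tfrac12 \le |p_j|/|p_{j'}| \le 2$ for all pairs, and then conclude prices lie in $[1,2^m]$ after normalization. Lemma~\ref{lem.pr} in fact establishes $\max_j |p^*_j| / \min_j |p^*_j| \le 2$ directly (not just a chain of pairwise bounds), so after scaling the minimum price to $1$, \emph{all} prices are in $[1,2]$. This is not a cosmetic strengthening: the entire downstream construction hinges on the interval being exactly $[1,2]$.

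Second, and more seriously, your proposed extraction $\alpha_s \propto |p^*_s|$ has the wrong monotonicity. The mechanism the row agents implement is: if $(R\bbeta)_s$ is substantially \emph{below} $(R\bbeta)_t$, then agent $a_{(s,t,R)}$ cannot afford to buy back all $1/n^4$ of bad $s$ she brought, so the leftover is pushed onto $N_{PR}$, which forces $|p^*_s|$ up to the \emph{maximum} value $2$. That is, a bad strategy $s$ corresponds to $|p^*_s| = 2$, not $|p^*_s| = 0$. The paper therefore defines $u_s = 2 - |p^*_s|$ (and $v_s = 2 - |p^*_{n+s}|$) and sets $\balpha = \uu/\|\uu\|_1$, $\bbeta = \vv/\|\vv\|_1$; Lemma~\ref{lem:ne-cond} shows $(R\vv)_s - (R\vv)_t < -1/n \Rightarrow u_s = 0$, which is the well-supportedness condition. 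With $\alpha_s \propto |p^*_s|$ you would get the opposite conclusion. Note also that the $u_s = 2 - |p^*_s|$ definition is only sensible because of the tight $[1,2]$ range from the first point, and one further needs to show $\uu,\vv \neq \0$ so the normalization is legal (Observation~\ref{ob.5}: the strategy maximizing $(R\vv)$ has price $1$, hence $u_s = 1$), a step your sketch omits. Finally, the reverse direction you plan to prove is not needed for the reduction: existence of (approximate) exchange equilibria for $\mathcal{M}(R,C)$ comes from the algorithmic results elsewhere in the paper, and the hardness argument only requires mapping an approximate exchange equilibrium forward to an approximate NE.
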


Let $(\ps,\x^*)$ be a $\frac{1}{m^8}$-approximate exchange setting equilibrium of $\mathcal M(R,C)$. Using the fact that $m=(2n+2)$ and equation \eqref{eq.amc}, the approximate supply-demand condition satisfied by the equilibrium is, 

\begin{equation}\label{eq.amc2}
\forall j \in M, \ \ \ |\sum_{i\in A} x_{ij}^* - \sum_{i\in A} W_{ij}| \le 3/m^8 < 1/n^8.
\end{equation}

Before starting the reduction, we introduce some notation. We work with absolute value of prices, but for notational simplicity we use $\ps_j > 0$ to denote the price of bad $j$ in absolute value at equilibrium. Also, for any subset of agents $\tilde{N} \subset N$, and any bad $j\in M$, let $W_j (\tilde{N}) = \sum_{i\in \tilde{N}} W_{ij}$. Similarly, define $x^*_j(\tilde{N}) = \sum_{i\in \tilde{N}} x_{ij}^*$.

The first step in each the reduction is to ensure that the prices are ``well-behaved'' \cite{ChenDDT09}. In particular, we will show that ratio of prices for any two bads is bounded above by two. The agents of $N_{PR}$ are constructed mainly for this task. Recall that the price of all bads are negative an equilibrium. 

\begin{lemma}\label{lem.pr}
For any $j,k \in M$, $\frac{\ps_j}{\ps_{k}}\le 2$.
\end{lemma}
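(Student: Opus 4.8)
The plan is to argue by contradiction using the price-regulating block $N_{PR}$ together with the approximate market-clearing bound \eqref{eq.amc2}. By scale invariance I may normalize prices, and it suffices to show $\max_\ell \ps_\ell \le 2\min_\ell \ps_\ell$; so suppose instead that $\ps_a > 2\ps_b$ where $a$ attains the maximum price and $b$ a minimum price among the bads. I will show that bad $b$ cannot be (approximately) cleared, contradicting $|\sum_{i}x^*_{ib}-\sum_i W_{ib}| < 1/n^8$.

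The key structural facts concern who does bad $b$. For each $\ell\neq b$ there is an agent $a_{b,\ell}\in N_{PR}$ endowed with exactly $1/n$ units of bad $b$ (and nothing else), with unit disutility on $b$, disutility $2$ on $\ell$, and disutility $H$ on every other bad. Since the optimal-bundle condition forces each agent to cover the absolute value of her endowment exactly by doing chores from her minimum-pain-per-buck set, agent $a_{b,\ell}$ does at most $1/n$ units of bad $b$ in total (her whole obligation is $\tfrac1n\ps_b$, and one unit of $b$ earns $\ps_b$). Moreover, inside $N_{PR}$ only the agents $a_{b,\ell}$ can do any positive amount of bad $b$: agent $a_{\ell,b}$ strictly prefers her own bad $\ell$ since $1/\ps_\ell\le 1/\ps_b<2/\ps_b$, and an agent $a_{\ell,\ell'}$ with $b\notin\{\ell,\ell'\}$ strictly prefers $\ell$ since $1/\ps_\ell\le 1/\ps_b<H/\ps_b$. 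The remaining agents $N_D\cup N_R\cup N_C$ carry only $O(\mathrm{poly}(1/n))$ total endowment, whereas $\sum_i W_{ib}\ge (m-1)/n$ is dominated by the $N_{PR}$ contribution; I argue that, because each such auxiliary agent always owns a low-disutility (i.e.\ $O(1)$) chore while every disutility-$H$ option has pain per buck at least $H/\max_\ell\ps_\ell$, the $H$-chores are never in any minimum-pain-per-buck set once $H$ is large relative to the prevailing price ratio, so the tiny endowment values of the auxiliary agents limit the amount of bad $b$ they can do to $O(\mathrm{poly}(1/n))$.

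Now the contradiction: agent $a_{b,a}$ has pain per buck $1/\ps_b$ on bad $b$ but $2/\ps_a<1/\ps_b$ on bad $a$ (using $\ps_a>2\ps_b$), so $b$ is \emph{not} in her minimum-pain-per-buck set and she does $0$ units of bad $b$ (she instead does bad $a$). Hence at most $m-2$ of the agents $a_{b,\ell}$ do bad $b$, each at most $1/n$, so the total amount of bad $b$ consumed is at most $(m-2)/n+O(\mathrm{poly}(1/n))$, while its supply is at least $(m-1)/n$. Thus bad $b$ is under-consumed by at least $1/n-O(\mathrm{poly}(1/n))\ge \tfrac{1}{2n}$ for $n$ large, which contradicts $|\sum_i x^*_{ib}-\sum_i W_{ib}| < 1/n^8$ from \eqref{eq.amc2}. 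Since $b$ was a cheapest bad and $a$ an arbitrary bad with $\ps_a>2\ps_b$, this proves $\ps_j/\ps_k\le 2$ for all $j,k\in M$.

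The step I expect to be the main obstacle is the bookkeeping in the second paragraph: ruling out that an auxiliary agent in $N_D\cup N_R\cup N_C$, or any agent via a disutility-$H$ chore, contributes a non-negligible amount of bad $b$. This requires a uniform a priori bound on $\max_\ell\ps_\ell/\min_\ell\ps_\ell$ so that $H$-chores are provably unattractive and the $O(\mathrm{poly}(1/n))$ endowment values of the auxiliary agents translate into a genuinely negligible quantity of bad $b$. I would obtain this by first running the same under-consumption argument with threshold $H$ in place of $2$ to get a crude ratio bound, then fixing $H$ large enough and repeating the argument with threshold $2$ as above.
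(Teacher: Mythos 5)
Your overall plan mirrors the paper's: argue under-consumption of a cheapest bad $b$ using $N_{PR}$, then show the remaining agents contribute negligibly, contradicting the approximate market-clearing bound \eqref{eq.amc2}. The $N_{PR}$ bookkeeping in your third paragraph is correct and matches the paper (all of $N_{PR}$ other than $a_{b,\ell}$ for $\ell\neq b'$ consume zero of $b$, and each $a_{b,\ell}$ is budget-limited to $1/n$ units). However, the fix you propose for the step you flag as the obstacle does not actually close the gap.

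The bootstrapping ("run the same under-consumption argument with threshold $H$") is circular. If you only assume $\ps_{b'}>H\ps_b$ and apply the same argument, you conclude that $N_{PR}$ consumes nothing of $b$ — but you still need to rule out the auxiliary agents $N_D\cup N_R\cup N_C$ absorbing the resulting $(m-1)/n$, which again requires bounding their budgets by bounding the price ratio, i.e.\ exactly the quantity you are trying to estimate. The paper breaks this circularity in a different, non-iterative way: it observes (inside Case I) that if $\ps_{b'}>3H\ps_b$ then \emph{every} agent, not just those in $N_{PR}$, has pain-per-buck at most $H/\ps_{b'}<(1/3)/\ps_b\le$ (any segment's pain-per-buck for $b$), since all disutilities lie in $[1/3,H]$. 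Hence bad $b$ is in \emph{nobody's} minimum-pain-per-buck set and is entirely unconsumed, which is already a contradiction regardless of anyone's budget. This yields $\ps_{b'}\le 3H\ps_b$ directly, without first controlling the auxiliary agents. That is the missing idea in your proposal: the crude bound must come from a statement about \emph{all} agents' preferences, not from rerunning the $N_{PR}$-based under-consumption argument.

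A secondary but real issue is that "$H$-chores are never in any MPB set, so the tiny endowments make the auxiliary contribution negligible" is not, by itself, a valid bound. For $b=m$ the $N_R\cup N_C$ agents have disutility $3$ (not $H$) for $m$, so keeping them off $H$-chores says nothing about their consumption of $m$; what controls it is their total budget $\sum_j W_{ij}\ps_j$, which is $O(\mathrm{poly}(1/n))$ only once $\ps_{b'}=O(H)$ is established. Once you have that bound, your budget argument (total endowment of $N\setminus N_{PR}$ times $\ps_{\max}$ divided by $\ps_b$) does give an $O(1/n^2)$ bound that handles both $b\in[m-1]$ and $b=m$ uniformly, which is a somewhat cleaner bookkeeping than the paper's case split (Case I via Claim \ref{cl.1}, Case II via an explicit pairwise earnings/spendings calculation for $N_R\cup N_C$). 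But this simplification only works after the price-ratio bound is in hand, and that is the step your proposal leaves unresolved.
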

To prove the above claim, first we need to understand consumption of $N_R$ and $N_C$ agents on the first segments of their cost functions. If the function $u_{ij}$ is non-linear then let $x_{ijk}$ denote the amount agent $i$ is assigned of bad $j$ on segment $k$. 

\begin{claim}\label{cl.1}
For any $s\neq s' \in [n]$, let $i=a_{(s,s',R)}$ and $i'=a_{(s',s,R)}$. Then for every bad $j \in [m-1]$, $W_{ij} + W_{i'j} \ge x^*_{ij1} + x^*_{i'j1}$. Similarly, for $i=a_{(s,s',C)}$ and $i'=a_{(s',s,C)}$.
\end{claim}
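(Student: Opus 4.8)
The plan is to prove Claim~\ref{cl.1} by splitting the bads $j\in[m-1]$ into two groups according to the shape of $u_{ij}$ and $u_{i'j}$ (where $i=a_{(s,s',R)}$, $i'=a_{(s',s,R)}$), and reducing the entire statement to one structural fact about the row- and column-player agents. For $j\in\{n+1,\dots,2n\}\cup\{m-1\}$ — the bads on which \emph{both} $i$ and $i'$ have a genuinely two-piece cost function — I would just substitute the construction into the definitions: the first-segment lengths satisfy $L_{ij1}+L_{i'j1}=W_{ij}+W_{i'j}$. Indeed, for $j=n+k$ one gets $(-r_k)^+/n^6+(r_k)^+/n^6=|r_k|/n^6=W_{i(n+k)}+W_{i'(n+k)}$, and for $j=m-1$ one gets $(r)^+/n^6+(-r)^+/n^6=|r|/n^6=W_{i(m-1)}+W_{i'(m-1)}$. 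Since consumption on a segment never exceeds its length, $x^*_{ij1}\le L_{ij1}$ and $x^*_{i'j1}\le L_{i'j1}$, so the claimed inequality follows for these $j$ with no appeal to the equilibrium conditions at all. For $j\in[n]$ (the ``$\alpha$-bads'') the picture is different: if $j=s$ then only $i$ has a finite first segment, of length $W_{is}=1/n^4$, while $u_{i's}$ is linear with slope $H$ and $W_{i's}=0$ (symmetrically for $j=s'$); and if $j\notin\{s,s'\}$ then $W_{ij}=W_{i'j}=0$ and both $u_{ij},u_{i'j}$ are linear with slope $H$. In each of these cases, using once more $x^*_{ij1}\le L_{ij1}$ for the (single) finite piece, Claim~\ref{cl.1} reduces to the assertion that \emph{neither $i$ nor $i'$ consumes any bad on a slope-$H$ segment at equilibrium}.

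I would prove this assertion uniformly for every agent of $N_R\cup N_C$. Fix such an agent $a$; by construction $a$ has the segment $(a,m,1)$ of slope $3$ whose length exceeds anything $a$ could consume given her (tiny) budget. At an exchange equilibrium $a$'s bundle minimizes her total disutility subject to earning exactly her endowment value, hence $a$ consumes segments in nondecreasing order of pain-per-buck $D/p^*$. Because the slope-$3$ segment on bad $m$ is effectively inexhaustible for $a$, she reaches a slope-$H$ segment of some bad $j$ only if its pain-per-buck $H/p^*_j$ is no larger than $3/p^*_m$, i.e.\ only if $p^*_j\ge\tfrac{H}{3}\,p^*_m$. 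Consequently it suffices to show that at any $\tfrac1{m^8}$-approximate equilibrium, $p^*_j<\tfrac{H}{3}\,p^*_m$ for every $j$; any constant-factor bound of this form does the job, since $H$ is a sufficiently large fixed constant.

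The remaining step — establishing that price bound — is where the real work lies, and it must be carried out \emph{without} invoking Lemma~\ref{lem.pr}, whose own proof relies on the present claim. I would argue from the price-regulating agents $N_{PR}$ and the approximate market-clearing condition $|\sum_i x^*_{ij}-\sum_i W_{ij}|\le 3/m^8$ alone. Each $N_{PR}$ agent $a_{j,j'}$ only ever consumes bads $j$ and $j'$, so her demand is completely determined by the prices and requires no information about $N_R\cup N_C$; moreover the combined budget of all $N_R$, $N_C$, $N_D$ agents is $O(p^*_{\max}/n^2)$ and hence negligible against the $\Theta(1)$ supply (dominated by the $(m-1)/n$ units each bad receives from $N_{PR}$). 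If some price were a large multiple of another, the induced routing of the $N_{PR}$ agents' demand onto the extreme-priced bad would make its demand differ from its supply by far more than $3/m^8$, a contradiction. The column-player case $N_C$ is identical to the row-player case by the symmetry of the construction ($\alpha$-bads $\leftrightarrow$ $\beta$-bads, $N_R\leftrightarrow N_C$). I expect the main obstacle to be precisely this last step: obtaining a self-contained, non-circular constant-factor price bound from the $N_{PR}$ agents together with the weak $1/m^8$ market-clearing slack, and controlling the constants tightly enough that the resulting bound genuinely lies below $H/3$.
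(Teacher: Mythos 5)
Your treatment of $j\in\{n+1,\dots,2n\}\cup\{m-1\}$ matches the paper's argument exactly: the crossed construction gives $L_{ij1}+L_{i'j1}=W_{ij}+W_{i'j}$, so $x^*_{ij1}\le L_{ij1}$ suffices. (Small nit: for $j=n+k$ one of the two first segments always has length zero, so it is not quite that both cost functions are ``genuinely two-piece'' there, though your arithmetic is correct.)

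The gap is in your $j\in[n]$ case. You read $x^*_{ij1}$ as the total consumption $x^*_{ij}$ when $u_{ij}$ is linear with slope $H$, which turns Claim~\ref{cl.1} into a nontrivial equilibrium assertion (that $i,i'$ consume nothing at slope $H$), and you propose to prove this via a constant-factor price bound derived independently of Lemma~\ref{lem.pr}. You concede you have not carried this out, and the route you sketch is in fact circular: Lemma~\ref{lem.pr} is itself proved \emph{using} Claim~\ref{cl.1} (its Case~I invokes the claim to cap the cheapest bad's consumption by $N_R\cup N_C$), and a bare budget bound is too weak — the combined budget of $N_R\cup N_C$ is $O(\max_j\ps_j/n^2)$, so without a prior price-ratio bound one only gets consumption $O\big((\max_j\ps_j/\min_j\ps_j)/n^2\big)$ of the cheapest bad, which is not $o(1/n)$. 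The paper's own proof, in contrast, is a pure segment-length argument for \emph{all} $j\in[m-1]$: for $j\in[n]$ it simply asserts $x^*_{ij1}=0$ whenever $u_{ij}$ is linear with slope $H$, implicitly treating such a cost function as having no cheap first segment (equivalently, one of zero length), so nothing needs to be bounded there. This narrower reading is also the only one consistent with how Claim~\ref{cl.1} is then used inside Lemma~\ref{lem.pr}, where the slope-$H$ excess is ruled out by a separate pain-per-buck optimality argument, not by Claim~\ref{cl.1}. Adopt the narrower reading; your segment-length computation already establishes everything the paper actually uses, and the detour through an upfront price bound is both unnecessary and unavailable at this point in the argument.
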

\begin{proof}
The claims follow from the construction of $N_R$ and $N_C$ respectively. 
Note that, allocation on the first segment can not be more than its length. Therefore, $x^*_{is1} \le 1/n^4 =W_{i,s}$ and for the remaining $j\in[n]$ both $x^*_{ij1}$ and $W_{ij}$ are zero. Similarly, $x^*_{i',s',1}\le 1/n^4=W_{i',s'}$ and $x^*_{i'j}=W_{i'j}=0$ for $j\in[n], j\neq s'$. For $j\in\{n+1,\dots, 2n\}$, we have $x^*_{ij1} \le (-r_{j-n})^+/n^6 = W_{i'j}$ and $x^*_{i'j1} \le (r_{j-n})^+/n^6 = W_{ij}$, where $r_k=(R_{s'k} - R_{sk})$. Similarly for $j=(m-1)$, $x^*_{ij1} \le (\sum_{k\in[n]} r_k)^+/n^6= W_{i'j}$, and $x^*_{i'j1} \le (-\sum_{k\in[n]} r_k)^+/n^6= W_{ij}$. The proof for $i=(s,s',C)$ and $i'=(s',s,C)$ follows similarly.
\end{proof}

The main intuition behind the proof of Lemma \ref{lem.pr} is that if the maximum to minimum price ratio is more than $2$, then $N_{PR}$ will leave out a $1/n$ amount of the minimum price bad to be consumed by the remaining agents. And among the remaining agents, if they have to consume this bad at high-slope segment (with disutility $H$) then they would prefer maximum priced bad instead. Claim to consume the mm \ref{cl.1} will come handy in if agents outside $N_{PR}$ have to consume the $1/n$ amount any bad from $[(m-1)]$ then they ``have to'' consume it on the high-slope segment.
\medskip

\noindent{\em Proof of Lemma \ref{lem.pr}.} 
Let $b'\in \argmax_{j\in B} \ \ps_{j}$, and $b\in \argmin_{j \in B}\ \ps_j$. Due to scale invariance of equilibria, we may assume without loss of generality that $\ps_{b} =1$.  

Assume for contradiction that $p_{b'} > 2p_{b}$. Note that $W_{b}(N_{PR}) = \frac{(m-1)}{n}$. In $N_{PR}$ all agents except of type $(b,j)$ and $(j,b)$ have slope $H$ for bad $b$, $\forall j \in M, \ j\neq b$. All of these agents will strictly prefer bad $b'$ over bad $b$, as it has lower pain per buck. Agents of type $(j,b)$ have cost $1$ for $j$ and $2$ for bad $b$, and therefore will strictly prefer to consume $j$ over bad $b$. Agent $(b,b')$ has cost $1$ for $b$ and $2$ for $b'$ but still strictly prefers $b'$ since $\frac{\ps_{b'}}{\ps_b}> 2\Rightarrow \frac{1}{\ps_b} > \frac{2}{\ps_{b'}}$. The remaining agents $(b,j)$ for $j\neq b,b'$, they bring $1/n$ units of $b$ and can consume only as much. Therefore $x^*_b(N_{PR}) \le (m-2)/n$. Due to \eqref{eq.amc2}, $N\setminus N_{PR}$ has to consume $(1/n -1/n^8)$ amount of bad $b$. We consider two cases: $b\in [m-1]$, or $b=m$.\\

\noindent\textbf{Case I:} $b\in [m-1]$. Any agent $a_j \in N_D$, other than $a_b$, strictly prefers bad $b'$ over $b$ since it provides lower pain per buck. Also, agent $a_b$ can only buy $\ps_{(m-1)}/n^8$ amount of $b$. Observe that $\ps_{(m-1)}\le \ps_{b'}\le 3H$. Indeed, if $\ps_{b'} > 3H$, then all agents strictly prefer bad $b'$ over any segment of $b$ and the setting cannot clear, even approximately. Thus, $x_b(N_D) \leq O(H/n^8)$, which leaves the remaining $O(1/n-H/n^8)$ of bad $b$ to be purchased by the agents of $N_R \cup N_C$.

By Claim \ref{cl.1}, $x^*_{b}(N_R\cup N_C) \le w_b(N_R\cup N_C) = O(1/n^3)$. Therefore, the agents of $N_R \cup N_C$ must consume $b$ at a disutility of $H$. However, in that case they prefer $b'$ over $b$, and the setting cannot clear, even approximately.\\

\noindent\textbf{Case II:} $b=m$. First, note that none of the agents of $N_D$ will consume $b=m$ since they strictly prefer $b'$. Thus, all demand for $m$ outside of $N_{PR}$ comes from $N_R \cup N_C$.

Observe, all agents of $N_C$ and $N_R$ must consume the bads $\{1,\dots,(m-1)\}$ on their first segments with disutility 1/3, before consuming any of bad $m$ with disutility $3$. Consider agent $i = a_{(s,t,R)} \in N_R$. Her total earnings are

\begin{equation} \label{eq:NR_earnings}
	\sum_{j \in M} W_{ij} \ps_j = \frac{\ps_s}{n^4} + \frac{1}{n^6} \sum_{k\in [n]}(R_{tk} - R_{sk})^+ \ps_{(n+k)} + \frac{1}{n^6} (\sum_{k\in[n]} R_{sk} - R_{tk})^+ \ps_{(m-1)}.
\end{equation}

The total cost of her first segments of bads $\{(n+1),\dots,(m-1)\}$ is
\begin{equation} \label{eq:NR_spending}
	\frac{1}{n^6} \sum_{k\in [n]}(R_{sk} - R_{tk})^+ \ps_{(n+k)} + \frac{1}{n^6} (\sum_{k\in[n]} R_{tk} - R_{sk})^+ \ps_{(m-1)}.
\end{equation}

After purchasing her first segments of $\{ (n+1),\dots,(m-1) \}$, her remaining money which can be used to purchase bad $m$ is 
\begin{align}
\frac{\ps_s}{n^4} &+ \frac{1}{n^6} \sum_{k\in [n]}(R_{tk} - R_{sk})^+ \ps_{(n+k)} + \frac{1}{n^6} (\sum_{k\in[n]} R_{sk} - R_{tk})^+ \ps_{(m-1)} \nonumber \\
&- \bigg(\frac{1}{n^6} \sum_{k\in [n]}(R_{sk} - R_{tk})^+ \ps_{(n+k)} + \frac{1}{n^6} (\sum_{k\in[n]} R_{tk} - R_{sk})^+ \ps_{(m-1)} \bigg) \nonumber \\
&= \frac{\ps_s}{n^4} +\frac{1}{n^6} \left[(\sum_{k \in [n]} R_{sk} - R_{tk}) \ps_{(m-1)} - \sum_{k \in [n]} (R_{sk} - R_{tk}) \ps_{(n+k)}\right] \nonumber 
\end{align}
\begin{align}
&= \frac{\ps_s}{n^4} +\frac{1}{n^6} \left[\sum_{k \in [n]} (R_{sk} - R_{tk}) (\ps_{(m-1)}-\ps_{(n+k)}) \right]. \label{eq:NR_money_left}
\end{align}

Now consider the agent $i' = a_{(t,s,R)} \in N_R$ paired with $i$. By the same argument, after purchasing the first segments of bads $\{(n+1),\dots,(m-1)\}$, $i'$ has the remaining money
\[ \frac{\ps_t}{n^4} +\frac{1}{n^6} \left[\sum_{k \in [n]} (R_{tk} - R_{sk}) (\ps_{(m-1)}-\ps_{(n+k)}) \right]. \]
Then the total money $i$ and $i'$ can spend on $m$ is $(\ps_s+\ps_t)/n^4 = O(H/n^4)$, since $\ps_{b'} < O(H)$. A similar argument shows that for any pair $i=a_{(s,t,C)}, i' = a_{(t,s,C)}$ in $N_C$, after spending on their first segments of bads $\{1,\dots,n \}$, their remaining money to spend on $m$ is at most $O(H/n^4)$. Summing over all pairs $i$ and $i'$ shows that the total amount the agents of $N_R \cup N_C$ can spend on $m$ is $O(H/n^2) < (1/n-1/n^8)$, for $n$ sufficiently large. Then setting cannot clear, even approximately.\qed\\

Let us set $H=10$. Scale $\ps$ so that the minimum price is $1$. Then using Lemma \ref{lem.pr} we have that $\ps\in[1,2]^{m}$. 
Define $u_s = 2-\ps_s$ and $v_i = 2-\ps_{(n+s)}$ for all $s \in[n]$. Note that, $\uu,\vv\in [0,1]^n$ but they need not be probability distributions. We will show Nash equilibrium conditions \eqref{eq:ne} for $(\uu,\vv)$ first. 

\begin{lemma}\label{lem:ne-cond}
For any $s, t \in [n]$ if $(R\vv)_s - (R\vv)_t < -1/n$ then $u_s=0$, and if $(\uu^T C)_s - (\uu^T C)_t < -1/n$ then $v_s=0$.
\end{lemma}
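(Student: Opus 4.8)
Both halves of the statement are proved the same way; I describe the row-player half, the column half being identical with $N_C$ and the bads $\{n{+}1,\dots,2n\}$ in place of $N_R$ and $\{1,\dots,n\}$. Fix $s\neq t\in[n]$ and look at the paired agents $i=a_{(s,t,R)}$ and $i'=a_{(t,s,R)}$ of $N_R$; for this pair $r_k=R_{tk}-R_{sk}$ and $r=\sum_k r_k$. \textbf{Step 1 (calibrate bad $m{-}1$).} First I would show $\ps_{m-1}=2$. Each deficit agent $a_j\in N_D$ owns $\tfrac1{n^8}$ of bad $m{-}1$ but, since its only cheap segment is on bad $j$ (disutility $1$ versus $H$ elsewhere) and all prices lie in $[1,2]$ by Lemma~\ref{lem.pr}, consumes none of it. Every price-regulating agent $a_{m-1,j}$ consumes exactly the $\tfrac1n$ it owns, and both agents of each $N_R$- and $N_C$-pair can afford, hence fully consume, their $\tfrac13$-slope pieces, so each such pair's consumption of bad $m{-}1$ equals its endowment of bad $m{-}1$; thus bad $m{-}1$ carries a net surplus of at least $\tfrac{2n}{n^8}=\tfrac{2}{n^7}$ generated by $N_D$. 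The only possible absorbers are the price-regulating agents $a_{j,m-1}$, who consume bad $m{-}1$ only when $\tfrac2{\ps_{m-1}}\le\tfrac1{\ps_j}$, impossible for $\ps_j\ge1,\ \ps_{m-1}\le2$ unless $\ps_{m-1}=2$. Since $\tfrac2{n^7}$ exceeds the clearing slack $3/m^8$, the approximate balance \eqref{eq.amc2} for bad $m{-}1$ forces $\ps_{m-1}=2$; equivalently $\ps_{n+k}-\ps_{m-1}=(2-v_k)-2=-v_k$ for all $k$.

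\textbf{Step 2 (the overflow of $a_{(s,t,R)}$).} With prices in $[1,2]$ the segments of $i$ fall into four pairwise-disjoint pain-per-buck bands $[\,\cdot\,,\tfrac13]$ (the $\tfrac13$-slope pieces on $\{n{+}1,\dots,2n,m{-}1\}$), $[\tfrac12,1]$ (the slope-$1$ piece on bad $s$, $ppb=\tfrac1{\ps_s}$), $[\tfrac32,3]$ (the slope-$3$ piece on bad $m$), $[5,10]$ (the slope-$H$ pieces); these bands are strictly ordered, so $i$ consumes in this order. A direct estimate ($|\Delta|=O(1/n^5)$, below) shows the budget of $i$ strictly exceeds the value of all its $\tfrac13$-slope pieces, so $i$ consumes them fully, and the money left over equals $\tfrac{\ps_s}{n^4}+\Delta$ where, using Step~1,
\begin{equation*}
\Delta \;=\; \frac{1}{n^6}\sum_k r_k\,(\ps_{n+k}-\ps_{m-1}) \;=\; -\frac{1}{n^6}\sum_k r_k v_k \;=\; \frac{1}{n^6}\big((R\vv)_s-(R\vv)_t\big).
\end{equation*}
Since $|\Delta|=O(1/n^5)$ this leftover is positive, so $i$ consumes its slope-$1$ piece of bad $s$ to extent $\tfrac1{\ps_s}\big(\tfrac{\ps_s}{n^4}+\Delta\big)=\tfrac1{n^4}+\tfrac{\Delta}{\ps_s}$, capped at the piece length $\tfrac1{n^4}$, routing any remainder to bad $m$; its partner $i'$ has overflow $-\Delta$ and (slope $H$ on bad $s$) never touches bad $s$. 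Hence the net consumption of bad $s$ by the pair is exactly $\min\{0,\Delta/\ps_s\}$.

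\textbf{Step 3 (finish).} Suppose $(R\vv)_s-(R\vv)_t<-1/n$. Then $\Delta<-1/n^7<0$, so $\{i,i'\}$ leaves $\tfrac{|\Delta|}{\ps_s}>\tfrac1{2n^7}$ of bad $s$ unconsumed. No other agent can absorb this: a price-regulating owner $a_{s,j'}$ consumes at most the $\tfrac1n$ it owns; any other $N_R$-pair touching $s$ also only under-consumes it; $N_C$-pairs consume their $\tfrac13$-slope pieces of bad $s$ exactly up to endowment; the deficit agent $a_s\in N_D$ over-consumes bad $s$ by at most $\tfrac{\ps_{m-1}}{n^8\ps_s}\le\tfrac2{n^8}$; and $a_{j,s}$ consumes bad $s$ only if $\tfrac2{\ps_s}\le\tfrac1{\ps_j}$, impossible for $\ps_j\ge1,\ \ps_s\le2$ unless $\ps_s=2$. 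So if $\ps_s<2$ the demand for bad $s$ falls short of its supply by at least $\tfrac1{2n^7}-\tfrac2{n^8}>3/m^8$ for $n$ large, contradicting \eqref{eq.amc2}; therefore $\ps_s=2$, i.e. $u_s=0$. For the column player one repeats Steps~2--3 with $a_{(s,t,C)}\in N_C$, whose leftover money lands on bad $m$ after its slope-$1$ piece on bad $n{+}s$ and whose overflow is $\Delta=\tfrac1{n^6}\big((\uu^TC)_s-(\uu^TC)_t\big)$ (now using $\ps_k=2-u_k$), giving $\ps_{n+s}=2$, i.e. $v_s=0$.

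\textbf{Where the difficulty lies.} The routine part is the optimal-bundle calculus of the SPLC cost functions together with Lemma~\ref{lem.pr} and Claim~\ref{cl.1}. The main obstacle is the exhaustive supply--demand accounting in Steps~1 and~3: one must check that \emph{every} one of the four agent families contributes exactly the claimed net demand to bads $m{-}1$ and $s$, and in particular that the degenerate situations where two prices are in ratio exactly $2$ (so some agent is indifferent between two bads and its demand is not uniquely pinned down) can only \emph{widen} the relevant surplus or shortfall, never close it.
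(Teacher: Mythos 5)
Your proof is correct and follows essentially the same route as the paper: you establish $\ps_{m-1}=2$ (the paper's Observation~\ref{ob.3}), compute the leftover money of $a_{(s,t,R)}$ after its $\tfrac13$-slope pieces using the strict ordering of pain-per-buck bands (the paper's Lemma~\ref{lem.pr} and Observation~\ref{ob.4}), obtain the same overflow $\Delta=\tfrac{1}{n^6}\bigl((R\vv)_s-(R\vv)_t\bigr)$, and close via the supply--demand gap forcing $\ps_s=2$ (the paper's Observation~\ref{ob.2}). The only difference is organizational: you re-derive Observations~\ref{ob.2}--\ref{ob.4} inline rather than citing them as standalone facts.
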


To show the above lemma, we will first make a couple of simple observations. Define agent set $\bAPR = N\setminus N_{PR}$ to be everyone outside $N_{PR}$. In order to get $u_s=0$ we need to force $\ps_s=2$. Next, we show that if $\bAPR$ leave leave a large enough amount of bad $s$ to be consumed by $N_{PR}$ then it must be that $\ps_s=2$.

\begin{observation}\label{ob.1}
No agent consumes any bad at cost $H$. 
\end{observation}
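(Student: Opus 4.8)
The plan is to derive the observation directly from the optimal-bundle condition C1 — which holds \emph{exactly} even at an approximate equilibrium — together with the price bound already in hand. First recall that, after the scaling fixed above, Lemma~\ref{lem.pr} gives $\ps\in[1,2]^m$. Hence, for \emph{every} bad $j$, a segment $(i,j,k)$ with disutility $D_{ijk}=H$ has pain per buck $H/\ps_j\ge H/2=5$, while every segment appearing in $\mathcal M(R,C)$ with $D_{ijk}\ne H$ has $D_{ijk}\in\{\tfrac13,1,2,3\}$ and therefore pain per buck at most $3/\ps_j\le 3<5$. This strict numeric gap (the reason for the choice $H=10$) is the whole engine of the argument: an $H$-segment is always worse, by a definite margin, than any non-$H$ segment.

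Next I would invoke the standard structure of optimal bundles. By a one-line exchange argument, if $\x^*_i$ is an optimal bundle at prices $\ps$ and $x^*_{ijk}>0$, then every segment $(i,j',k')$ of agent $i$ with strictly smaller pain per buck must be fully used, i.e. $x^*_{ij'k'}=L_{ij'k'}$ (otherwise shifting a little earning from $(j,k)$ to $(j',k')$ keeps the total earning fixed while strictly lowering disutility). Consequently, \emph{if} some agent $i$ consumed a bad on an $H$-segment, then by the previous paragraph she would have to saturate \emph{all} of her non-$H$ segments, so her bundle value would be at least $\Lambda_i:=\sum_{(j,k):\,D_{ijk}\ne H}L_{ijk}\,\ps_j$. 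On the other hand, C1 (together with the fact that consuming extra bads only adds disutility) makes her equilibrium bundle value equal the endowment value $B_i:=\sum_{j\in M}W_{ij}\ps_j$. Thus it remains only to check $B_i<\Lambda_i$ for every agent.

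For this last step it suffices, for each of the four agent classes, to point to a single non-$H$ segment whose earning capacity $L_{ijk}\,\ps_j$ alone already exceeds $B_i$, using $\ps\in[1,2]^m$ and the convention that a last (linear) segment has length $\ge 1$: for $i=a_{j,j'}\in N_{PR}$, the cost-$1$ segment on bad $j$ has capacity $\ge\ps_j>\ps_j/n=B_i$; for $i=a_j\in N_D$, the cost-$1$ segment on bad $j$ has capacity $\ge 1$, whereas $B_i=(\ps_{m-1}+\ps_{2n+1})/n^8\le 4/n^8$; and for $i\in N_R\cup N_C$, the cost-$3$ segment on bad $m$ has capacity $\ge\ps_m\ge 1$, whereas $B_i\le 2\sum_j W_{ij}=O(1/n^4)$, immediate from the listed endowments (recalling each $r_k,c_k\in[-1,1]$). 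Since $B_i<\Lambda_i$ in all cases, no agent ever consumes a bad on an $H$-segment. The argument is essentially routine once Lemma~\ref{lem.pr} is available; the only points requiring any care are the size estimate on $B_i$ for the $N_R,N_C$ agents and the remark that no pain-per-buck tie between a cost-$\le 3$ segment and an $H$-segment can occur — both dispatched by the clean gap $3<5$.
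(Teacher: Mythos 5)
Your proof is correct and follows essentially the same route as the paper's (much terser) argument: both hinge on the fact, supplied by Lemma~\ref{lem.pr} and the choice $H=10$, that an $H$-segment's pain per buck ($\ge 10/2=5$) strictly exceeds that of every non-$H$ segment ($\le 3$), together with the observation that every agent has a linear cost-$\le 3$ function whose earning capacity exceeds her budget. The paper phrases the second ingredient as ``for every agent $i$ there is at least one bad $j$ such that $f_{ij}$ is linear with slope at most $3$'' and leaves the budget-versus-capacity check implicit, whereas you spell it out per agent class via the $B_i<\Lambda_i$ verification; these are the same argument.
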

\begin{proof}
Note that for every agent $i$ there is at least one bad $j$ such that the cost function $f_{ij}$ is linear with slope at most $3$. Given that $H=10$ and $10/3 >2$, the claim follows using Lemma $\ref{lem.pr}$. 
\end{proof}

\begin{observation}\label{ob.2}
For any bad $b\in M$, $(i)$ $W_b(\bAPR) - x^*_b(\bAPR) \ge 1/n^8 \Rightarrow \ps_b=2$, and $(ii)$ $W_b(\bAPR) - x^*_b(\bAPR) \le -1/n^8 \Rightarrow \ps_b=1$
\end{observation}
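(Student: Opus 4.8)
The plan is to fix a bad $b$ and compute tight bounds on the total amount of $b$ consumed by the price-regulating agents $N_{PR}$, distinguishing whether $\ps_b$ sits at the maximum value $2$ or strictly below it. The first step is bookkeeping: write $W_b(N)=W_b(\bAPR)+W_b(N_{PR})$ and $x^*_b(N)=x^*_b(\bAPR)+x^*_b(N_{PR})$, and recall from \eqref{eq.amc2} that $|x^*_b(N)-W_b(N)|<1/n^8$. In case $(i)$ the hypothesis $W_b(\bAPR)-x^*_b(\bAPR)\ge 1/n^8$ then gives $x^*_b(N_{PR})-W_b(N_{PR})>0$, and in case $(ii)$ the hypothesis $W_b(\bAPR)-x^*_b(\bAPR)\le -1/n^8$ gives $x^*_b(N_{PR})-W_b(N_{PR})<0$; note $W_b(N_{PR})=(m-1)/n$, since exactly the $m-1$ agents $a_{b,j'}$ ($j'\ne b$) are endowed with $b$, each with $1/n$ units.

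The second step describes the optimal bundle of an agent $a_{j,j'}\in N_{PR}$. Since its budget $\ps_j/n$ is positive and its cost function is strictly increasing, at equilibrium it consumes bads worth exactly $\ps_j/n$, drawn from its minimum pain-per-buck bads. Using $\ps\in[1,2]^m$ (Lemma \ref{lem.pr} after scaling) and $H=10$, the pain per buck of its endowed bad $j$ is $1/\ps_j\le 1$, that of $j'$ is $2/\ps_{j'}\ge 1$, and that of any other bad $k$ is $H/\ps_k\ge 5$; hence $j$ is always a cheapest bad for $a_{j,j'}$, it is the unique cheapest one unless $\ps_j=1$ and $\ps_{j'}=2$, and a cost-$H$ bad is never consumed (this is Observation \ref{ob.1} restricted to these agents). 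Consequently $a_{j,j'}$ consumes at most $1/n$ of $j$ and nothing else, except that when $\ps_j=1$ and $\ps_{j'}=2$ it may shift part of its budget onto $j'$, consuming at most $1/(2n)$ of $j'$. In particular, $a_{j,j'}$ consumes a positive amount of a bad $b\ne j$ only if $j'=b$ together with $\ps_j=1$ and $\ps_b=2$.

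The final step assembles the contradiction in each direction. For $(i)$: suppose $\ps_b<2$. By the second step the only agents of $N_{PR}$ that consume a positive amount of $b$ are the $m-1$ agents $a_{b,j'}$ endowed with $b$ (an agent $a_{j,b}$ with $j\ne b$ would need $\ps_b=2$, and for every remaining agent $b$ is a cost-$H$ bad, hence unconsumed), and each of them consumes at most $1/n$ of $b$; thus $x^*_b(N_{PR})\le (m-1)/n=W_b(N_{PR})$, contradicting $x^*_b(N_{PR})>W_b(N_{PR})$, so $\ps_b=2$. For $(ii)$: suppose $\ps_b>1$. Then $b$ is the strictly unique cheapest bad for each of the $m-1$ agents $a_{b,j'}$, so each of them consumes only $b$, worth its budget $\ps_b/n$, i.e. exactly $1/n$ units; hence $x^*_b(N_{PR})\ge (m-1)/n=W_b(N_{PR})$, contradicting $x^*_b(N_{PR})<W_b(N_{PR})$, so $\ps_b=1$.

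The only delicate point — the place where the construction's choice of slopes $1,2$ and the large constant $H$ matters — is arranging that on the price range $[1,2]$ an agent's endowed bad is always weakly, and away from the degenerate price ratio $\ps_j=1,\ \ps_{j'}=2$ strictly, its cheapest option, while cost-$H$ bads stay strictly dominated; one must also keep track that the approximate-clearing slack $1/n^8$ is strictly smaller than the $1/n$-sized discrepancy that $N_{PR}$ would have to absorb, so that the inequalities $x^*_b(N_{PR})\gtrless W_b(N_{PR})$ are strict and the contradictions are genuine. Everything else is the elementary bookkeeping above.
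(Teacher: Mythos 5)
Your proposal is correct and takes essentially the same approach as the paper: use the approximate supply-demand condition to determine the sign of $W_b(N_{PR})-x^*_b(N_{PR})$, then exploit the optimal-bundle structure of the price-regulating agents (only $a_{b,j'}$ can consume $b$ unless $\ps_b=2$, and they consume all of $b$ they bring whenever $\ps_b>1$) to reach a contradiction in each direction. Your Step 2 just spells out the pain-per-buck comparisons more explicitly than the paper's terse prose, but the decomposition, key observations, and contradiction are the same.
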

\begin{proof}
If $W_b(\bAPR) - x^*_b(\bAPR) \ge 1/n^8$, then the supply-demand condition \eqref{eq.amc} requires that $W_b(N_{PR}) - x^*_b(N_{PR}) <0$. Recall that, for any agent $(j,j')\in N_{PR}$ the cost for bad $j$ is $1$, that of bad $j'$ is $2$, and $H$ for all other bads. At equilibrium she only purchases bads that minimizes cost/price. Therefore, if $\ps_b<2$ then from $N_{PR}$, only the agents $(b,j'),\ \forall j'\neq b$ are willing to consume bad $b$. Each of them brings exactly $1/n$ amount of bad $b$ and can consume only what they bring, which contradicts $W_j(N_{PR}) - x^*_j(N_{PR}) <0$. Instead, if $\ps_b=2$, then agents of type $(j',b)$ may also be able to consume bad $b$. 

Now suppose $W_b(\bAPR) - x^*_b(\bAPR) < -1/n^8$, then the supply-demand condition \eqref{eq.amc} requires that $W_b(N_{PR}) - x^*_b(N_{PR}) > 0$. Notice that agents $(b,j)$, $\forall j \neq b$ of bring all the of the bad $b$ from $N_{PR}$. Moreover, these agents strictly prefer $b$ over $j$ when $\ps_b > 1$, since all $\ps_j \in [1,2]$ by Lemma \ref{lem.pr} (given that the minimum price is 1). Therefore $W_b(N_{PR}) - x^*_b(N_{PR}) \leq 0$, which contradicts approximate supply-demand.
\end{proof}

\begin{observation}\label{ob.3}
$\ps_{(m-1)}=2$. 
\end{observation}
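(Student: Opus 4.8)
The goal is $\ps_{(m-1)}=2$, and by the first part of Observation~\ref{ob.2} it suffices to show that the agents outside $N_{PR}$ under-consume bad $m-1$, i.e.\ $W_{m-1}(\bAPR)-x^*_{m-1}(\bAPR)\ge 1/n^8$. Since $\bAPR=N_D\cup N_R\cup N_C$, the plan is to bound the consumption of bad $m-1$ group by group and add the resulting surpluses and deficits.

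First, I would observe that deficit agents consume none of bad $m-1$: for $a_j\in N_D$ with $j\in[2n]$ we have $j\neq m-1$, so $u_{a_j,(m-1)}=H$, and by Observation~\ref{ob.1} no agent consumes a bad at cost $H$. Hence $x^*_{m-1}(N_D)=0$, while the $2n$ deficit agents jointly own $W_{m-1}(N_D)=2n\cdot\tfrac{1}{n^8}=\tfrac{2}{n^7}$ of bad $m-1$, giving a surplus of $\tfrac{2}{n^7}$ from this group. This is already more than enough, \emph{provided} the remaining two groups do not run a deficit on bad $m-1$, which is the substance of the next step.

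The second step --- the only one that really uses the structure of the construction --- is to show $x^*_{m-1}(N_R)\le W_{m-1}(N_R)$ and, by an identical argument, $x^*_{m-1}(N_C)\le W_{m-1}(N_C)$, so that $N_R$ and $N_C$ contribute no deficit. For this I would exploit the paired-agent design: for a pair $i=a_{(s,s',R)}$, $i'=a_{(s',s,R)}$, each of $i,i'$ has a utility function for bad $m-1$ of the form $\{(1/3,\cdot),(H,\infty)\}$, so by Observation~\ref{ob.1} neither agent consumes bad $m-1$ on its cost-$H$ tail and hence consumes it only on the first segment. Claim~\ref{cl.1}, applied with $j=m-1\in[m-1]$, then bounds the pair's combined first-segment consumption of bad $m-1$ by its combined endowment of bad $m-1$; summing over all pairs yields $x^*_{m-1}(N_R)\le W_{m-1}(N_R)$, and the same argument for the column-player agents gives the analogue for $N_C$.

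Putting the three estimates together, $W_{m-1}(\bAPR)-x^*_{m-1}(\bAPR)\ge \tfrac{2}{n^7}>\tfrac{1}{n^8}$, and Observation~\ref{ob.2}(i) forces $\ps_{(m-1)}=2$. I do not anticipate a genuine obstacle here: the difficult work was already carried out in the price-ratio bound (Lemma~\ref{lem.pr}), Observation~\ref{ob.1}, and Claim~\ref{cl.1}, and the present argument merely assembles them --- the one place needing care being the use of the pairing to keep $N_R$ and $N_C$ from over-consuming bad $m-1$.
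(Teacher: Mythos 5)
Your proposal matches the paper's proof essentially step for step: use Observation~\ref{ob.1} to conclude $x^*_{m-1}(N_D)=0$ while $W_{m-1}(N_D)=2n/n^8$, use Claim~\ref{cl.1} plus Observation~\ref{ob.1} to show $N_R\cup N_C$ runs no deficit on bad $m-1$, and then invoke Observation~\ref{ob.2}(i). The reasoning and the way the cited results are combined are the same as in the paper; the proposal is correct.
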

\begin{proof}
Claim \ref{cl.1} together with Observation \ref{ob.1} implies that $W_{(m-1)}(N_R \cup N_C) \ge x^*_{(m-1)}(N_R \cup N_C)$. And from Observation \ref{ob.1} we also know that $x^*_{(m-1)}(N_D)=0$ while $W_{(m-1)}(N_D) = 2n/n^8 > 1/n^7$. Both of these together gives, $W_{(m-1)}(\bAPR) - x^*_{(m-1)}(\bAPR) > 1/n^7$ and then Observation \ref{ob.2} implies $\ps_{(m-1)}=2$.
\end{proof}

Using Observation \ref{ob.1} next we show a stronger version of Claim \ref{cl.1}.

\begin{observation}\label{ob.4}
For any $s\neq s' \in [n]$, let $i=a_{(s,s',R)}$ and $i'=a_{(s',s,R)}$. Then for every bad $j \in [m-1]$, $W_{ij} + W_{i'j} \ge x^*_{i,j} + x^*_{i',j}$, and the condition holds with equality for $j\in \{(n+1),\dots,(m-1)\}$. Similarly, for $i=a_{(s,s',C)}$ and $i'=a_{(s',s,C)}$ where the equality holds for $j\in [n] \cup \{(m-1)\}$.
\end{observation}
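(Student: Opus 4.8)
The plan is to read the inequality off Claim \ref{cl.1} together with Observation \ref{ob.1}, and then to promote it to an equality on the ``crossover'' bads by a short budget computation that exploits the price bounds of Lemma \ref{lem.pr} (recall that after scaling we may assume $\ps\in[1,2]^m$).

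\textbf{The inequality.} Take a row pair $i=a_{(s,s',R)}$, $i'=a_{(s',s,R)}$ and a bad $j\in[m-1]$. By Observation \ref{ob.1} neither agent ever consumes on a segment of slope $H$, so each of them consumes only on its unique low-slope first segment of bad $j$, if it has one, and consumes nothing otherwise. For bads $j\in[n]\setminus\{s\}$ agent $i$'s function is slope-$H$ throughout, so $x^*_{ij}=0=W_{ij}$ (and symmetrically for $i'$ on $[n]\setminus\{s'\}$); for the remaining bads the first segments were chosen in a ``crossed-over'' way — $u_{i,(n+k)}$ has first-segment length $W_{i',(n+k)}$, $u_{i',(n+k)}$ has length $W_{i,(n+k)}$, $u_{i,(m-1)}$ has length $W_{i',(m-1)}$, $u_{i',(m-1)}$ has length $W_{i,(m-1)}$, while $u_{i,s}$ has length $1/n^4=W_{is}$. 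Summing the per-agent bounds $x^*_{ij}\le(\text{first-segment length of }u_{ij})$ over the pair gives exactly $x^*_{ij}+x^*_{i'j}\le W_{ij}+W_{i'j}$; this is Claim \ref{cl.1} with first-segment allocations replaced by total allocations, which is legitimate precisely because of Observation \ref{ob.1}. The column case is identical, with the crossover now on $[n]\cup\{m-1\}$.

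\textbf{The equality.} It suffices to show that agent $i$ fully consumes each of its slope-$\tfrac13$ first segments; then $x^*_{i,(n+k)}=W_{i',(n+k)}$ for all $k$ and $x^*_{i,(m-1)}=W_{i',(m-1)}$, and adding the mirror statement for $i'$ turns each of the above inequalities into an equality on $j\in\{n+1,\dots,m-1\}$. All of $i$'s slope-$\tfrac13$ segments have pain per buck $\tfrac13/\ps_j\in[\tfrac16,\tfrac13]$, strictly below the $ppb$ of $i$'s only other useful bads — bad $s$ (slope $1$, $ppb\in[\tfrac12,1]$) and bad $m$ (slope $3$) — so in any optimal bundle $i$ exhausts all of its slope-$\tfrac13$ segments before buying any of $s$ or $m$, provided its budget is large enough to pay for them. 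I would check this by computing the difference between $i$'s budget $\sum_jW_{ij}\ps_j$ and the cost $\sum_kW_{i',(n+k)}\ps_{(n+k)}+W_{i',(m-1)}\ps_{(m-1)}$ of its slope-$\tfrac13$ segments: using $W_{i',(n+k)}-W_{i,(n+k)}=-r_k/n^6$ and $W_{i',(m-1)}-W_{i,(m-1)}=r/n^6$, this difference equals $\tfrac{1}{n^4}\ps_s-\tfrac{1}{n^6}\sum_k r_k(\ps_{(m-1)}-\ps_{(n+k)})$, which is at least $\tfrac{1}{n^4}-\tfrac{1}{n^5}>0$ since $|r_k|\le 1$ and all prices lie in $[1,2]$. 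Hence agent $i$ consumes all of its slope-$\tfrac13$ segments, and the equality follows. The column case runs verbatim with $r_k,r$ replaced by $c_k,c$ and the anchor endowment $1/n^4$ now sitting on bad $n+s$.

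\textbf{Main obstacle.} The delicate part is the budget bookkeeping in the last paragraph: one must see that the Nash-payoff-dependent endowments telescope to the clean expression $-\tfrac{1}{n^6}\sum_k r_k(\ps_{(m-1)}-\ps_{(n+k)})$ and that this $O(1/n^5)$ quantity is dominated by the single ``anchor'' endowment $\tfrac{1}{n^4}\ps_s$. This is also exactly where Lemma \ref{lem.pr} is indispensable, since without the bound $\ps_j/\ps_{(n+k)}<3$ one could not guarantee that the slope-$\tfrac13$ segments are strictly cheaper (in $ppb$) than the slope-$1$ and slope-$3$ segments, and hence could not conclude that an optimal bundle fills them completely.
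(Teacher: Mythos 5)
Your proposal is correct and follows essentially the same two-step structure as the paper: derive the inequality from Claim \ref{cl.1} plus Observation \ref{ob.1} (no consumption on slope-$H$ segments forces total allocation to coincide with first-segment allocation), then obtain equality by a pain-per-buck argument using Lemma \ref{lem.pr} to get strict separation $1/3 < 1/2$, combined with a budget check that the anchor endowment $W_{is}\ps_s \ge 1/n^4$ dominates the $O(1/n^5)$ cost of the slope-$1/3$ segments. The only cosmetic difference is that you telescope the budget-minus-cost expression exactly to $\tfrac{\ps_s}{n^4}-\tfrac{1}{n^6}\sum_k r_k(\ps_{(m-1)}-\ps_{(n+k)})$, whereas the paper separately upper-bounds the cost by $4/n^5$; your version is marginally tighter but logically equivalent.
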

\begin{proof}
The inequality follows using the fact that no agent consumes any bad at cost $H$ by Observation \ref{ob.1}. To show equality for $j>n$ when $i=a_{(s,s',R)}$ and $i'=a_{(s',s,R)}$, we only need to make sure that $i$ and $i'$ consumes the first segment of bad $j$ completely. 

For agent $i$, recall that the cost of any $j\in\{(n+1),\dots,(m-1)\}$ on the first segment is $1/3$, compared to that the next lowest cost of $1$ (on the first segment of bad $s$). Since price ratios of any two bads is not more than $2$, she prefers to buy all the segments with cost $1/3$ first. To consume all of them she needs money of,
\[
\frac{1}{n^6} \sum_{k\in [n]}(R_{sk} - R_{tk})^+ \ps_{(n+k)} + \frac{1}{n^6} (\sum_{k\in[n]} R_{tk} - R_{sk})^+ \ps_{(m-1)} \le \frac{4n}{n^6}
\]

The above inequality follows from the fact that $R\in[0,1]^{n\times n}$ and every price is at most $2$ (given that the minimum price is $1$). 
Her endowment of $1/n^4$ amount of bad $s$ earns her at least $1/n^4$ units of money, which is enough to buy the above bundle. Similar argument follows for agent $i'$. 

The case for for $i=a_{(s,s',C)}$ and $i'=a_{(s',s,C)}$ follows similarly.
\end{proof}

Now to prove Lemma \ref{lem:ne-cond} we only need to show that agents of set $\bAPR$ leave at least $1/n^8$ amount of bad $s$ to be consumed by $N_{PR}$. We will crucially use the above observations for the same.
\medskip

\noindent{\em Proof of Lemma \ref{lem:ne-cond}.}
Let $i=a_{(s,t,R)}\in N_R$. The total earning of $i$ is given by \eqref{eq:NR_earnings}. Given that price ratio of any two bads is bounded above by $2$ (Lemma \ref{lem.pr}), agent $i$ will first buy the first segments of bads $\{(n+1),\dots,(m-1)\}$, then the first segment of bad $s$, and then consume bad $m$ if any money left. By Observation \ref{ob.4}, agent $i$ spends exactly \eqref{eq:NR_spending} on the bads $\{(n+1),\dots, (m-1)\}$.

The left over money after the above spending is given by \eqref{eq:NR_money_left}
\[
\begin{array}{l} 
\ \ \ \frac{\ps_s}{n^4} +\frac{1}{n^6} \left[\sum_{k \in [n]} (R_{sk} - R_{tk}) (\ps_{(m-1)}-\ps_{(n+k)}) \right] \\[10pt]
\ \ \ = \frac{\ps_s}{n^4} +\frac{1}{n^6} \sum_{k \in [n]} (R_{sk} - R_{tk}) v_k  \\[10pt]
\ \ \ = \frac{\ps_s}{n^4} +\frac{1}{n^6} \big[(R\vv)_s - (R\vv)_t\big] \\[10pt]
\ \ \ < \frac{\ps_s}{n^4} - \frac{1}{n^7} \le \ps_s \left(\frac{1}{n^4} - \frac{1}{2n^7}\right)\\
\end{array}
\]

The last inequality above follows from $\ps_s\le 2 \Rightarrow -1/\ps_s \le -1/2$. Thus, agent $i$ can only consume at most $\left(\frac{1}{n^4} - \frac{1}{2n^7}\right)$ amount of bad $s$ from the $1/n^4$ units that she brings. By Observation \ref{ob.4}, agents of $N_R$ and $N_C$ are unable to consume any of the remaining $\frac{1}{2n^7}$ amount of $s$. Among the agents of $N_D$ only agent $a_s$ can consume $s$, but only up to the amount of $\frac{\ps_{(m-1)}}{n^8\ps_s} \le \frac{2}{n^8}$. Thus, $W_s(\bAPR) - x_s^*(\bAPR) \ge \frac{1}{2n^7} - \frac{2}{n^8} > \frac{1}{n^8}$, and then using Observation \ref{ob.2} it follows that $\ps_s=2$. This in turn implies $u_s=2-\ps_s =0$. 

The second part of the lemma follows similarly. \qed
\medskip

\begin{observation}\label{ob.5}
Let $s\in \argmax_{k \in [n]} (R\vv)_k$, then $\ps_s=1$. 
Similarly, for $s\in \argmax_{k \in [n]} (\uu C)_k$, $\ps_{(n+s)}=1$. 
\end{observation}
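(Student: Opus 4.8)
The plan is to derive $\ps_s=1$ from Observation~\ref{ob.2}(ii) by showing that the agents outside $N_{PR}$ collectively over-demand bad $s$, namely $W_s(\bAPR)-x^*_s(\bAPR)\le -1/n^8$. As in the earlier observations we scale prices so that the minimum price is $1$, hence $\ps\in[1,2]^m$ by Lemma~\ref{lem.pr}, and we use $\ps_{(m-1)}=2$ (Observation~\ref{ob.3}). The second assertion, for $s\in\argmax_k(\uu^TC)_k$, follows by the symmetric argument: swap the roles of $N_R$ and $N_C$, replace the item index $s\in[n]$ by $n+s$ and the matrix $R$ by $C$, and use the deficit agent $a_{(n+s)}$; so we only detail the first.

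First I would pin down which agents of $\bAPR=N_D\cup N_R\cup N_C$ interact with bad $s$. By Observation~\ref{ob.1} no agent consumes a bad on a slope-$H$ segment, so a positive consumption of bad $s$ requires a segment of smaller slope on $s$; inspecting the four agent types, the only such agents are the deficit agent $a_s$, the agents $a_{(s,t,R)}\in N_R$ with $t\ne s$, and every agent of $N_C$ (each has a slope-$1/3$ segment on bad $s$). The remaining agents of $\bAPR$ have slope $H$ on $s$ and are not endowed with $s$, so they contribute $0$ to $W_s(\bAPR)-x^*_s(\bAPR)$. For the $N_C$ pairs, Observation~\ref{ob.4} gives that endowment equals consumption coordinatewise on $[n]\cup\{m-1\}$, and summing over the pairs yields $W_s(N_C)=x^*_s(N_C)$. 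It remains to evaluate the contributions of $a_s$ and of the $a_{(s,t,R)}$.

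The deficit agent $a_s$ has a single slope-$1$ segment, on bad $s$, and slope $H$ everywhere else, so (prices in $[1,2]$) its optimal bundle is entirely bad $s$; spending its earning, which is at least $\ps_{(m-1)}/n^8=2/n^8$, gives $x^*_{a_s,s}=(\text{earning})/\ps_s\ge 1/n^8$, while $W_{a_s,s}=0$. For the pair $a_{(s,t,R)},a_{(t,s,R)}$, only $a_{(s,t,R)}$ touches bad $s$, of which it owns $1/n^4$. Here I invoke the bookkeeping \eqref{eq:NR_earnings}--\eqref{eq:NR_money_left} already in the paper: ranking segments by pain per buck, $a_{(s,t,R)}$ first exhausts the slope-$1/3$ first segments of bads $\{n+1,\dots,m-1\}$, then the slope-$1$ first segment of bad $s$, then bad $m$ (slope $3$) if anything remains, and never $s$'s slope-$H$ segment; the money remaining after the slope-$1/3$ segments equals $\frac{\ps_s}{n^4}+\frac{1}{n^6}\big[(R\vv)_s-(R\vv)_t\big]$, which is $\ge\frac{\ps_s}{n^4}$ because $s\in\argmax_k(R\vv)_k$. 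That is exactly enough to pay for the whole length-$1/n^4$ first segment of bad $s$, so $x^*_{a_{(s,t,R)},s}=1/n^4=W_{a_{(s,t,R)},s}$ and, summing over $t$, $W_s(N_R)=x^*_s(N_R)$. Adding the three contributions gives $W_s(\bAPR)-x^*_s(\bAPR)=-x^*_{a_s,s}\le -1/n^8$, and Observation~\ref{ob.2}(ii) yields $\ps_s=1$.

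The crux is the penultimate step on $N_R$: one needs both the pain-per-buck ordering (so that the first segment of $s$ is reached before bad $m$ and before $s$'s own slope-$H$ segment — this is exactly where $H=10$ and the $[1,2]$ price band are used) and the lower bound $\ge\ps_s/n^4$ on the leftover money, which is precisely where the $\argmax$ hypothesis is consumed. Everything else — the slope-$H$ exclusion via Observation~\ref{ob.1}, the trivial behavior of $a_s$, and the $N_C$ cancellation via Observation~\ref{ob.4} — is routine.
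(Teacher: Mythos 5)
Your proposal is correct and follows essentially the same route as the paper's own proof: reduce the claim to the excess-demand criterion of Observation~\ref{ob.2}(ii), show that each $a_{(s,t,R)}\in N_R$ fully re-absorbs the $1/n^4$ of bad $s$ it brings (using the leftover-money computation \eqref{eq:NR_money_left} and the $\argmax$ hypothesis to get the lower bound $\ge\ps_s/n^4$), use Observation~\ref{ob.4} and Observation~\ref{ob.1} for the remaining $N_R$ and $N_C$ agents, and let the deficit agent $a_s$ (who earns $2/n^8$ from her endowment of bad $m-1$ at the price established in Observation~\ref{ob.3}) supply the $-1/n^8$ excess demand. Your write-up is a bit more explicit than the paper's in itemizing which agents in $\bAPR$ can actually consume bad $s$ and in stating that Observation~\ref{ob.4} alone gives only $\ge$ for the $N_R$ pair on coordinates in $[n]$ (so the equality there really comes from the leftover-money argument), but the underlying argument is the same.
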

\begin{proof}
As observed in the proof of Lemma \ref{lem:ne-cond}, for all $t \in [n], t\neq s$, the corresponding agent $i=a_{(s,t,R)}\in N_R$, the total earning minus spending on the segments with slope $1/3$ is,
\[
\frac{\ps_s}{n^4} +\frac{1}{n^6} \big[(R\vv)_s - (R\vv)_t\big] \ge \frac{\ps_s}{n^4},
\]
since $(R\vv)_s - (R\vv)_t\ge 0$. Therefore, $i$ will consume all of the bad $s$ that she brings, i.e., $1/n^4$ amount. By Observation \ref{ob.4} remaining agents of $N_R$ and $N_C$ together will consume all of $s$ that they bring to the setting. Among $N_D$, the agent $a_s$ brings $1/n^8$ units of $(m-1)$ and wants to consume only bad $s$. Price of $2$ for bad $(m-1)$ (Obs. \ref{ob.3}), gives earning of $2/n^8$ to agent $a_s$. Since price of $s$ can be at most $2$ (by Lemma \ref{lem.pr} assuming the minimum price is 1), she will demand at least $1/n^8$ units of bad $s$.   
Putting supply and demand of all the agents in $\bAPR$ together we get,
\[
W_s(\bAPR) - x_s^*(\bAPR) \le -1/n^8
\]

From the above inequality, Observation \ref{ob.2} gives us $\ps_s=1$. The second part follows similarly.
\end{proof}

The above observation ensures that vectors $\uu$ and $\vv$ are not zero vectors. Using this together with Lemma \ref{lem:ne-cond} we will show the main theorem next. 
\medskip

\noindent{\em Proof of Theorem \ref{thm.red}}
Given an equilibrium $(\ps,x^*)$ of setting $\mathcal M(R,C)$ construct vectors $\uu$ and $\vv$ as defined above,
\[
\forall s \in [n], \ \ u_s= 2- p_s^*, \ \ \mbox{ and } v_s=2-p_{(n+s)}^*
\]

Observation \ref{ob.5} ensures that $\uu, \vv\neq \0$. Construct strategy vectors 
\[
\balpha=\frac{\uu}{\sum_{s\in[n]} u_s},\ \ \ \mbox{ and } \ \ \ \bbeta=\frac{\vv}{\sum_{s\in[n]} v_s}
\]

We will show that $(\balpha,\bbeta)$ satisfies Nash equilibrium conditions \eqref{eq:ne} for $\epsilon=1/n$. For any pair of strategies $s,t\in [n]$
\[
\begin{array}{lcl}
(R\bbeta)_s - (R\bbeta)_t < -1/n & \Rightarrow & (\sum_{s\in[n]} v_s)^{-1} ((R\vv)_s - (R\vv)_t) < -1/n\\
& \Rightarrow & ((R\vv)_s - (R\vv)_t) < -1/n\ \ \ \ (\because (\sum_{s\in[n]} v_s)\ge 1 \mbox{ using Obs. \ref{ob.5}})\\
& \Rightarrow & \ps_s=2 \ \ \ \ (\because \mbox{ Lemma \ref{lem:ne-cond}})\\
& \Rightarrow & u_s=0 \ \ \ \ (\because u_s = 2 - \ps_s)\\
& \Rightarrow & \alpha_s=0 \ \ \ \ (\because u_s = 2 - \ps_s)
\end{array}
\]

By the similar argument as above we can show that for any pair $s,t \in [n]$ if $(\balpha^TC)_s - (\balpha^T C)_t < -1/n$ then $v_s=0$. \qed
\medskip

\noindent{\em Proof of Theorem \ref{thm:hardness}.}
	Clearly, our construction is polynomial (in number of agents and bads) with respect to the input size of the 2-player game $(R,C)$. Theorem \ref{thm.red} shows that a $1/m^8$-approximate exchange setting equilibrium yields a $1/n$-approximate well-supported NE of $(R,C)$. Moreover, this conversion from exchange equilibrium to NE runs in polynomial time. As finding a $1/n$-approximate well-supported NE is $\classPPAD$-hard, it follows that finding a $1/poly(n)$-approximate exchange equilibrium is $\classPPAD$-hard. \qed
\medskip

\noindent{\em Proof of Theorem \ref{thm:ceei-hardness}.}
	We will show how to convert our exchange setting construction into a Fisher setting with equal budgets in a way such that the equilibria of the two settings are in one-to-one correspondence. Moreover, a $O(\epsilon/n)$ approximate equilibrium in the Fisher setting is a $O(\epsilon)$ approximate equilibrium in the exchange setting. Both the reduction from exchange to Fisher setting, and conversion of Fisher to exchange equilibrium take polynomial time which establishes the $\classPPAD$-hardness of the Fisher case. We use the notation $(\w,\x,\p)$ and $(\tilde{\w}, \tilde{\x}, \tilde{\p})$ to denote endowments, allocations, and prices in the exchange and Fisher settings respectively.
	
	First we show how to convert our exchange setting construction into a Fisher setting. We ensure equal budgets by giving each agent an equal amount of each bad. Recall in the construction of the exchange setting, the agents of $N_{PR}$ bring the largest amount of each bad $j\in [m]$, specifically a $1/n$ amount. In the Fisher setting, we give each agent a $\tilde{W}_{ij} =1/n$ amount of each bad $j$, by increasing $i$'s initial endowment of each bad $j$ by $1/n-W_{ij}$. As all agents bring the same amount of each bad, the new instance is a Fisher setting with equal budgets.
	
	Next, we want to ensure a one-to-one correspondence between the equilibria of the two settings. For each agent $i$ and bad $j$ we add an extra segment to $f_{ij}$ with disutility $\tilde{d}_{ij1}=0$ and length $\tilde{L}_{ij1} = 1/n-W_{ij}$, the additional amount of bad $j$ given to agent $i$. Note that this new segment becomes the first segment of $f_{ij}$, and `shifts' all other segments to the right. Moreover, for any set of prices $p$, these newly created first segments are: optimal purchases since they have 0 disutility, and can be fully purchased since the cost $\tilde{L}_{ij1} p_j = (1/n-W_{ij})p_j$ is exactly equal to extra money $i$ earns from bad $j$ in the Fisher setting. Thus, after purchasing all the new first segments, $i$ is left with a remaining budget of $\sum_{j\in M} \tilde{W}_{ij} p_j - \sum_{j \in B} \tilde{L}_{ij1}p_j = \sum_{j\in M} W_{ij} p_j$, i.e., her budget in the exchange setting. Finally, since we shift all remaining segments of $f_{ij}$ to the right, it is easy to check that $i$'s forced, flexible, and undesirable segments are the same (up to the newly added first segments) in both the exchange and Fisher settings for any prices $p$. Thus, if $x_i = (x_{ij})_{j\in M}$ is an affordable and optimal bundle in the exchange setting, then $\tilde{x}_i = (L_{ij1} + x_{ij})_{j\in M}$ is an affordable and optimal bundle in the Fisher setting. It follows that the set of equilibria of the two settings are in one-to-one correspondence.
	
	Finally, we show that if $(\tilde{\x}^*,\tilde{\p}^*)$ is an $\epsilon/(6n)$ approximate equilibrium of the Fisher setting, then $(\x^*,\p^*)$ is an $\epsilon$ approximate equilibrium of the exchange setting where
	\begin{equation} \label{eq:new_eq}
		x_{ij}^* = \tilde{x}_{ij}^* - \tilde{L}_{ij1}, \quad p_j^* = \tilde{p}_j^*, \ \forall i,j.
	\end{equation} 
	As previously argued, if $\tilde{x}_i^*$ is an affordable and optimal bundle for the Fisher setting, then $x_i^*$ is an affordable and optimal bundle for the exchange setting. Thus, if condition C1 holds for $(\tilde{\x}^*,\tilde{\p}^*)$, then it holds for $(\x^*,\p^*)$. It remains to check condition C2'. Observe, that there are $$|N| = 2\frac{m(m-1)}{2} + 2n +2\frac{n(n-1)}{2} + 2\frac{n(n-1)}{2} = 6n^2+6n+2$$ 
	agents used in the construction, and $\tilde{W}_{ij} = 1/n$ for all $i,j$. Thus, $\sum_{i\in N} \tilde{W}_{ij} = 6n+6+2/n < 6n+7$, for all $j\in M$. By \eqref{eq:new_eq}, $|\sum_{i\in N} (\tilde{W}_{ij} - \tilde{x}_{ij}^*) | = |\sum_{i\in N} (W_{ij} - x_{ij}^*)|$, since $\tilde{L}_{ij1} = 1/n-W_{ij}$. Therefore, if $(\tilde{\x^*},\tilde{\p^*})$ is $\epsilon/(6n)$ approximate equilibrium of the Fisher setting, then
	\begin{equation}
		|\sum_{i\in N} (W_{ij} - x_{ij}^*)| = |\sum_{i\in N} (\tilde{W}_{ij} - \tilde{x}_{ij}^*) | \leq \frac{\epsilon}{6n} \sum_{i\in N} \tilde{W}_{ij} \leq \epsilon(1+7/(6n)) <  3\epsilon, \ \forall j \in M,
	\end{equation}
	so that $(\x^*,\p^*)$ is an $\epsilon$ approximate equilibrium for the exchange setting by \eqref{eq.amc}.
	
	It follows that computing a Fisher setting equilibrium (with equal budgets) is $\classPPAD$-hard, since both the reduction from exchange to Fisher setting, and the conversion from Fisher to exchange equilibrium take polynomial time (and space). \qed

\section{Numerical Experiments}\label{sec:experiments}
Table \ref{tab:numerical} summarizes the results of numerical experiments conducted on randomly generated trials using a Matlab implementation of our algorithm. Note that we used the same number of segments, shown as \#Seg in the table, for each agent and each item. We drew the $U_{ijk}$'s, $L_{ijk}$'s, and $W_{ij}$'s uniformly at random from the intervals $[-1,0]$, $[0,1/\#\text{Seg}]$ , and $[0,1]$ respectively. Then, we rescaled the $W_{ij}$ values to ensure a unit amount of each bad. Finally, for each agent $i$ and each bad $j$, we sorted the $U_{ijk}$'s decreasing order to generate an SPLC utilities. 

\begin{table}[tbh!]
	\centering
	\begin{tabular}{|l|l|l|l|l|}
		\hline
		$N\times M \times$ \#Seg & Instances & Min iters & Mean iters & Max iters \\ \hline
		$5\times 5\times 5$ & 1000 & 85 & 137.3 & 297 \\ \hline
		$10 \times 5 \times 5$ & 1000 & 107 & 170.9 & 395 \\ \hline
		$10 \times 10 \times 5$ & 1000 & 130 & 369.1 & 609 \\ \hline
		$15 \times 15 \times 5$ & 50 & 168 & 750.3 & 1393 \\ \hline
		$20 \times 20 \times 5$ & 10 & 1127 & 1398.2 & 2001 \\ \hline
	\end{tabular}
	\caption{Experimental results conducted on random instances.}
	\label{tab:numerical}
\end{table}

\begin{figure}[tbh!]
	\centering
	\includegraphics[scale=0.75]{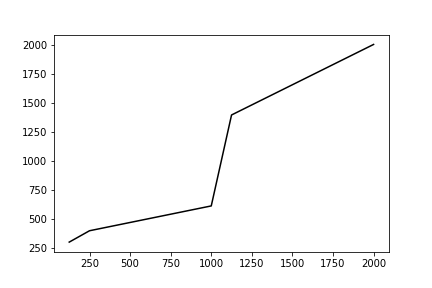}
	\caption{Plot of $N\times M \times$ \#Seg versus maximum number of iterations.}
	\label{fig:results}
\end{figure}

Figure \ref{fig:results} compares the maximum number of iterations versus the total number of segments in the agent's utility functions, i.e., $N \times M \times$ \#Seg $= \sum_{i,j} |u_{ij}|$. Note that even in the worst case, the maximum number of iterations is on the order of the total number of segments of the agents' utility functions.

\appendix
\section{Converting Bads into Goods?}\label{sec:bads_to_goods}
Bogomolnaia et al.~\cite{BogomolnaiaMSY17} propose a method to convert a competitive division problem with bads into a problem with only goods. Note that their argument only applies to the Fisher setting and uses linear utility functions. The approach relies on the interpretation of leisure as the opposite of work. Therefore, if agent $i$ is assigned an $x_{ij}$ fraction of bad $j$, then we can equivalently view this as a good representing an exemption from completing a $1-x_{ij}$ fraction of the task. 

The reduction from bads to goods proposed by~\cite{BogomolnaiaMSY17} is as follows. Assume there are $n$ agents in the competitive division problem. For each bad $j$, we create $n-1$ units of a good $j'$ representing an exemption from completing bad $j$. Suppose agent $i$ has utility $D_{ij} <0$ for bad $j$, then $i$'s utility for good $j'$ is SPLC with two segments. The first segment has slope $|D_{ij}| > 0$ and length $L_{ij} = 1$, and the second segment has slope 0. Note that this means $i$ values up to 1 unit of exemption to the bad $j$. 

Bogomolnaia et al.~\cite{BogomolnaiaMSY17} state that given an equilibrium $(\x',\p')$ in the problem of goods, one can obtain an equilibrium in bads by setting $p_j^* = -p_j'$, and $x_{ij}^* = 1-x_{ij}'$. \\

\noindent\textbf{Counter Example:} Consider a competitive division problem with two agents $a$ and $b$, and three bads 1, 2, and 3. The agents' utility functions are: $u_a(x) = -10x_{a1}-2x_{a2}-x_{a3}$, and $u_b(x) = -x_{a1} -100x_{a2}-100x_{a3}$. We create 1 unit of exemption for each bad. The utility functions for agent $a$ are SPLC where the first segment has slope $(10,2,1)$ for goods 1, 2, and 3 respectively, and are capped at 1 unit of good. One can verify that the prices $\p' = (4/3,1/3,1/3)$, and the allocations $x_a' = (3/4,0,0)$ and $x_b = (1/4,1,1)'$ are an equilibrium in goods. In bads, this becomes $\p^* = (-4/3,-1/3,-1/3)$, with the allocation $x_a^* =(1/4,1,1)$ and $x_b^* = (3/4,0,0)$. However, this is not a competitive equilibrium since $a$ does not receive the same $pbp$ for all bads. One can check that the prices $\p^* = (-20/13,-4/13,-2/13)$, along with the allocation $x_a^* =(7/20,4/13,2/13)$ and $x_b^*=(13/20,0,0)$ give an equilibrium.

\section{Approach of~\cite{Eaves76,GargMSV15} Gets Stuck on Secondary-Rays}\label{app:sr}  
Previous works of Eaves~\cite{Eaves76} and Garg et al.~\cite{GargMSV15} developed complementary pivot algorithms based on Lemke's scheme for all goods under SPLC utilities. The basic structure of our LCP is similar to prior works. However, they use a different change of variables. Both~\cite{Eaves76,GargMSV15} use a \emph{lower bound} on prices by making the price of good $j$: $1+p_j$, where $p_j \geq 0$. Thus, the \emph{minimum} price is 1 (in absolute value). In addition,~\cite{Eaves76,GargMSV15} make no changes to the variable $r_i= 1/ppb_i$, where $ppb_i$ is the pain per buck of agent $i$'s flexible segment. In this section, we examine this change of variables when applied to the special case of all bads with linear utilities. The resulting formulation is as follows:
\begin{align}
&\forall i \in N,\ \ \sum_{j\in M} W_{ij} p_j - \sum_{j \in M} f_{ij}-\epsilon_i z \le -\sum_{j\in M} W_{ij} & \perp  r_i \label{eq:bL_sr_budget}\\
&\forall j \in M,\ \ \sum_{i \in N} f_{ij} - p_j\le 1 & \perp  p_j \label{eq:bL_sr_spending}\\
&\forall i \in N,\ \forall j \in M,\ \ \ p_j - D_{ij}r_i -\delta_{ij} z \le -1 & \perp  f_{ij} \label{eq:bL_sr_mpb}
\end{align}

The constraints have the same interpretation as before: a budget constraint for all agents \eqref{eq:bL_sr_budget}, a constraint on the total spending of agents for each bad \eqref{eq:bL_sr_spending}, and a minimum pain per buck constraint for each agent, for each bad \eqref{eq:bL_sr_mpb}. Note that we add coefficients $\epsilon_i$, and $\delta_{ij}$ to $z$ for all terms with negative rhs for two purposes. First, this provides a degree of control over the primary ray, i.e., the initial double label, and therefore  how the algorithm starts. Second, we require $\delta_{ij}$'s coefficients to ensure nondegeneracy of LCP when $z>0$. To see this, suppose $p_j = 0$ for some $j\in M$, and $r_i = 0, \ \forall i\in N$. Then by setting $z = 1$, the constraints \eqref{eq:bL_sr_mpb} become tight (hold with equality) for this $j$, $\forall i \in N$. Thus, there is no unique double label. 

We now examine the behavior of Lemke's algorithm when starting from a constraint \eqref{eq:bL_sr_budget} or \eqref{eq:bL_sr_mpb}. We show that in both cases the algorithm quickly reaches a secondary ray.

\subsection{Starting from \eqref{eq:bL_sr_budget}}
Suppose we select $\epsilon_i = 1, \ \forall i \in N$, and ensure $1/\delta_{ij} < \max_k \sum_j W_{kj}$. By setting $z = \max_k \sum_j W_{kj}$ and all other variables $(\p,\r,\f) = \0$, we obtain a unique double label for constraint \eqref{eq:bL_sr_budget} for agent $a = \arg\max_k \sum_j W_{kj}$. Specifically, all constraints \eqref{eq:bL_sr_mpb} hold with strict inequality.

Lemke's algorithm then fixes $z= \max_k \sum_j W_{kj} = \sum_j W_{aj}$, and increases $r_a$. Observe that $r_a$ only appears in the constraints \eqref{eq:bL_sr_mpb}. However, since $\delta_{aj}z>1$, and $D_{aj} >0, \ \forall j\in M$, increasing $r_a$ never makes any inequality \eqref{eq:bL_sr_mpb} tight for any $j\in M$. That is, we arrived at secondary ray. Notice that the same problem arises regardless of which budget constraint \eqref{eq:bL_sr_budget} we start from (assuming appropriate choice of $\epsilon_i$'s and $\delta_{ij}$'s). Therefore, starting from a budget constraint \eqref{eq:bL_sr_budget} \emph{always} leads to a secondary ray.

\subsection{Starting from \eqref{eq:bL_sr_mpb}}
Suppose we fix $\epsilon_i =1, \ \forall i\in N$, and $\delta_{ij}$'s such that $\max_k \sum_j W_{kj} < \max_{ij} 1/\delta_{ij}$. Then, setting $z = \max_{ij} 1/\delta_{ij}$ and all other variables $(\p,\r,\f) = \0$, yields the unique double label at constraint \eqref{eq:bL_sr_mpb} for the pair $(a,b) = \arg\min_{ij} \delta_{ij}$, i.e., the agent $a\in N$ and bad $b\in M$ that achieve $\max_{ij} 1/\delta_{ij}$. Further, all constraints \eqref{eq:bL_sr_budget} hold with strict inequality.  

Lemke's algorithm fixes $z = 1/\delta_{ab}$, and increases $f_{ab}$ until some other inequality becomes tight. Note that \eqref{eq:bL_sr_budget} can not become tight due to our choice of $z$. Then, \eqref{eq:bL_sr_spending} becomes tight for bad $b$. At this point, we may change $p_j$ subject to the following constraints
\begin{align*}
f_{ab} &= 1 + p_b\\
1+p_b &= \delta_{ab}z.
\end{align*}
We check whether a constraint of form \eqref{eq:bL_sr_budget} or \eqref{eq:bL_sr_mpb} can become tight.\\

\noindent\textbf{Starting from Pain per Buck Constraints \eqref{eq:bL_sr_mpb}.}
For any $i \neq a$, we would require that \eqref{eq:bL_sr_mpb} becomes tight while observing the relationship between $f_{ab}$, $p_b$, and $z$ above. Thus, we need
\begin{equation*}
1+p_b = \delta_{ib}z = \frac{\delta_{ib}}{\delta_{ab}} (1+p_b) > 1+p_b,
\end{equation*}
since $\delta_{ab} = \min_{ij} \delta_{ij}$. Thus, no constraint of the form \eqref{eq:bL_sr_spending} can become tight.\\

\noindent\textbf{Budget Constraints \eqref{eq:bL_sr_budget}.} 
For any $i\in N$ (including $a$), we would require that \eqref{eq:bL_sr_budget} becomes while maintaining the relationship between $f_{ab}$, $p_b$, and $z$ above. Thus, for $i \neq a$ we need
\begin{equation*}
W_{ib}\ p_b + \sum_j W_{ij} = z = \frac{1+p_b}{\delta_{ab}},
\end{equation*}
or after rearranging 
\begin{equation*}
\underbrace{\Sigma_j W_{ij} - 1/\delta_{ab}}_{<0} =  \underbrace{\big(1/\delta_{ab} - W_{ib} \big) }_{>0} \ p_b,
\end{equation*}
where the inequalities of the coefficients follow from $1/\delta_{ab} > \max_k \sum_j W_{kj} \geq W_{ib}$. However, no value of $p_b \geq 0$ suffices. Similarly, if we want \eqref{eq:bL_sr_budget} to become tight for agent $a$, then we need
\begin{equation*}
\underbrace{\Sigma_j W_{ij} - 1/\delta_{ab}-1}_{<0} =  \underbrace{\big(1/\delta_{ab} - W_{ib} +1\big) }_{>0} \ p_b,
\end{equation*}
and again no value of $p_b > 0$ works.\\

\noindent\textbf{Conclusion:} The examples demonstrate that for this relationship between $f_{ab}$, $p_b$, and $z$, no constraints can become tight, i.e., we have reached a secondary ray.

\section{Omitted Proofs}\label{sec:proofs}

\subsection{Proof of Lemma~\ref{lem:market_solution_1}}\label{sec:lem:clearing_1}
We will need the following lemma to prove the result. 
\begin{lemma}\label{lem:clearing_1}
	If $\p^*$ is an equilibrium price vector, then $\exists \ \f$ such that \eqref{eq:budget_1} and \eqref{eq:spending_1} hold. Further, if $\p$ and $\f$ satisfy \eqref{eq:budget_1} and \eqref{eq:spending_1}, and $\p > 0$, then the market clears.
\end{lemma}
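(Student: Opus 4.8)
The plan is to prove the two implications separately, each by elementary bookkeeping of money and a change of variables; nothing deep is needed.

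For the first implication, I would start from an equilibrium $(\p^*,\x^*)$ of the all-bads instance. Recall from Section~\ref{sec:goods_and_bads} that $p^*_j \le 0$ for every (bad) item $j$; write $p_j := |p^*_j| = -p^*_j \ge 0$, and let $x^*_{ijk} \ge 0$ denote the amount of item $j$ agent $i$ consumes on segment $k$ in the equilibrium (taking the canonical segmentation of her optimal bundle). I would then set
\[
f_{ijk} := x^*_{ijk}\, p_j \;\ge\; 0 ,
\]
which is consistent with the intended meaning of the LCP variables, since $-p_j$ is the price of bad $j$ and $-f_{ijk} = x^*_{ijk}p^*_j$ is agent $i$'s signed spending on that segment. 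To check \eqref{eq:spending_1} (the constraint $\sum_{i,k} f_{ijk} \le p_j$ for each $j$): market clearing and unit supply give $\sum_{i,k} x^*_{ijk} = \sum_i W_{ij} = 1$, hence $\sum_{i,k} f_{ijk} = p_j\sum_{i,k} x^*_{ijk} = p_j$, so \eqref{eq:spending_1} in fact holds with equality. To check \eqref{eq:budget_1}: the equilibrium budget constraint $\sum_{j,k} x^*_{ijk} p^*_j \le \sum_j W_{ij} p^*_j$, rewritten in magnitudes (which flips the direction, all prices being non-positive), reads $\sum_{j,k} x^*_{ijk} p_j \ge \sum_j W_{ij} p_j$, i.e.\ $\sum_{j,k} f_{ijk} \ge \sum_j W_{ij} p_j$, which is exactly \eqref{eq:budget_1}. (Optimality of the bundle actually makes this tight as well, but $\le$ suffices.)

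For the second implication, suppose $\p,\f \ge \0$ satisfy \eqref{eq:budget_1} and \eqref{eq:spending_1} and $\p > \0$. I would first sum \eqref{eq:budget_1} over all $i\in N$, using $\sum_i W_{ij} = 1$, to obtain $\sum_j p_j \le \sum_{i,j,k} f_{ijk}$, and sum \eqref{eq:spending_1} over all $j\in M$ to obtain $\sum_{i,j,k} f_{ijk} \le \sum_j p_j$. Chaining these two bounds forces $\sum_j p_j = \sum_{i,j,k} f_{ijk}$, and since each side is a sum over the corresponding family of inequalities, \emph{every} instance of \eqref{eq:budget_1} and \eqref{eq:spending_1} must hold with equality. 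Now I would recover an allocation: because $p_j > 0$, set $x_{ijk} := f_{ijk}/p_j \ge 0$. Equality in \eqref{eq:spending_1} gives $\sum_{i,k} x_{ijk} = 1 = \sum_i W_{ij}$, so every bad is fully allocated, and equality in \eqref{eq:budget_1} gives $\sum_{j,k} x_{ijk} p_j = \sum_j W_{ij} p_j$, so every agent earns exactly the value of her endowment — which is precisely what ``the market clears'' means in this section.

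I do not expect a genuine obstacle here. The only points requiring care are the sign bookkeeping in the first implication (the equilibrium budget inequality reverses when one passes to price magnitudes) and the role of the hypothesis $\p > \0$ in the second: it is exactly what permits dividing $f_{ijk}$ by $p_j$ to reconstruct a feasible allocation, and without it there are ``dummy'' solutions with some $p_j = 0$ that carry no allocation information (cf.\ the degenerate solution discussed after LCP~\eqref{lcp:2}). So the lemma is essentially a change of variables plus the standard ``sum the two market-constraint families and sandwich'' argument.
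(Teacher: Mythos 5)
Your proposal is correct and matches the paper's proof essentially step for step: the same change of variables $p_j = |p^*_j|$, $f_{ijk} = x^*_{ijk}p_j$ (the paper just writes out the greedy segment-filling formula \eqref{eq:x_from_allocation} explicitly), and the same sandwich argument $\sum_j p_j \le \sum_{i,j,k} f_{ijk} \le \sum_j p_j$ for the converse, followed by reconstructing $x_{ijk} = f_{ijk}/p_j$. No substantive difference.
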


\begin{proof}
	Let $\p^*$ be an equilibrium price vector and set $\p = |\p^*|$. Let $\x^*$ be the equilibrium allocation for $\p^*$. For each agent $i$ and each bad $j$, we distribute $x_{ij}^*$ among individual segments by filling starting from the first segment until all of $x_{ij}^*$ is used
	\begin{equation} \label{eq:x_from_allocation}
		x_{ijk}^* = \min \bigg( \max \bigg( x_{ij}^* - \sum_{k' < k} L_{ijk}, \ 0  \bigg) ,\ L_{ijk} \bigg).
	\end{equation}
	The market clearing conditions ensures that setting $f_{ijk} = x_{ijk}^* p_j$ together with $\p$ satisfies \eqref{eq:budget_1} and \eqref{eq:spending_1}.
	
	Next, suppose $\p,\f$ satisfy \eqref{eq:budget_1} and \eqref{eq:spending_1} and $\p>\0$. Summing \eqref{eq:budget_1} over all $i\in N$ and \eqref{eq:spending_1} over all $j \in M$ gives
	\begin{equation*}
		\sum_{j} p_j = \sum_{i,j} W_{ij} p_j \leq \sum_{i,j,k} f_{ijk} \leq \sum_{j} p_j,
	\end{equation*}
	 where the first equality uses the fact that there is a unit amount of each bad, i.e., $\sum_{i} W_{ij} =1$, $\forall j \in M$. It follows from the non-negativity of all variables that all constraints \eqref{eq:budget_1} and \eqref{eq:spending_1} hold with equality. Therefore, setting $x_{ijk} = f_{ijk}/p_j$ ensures the market clears. 
\end{proof}
\begin{proof}(of Lemma~\ref{lem:market_solution_1})
	Let $(\x^*,\p^*)$ be a competitive equilibrium. Define $\p$ and $\f$ as in Lemma \ref{lem:clearing_1}, and set $r_i$ according to \eqref{eq:lambda_ppb} for any segment $(i,j,k)$ of $i$'s flexible partition. Note that $r_i > 0$ since $0 < D_{ij1} \leq D_{ij2} \leq \dots$ for all bads $j\in M$. By Lemma \ref{lem:clearing_1}, \eqref{eq:budget_1} and \eqref{eq:spending_1} hold with equality since $(\x^*,\p^*)$ clears the market. Therefore, so do (\ref{eq:budget_1}') and (\ref{eq:spending_1}').
	
	We set the variables $s_{ijk}$ as follows: if $(i,j,k)$ is undesirable or flexible set $s_{ijk} = 0$, if $(i,j,k)$ is a forced segment set $s_{ijk}$ to satisfy
	\begin{equation*}
		\frac{1}{r_i} = \frac{D_{ijk}}{p_j - s_{ijk}} \ \ \Rightarrow \ \ s_{ijk} = p_j - D_{ijk}r_i.
	\end{equation*}
	Note that $s_{ijk} \geq 0$ since $D_{ijk} > 0$ and $D_{ijk}/p_j \le \frac{1}{r_i}$ for forced segments. It can be easily verified that in each case: the segment is forced, flexible, or undesirable; the constraints \eqref{eq:ppb_1} and \eqref{eq:segment_1}, and corresponding complementarity conditions (\ref{eq:ppb_1}') and (\ref{eq:segment_1}') are satisfied.
\end{proof}

\subsection{Proof of Theorem~\ref{thm:equiL_soL_correspondence}}\label{sec:lem:equiv_sol}
We will need the following lemma to prove the theorem. 
\begin{lemma} \label{lem:optimaL_bundles}
	In any solution to LCP~\eqref{lcp:2} with $p_j < P, \ \forall j\in M$, and $r_i < R, \ \forall i\in N$, agents receive an optimal bundle of bads.
\end{lemma}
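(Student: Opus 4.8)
The plan is to recognize that, for each agent $i$, the LCP constraints \eqref{eq:budget_2}, \eqref{eq:ppb_2}, \eqref{eq:segment_2} together with their complementarity conditions are \emph{exactly} the KKT conditions of agent $i$'s optimal-bundle LP once the change of variables is undone. So I would first fix a solution $(\p,\f,\r,\s)$ of LCP~\eqref{lcp:2} with $p_j<P$ for all $j\in M$ and $r_i<R$ for all $i\in N$, and recover the equilibrium-scale quantities: the price of bad $j$ is $p^*_j:=-(P-p_j)<0$, set $r^*_i:=R-r_i>0$, and define the candidate allocation $x_{ijk}:=f_{ijk}/(P-p_j)\ge 0$ (well defined since $P-p_j>0$), with $x_{ij}:=\sum_k x_{ijk}$. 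Recall from Section~\ref{sec:optimaL_bundles} that $(x_{ijk})_{j,k}$ is an optimal bundle of bads for $i$ at $\p^*$ iff it solves
\[
\min \sum_{j,k} D_{ijk}x_{ijk}\quad\text{s.t.}\quad \sum_{j,k}(P-p_j)x_{ijk}\ \ge\ \sum_j W_{ij}(P-p_j),\qquad 0\le x_{ijk}\le L_{ijk}\ \ \forall(j,k),
\]
where the first constraint is the equilibrium budget constraint $\sum_{j,k}x_{ijk}p^*_j\le\sum_j W_{ij}p^*_j$ rewritten with $|p^*_j|=P-p_j$. By LP duality, a feasible $(x_{ijk})$ is optimal iff there are multipliers $\lambda_i\ge0$ (budget) and $\mu_{ijk}\ge0$ (upper bounds) with $D_{ijk}\ge\lambda_i(P-p_j)-\mu_{ijk}$, with equality whenever $x_{ijk}>0$; with $\mu_{ijk}>0$ only if $x_{ijk}=L_{ijk}$; and with $\lambda_i>0$ only if the budget constraint is tight.

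The candidate duals will be $\lambda_i:=1/r^*_i$ and $\mu_{ijk}:=s_{ijk}/r^*_i$, both nonnegative since $r^*_i>0$ and $s_{ijk}\ge 0$. I would then verify the conditions in turn. Primal feasibility: $x_{ijk}\le L_{ijk}$ is \eqref{eq:segment_2} divided by $P-p_j$, and $x_{ijk}\ge 0$ is immediate. For the budget I would first show it holds with equality. Since each agent is endowed with a positive amount of some bad (Condition~\ref{cond1}, in the only-bads reading) and $p_j<P$ for all $j$, the right-hand side $\sum_j W_{ij}(P-p_j)$ is strictly positive, so the LCP budget inequality \eqref{eq:budget_2} — which reads $\sum_{j,k}f_{ijk}\le\sum_j W_{ij}(P-p_j)$ — forces $f_{ijk}>0$ for some segment; by $(\ref{eq:ppb_2}')$ the corresponding \eqref{eq:ppb_2} is tight, which (recalling $0<D_{ij1}\le D_{ij2}\le\cdots$) rearranges to $D_{ijk}r^*_i=(P-p_j)-s_{ijk}\le P<D_{ijk}R$, the last step by the choice $R>P/D_{\min}$; hence $r^*_i<R$, i.e. $r_i>0$, and then $(\ref{eq:budget_2}')$ makes \eqref{eq:budget_2} tight, which divided by $P-p_j$ is the budget equality $\sum_{j,k}(P-p_j)x_{ijk}=\sum_j W_{ij}(P-p_j)$. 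In particular $\lambda_i=1/r^*_i>0$ is consistent with its complementary-slackness requirement.

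For the stationarity and remaining complementarity conditions I would rewrite \eqref{eq:ppb_2} as $D_{ijk}r^*_i\ge(P-p_j)-s_{ijk}$ and divide by $r^*_i>0$ to obtain exactly $D_{ijk}\ge\lambda_i(P-p_j)-\mu_{ijk}$; then $(\ref{eq:ppb_2}')$, together with the fact that $f_{ijk}>0\iff x_{ijk}>0$, upgrades this to equality precisely when $x_{ijk}>0$. Finally $\mu_{ijk}>0$ means $s_{ijk}>0$, and $(\ref{eq:segment_2}')$ then forces $f_{ijk}=L_{ijk}(P-p_j)$, i.e. $x_{ijk}=L_{ijk}$. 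All KKT conditions hold, so $(x_{ijk})_{j,k}$ is an optimal bundle of bads for $i$ at $\p^*$; as a byproduct, $D_{ijk}\ge\lambda_i(P-p_j)$ on not-fully-purchased segments (where $\mu_{ijk}=0$) shows the segments of each bad are consumed in increasing index order, consistent with concavity, so the aggregated bundle $(x_{ij})_j$ is the optimal bundle in the usual sense.

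The only genuine obstacle is bookkeeping rather than depth: keeping track of the sign flips induced by the substitutions $p_j\mapsto-(P-p_j)$ and $r_i\mapsto R-r_i$ (so that a minimum pain-per-buck $\min_{j,k}D_{ijk}/|p^*_j|$ turns into $1/r^*_i$ and the inequality $D_{ijk}r_i-p_j-s_{ijk}\le D_{ijk}R-P$ becomes dual feasibility), and ruling out the boundary case $r_i=0$ so that division by $r^*_i$ is legitimate and the budget constraint comes out tight. Once $r_i>0$ is secured, the LCP complementarity conditions map one-to-one onto the LP's complementary-slackness conditions and there is nothing deeper to do.
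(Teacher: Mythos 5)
Your proof is correct and follows a genuinely different route from the paper's. The paper's proof shows the allocation obtained from the LCP solution conforms to the forced/flexible/undesirable partition characterization of Section~\ref{sec:optimaL_bundles}: it sets $\sigma_i$ to the inverse pain-per-buck of the highest-$ppb$ segment $i$ spends on, shows $R-r_i\le\sigma_i$, and then argues partition by partition that undesirable segments are unallocated and forced segments (those with $s_{ijk}>0$) are fully allocated, finally invoking that characterization. You instead read constraints \eqref{eq:ppb_2}, \eqref{eq:segment_2} and their complementarity conditions as a KKT certificate for the agent's LP directly, exhibiting the multipliers $\lambda_i=1/(R-r_i)$ and $\mu_{ijk}=s_{ijk}/(R-r_i)$ and checking stationarity and complementary slackness mechanically. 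The two arguments carry the same content (the paper's partition characterization is itself justified ``through KKT conditions''), but yours is more self-contained, makes the dictionary between LCP variables and LP duals explicit, and also verifies budget tightness inside the lemma rather than delegating it to Lemma~\ref{lem:clearing_1}. The key shared step -- establishing $r_i>0$ from $R>P/D_{\min}$ so that the division by $R-r_i$ is legitimate and complementarity forces the budget to bind -- is present in both.

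One small slip: you state the LCP budget inequality as $\sum_{j,k}f_{ijk}\le\sum_j W_{ij}(P-p_j)$, but after rearranging \eqref{eq:budget_2} the correct direction is $\sum_j W_{ij}(P-p_j)\le\sum_{j,k}f_{ijk}$. The conclusion you draw -- that strict positivity of $\sum_j W_{ij}(P-p_j)$ forces some $f_{ijk}>0$ -- is exactly what the correct direction yields and does not follow from the reversed one, so this is a typographical reversal rather than a reasoning error; still, it should be fixed.
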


\begin{proof} 
	Recall that $D_{ijk} = |U_{ijk}| >0, \ \forall\ (i,j,k)$ since for each agent, her utility for each bad is a concave, decreasing function, i.e., $0<D_{ij1} \leq D_{ij2} \leq \dots$. Due to scale invariance of competitive equilibria, we may pick any maximum price (in absolute value) $P$. Given the choice of $P$, we selected $R$ such that $R > P/ \min_{ijk} D_{ijk}$. This ensures $D_{ijk}R - P > 0$, $\forall (i,j,k)$, which makes the right hand side of \eqref{eq:ppb_2} positive for all segments $(i,j,k)$. Notice this implies that $r_i > 0, \ \forall\ i \in N$, otherwise \eqref{eq:ppb_2} is a strict inequality. In turn, (\ref{eq:ppb_2}') forces $f_{ijk} = 0, \ \forall\ (i,j,k)$. Then, for each $j\in M$, $p_j < P$ implies \eqref{eq:budget_2} is strict and therefore $p_j=0$ which will violate inequality \eqref{eq:spending_2}. 
	
	Let $(i,j,k)$ be a segment with highest $ppb$ that agent $i$ spends on in the solution to LCP~\eqref{lcp:2}. Define $\sigma_i$ as the inverse of the pain per buck of the this segment
	\begin{equation*}
		\sigma_i= \frac{P-p_j}{D_{ijk}} = \frac{1}{ppb_{ijk}}.
	\end{equation*}
	Observe that $\sigma_i > 0$, since $p_j<P$, and $D_{ijk} > 0$, for all segments of all bads. 
	
	We want to show that $(R-r_i) \leq \sigma_i$. Since agent $i$ spends on the segment $(i,j,k)$, i.e., $f_{ijk} >0$, complementarity condition (\ref{eq:ppb_2}') requires constraint \eqref{eq:ppb_2} holds with equality. Since $s_{ijk} \geq 0$, this yields
	\begin{equation} \label{eq:lambda_bound}
		D_{ijk} (R-r_i)  = P-p_j - s_{ijk} \leq P-p_j = D_{ijk} \sigma_i.
	\end{equation}
	Thus, $(R-r_i) \leq \sigma_i$ since $D_{ijk} > 0$.
	
	Let $Q_i$ denote all segments of $i$'s utility function with $ppb = 1/\sigma_i$, and call this the flexible partition. Similarly, let the forced partition be all segments with strictly lower $ppb$ than $1/\sigma_i$, and the undesirable partition be all segments with strictly higher $ppb$ than $1/\sigma_i$. We show that these segments correspond to forced, flexible, and undesirable partitions described in Section \ref{sec:optimaL_bundles}. 
	
	Observe that undesirable partitions are unallocated by construction, since we selected $\sigma_i$ based on the segment receiving a positive allocation with highest $ppb$. Now consider any segment $(i,j,k)$ in agent $i$'s forced partition. We have,
	\[
	\frac{D_{ijk}}{P-p_j} < \frac{1}{\sigma_i} \Rightarrow D_{ijk} (R-r_i) \le D_{ijk} \sigma_i < P-p_j.
	\]
	Hence to satisfy \eqref{eq:ppb_2}, it must be that that $s_{ijk} > 0$. Therefore, \eqref{eq:segment_2} must hold with equality to satisfy (\ref{eq:segment_2}'). That is, the segment is fully allocated. 
	
	Finally, let $(i,j,k) \in Q_i$. If $(R-r_i) < \sigma_i$ then all the segments of this partition are also fully allocated by the similar argument as above. In other words, the agent exhausts her budget when she is done consuming $Q_i$ as the last partition. It follows from the characterization in Section \ref{sec:optimaL_bundles} that each agent receives an optimal bundle of bads. 
\end{proof}

\begin{proof}(of Theorem~\ref{thm:equiL_soL_correspondence})
	By Lemma \ref{lem:clearing_1} at prices $p^*_j=-(P-p_j)$ for all $j$ the market clears. And by Lemma \ref{lem:optimaL_bundles}, each agent receives an optimal bundle of goods in any solution to LCP~\eqref{lcp:2}, i.e., it is a competitive equilibrium. Further, upto change of variables from LCP~\eqref{lcp:1} to LCP~\eqref{lcp:2}, Lemma \ref{lem:market_solution_1} shows that every competitive equilibrium yields a solution to LCP~\eqref{lcp:2}.
\end{proof}

\subsection{Proof of Lemma~\ref{lem:clearing_2}}\label{sec:lem:clearing_2}
\begin{proof}
	The proof follows similarly to Lemma \ref{lem:clearing_1} in Appendix~\ref{sec:lem:clearing_1}. Let $(\x^*,\p^*)$ be an equilibrium. Set $p_j = |p_j^*|, \ \forall j\in M^-$, and $p_j = p_j^*, \ \forall j\in M^+$.  For each agent $i$ and each item $j$, we distribute $x_{ij}^*$ among individual segments by filling starting from the first segment until all of $x_{ij}^*$ is used, according to \eqref{eq:x_from_allocation}. The market clearing conditions for the equilibrium ensures that setting $f_{ijk} = x_{ijk}^* p_j$ together with $\p$ satisfies \eqref{eq:budget_3}, \eqref{eq:spending_3_bad}, and \eqref{eq:spending_3_good}.
	
	Next, suppose $\p,\f$ satisfy \eqref{eq:budget_3}, \eqref{eq:spending_3_bad}, \eqref{eq:spending_3_good} and $\p > 0$. Summing \eqref{eq:budget_3} over all $i\in N$, \eqref{eq:spending_3_bad} over all $j \in M^-$, and \eqref{eq:spending_3_good} over all $j \in M^+$ gives
	\begin{equation*}
	\sum_{j\in M^+} p_j -\sum_{j\in M^-}p_j = \sum_{i,k,j\in M^+} f_{ijk} - \sum_{i,k,j\in M^-} f_{ijk} \leq \sum_{i,j\in M^+} W_{ij} p_j - \sum_{i,j\in M^-} W_{ij} p_j = \sum_{j\in M^+} p_j -\sum_{j\in M^-}p_j
	\end{equation*}
	since there is a unit amount of each item, i.e., $\sum_{i} W_{ij} =1$. Since all variables are non-negative, it follows that \eqref{eq:budget_3}, \eqref{eq:spending_3_bad}, and \eqref{eq:spending_3_good} hold with equality. Therefore, setting $x_{ijk} = f_{ijk}/p_j$ ensures the market clears. 
\end{proof}

\subsection{Proof of Lemma~\ref{lem:market_solution_2}}\label{sec:lem:mkt_sol_2}
\begin{proof}
	Let $(\x^*,\p^*)$ be a competitive equilibrium. In the LCP set $p_j=|p^*_j|,\ \forall j\in M$ and $\f$ as done in the proof of Lemma \ref{lem:clearing_2}. By Lemma \ref{lem:clearing_2}, \eqref{eq:budget_3}, \eqref{eq:spending_3_bad}, and \eqref{eq:spending_3_good} hold with equality since $(\x^*,\p^*)$ clears the market. Therefore, so do (\ref{eq:budget_3}'), (\ref{eq:spending_3_bad}'), and (\ref{eq:spending_3_good}'). 
	For each agent $i$, if $i$ purchases any goods, then set $r_i$ according to \eqref{eq:lambda_mbb} for any flexible segment of goods. Otherwise, set $r_i$ according to \eqref{eq:lambda_mbb} for any flexible segment of bads. In either case $r_i > 0$ since $i$ is non-satiated for some good $j$, i.e., $U_{ijk} >0$, and $0 < D_{ij1} \leq D_{ij2} \leq \dots$, for all bads $j\in M^-$. 
	
	We set the variables $s_{ijk}$ as follows. If $(i,j,k)$ is undesirable or flexible set $s_{ijk} = 0$, whether $j$ is a good or a bad. Recall that, for a forced segment $(i,j,k)$, if $j\in M^+$ then $\frac{1}{r_i} < \frac{U_{ijk}}{p_j}$, and if $j\in M^-$ then $\frac{1}{r_i} > \frac{U_{ijk}}{p_j}$. Using this, set it's $s_{ijk}$ to satisfy
	\begin{equation*}
	\frac{1}{r_i} = \frac{U_{ijk}}{p_j + s_{ijk}}, \mbox{ if $j\in M^+$}, \quad \text{or} \quad \frac{1}{r_i} = \frac{D_{ijk}}{p_j - s_{ijk}}, \mbox{ if $j \in M^-$}.
	\end{equation*}
	It is easy to verify that in each case: the segment is forced, flexible, or undesirable; the constraints \eqref{eq:mbb_3_bad}, \eqref{eq:mbb_3_good}, and \eqref{eq:segment_3} are satisfied, as well as the corresponding complementarity conditions (\ref{eq:mbb_3_bad}'), (\ref{eq:mbb_3_good}'), and (\ref{eq:segment_3}').
\end{proof}

\subsection{Proof of Lemma~\ref{lem:optimaL_bundles_2}}\label{sec:optimaL_bundles_2}
\begin{proof} 
Given a solution of LCP~\eqref{lcp:4}, define $p^*_j=(P-p_j),\ \forall j\in M^+$ and $p^*_j=-(P-p_j),\ \forall j\in M^-$. And allocation $x^*_{ij} = \sum_k f_{ijk}/p^*_j,\ \forall (i,j)$. We want to show that $(\p^*,\x^*)$ gives a competitive equilibrium. It is easy to show that market clears at $(\p^*,\x^*)$ using \eqref{eq:budget_4}, \eqref{eq:spending_4_bad}, and \eqref{eq:spending_4_good} using similar argument as in Lemma \ref{lem:clearing_2}. Next, we show that every agent receives an optimal bundle as per $\x^*$ at prices $\p^*$.

	 Recall that we have picked $P$ and $R$ such that 
	 $\min_{j\in M^-,i,k} D_{ijk}R-P > 0$, and $P-\min_{j\in M^+,i.k}$ $ U_{ijk}R  <0$. While similar to the proof of Lemma \ref{lem:optimaL_bundles}, we now rely on the assumption that each agent $i$ is non-satiated for some good $j$, i.e., the final segment $(i,j,k)$ of good $j$ satisfies $U_{ijk} >0$. Notice that since $D_{ijk} > 0, \forall i,k$ for any bad $j$, and the above assumption on goods implies that $r_i > 0, \ \forall\ i \in N$. Consider two cases: an agent purchases some bads, or they purchase only goods. In the first case, if $r_i = 0$, then \eqref{eq:mbb_4_bad} is a strict inequality. Then (\ref{eq:mbb_4_bad}') requires $f_{ijk} = 0, \ \forall\ i,k$, $\forall j\in M^-$, contradicting market clearing, Lemma \ref{lem:clearing_2}. Similarly, in the second case, if $r_i =0$, then \eqref{eq:mbb_4_good} can not hold for the non-satiated segment with infinite length. 
	
	Here, we diverge from the case of all bads, depending on whether an agent purchases any goods (or bads). Consider any agent $i$. There are three cases, $i$ purchases: a) only goods, b) only bads, or c) goods and bads. We focus on the last case as it is the most complicated. The first two cases can be handled in a similar manner. Let $(i,j,k)$ be the segment of goods with lowest bang per buck that agent $i$ spends on, i.e., with $f_{ijk}>0$. Define $\nu_i$ as the inverse $bpb$ of this segment
	\begin{equation*}
	\nu_i= \frac{P-p_j}{U_{ijk}}.
	\end{equation*}
	Note that $0< \nu_i < \infty$, since $p_j < P$, and each agent is non-satiated for some good $j$. Similarly, let $(i,j',k')$ be the segment of bads with the highest pain per buck, $ppb$, that $i$ spends on, and define $\sigma_i$ as in Lemma \ref{lem:optimaL_bundles}. We want to show that $\nu_i \leq (R-r_i) \leq \sigma_i$. Therefore, $bpb_{ijk} \geq ppb_{ij'k'}$ for any good $j$ and any bad $j'$ that $i$ spends on. From \eqref{eq:lambda_bound}, we have $(R-r_i) \leq \sigma_i$. By a similar argument, for the segment $(i,j,k)$ with lowest bang per buck
	\begin{equation*}
	U_{ijk} (R-r_i) = (P-p_j) + s_{ijk} \geq (P-p_j) = U_{ijk} \nu_i.
	\end{equation*}
	Thus, $(R-r_i) \geq \nu_i$.
	
	Let $G_i$ denote the set of segments of goods with $bpb = 1/\nu_i$, and call this the flexible partition of goods. Similarly, let the forced partition of goods be all segments with strictly higher $bpb$ than $1/\nu_i$, and the undesirable partition of goods be all segments with strictly lower $bpb$ than $1/\nu_i$. Define the various partitions of bads as: forced for $ppb$ strictly less than $\sigma_i$, undesirable if $ppb$ strictly more than $\sigma_i$, and let $B_i$ be the flexible partition for bads where $ppb = 1/\sigma_i$. As per the optimal bundle characterization described in Section \ref{sec:optimaL_bundles} we need to show that $f_{ijk}$'s are zero for the segments in undesirable partitions, $f_{ijk}=L_{ijk}(P-p_j)$ for the segments in forced partitions, and $0\le f_{ijk} \le L_{ijk} (P-p_j)$ for segments in $G_i$ and $B_i$. 
	
	Observe that undesirable goods (bads) are unallocated by construction, since we selected $\nu_i$ ($\sigma_i$) based on the segment receiving a positive allocation with lowest $bpb$ (highest $ppb$). Consider any segment $(i,j,k)$ in agent $i$'s forced partition, whether a bad or a good. Observe that $s_{ijk} > 0$, in order to satisfy \eqref{eq:mbb_4_bad} or \eqref{eq:mbb_4_good}. Therefore, (\ref{eq:segment_4}') requires that \eqref{eq:segment_4} holds with equality. That is the segment is fully allocated. 
	
	For the flexible partition, if $\nu_i < (R-r_i)$, then for all $(i,j,k) \in G_i$ it must be that $s_{ijk}>0$ and hence $f_{ijk} = L_{ijk}(P-p_j)$, otherwise $(i,j,k)$ could be partially allocated. Similarly, if $(R-r_i) < \sigma_i$, then all the segments in $B_i$ are fully allocated, otherwise they could be partially allocated. 
	Thus, $i$ only purchases goods with $bpb \geq ppb$, and in the flexible partition of goods and bads $bpb = ppb$. It follows from the characterization in Section \ref{sec:optimaL_bundles} that each agent receives an optimal bundle of bads. 
\end{proof}

\subsection{Proof of Theorem~\ref{thm:equiL_soL_correspondence_2}}\label{athm:equiL_soL_correspondence_2}
\begin{proof}
	By Lemmas \ref{lem:clearing_2} and \ref{lem:optimaL_bundles_2}, the market clears and each agent receives an optimal bundle of goods in any solution to LCP~\eqref{lcp:4} with $p_j < P,\ \forall j\in M$ and $r_i<R,\ \forall i\in N$, i.e., it is a competitive equilibrium. Further, Lemma \ref{lem:market_solution_2} shows that every competitive equilibrium prices yields a solution to LCP~\eqref{lcp:4} with $p_j < P,\ \forall j\in M$ and $r_i<R,\ \forall i\in N$. Therefore, solutions to LCP~\eqref{lcp:4} with $p_j <P, \ \forall j\in M$, and $r_i < R, \ \forall i\in N$, exactly captures competitive equilibria up to scaling.
\end{proof}

\subsection{Proof of Theorem~\ref{thm:non_degenerate_2}}\label{sec:thm:non_deg_2}
We first show this theorem for the case of all bads, i.e., for LCP~\eqref{lcp:5} without \eqref{eq:spending_4a_good} and \eqref{eq:mbb_4a_good}.
\begin{theorem} \label{thm:non_degenerate}
In case of all bads, if the input parameters $\mathbf{D}$, $\mathbf{W}$, and $\mathbf{L}$ have no polynomial relation among them, then every vertex of $\mathcal{P}$ with $z > 0$, $p_j < P, \ \forall j\in M$, and $r_i<R,\ \forall i\in N$ is nondegenerate.
\end{theorem}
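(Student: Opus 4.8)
The plan is to show the contrapositive: if some vertex $\vv$ of $\cP$ with $z>0$, $p_j<P$ for all $j$, and $r_i<R$ for all $i$ is degenerate, then the input data $\mathbf D,\mathbf W,\mathbf L$ must satisfy a nontrivial polynomial relation. Recall from Section~\ref{sec:prel-lcp} that the LCP~\eqref{lcp:5} with all bads lives in a space of dimension $k$ equal to the number of variables (the $p_j$'s, $f_{ijk}$'s, $r_i$'s, $s_{ijk}$'s, plus $z$), and nondegeneracy means that at a $0$-dimensional face exactly $k$ of the defining inequalities (including the nonnegativity constraints $\y\ge 0$, $z\ge 0$) are tight. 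So a degenerate vertex $\vv$ is one where at least $k+1$ inequalities are tight. I would first quotient out the one \emph{inherent} degeneracy already identified in Section~\ref{sec:alcp}: summing \eqref{eq:budget_3} over $i$ and \eqref{eq:spending_3_bad} over $j$ gives two identical linear forms (using $\sum_i W_{ij}=1$), which collapses when $z=0$; but here $z>0$, so that identity is broken by the $-z$ terms and does not contribute a spurious tight relation. Hence at a degenerate $\vv$ with $z>0$ there are genuinely $\ge k+1$ independent tight constraints.

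The core of the argument is then purely linear-algebraic. First I would collect the tight constraints into a linear system $M(\vv)\cdot(\y,z)^{\!\top} = b$, where the rows of $M$ and entries of $b$ are \emph{fixed} rational functions of the input parameters $\mathbf D,\mathbf W,\mathbf L$ and of $P,R$ (which themselves are functions of the input). Since $\vv$ is a vertex, it is the unique solution of the subsystem consisting of any $k$ linearly independent tight rows; degeneracy means there is a $(k{+}1)$-st tight row that is \emph{linearly dependent on nonnegative combination of} — no, more simply: the $(k{+}1)\times(k{+}1)$ matrix formed by these $k+1$ rows together with the affine column $b$ has rank $\le k+1$ but the constraint coincidence forces a specific square submatrix to be singular. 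Concretely: pick $k$ independent tight rows $M_0$; by Cramer's rule, each coordinate of $\vv$ is a ratio of determinants of matrices whose entries are polynomials (indeed low-degree: the $U$'s, $W$'s, $L$'s, and constants $P,R$) in the data. The extra tight constraint $a^{\!\top}\vv = b_0$ then becomes a polynomial identity $a^{\!\top}\,\mathrm{adj}(M_0)\,b = b_0\det(M_0)$ in the entries of $\mathbf D,\mathbf W,\mathbf L$. I would verify this identity is not \emph{trivially} satisfied — i.e., that it is a genuine polynomial relation, not the zero polynomial — which is exactly where the inherent-degeneracy bookkeeping matters: one must make sure the chosen $(k{+}1)$ rows are not a relabelling of the summed identity. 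This is handled by noting that at $z>0$ the budget/spending sums no longer coincide, so any $(k{+}1)$-subset of tight constraints that is dependent yields a relation involving at least one "genuine" coefficient ($D_{ijk}$, $L_{ijk}$, or $W_{ij}$) with a nonzero monomial.

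The main obstacle is precisely this last point: arguing that the resulting polynomial relation is \emph{nontrivial}. The subtlety is that the constraints \eqref{eq:segment_4a} (the segment-length constraints, paired with $s_{ijk}$) and the structure $D_{ij1}<D_{ij2}<\cdots$ on each good create many near-parallel hyperplanes, and one must rule out that every degeneracy among tight constraints reduces to the known inherent one after the change of variables $p_j\mapsto P-p_j$, $r_i\mapsto R-r_i$. I would handle this by a case analysis on \emph{which} families of constraints are simultaneously tight at $\vv$: (i) if two budget constraints \eqref{eq:budget_4a} and a collection of spending \eqref{eq:spending_4a_bad} and segment \eqref{eq:segment_4a} constraints are involved, the dependency forces a relation among the $W_{ij}$'s and $L_{ijk}$'s that is visibly nonzero because the $L$'s appear with distinct coefficients; (ii) if the dependency is confined to \eqref{eq:mbb_4a_bad} constraints for a single agent across several bads, it forces $D_{ijk}(P-p_{j'}) = D_{ij'k'}(P-p_j)$ type equalities whose coincidence with a \emph{third} such equality is a cross-ratio relation among the $D$'s; (iii) mixed cases combine the two. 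In each case I would point to a specific monomial (e.g.\ a product $\prod D_{i_tj_tk_t}$ or $\prod L_{i_tj_tk_t}$) whose coefficient in the identity is $\pm 1$, establishing nontriviality. Then the full Theorem~\ref{thm:non_degenerate_2} for mixed manna follows by the same scheme, now including constraints \eqref{eq:spending_4a_good} and \eqref{eq:mbb_4a_good}, with the coefficients $\delta_j = 1+\epsilon_j$ on the $z$-term of \eqref{eq:spending_4a_good} providing an additional family of free parameters $\epsilon_j$ that can be used to break any remaining accidental coincidence — exactly as the footnote in Section~\ref{sec:alcp} anticipates with the $\delta_j = 1$ counterexample.
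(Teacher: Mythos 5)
Your high-level plan---take the contrapositive, interpret the extra tight constraint at a degenerate vertex as an algebraic identity in the data, and argue that identity is a genuine polynomial relation---is the same as the paper's, and you correctly locate where all the difficulty lies (nontriviality). But the machinery you propose is different from the paper's and, as sketched, leaves the critical step open. The paper's proof is constructive: after discarding zero variables, forced segments (where $f_{ijk}=L_{ijk}(P-p_j)$), and undesirable segments, it collects the remaining flexible segments into a bipartite graph $G$ on $N\cup M$ whose edges are the pairs $(i,j)$ with a tight constraint \eqref{eq:mbb_4a_bad} and $s_{ijk}=0$. It first shows $G$ must be \emph{acyclic} (a cycle immediately produces a monomial relation among the $D_{ijk}$'s via \eqref{eq:flexible_equal}), then uses a spanning-tree structure of each component to write all $(P-p_j)$ and $(R-r_i)$ as monomial multiples of a single representative price, and propagates through the market-clearing equalities to express every surviving variable as an explicit linear function of $z$. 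Substituting into the two removed double-label equalities, one fixes $z$ and the other becomes the polynomial relation; nontriviality is read off from the explicit monomial structure (e.g.\ the two expressions for the $f_{ij}$ on the removed tree edge differ in exactly one $W_{ij}$).

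Your Cramer's-rule framing, by contrast, says only that the extra tight row forces $a^{\!\top}\mathrm{adj}(M_0)\,b = b_0\det(M_0)$. That is a correct reformulation, but it carries no information about \emph{which} monomials appear, so the nontriviality claim---``point to a specific monomial whose coefficient in the identity is $\pm1$''---is asserted rather than argued, and that assertion is the entire content of the theorem. Moreover your case (ii) does not match the cycle phenomenon the paper actually has to rule out: a cycle in $G$ is bipartite and alternates through \emph{several} agents (e.g.\ $i_1\!-\!j_1\!-\!i_2\!-\!j_2\!-\!\cdots\!-\!i_1$), not ``a single agent across several bads'' (two flexible edges out of one agent do not close a cycle). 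To make your route work you would essentially have to reconstruct the bipartite-graph/tree analysis inside the determinant expansion, at which point you are doing the paper's proof in different clothing. So: right strategy, right diagnosis of the obstacle, but the obstacle itself is not cleared, and the one concrete structural claim you do make (the single-agent cycle) is misidentified.
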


\begin{proof}
	Let $S = (\p,\q,\r,\s,z)$ be a vertex solution to LCP~\eqref{lcp:5} with $z>0$ and $p_j < P, \ \forall j\in M$. For contradiction, suppose $S$ is degenerate. Then, there are at least two double labels at $S$. Let $\cI$ be the set of inequalities of LCP~\eqref{lcp:5} which hold with equality at $S$. Remove all zero variables and their non-negativity conditions from $\cI$, as well as all conditions corresponding to double labels at $S$. Our goal is to write all non-zero variables as linear functions of $z$, where the coefficients are in terms of monomials of input parameters. Then, substituting these expressions into the double labels at $S$ yields a polynomial relation among input parameters. 
	
	For forced segments, i.e., $s_{ijk} > 0$, remove conditions \eqref{eq:mbb_4a_bad} and \eqref{eq:segment_4a} from $\cI$, and replace $f_{ijk}$ with $L_{ijk} (P-p_j)$. For undesirable segments, \eqref{eq:segment_4a} is a strict inequality and $f_{ijk} = 0$. Thus, $\cI$ contains no conditions \eqref{eq:mbb_4a_bad} or \eqref{eq:segment_4a} for undesired segments either. 
	
	Now we may write all non-zero variables as linear functions of $z$. All remaining $f_{ijk}$ correspond to spending in flexible segments. Clearly, for each agent $i$ and each bad $j$ only one such segment exists. To simplify notation we relabel $f_{ijk}$ for these flexible segments as $f_{ij}$, and the corresponding $D_{ijk}$ as $D_{ij}$.
	
	Let $\cE$ be the set of $(i,j)$ pairs such that agent $i$ has a flexible segment for bad $j$, i.e., where condition \eqref{eq:mbb_4a_bad} holds with $s_{ijk} = 0$. Then,
	\begin{equation} \label{eq:flexible_equal}
		D_{ij} r_i - p_j = D_{ij} R-P.
	\end{equation}
	By considering the pairs of $\cE$ as edges between $N$ and $M$, we obtain a bipartite graph, say $G$. Note that $G$ is acyclic, otherwise we obtain a polynomial relation between $D_{ij}$'s using \eqref{eq:flexible_equal} along the cycles to eliminate the $r_i$'s and $p_j$'s.
	
	Let $H$ be a connected component of $G$. We pick a representative bad for $H$. If there is an undersold bad, i.e., \eqref{eq:spending_4a_bad} is a strict inequality, then we pick this item, say $b$. Observe that for any bad $j\in H$, we may write $P-p_j=\frac{\phi_1(D)}{\phi_2(D)} (P-p_b) $, where $\phi_1(D)$ and $\phi_2(D)$ are monomials in terms of $D_{ij}'$s. Similarly, we may write $R-r_i$ in terms of monomials of $D_{ij}'$s. Now, since \eqref{eq:spending_4a_bad} is a strict inequality for bad $b$, the complementary condition (\ref{eq:spending_4a_bad}') requires $p_b = 0$. In addition, no other bad $j$ can be undersold in $H$, otherwise the above steps yield a polynomial relation between the $D_{ij}'$s.
	
	Suppose that for component $H$, the representative bad $b$ is not undersold, i.e., \eqref{eq:spending_4a_bad} holds with equality. Consider any leaf node $v_0$ of $H$, and remove the edge incident to it in $H$, say $(v_0,v_1)$ to create $H'$. Let $H'$ be rooted at $v_1$. Starting from leafs of $H'$ and working toward the root $v_1$, we can use market clearing conditions \eqref{eq:spending_4a_bad} and \eqref{eq:budget_4a} for bads and agents respectively, to write all $f_{ij}'$s for edges in $H'$ as linear functions of $z$ and the representative prices obtained in the first step. Market clearing conditions give two different expressions for $f_{ij}$ on the missing edge $(v_0,v_1)$. Thus, yielding a linear relation between the representative prices and $z$. This relation is non trivial because exactly one of them must contain a $W_{ij}$ not present in the other.
	
	If bad $b$ is undersold, then a similar approach using $b$ as the root allows us to write $f_{ij}'$s as linear functions of representative prices and $z$. This gives a system of linear equations: $p_b=0$ if $b$ is undersold, and $p_j$ is a linear function of representative prices and $z$ otherwise. Solving this system, we obtain $p_j$'s as linear functions of $z$. Substituting these expressions for representative prices in terms of $z$, we obtain expressions for $f_{ij}$'s, $r_i$'s, and remaining $p_j$'s. We preform the above steps for each connected component of $G$.
	
	Finally, consider the equalities of $G$ corresponding to double labels that we removed from $\cI$. Replace all variables by their linear functions of $z$. Use one double label to solve for $z$ in terms of input parameters $\mathbf{D}$, $\mathbf{W}$, and $\mathbf{L}$. Substitute this value of $z$ in to the other double label to get a polynomial relation among input parameters, a contradiction.
\end{proof}
\begin{proof}(of Theorem~\ref{thm:non_degenerate_2})
	The proof closely follows that of Theorem \ref{thm:non_degenerate}. We assume for contradiction, that the vertex is nondegenerate. Our goal is to write all non-zero variables as linear functions of $z$, where the coefficients are in terms of monomials of input parameters. Then, substituting these expressions into the double labels at $S$ yields a polynomial relation between input parameters. 
	Notice that we may still follow the steps in Theorem \ref{thm:non_degenerate} to solve for $\r$, as well as $\p$, $\f$, and $\s$ for all bads $j\in M^-$. Thus, it remains to solve for $\p$, $\f$, and $\s$ for all goods $j\in M^+$. Using similar arguments to the case of all bads, we find expressions for these variables as linear functions of $z$. Substituting these expressions into the two sets of double labels yields a polynomial relation between input parameters.
\end{proof}

\subsection{Proof of Theorem~\ref{thm:one_to_one_2}}\label{sec:thm:one_to_one_2}
To prove the theorem it suffices to show one to one correspondence between solutions of LCP~\eqref{lcp:4} and competitive equilibria. For this, we first show a similar result for the case of all bads, i.e., using LCP~\eqref{lcp:2}.

\begin{theorem} \label{thm:one_to_one}
	If the polyhedron of LCP~\eqref{lcp:2} is nondegenerate, then solutions to LCP~\eqref{lcp:2} with $p_j < P, \ \forall j\in M$, and $r_i < R, \ \forall i \in N$, are in one to one correspondence with competitive equilibria.
\end{theorem}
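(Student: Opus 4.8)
The plan is to establish the bijection in two directions, using the already-proven set-level equality from Theorem~\ref{thm:equiL_soL_correspondence} (solutions of LCP~\eqref{lcp:2} with $p_j < P$ and $r_i < R$ capture competitive equilibria up to scaling) and upgrading it to an honest one-to-one correspondence once we fix a scaling normalization. First, I would pin down the scaling: competitive equilibrium prices are only defined up to a positive scalar, so I work with the representative in which the largest price has a fixed magnitude $P$, i.e. $\max_j |p_j^*| = P$. Under this normalization, every competitive equilibrium maps to a well-defined point of the polyhedron via the change of variables used throughout Section~\ref{sec:alL_bads}: $p_j = P - |p_j^*|$ for all bads $j$, $r_i = R - 1/ppb_i$ where $ppb_i$ is the pain-per-buck of agent $i$'s flexible partition, $f_{ijk} = x_{ijk}^* \cdot |p_j^*|$ with the $x_{ijk}^*$ obtained by filling segments in order as in \eqref{eq:x_from_allocation}, and $s_{ijk}$ set exactly as in the proof of Lemma~\ref{lem:market_solution_1}. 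By that lemma this is a solution of LCP~\eqref{lcp:2}, and by construction it satisfies $p_j < P$ (prices are nonzero) and $r_i < R$.

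Next I would verify this map is well-defined, i.e. that the auxiliary variables $f_{ijk}$ and $s_{ijk}$ are uniquely determined by the equilibrium, and this is precisely where nondegeneracy enters. Conversely, given a solution $(\p,\f,\r,\s)$ of LCP~\eqref{lcp:2} with $p_j<P$, $r_i<R$, Lemma~\ref{lem:optimaL_bundles} shows the induced prices $p_j^* = -(P-p_j)$ and allocation $x_{ijk} = f_{ijk}/(P-p_j)$ form a competitive equilibrium, automatically normalized so that $\max_j|p_j^*| = P$ (otherwise the largest-price constraint \eqref{eq:spending_2} fails, as noted in the proof of Lemma~\ref{lem:optimaL_bundles}). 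It remains to check that the two maps are mutual inverses. Going equilibrium $\to$ LCP solution $\to$ equilibrium recovers the same normalized prices by inspection of the change of variables, and recovers the same allocation because $x_{ijk} = f_{ijk}/(P-p_j) = (x_{ijk}^* |p_j^*|)/|p_j^*| = x_{ijk}^*$. The reverse composition LCP $\to$ equilibrium $\to$ LCP is the delicate one: I must show that the $f_{ijk}$'s and $s_{ijk}$'s reconstructed from the equilibrium coincide with the original ones. The $f_{ijk}$ on forced and undesirable segments are forced ($f_{ijk} = L_{ijk}(P-p_j)$ and $f_{ijk}=0$ respectively, by the complementarity arguments in Lemma~\ref{lem:optimaL_bundles}), so the only freedom is in flexible segments, and here I invoke nondegeneracy: at a nondegenerate vertex exactly the right number of constraints are tight, so the flexible-segment values are pinned down uniquely by the market-clearing equations \eqref{eq:budget_2} and \eqref{eq:spending_2} together with the tight \eqref{eq:ppb_2} constraints — there is no room for two distinct LCP vertices to induce the same equilibrium. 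The $s_{ijk}$ are then determined by \eqref{eq:ppb_2} holding with equality on purchased segments, matching the reconstruction formula $s_{ijk} = p_j - D_{ijk}r_i$.

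The main obstacle I anticipate is exactly this uniqueness-of-auxiliary-variables step: a priori a single competitive equilibrium allocation could be realized by several vertices of $\cP$ differing in how spending is distributed across flexible segments with equal $ppb$, or a single vertex could have ``degenerate'' slack in the $s_{ijk}$. Both pathologies are ruled out by nondegeneracy of the polyhedron (the hypothesis), since any such ambiguity would manifest as a lower-dimensional face carrying more than the generic number of solutions — but spelling this out cleanly requires carefully counting tight constraints on the relevant face and arguing that the flexible-segment subsystem (market clearing on the bipartite agent–bad graph restricted to flexible edges, plus the tightness of \eqref{eq:ppb_2}) has a unique solution, which is an acyclicity argument of the same flavor as the one in the proof of Theorem~\ref{thm:non_degenerate}. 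Everything else is bookkeeping with the change of variables.
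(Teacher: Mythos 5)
Your proposal takes essentially the same route as the paper's proof: fix the scaling by normalizing the most negative price to $-P$ (equivalently $\max_j|p_j^*|=P$), map a competitive equilibrium to an LCP point via the same change of variables and segment-filling rule, use Lemma~\ref{lem:optimaL_bundles} for the reverse direction, and locate the only possible ambiguity in the auxiliary variables $(\r,\s)$ attached to a fully allocated flexible partition. The one place your write-up stops short is exactly the step you flag as delicate: you say nondegeneracy ``pins down'' the flexible-segment subsystem and defer the counting. The paper closes that gap with a concrete tally --- LCP~\eqref{lcp:2} has $n+m+2C$ variables (with $C=\sum_{ij}|u_{ij}|$), and a solution with a fully allocated flexible partition for some agent would satisfy at least $n+m+2C+2$ tight inequalities (market clearing \eqref{eq:budget_2}, \eqref{eq:spending_2}, complementarity (\ref{eq:ppb_2}'), (\ref{eq:segment_2}'), the normalization $p_j=0$ for some $j$, plus the extra simultaneous tightness of \eqref{eq:segment_2} and $s_{ijk}=0$ on the saturated flexible segment), which exceeds the $n+m+2C+1$ allowed at a nondegenerate vertex. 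Adding that count would turn your sketch into essentially the paper's argument.
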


\begin{proof}
	Due to scale invariance, it suffices to show this for the set of competitive equilibria, say $\mathcal{E}$, where the minimum price is $-P$. In LCP~\eqref{lcp:2}, we represent the equilibrium price of a bad as $p_j^* = -P+p_j$. Therefore, we show a one to one correspondence between elements of $\mathcal{E}$ and solutions to LCP~\eqref{lcp:2} with $p_j = 0$ for some bad $j$. Let $(\x^*,\p^*) \in \mathcal{E}$. By Theorem \ref{thm:equiL_soL_correspondence}, any competitive equilibrium $(\x^*,\p^*)$ yields a solution to LCP~\eqref{lcp:2} using $\p = P+\p^*$, and $\f$ where $f_{ijk}=x^*_{ijk}(P-p_j)$. We show that this choice of $(\p,\f)$ yields exactly one solution to LCP~\eqref{lcp:2}. 
	
	For contradiction, suppose not. Then, there exists different choices of $\r$ and $\s$ that together with $(\p,\f)$ solve LCP~\eqref{lcp:2}. Observe that fixing $\p$, $\f$, and $\s$ also fixes $\r$. Therefore, it must be true that for some agent, say $i$, her flexible partition, say $Q_i$, is full allocated, i.e., $f_{ijk} = L_{ijk} (P-p_j), \forall (j,k)\in Q_i$. Set $r_i$ so that 
	\begin{equation*}
	\frac{1}{R-r_i} = \frac{D_{ijk}}{P-p_j},
	\end{equation*}
	for some segment $(j,k) \in Q_i$, and set $s_{ijk} = 0, \ \forall \ (j,k) \in Q_i$. Set the $\r$ and $\s$ for all other agents similarly. 
	
	Let $C = \sum_{ij} |u_{ij}|$, be the total number of segments over all agents and items. Observe that there are $n+m+2C$ variables in LCP~\eqref{lcp:2}. Further, the solution described above gives at least $n+m+2C+2$ inequalities of LCP~\eqref{lcp:2} hold with equality: Market clearing gives \eqref{eq:budget_2} $\forall i\in N$ and \eqref{eq:spending_2} $\forall j\in M$, and optimal bundles satisfies complementarity conditions (\ref{eq:ppb_2}') and (\ref{eq:segment_2}'). Plus the requirement $p_j = 0$ for some bad. Finally, all segments of agent $i$'s flexible partition $Q$ satisfy both \eqref{eq:segment_2} and $s_{ijk} = 0$. However, nondegeneracy of LCP~\eqref{lcp:2} means at most $n+m+2C+1$ inequalities hold with equality at any vertex. 
\end{proof}

\begin{proof} (Theorem~\ref{thm:one_to_one_2})
	Showing the one to one correspondence follows from a nearly identical argument to that of Theorem \ref{thm:one_to_one}. The only difference is that must consider the set of equilibria with maximum magnitude of price equal to $P$, i.e., $p_j = 0$ for some good or some bad. Assuming a bad has price with maximum magnitude price $P$ and follow the same argument as Theorem \ref{thm:one_to_one}. The case where a good has maximum magnitude price follows from a similar argument. 
\end{proof}

\section{Convergence of Algorithm \ref{algo:mixed} with All Bads} \label{sec:convergence_all_bads}
In this Section we prove that Algorithm \ref{algo:mixed} always converges to an equilibrium in the case of all bads, $M^+ = \emptyset$. The proofs are similar in spirit to the mixed manna case, but there are minor differences in some details. We still show that we the algorithm never sets a subset prices to zero, i.e., $p_j = P$, $\forall j\in \tilde{M} \subset M$, rather all prices are set to zero simultaneously. However, we can not rely on Lemma \ref{lem:no_p_0_goods} to ensure that $r_i = R, \ \forall i \in N$, as used in Lemma \ref{lem:no_z_0_solution_2} which shows that the algorithm stops at an equilibrium before setting $p_j  = P, \ \forall j\in M$. This is the only real difference between the proofs.

Recall LCP~\eqref{lcp:2} of Section \ref{sec:alL_bads} which gives the formulation for all bads. We require the augmented LCP which we create by adding $-z$ to the left hand side of \eqref{eq:budget_2} for all $i\in N$, yielding  
\begin{align}\label{eq:budget_2_z}
		\forall i \in N: & \ -\sum_{j} W_{ij} p_j - \sum_{j,k} f_{ijk} - z\leq -P \sum_{j} W_{ij} & \perp \ \ r_i.
\end{align}
Let $k = \arg\max_i \sum_{j\in M} W_{ij}$. Then we get the primary ray by setting $z = \sum_{j\in M} W_{kj}$, and all other variables equal to 0.

We now show that Algorithm \ref{algo:mixed} never reaches a secondary ray where $p_j = P$ for some subset of bads, and $z > 0$, and that the algorithm never reaches the degenerate solution where $p_j = P, \ \forall j\in M$, and all other variables equal to 0. 

Note that Claim \ref{claim:no_s_finaL_segment}, and Lemmas \ref{lem:bound_on_p_and_r}, and \ref{lem:no_sec_ray_first_step} still hold. Therefore, $p_j \leq P, \ \forall j\in M$, $r_i \leq R, \ \forall i\in N$, and if $p_j = P$ for some $j\in M$, then $p_j = P, \ \forall j\in M$. Thus, the algorithm never reaches a secondary ray where $p_j = P$ for some subset of bads, and $z > 0$. It remains to show that the algorithm never reaches the generate equilibrium where $p_j = P, \ \forall j\in M$. The idea is similar to Lemma \ref{lem:no_z_0_solution_2}. However, we can not use Lemma \ref{lem:no_p_0_goods} to show that $p_j = P, \ \forall j\in M$ implies $r_i = R, \ \forall i\in N$.

\begin{lemma}\label{lem:no_z_0_solution_bads}
	Starting from the primary ray, Algorithm \ref{algo:mixed} never reaches the degenerate solution where $p_j = P, \ \forall j\in M$, $r_i = R, \ \forall i\in N$, and all other variables equal to zero.
\end{lemma}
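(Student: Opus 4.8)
The plan is to mirror the proof of Lemma~\ref{lem:no_z_0_solution_2}, replacing its one unavailable ingredient — Lemma~\ref{lem:no_p_0_goods}, which in the mixed case guaranteed $r_i=R$ for all $i$ at the degenerate point — by the simple observation that in the all-bads setting this is true \emph{by definition}: the degenerate solution $T$ has $r_i=R>0$ for every $i\in N$. First I would record the facts that carry over verbatim to the all-bads formulation, namely Claim~\ref{claim:no_s_finaL_segment} and Lemmas~\ref{lem:bound_on_p_and_r} and~\ref{lem:no_sec_ray_first_step}: every solution of the augmented LCP has $p_j\le P$ and $r_i\le R$; if $r_i=R$ for some $i$ then $p_j=P$ for all $j\in M$; and the algorithm never reaches a vertex where $p_j=P$ for a proper nonempty subset of $M$. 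Consequently, if Algorithm~\ref{algo:mixed} ever reached $T$, the prices would have to reach $P$ all at once; so, letting $S$ be the vertex traversed immediately before $T$ and $E=[S,T]$ the connecting edge, we have $0<p_j<P$ for every $j\in M$ at $S$ (the prices move up to $P$ along $E$), and moreover $r_i<R$ for every $i$ at $S$, since $r_i=R$ at $S$ would force all $p_j=P$ at $S$ by Lemma~\ref{lem:bound_on_p_and_r}, a contradiction.

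Next I would extract two families of equalities at $S$ by a continuity argument along the edge. On the part of $E$ in a small neighborhood of $T$ every price $p_j$ is close to $P>0$, hence positive, and every $r_i$ is close to $R>0$, hence positive; so complementarity conditions (\ref{eq:spending_2}') and (\ref{eq:budget_2_z}') force \eqref{eq:spending_2} and \eqref{eq:budget_2_z} to hold with equality on this sub-segment, hence (the slacks being affine along $E$) identically on $E$, in particular at $S$. Thus at $S$ we have $\sum_{i,k}f_{ijk}+p_j=P$ for all $j\in M$ and $\sum_{j}W_{ij}p_j+\sum_{j,k}f_{ijk}+z=P\sum_{j}W_{ij}$ for all $i\in N$. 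Summing the first family over $j$ gives $\sum_{i,j,k}f_{ijk}=\sum_{j}(P-p_j)$; summing the second over $i$ and using $\sum_{i}W_{ij}=1$ gives $\sum_{i,j,k}f_{ijk}=\sum_{j}(P-p_j)-nz$. Equating the two identities yields $nz=0$, i.e. $z=0$ at $S$. But then $S$ is a solution of LCP~\eqref{lcp:2} with $p_j<P$ for all $j$, $r_i<R$ for all $i$, and $z=0$, hence a competitive equilibrium by Theorem~\ref{thm:equiL_soL_correspondence} (with uniqueness as in Theorem~\ref{thm:one_to_one}); so Algorithm~\ref{algo:mixed} would terminate at $S$ and never reach $T$ — the desired contradiction.

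The only step that genuinely differs from Lemma~\ref{lem:no_z_0_solution_2}, and hence the one I would expect to be the crux, is arguing that the budget constraints \eqref{eq:budget_2_z} are tight at $S$ without appealing to Lemma~\ref{lem:no_p_0_goods}. The resolution I propose — reading off $r_i=R>0$ directly from the defining data of the degenerate solution $T$ and propagating tightness back along the affine edge $E$ — makes this step actually slightly easier here than in the mixed case; the remaining content (the two summation identities and the invocation of Theorems~\ref{thm:equiL_soL_correspondence} and~\ref{thm:one_to_one}) is routine and identical in spirit to the mixed-manna argument. A minor point to handle carefully, as in the paper, is the justification that at $S$ one has $0<p_j<P$ for all $j$ (using Lemma~\ref{lem:no_sec_ray_first_step} for the simultaneity and that the primary ray has $\p=\0$, so such a boundary vertex exists and is reached from the $\p<P\mathbf 1$ side).
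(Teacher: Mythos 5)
Your proof is correct for the lemma exactly as stated, and the affine-slack argument along the edge $E$ is clean: if $r_i = R > 0$ at $T$ and $p_j = P > 0$ at $T$, then the label for each budget and spending constraint must be carried by tightness (rather than by a zero variable) on the entire edge, so both families are tight at $S$, and the summation gives $z = 0$ at $S$. This is a genuinely different route from the paper, which never uses the values of $r_i$ at $T$.

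There is a gap, however, and it is precisely the one the paper's preamble flags (``we can not use Lemma~\ref{lem:no_p_0_goods} to show that $p_j = P, \ \forall j$ implies $r_i = R, \ \forall i$''). You read that flag as a non-issue because ``the degenerate solution has $r_i = R$ by definition.'' But what Appendix~\ref{sec:convergence_all_bads} actually needs -- and what the paper's proof actually delivers -- is the stronger statement that the algorithm never reaches \emph{any} vertex with $p_j = P$ for all $j$. In the all-bads setting such a vertex can have $z > 0$: then \eqref{eq:spending_2} forces $\f = \0$, the budget slack equals $z > 0$, so complementarity gives $r_i = 0$ for every $i$. Your key observation, ``$r_i$ is close to $R > 0$ near $T$,'' is false for such a $T$, and the budget tightness at $S$ cannot be extracted your way. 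This also undermines the step where you take $S$ to satisfy $0 < p_j < P$: you argue $S$ is ``reached from the $\p < P\mathbf{1}$ side,'' but the first vertex where $\p$ hits $P\mathbf{1}$ need not be the degenerate solution with $r_i = R$, so your argument cannot be applied to it. The paper avoids all of this by deriving $r_i > 0$ at $S$ for at least one agent directly from the tight spending constraint (since $\sum_{i,k} f_{ijk} = P - p_j > 0$ at $S$, some $f_{ijk} > 0$, and then (\ref{eq:ppb_2}') forces $r_i > 0$), and then splitting into the case $r_i > 0$ for all $i$ (which gives $z = 0$ by your summation identity) and the case $\emptyset \neq N_1 \subsetneq N$ (which is killed by Condition~\ref{cond1} since the agents in $N_0$ bring positive endowment and consume nothing). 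Your continuity idea is sound, but it cannot replace this case analysis, because at the vertex actually relevant to the convergence argument the endpoint data you rely on is not available.
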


\begin{proof}
	Let $T$ be a vertex where $p_j = P, \ \forall j\in M$, $S$ be the vertex that precedes $T$, and $E$ be the edge between $S$ and $T$. At $S$, $p_j > 0$ so that all $p_j \rightarrow P$ on $E$. Therefore, complementarity condition requires that (\ref{eq:spending_2}') requires that \eqref{eq:spending_2} holds with equality on $E$, $\sum_{i,k} f_{ijk} = P-p_j, \ \forall j\in M$. Since $p_j <P$ at $S$, this requires that for each bad $j\in M$, at least on agent, say $i$, purchases some this bad, i.e., $f_{ijk} > 0$. Then complementarity condition (\ref{eq:ppb_2}') requires that \eqref{eq:ppb_2} is tight. Observe that this implies that $r_i > 0$, otherwise \eqref{eq:ppb_2} holds with strict inequality for all segments $(i,j,k)$. Therefore, for this agent, \eqref{eq:budget_2_z} holds with equality on $E$ by complementarity condition (\ref{eq:budget_2_z}'). 
	
	We want to argue that $r_i > 0, \forall i\in N$. If this condition holds then \ref{eq:budget_2_z} is tight $\forall i \in N$, and \eqref{eq:spending_2} is tight $\forall j\in M$. Summing over all of the constraints yields
	\begin{equation*}
		\sum_j P-p_j = \sum_{j} W_{ij} (P-p_{j}) = \sum_{ijk} f_{ijk} + nz = \sum_{j} P-p_j +nz,
	\end{equation*}
	at $S$, since $\sum_i W_{ij} =1$. Then, $z=0$ at $S$, which is a competitive equilibrium by Theorem \ref{thm:equiL_soL_correspondence}.
	
	For contradiction, assume that $r_k > 0$, for some strict subset of agents $k \in N_1 \subset N$. Note that for all agents $i\in N_0 = N\setminus N_1$, \eqref{eq:ppb_2} holds with strict inequality since $r_i =0$, and therefore complementarity condition (\ref{eq:ppb_2}') requires that $f_{ijk} = 0$ for all segments $(j,k)$ for all $i\in N_0$. Further, since $p_j > 0, \ \forall j \in M$, at $S$ then (\ref{eq:spending_2}') requires that \eqref{eq:spending_2} is tight for all $j\in M$. Then, we see that $\sum_{j,k,i\in N_1} f_{ijk} = \sum_{ijk} f_{ijk} = \sum_j (P-p_j)$.
	
	Next, observe that \eqref{eq:budget_2_z} is tight for all $i \in N_1$ by complementarity condition (\ref{eq:budget_2_z}'). Therefore, $\sum_{j,i\in N_1} W_{ij} (P-p_j) = \sum_{j,k,i\in N_1} f_{ijk} + |N_1|z$. Also, since every agent is endowed with some fraction of at least one bad and $p_j < P$ at $S$, $\sum_{j,i\in N_1} W_{ij} (P-p_j) < \sum_j (P-p_j)$. Combining the above results yields
	\begin{equation*}
		\sum_j (P-p_j) > \sum_{j,i\in N_1} W_{ij} (P-p_j)=\sum_{j,k,i\in N_1} f_{ijk} + |N_1|z = \sum_j (P-p_j) + |N_1|z,
	\end{equation*} 
	at $S$. Thus, we obtain a contradiction since $p_j < P, \ \forall j \in M$ and $z\geq 0$ at $S$.
\end{proof}

The only remaining step to show convergence of Algorithm \ref{algo:mixed} in the case of all bads is to show that the algorithm never reaches a secondary where $\p<P$, and $\r < R$. However, this follows the argument of Theorem~\ref{thm:no_secondary_rays}, while simply ignoring the steps that relate to goods. 

Then, Lemmas \ref{lem:no_sec_ray_first_step} and \ref{lem:no_z_0_solution_bads} show that starting from the primary ray, $\p < P$ and $\r < R$. Specifically, Algorithm \ref{algo:mixed} never reaches a secondary ray where $p_j = P$ for some subset of bads, and it never reaches the degenerate solution. Theorem~\ref{thm:no_secondary_rays} shows that the algorithm never reaches any other secondary ray. Therefore, eventually we reach a vertex where $\p < P$, $\r < R$, and $z =0$, which is an equilibrium by Theorem \ref{thm:equiL_soL_correspondence}.
\bibliographystyle{abbrv}
\bibliography{literature}

\end{document}